\documentclass[11pt]{article}
\usepackage{bbm}
\usepackage{geometry}
\usepackage{color}
\usepackage{amsfonts}
\usepackage{algorithm}
\usepackage{algorithmic}
\usepackage{latexsym, amssymb, amsmath, amscd, amsthm, amsxtra}
\usepackage{mathtools}
\usepackage{enumerate}
\usepackage[all]{xy}
\usepackage{mathrsfs}
\usepackage{fancyhdr}
\usepackage{listings}
\usepackage{hyperref}
\usepackage{enumitem}
\usepackage{multicol}

\pagestyle{plain}

\def\P{\mathbb{P}}

\def\E{\mathbb{E}}

\def\11{\mathbbm{1}}

\def\G{\mathsf G}

\def\ER{Erd\H{o}s--R\'enyi\ }

\newtheorem{Theorem}{Theorem}[section]

\newtheorem{Definition}[Theorem]{Definition}
\newtheorem{Lemma}[Theorem]{Lemma}
\newtheorem{Corollary}[Theorem]{Corollary}
\newtheorem{Proposition}[Theorem]{Proposition}
\newtheorem{Remark}[Theorem]{Remark}
\newtheorem{Claim}[Theorem]{Claim}

\numberwithin{equation}{section}

\makeatletter
\newenvironment{breakablealgorithm}
{
		\begin{center}
			\refstepcounter{algorithm}
			\hrule height.8pt depth0pt \kern2pt
			\renewcommand{\caption}[2][\relax]{
				{\raggedright\textbf{\ALG@name~\thealgorithm} ##2\par}%
				\ifx\relax##1\relax 
				\addcontentsline{loa}{algorithm}{\protect\numberline{\thealgorithm}##2}%
				\else 
				\addcontentsline{loa}{algorithm}{\protect\numberline{\thealgorithm}##1}%
				\fi
				\kern2pt\hrule\kern2pt
			}
		}{
		\kern2pt\hrule\relax
	\end{center}
}
\makeatother

\hypersetup{colorlinks=true,linkcolor=blue}

\title{A polynomial-time iterative algorithm for random graph matching with non-vanishing correlation}

\author{Jian Ding \\ Peking University  \and  Zhangsong Li \\ Peking University}
\date{\today}

\begin{document}

\maketitle

\begin{abstract}
We propose an efficient algorithm for matching two correlated \ER graphs with $n$ vertices whose edges are correlated through a latent vertex correspondence. When the edge density $q= n^{-\alpha+o(1)}$ for a constant $\alpha \in [0,1)$, we show that our algorithm has polynomial running time and succeeds to recover the latent matching as long as the edge correlation is non-vanishing. This is closely related to our previous work on a polynomial-time algorithm that matches two Gaussian Wigner matrices with non-vanishing correlation, and provides the first polynomial-time random graph matching algorithm (regardless of the regime of $q$) when the edge correlation is below the square root of the Otter's constant  (which is $\approx 0.338$).
\end{abstract}

\section{Introduction}

In this paper, we study the algorithmic perspective for recovering the latent matching between two correlated \ER graphs. To be mathematically precise, we first need to choose a model for a pair of correlated \ER graphs, and a natural choice is that the two graphs are independently sub-sampled from a common \ER graph, as described more precisely next. For two vertex sets $V$ and $\mathsf V$ with cardinality $n$, let $E_0$ be the set of unordered pairs $(u,v)$ with $u,v\in V, u\neq v$ and define $\mathsf E_0$ similarly with respect to $\mathsf V$. For some model parameters $p,s\in (0,1)$, we generate a pair of correlated random graphs $G=(V,E)$ and $\mathsf G=(\mathsf V,\mathsf E)$ with the following procedure: sample a uniform bijection $\pi:V\to \mathsf V$, independent Bernoulli variables $I_{(u,v)}$ with parameter $p$ for $(u,v)\in E_0$ as well as independent Bernoulli variables $J_{(u,v)}, \mathsf J_{(\mathsf u,\mathsf v)}$ with parameter $s$ for $(u,v)\in E_0$ and $(\mathsf u,\mathsf v)\in \mathsf E_0$. Let 
\begin{equation}{\label{eq:def-of-G_e}}
    G_{(u,v)}=I_{(u,v)}J_{(u,v)},\forall (u,v)\in E_0\,, \quad \G_{(\mathsf u,\mathsf v)}=I_{\left(\pi^{-1}(\mathsf u),\pi^{-1}(\mathsf v)\right)}\mathsf J_{(\mathsf u,\mathsf v)},\forall (\mathsf u,\mathsf v)\in \mathsf E_0\,,
\end{equation}
and let $E=\{e\in E_0: G_e=1\}, \mathsf E=\{\mathsf e\in \mathsf E_0:\G_{\mathsf e}=1\}$. It is obvious that marginally $G$ is an \ER graph with edge density $q = ps$  and so is $\mathsf G$. In addition, the edge correlation $\rho$ is given by $\rho = s(1-p)/(1-ps)$.

An important question is to recover the \emph{latent} matching $\pi$ from the observation of $(G, \mathsf G)$. More precisely, we wish to find an estimator $\Hat \pi$ which is measurable with respect to $(G, \mathsf G)$ such that $\Hat \pi = \pi$. Our main contribution is an efficient matching algorithm as incorporated in the theorem below.
\begin{Theorem}{\label{Main-Thm}}
Suppose that $q = n^{-\alpha+o(1)} \leq 1/2$ for some constant $\alpha\in [0, 1)$ and that $\rho\in (0, 1]$ is a constant. Then there exist a constant $C=C(\alpha,\rho)$ and an algorithm (see Algorithm~\ref{algo:matching} below) with time-complexity $O(n^{C})$ that recovers the latent matching with probability $1-o(1)$. That is, this polynomial-time algorithm takes $(G, \mathsf G)$ as the input and outputs an estimator $\Hat{\pi}$ such that $\Hat{\pi} = \pi$ with probability tending to 1 as $n\to \infty$.
\end{Theorem}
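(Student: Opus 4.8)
The plan is to run the iterative amplification scheme from our analysis of Gaussian Wigner matrices in parallel in the Bernoulli setting, using the Gaussian argument as a template for the parts that transfer. I would build a sequence of \emph{similarity matrices} $X^{(0)},X^{(1)},\dots,X^{(T)}\in\R^{V\times\mathsf V}$, each computable from $(G,\mathsf G)$ in time $\mathrm{poly}(n)$, and track the normalized correlation $\lambda_t$ between $X^{(t)}$ and the true permutation matrix $\Pi=\big(\mathbbm{1}\{\pi(u)=\mathsf v\}\big)_{u\in V,\mathsf v\in\mathsf V}$. The design goals are: $\lambda_0$ is tiny but bounded below by $n^{-\epsilon}$; $\lambda_t$ grows roughly geometrically until it is $\Omega(1)$ at $t=T=O(\log n)$; a final strong-signal round pushes it to $1-o(1)$; and throughout, the ``bulk'' of $X^{(t)}$ stays delocalized so that $\lambda_t$ genuinely measures alignment quality. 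One then rounds $X^{(T)}$ to an almost-exact matching and upgrades it to $\widehat\pi=\pi$ by a local clean-up. Since $T=O(\log n)$ and each round is a $\mathrm{poly}(n)$ matrix computation, the total running time is $O(n^{C})$ with $C=C(\alpha,\rho)$.

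\textbf{Warm start.} Let $A,B$ be the adjacency matrices of $G,\mathsf G$ and set $\overline A=A-qJ$, $\overline B=B-qJ$ off-diagonally, so that in the sub-sampling model $\E\big[\overline A_{uv}\,\overline B_{\pi(u)\pi(v)}\big]=q(1-q)\rho$ while cross terms have mean zero. A suitably normalized signed count of a small fixed subgraph through a candidate pair $(u,\mathsf v)$ — read off $\overline A$ on one side and $\overline B$ on the other — then has strictly positive mean under the correct alignment and small variance otherwise, so it yields $X^{(0)}$ with $\lambda_0\ge n^{-\epsilon}$ for any fixed $\epsilon>0$. When $\rho$ is small the per-edge signal is weak and one must count a larger structure, but since $\rho$ is a constant a bounded-size family suffices (with $\rho$-dependent constants). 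In the sparse regime $q=n^{-\alpha}$ extra care is needed: truncate atypically high-degree vertices and restrict to subgraph families whose signed counts concentrate, which is what forces the constants to depend on $\alpha$.

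\textbf{Boosting recursion.} The core is an update of the schematic form
\[
X^{(t+1)}_{u\mathsf v}\;=\;\frac{1}{Z_t}\sum_{u'\in V}\sum_{\mathsf v'\in\mathsf V}\overline A^{(t)}_{uu'}\,\overline B^{(t)}_{\mathsf v\mathsf v'}\,\psi\!\big(X^{(t)}_{u'\mathsf v'}\big),
\]
with $\psi$ a fixed odd bounded nonlinearity, $Z_t$ a normalization, and $\overline A^{(t)},\overline B^{(t)}$ built from a fresh block of edges reserved for round $t$ (obtained by a random partition of the edge sets, which perturbs the effective density and correlation only by constants but makes the rounds essentially independent). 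Conditional independence lets the signal obey a clean scalar recursion $\lambda_{t+1}=f(\lambda_t)(1+o(1))$ with $f(\lambda)\asymp c(\alpha,\rho)\,\lambda$ near $0$ and an attracting fixed point near $1$, so $\lambda_0\ge n^{-\epsilon}$ forces $\lambda_T=1-o(1)$ after $T=O(\log n)$ rounds. In tandem one must prove that the double sums concentrate around their conditional means and that $X^{(t)}$ remains delocalized, so the scalar recursion for $\lambda_t$ is faithful; this is where the bulk of the work lies.

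\textbf{Rounding, clean-up, and the main obstacle.} From $X^{(T)}$ with correlation $1-o(1)$ with $\Pi$, extract greedily (or by linear-assignment rounding) a bijection $\widehat\pi_0$ agreeing with $\pi$ on $(1-o(1))n$ vertices; then for each $u$ pick the $\mathsf v$ maximizing the number of $G$-neighbors $w$ of $u$ with $\widehat\pi_0(w)$ adjacent to $\mathsf v$ in $\mathsf G$. Since a constant fraction of $u$'s neighbors are common edges correctly matched by $\widehat\pi_0$, the true image $\pi(u)$ is the unique maximizer with high probability, giving $\widehat\pi=\pi$. I expect the main obstacle to be the analysis of the boosting recursion in the Bernoulli — and especially the sparse — regime: in the Gaussian case the requisite moment identities are exact via Wick's formula and a Gaussian matrix splits into i.i.d.\ pieces for free, whereas here those identities must be replaced by concentration estimates for signed subgraph counts valid down to $q=n^{-\alpha}$ (controlling heavy-tailed degree fluctuations by truncation), the edge-partition splitting only yields near-independence rather than exact independence, and one must show that the errors from the splitting and from $\psi$ do not accumulate over the $O(\log n)$ rounds. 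Making all constants depend only on $(\alpha,\rho)$, and degrade gracefully as $\alpha\to1$, is the delicate part.
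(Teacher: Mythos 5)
Your proposal has a genuine gap, and it sits exactly at the point this paper is designed to overcome: the warm start. You assert that signed counts of a bounded-size subgraph family through a candidate pair give a similarity matrix $X^{(0)}$ with normalized correlation $\lambda_0\geq n^{-\epsilon}$ for \emph{any} constant $\rho$. Under the matrix normalization you describe, $\lambda_0\geq n^{-\epsilon}$ requires a per-pair signal-to-noise ratio of order $n^{1/2-\epsilon}$, which bounded subgraph counts cannot deliver; and even under the weaker reading (per-pair SNR of order one), counting a shape with $k$ edges yields signal of order $\rho^{k}$ against noise governed by the number of available shapes, which is precisely why the square root of Otter's constant is a barrier for all subgraph-counting warm starts and why the prior polynomial-time results stop there. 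The paper does not solve this by any statistic at all: it brute-forces the seed, enumerating all $\mathtt M=\prod_{i<K_0}(n-i)\approx n^{\kappa(\hat\rho)}$ ordered $K_0$-tuples of $\mathsf V$ and running the whole pipeline for each, then selecting the best output by edge-overlap at the end. This enumeration is the \emph{only} source of the $\rho$-dependent exponent $C(\alpha,\rho)$ in Theorem~\ref{Main-Thm}, and its necessity is consistent with the low-degree hardness evidence the authors cite: a scheme like yours, whose exponent does not blow up as $\rho\to0$, is exactly what is not expected to exist. Your proposal has no substitute for this step, so the recursion never gets a legitimate starting point when $\rho$ is a small constant.

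The second, related gap is the boosting recursion itself. The statement that $\lambda_{t+1}=f(\lambda_t)(1+o(1))$ with $f(\lambda)\asymp c(\alpha,\rho)\lambda$ and that delocalization and concentration persist over $O(\log n)$ rounds is asserted, not argued, and it is the hard part: single-similarity-matrix iterations of the form $X^{(t+1)}\propto \overline A\,\psi(X^{(t)})\,\overline B^{*}$ are essentially the message-passing approach, whose known analyses again require correlation above the Otter-type threshold. The paper's mechanism is structurally different: it maintains $K_t$ \emph{pairs of vertex sets} $(\Gamma^{(t)}_k,\Pi^{(t)}_k)$, lets the per-pair signal $\varepsilon_t$ \emph{decrease} while the number of pairs grows fast enough that the total signal $K_t\varepsilon_t^2$ increases (\eqref{equ-K-Varepsilon-bound}), and runs only $t^*=O(\log\log\log n)$ rounds. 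Crucially it does not use your edge-splitting trick (fresh edges per round would be too costly in the sparse regime and would not remove the conditioning problem for the quantities the algorithm has already revealed); instead it conditions on everything explored, and controls the resulting dependence by Gaussian smoothing, a Lindeberg-type density comparison between the Bernoulli process and a coupled Gaussian process, Gaussian projection, and the excision of small sets of BAD vertices (Sections~\ref{sec:density-compare}--\ref{sec:Gaussian-analysis}), before finishing with seeded clean-up as you suggest. Your rounding and one-neighborhood clean-up step is fine and matches the paper's use of the seeded matching algorithm, but without a valid initialization and a proved amplification step the core of the argument is missing.
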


\subsection{Backgrounds and related works}

The random graph matching problem is motivated by questions from various applied fields such as social network analysis \cite{NS08, NS09}, computer vision \cite{CSS07, BBM05}, computational biology \cite{SXB08, VCP15, ECK15} and natural language processing \cite{HNM05}.  
In biology, an important problem is to identify proteins with similar structures/functions across different species. Toward this goal, directly comparing amino acid sequences that constitute proteins is often complicated, since genetic mutations of the species can result in significant variations of such a sequence. However, despite these variations, proteins typically maintain similar functions within each species' metabolism. In light of this, biologists employ graph-based representations, such as Protein-Protein Interaction (PPI) graphs, for each species. Under the assumption that the topological structures of PPI graphs are similar across species, researchers can then effectively match proteins with similar functions by taking advantage of such similarity (and by possibly taking advantage of some other domain knowledge). This approach turns out to be successful and offers a nuanced understanding of phenomena such as the evolution of protein complexes. We refer the reader to \cite{ECK15} for more information on the topic.
In the domain of social networks, data privacy is a fundamental and complicated problem. One complication arises from the fact that a user may have accounts in multiple social network platforms, where the user shares similar content or engages in comparable activities. Thus, from such similarities in different platforms, it is possible to infer their identities by aligning the respective graphs representing user interactions and content consumption.  That is to say, it is possible to use graph matching techniques to deanonymize private social networks using information gleaned from public social platforms. 
In this scope, well-known examples include the deanonymization of Netflix using data from IMDb \cite{NS08} and the deanonymization of Twitter using Flickr \cite{NS09}.
Viewing this problem from the opposite perspective, a deeper understanding on the random graph matching problem may offer insights on better mechanism to protect data privacy. With aforementioned applications in mind, the graph matching problem has been extensively studied recently from a theoretical point of view.

It is natural that different applications have their own distinct features, which mathematically boils down to a careful choice of underlying random graph model suitable for the desired application. Similar to most previous works on the random graph matching problem, in this paper we consider the correlated \ER random graph model, which is possibly an over idealization for any realistic network but nevertheless offers a good playground to develop insights and methods for this problem in general.
By the collective efforts as in \cite{CK16, CK17, HM20, WXY20+,WXY21+, GML21, DD22+, DD22+b}, it is fair to say that we have a fairly complete understanding on the information thresholds for the problem of correlation detection and vertex matching. In contrast, the understanding on the computational aspect is far from being complete, and in what follows we briefly review the progress on this front.

A huge amount of efforts have been devoted to developing efficient algorithms for graph matching \cite{PG11, YG13, LFP14, KHG15, FQRM+16, SGE17, BCL19, DMWX21, FMWX22a, FMWX22b, BSH19, CKMP19,DCKG19, MX20, GM20, MRT21, MRT21+, MWXY21+, GML22, GMS22+, MWXY22+}. Previous to this work, arguably the best result is the recent work \cite{MWXY22+} (see also \cite{GML22, GMS22+} for a remarkable result on partial recovery of similar flavor when the average degree is $O(1)$), where the authors substantially improved a groundbreaking work \cite{MRT21+} and obtained a polynomial time algorithm which succeeds as long as the correlation is above the square root of the Otter's constant (the Otter's constant is around ${0.338}$). In terms of methods, the present work seems drastically different from these existing works; instead, this work is closely related to our previous work \cite{DL22+} on a polynomial-time iterative algorithm that matches two correlated Gaussian Wigner matrices. One is encouraged to see \cite[Section 1.1]{MWXY22+} and \cite[Section 1.1]{DL22+} for a more elaborated review on previous algorithms. In addition, one is encouraged to see \cite[Section 1.2]{DL22+} for discussions on novel features of this iterative algorithm, especially in comparison with the message-passing algorithm \cite{GMS22+, PSSZ22} and a recent greedy algorithm for aligning two independent graphs \cite{DDG22+}.

\subsection{Our contributions}
While the present work can be viewed as an extension of \cite{DL22+}, we do feel that we have conquered substantial obstacles and have made substantial contributions to this topic, as we describe below. 
\begin{itemize}
\item While in \cite{MWXY22+} (see also \cite{GMS22+}) polynomial-time matching algorithms were obtained when the correlation is above the threshold from the Otter's constant, our work establishes a polynomial-time algorithm as long as the average degree grows polynomially and the correlation is non-vanishing. In addition, the power in the running time only tends to $\infty$ as the correlation tends to $0$ (for each fixed $\alpha<1$), and we are under the feeling that this is the best possible; such feeling is supported by a recent work on the complexity of low-degree polynomials for graph matching \cite{DDL23+}.
\item From a conceptual point of view, our work demonstrates the robustness of the iterative matching algorithm proposed in \cite{DL22+}. This type of ``algorithmic universality'' is closely related to the universality phenomenon in random matrix theory, which roughly speaking postulates that the particular distribution for entries of random matrices is often irrelevant for spectral properties under investigation. Our work also encourages future study on even more ambitious perspectives for robustness, for instance algorithms that are robust with assumptions on underlying random graph models. This is of major interest since realistic networks are usually captured better by more structured graph models, such as the random geometric graph model \cite{WWXY22+}, the random growing graph model \cite{RS20+} and the stochastic block model \cite{RS21}. 
\item In terms of techniques, our work employs a method which argues that Gaussian approximation is valid typically. There are a couple of major challenges : (1) Gaussian approximation is valid only in a rather weak sense that the Radon-Nikodym derivative is not \emph{too large}; (2) we can not simply ignore the atypical cases when Gaussian approximation is invalid since we have to analyze the conditional behavior given \emph{all} the information the algorithm has exploited so far. The latter raises a major obstacle in our proof for theoretical guarantee; see Section~\ref{sec:proof-outline} for more detailed discussions on how this obstacle is addressed. 
\end{itemize}
In addition, it is natural to suspect that our work brings us one step closer to understanding computational phase transitions for random graph matching problems as well as understanding algorithmic perspectives for other matching problems (see, e.g., \cite{CJMNZ22+}). We refer an interested reader to \cite[Section 1.3]{DL22+} and omit further discussions here.

\subsection{Notations}
We record in this subsection some notation conventions, and we point out that a list of commonly used notations is included at the end of the paper for better reference.

Denote the identity matrix by $\mathrm{I}$ and the zero matrix by $\mathrm{O}$. For a $d\!*\!m$ matrix $\mathrm{A}$, we use $\mathrm{A}^{*}$ to denote its transpose, and let $\| \mathrm{A} \|_{\mathrm{HS}}$ denote the Hilbert-Schmidt norm of $\mathrm{A}$. 
For $1 \leq s \leq \infty$, define the $s$-norm of $\mathrm{A}$ by $\|\mathrm{A}\|_{s} = \sup \{ \| \mathrm{A} x^* \|_{s}: \|x\|_{s} =1 \}$. Note that when $\mathrm{A}$ is a symmetric square matrix, we have for $\tfrac{1}{s}+\tfrac{1}{t}=1$
\[
\|\mathrm{A}\|_{s} = \sup \{ y \mathrm{A} x^*: \|x\|_{s} =1, \|y\|_{t}=1 \} = \sup \{ \| y \mathrm{A} \|_{t} : \|y\|_{t}=1 \} = \| \mathrm{A}^* \|_t = \| \mathrm{A} \|_t \,.
\]
We further denote the operator norm of $\mathrm{A}$ by $\|\mathrm{A}\|_{\mathrm{op}} = \|\mathrm{A}\|_2$. If $m=d$, denote $\mathrm{det(A)}$ and $\mathrm{tr(A)}$ the determinant and the trace of $\mathrm{A}$, respectively. For $d$-dimensional vectors $x,y$ and a $d\!*\!d$ symmetric matrix $\Sigma$, let $\langle x,y \rangle_{\Sigma} = x \Sigma y^{*}$ be the ``inner product'' of $x,y$ with respect to $\Sigma$, and we further denote $\| x \|_{\Sigma}^2 = x \Sigma x^{*}$. For two vectors $\gamma,\mu \in \mathbb{R}^d$, we say $\gamma \geq \mu$ (or equivalently $\mu \leq \gamma$) if their entries satisfy $\gamma(i) \geq \mu(i)$ for all $1 \leq i \leq d$. The indicator function of a set $A$ is denoted by $\mathbf{1}_{A}$.

Without further specification, all asymptotics are taken with respect to $n \to \infty$. We also use standard asymptotic notations: for two sequences $\{ a_n \}$ and $\{ b_n \}$, we write $a_n = O(b_n)$ or $a_n \lesssim b_n$, if $|a_n| \leq C |b_n|$ for some absolute constant $C$ and all $n$. We write $a_n = \Omega(b_n)$ or $a_n \gtrsim b_n$, if $b_n = O(a_n)$; we write $a_n = \Theta(b_n)$ or $a_n \asymp b_n$, if $a_n = O(b_n)$ and $a_n = \Omega(b_n)$; we write $a_n = o(b_n)$ or $b_n = \omega(a_n)$, if $\frac{a_n}{b_n} \to 0$ as $n \to \infty$. We write $a_n \sim b_n$ if $\frac{a_n}{b_n} \to 1$.

We denote by $\mathrm{Ber}(p)$ the Bernoulli distribution with parameter $p$, denote by $\mathrm{Bin}(n,p)$ the Binomial distribution with $n$ trials and success probability $p$, denote by $\mathcal{N}(\mu, \sigma^2)$ the normal distribution with mean $\mu$ and variance $\sigma^2$, and denote by $\mathcal{N}(\mu,\Sigma)$ the multi-variate normal distribution with mean $\mu$ and covariance matrix $\Sigma$. We say $(X,Y)$ is a pair of correlated binomial random variables, denoted as $\mathrm{CorBin}(N,M,p;m,\rho)$ for $m\leq \min\{N,M\}$ and $\rho \in [0,1]$, if $(X,Y) \sim \big( \sum_{k=1}^{N} b_k, \sum_{k=1}^{M} b^{\prime}_k \big)$ with $b_k, b^{\prime}_l \sim \mathrm{Ber}(p)$ such that  $\{ b_k, b^{\prime}_k \}$ are independent with $\{ b_1, \ldots, b_N, b^{\prime}_1,\ldots,b^{\prime}_{M} \} \setminus \{ b_k, b^{\prime}_k \}$ and the covariance between $b_k$ and $b^{\prime}_k$ is $\rho$ when $k \leq m$, and that  $b_k$ is independent with $\{ b_1, \ldots, b_N, b^{\prime}_1,\ldots,b^{\prime}_{M} \} \setminus \{ b_k \}$ and $b^{\prime}_l$ is independent with $\{ b_1, \ldots, b_N, b^{\prime}_1,\ldots,b^{\prime}_{M} \} \setminus \{ b^{\prime}_l \}$ when $k,l>m$. We say $X$ is a sub-Gaussian variable, if there exists a positive constant $C$ such that $\mathbb{P}( |X| \geq t ) \leq 2 e^{-{t^2}/{C^2}}$, and we use $\| X \|_{\psi_2}= \inf \big\{ C >0 : \mathbb{E}[ \exp \{ \frac{X^2}{C^2} \} ] \leq 2 \big\}$ to denote its sub-Gaussian norm.

\smallskip

\noindent {\bf Acknowledgment.} We thank Zongming Ma, Yihong Wu, Jiaming Xu and Fan Yang for stimulating discussions on random graph matching problems. J. Ding is partially supported by NSFC Key Program Project No. 12231002 and the Xplorer Prize.

\section{An iterative matching algorithm} \label{sec:algorithm-description}

We first describe the underlying heuristics of our algorithm (the reader is strongly encouraged to consult \cite[Section 2]{DL22+} for a description on the iterative matching algorithm for correlated Gaussian Wigner matrices).  Since we expect that Wigner matrices and \ER graphs (with sufficient edge density) should belong to the same algorithmic universality class, it is natural to try to extend the algorithm proposed in \cite{DL22+} to the case for correlated random graphs. As in \cite{DL22+}, our wish is to iteratively construct a sequence of paired sets $\big( \Gamma^{(t)}_k,\Pi^{(t)}_k \big)_{1 \leq k \leq K_t}$ for $t\geq 0$ (with $\Gamma^{(t)}_k \subset V$ and $\Pi^{(t)}_k \subset \mathsf V$), where each $\big( \Gamma^{(t)}_k, \Pi^{(t)}_k \big)$ contains more true pairs of the form $(v, \pi(v))$ than the case when the two sets are sampled uniformly and independently. In addition, we may further require $|\Gamma^{(t)}_k|,|\Pi^{(t)}_k| \approx \mathfrak{a}_t n$ for convenience of analysis later.

For initialization in \cite{DL22+}, we obtain $K_0$ true pairs via brute-force search, and provided with $K_0$ true pairs we then for each such pair define $\big( \Gamma^{(0)}_k,\Pi^{(0)}_k \big)$ to be the collections of their neighbors such that the corresponding edge weights exceed a certain threshold. In this work, however, due to the sparsity of \ER graphs (when $\alpha>0$) we cannot produce an efficient initialization by simply looking at the 1-neighborhoods of some true pairs. In order to address this, we instead look at their $\chi$-neighborhoods with carefully chosen $\chi$ (see the definition of $\big( \Gamma^{(0)}_k, \Pi^{(0)}_k \big)$ in \eqref{equ-def-initial-set} below). This would require a significantly more complicated analysis since this initialization will have influence on iterations later. The idea to address this is to argue that in the initialization we have only used information on a small fraction of the edges; this is why $\chi$ will be chosen carefully.

Provided with the initialization, the iteration of the algorithm is similar to that in \cite{DL22+} (although we will introduce some modifications in order to facilitate our analysis later). Since each pair $\big( \Gamma^{(t)}_k, \Pi^{(t)}_k \big)$ carries some signal, we then hope to construct \emph{more} paired sets at time $t+1$ by considering various linear combinations of vertex degrees restricted to each $\Gamma^{(t)}_k$ (or to $\Pi^{(t)}_k$). As a key novelty of this iterative algorithm, as in \cite{DL22+} we will use the increase on the number of paired sets to compensate the decrease on the signal carried in each pair. As we hope, once the iteration progresses to time $t = t^*$ for some well chosen $t^*$ (see \eqref{eq-def-t-*} below) we would have accumulated enough total signal so that we can just complete the matching directly in the next step, as described in Section~\ref{sec:ending-iteration}. 

However, controlling the correlation among different iterative steps is a much more sophisticated job in this setting. In \cite{DL22+} we used Gaussian projection to remove the influence of conditioning on information obtained in previous steps. This is indeed a powerful technique but it crucially relies on the property of Gaussian process. Although there are examples that the universality of iterative algorithms have been established (see, e.g., \cite{BLM15, CL21, FWZ22+} for development on this front for approximate-message-passing), we are not sure how their techniques can help solving our problem since dealing with a pair of correlated matrices seems of substantial and novel challenge. Instead we try to compare the Bernoulli process we obtained in the iterative algorithm with the corresponding Gaussian process when we replace  $\{G_{u,v}, \mathsf{G}_{\mathsf{u,v}}\}$ by a Gaussian process with the same mean and covariance structure. In order to facilitate such comparison, we also apply a Gaussian smoothing to our Bernoulli process in our algorithm below (see \eqref{equ-def-iter-sets} where we introduce external Gaussian noise for smoothing purpose). However, since we need to analyze the conditional behavior of two processes, we need to compare their densities; this is much more demanding than controlling e.g. the transportation distance between two processes, and actually the density ratio of these two processes are fairly large. In order to address this, on the one hand, we will show that if we ignore a vanishing fraction of vertices (which is a highly non-trivial step as we will elaborate in Section~\ref{sec:proof-outline}), the density ratio is then under control while still being fairly large; on the other hand, we show that in the Gaussian setting some really bad event occurs with tiny probability (and thus still with small probability even after multiplying by this fairly large density ratio). We refer to Sections~\ref{sec:density-compare} and \ref{sec:Gaussian-analysis} for more detailed discussions on this point. 

Finally, due to the aforementioned complications we are only able to show that our iterative algorithm constructs an almost exact matching. To obtain an exact matching, we will employ the method of seeded graph matching, as developed in previous works \cite{BCL19, MX20, MWXY22+}.

In the rest of this section, we will describe in detail our iterative algorithm, which consists of a few steps including preprocessing (see Section~\ref{sec:preprocessing}), initialization (see Section~\ref{sec:initialization}), iteration (see Section~\ref{sec:iteration}), finishing (see Section~\ref{sec:ending-iteration}) and seeded graph matching (see Section~\ref{sec:seeded-matching}). We formally present our algorithm in Section~\ref{sec:formal-algorithm}. In Section~\ref{sec:runtime-analysis} we analyze the time complexity of the algorithm.

\subsection{Preprocessing}\label{sec:preprocessing}

Similar to \cite{DL22+}, we make appropriate preprocessing on random graphs such that we only need to consider graphs with directed edges. We first make a technical assumption that we only need to consider the case when $\rho$ is a sufficiently small constant, which can be easily achieved by keeping each edge independently with a sufficiently small constant probability. 

Now, we define $\overrightarrow G$ from $G$. For any $u \neq v \in V$, we do the following: 
\begin{itemize}
    \item if $(u,v) \in E(G)$, then independently among all such $(u,v)$:
    \begin{align*}
        & \mbox{with probability } \frac{1}{2} - \frac{q}{4}, \mbox{ set } \overrightarrow{(u,v)} \in \overrightarrow{G}, \overrightarrow{(v,u)} \not \in \overrightarrow{G}\,, \\
        & \mbox{with probability } \frac{1}{2} - \frac{q}{4}, \mbox{ set } \overrightarrow{(u,v)} \not \in \overrightarrow{G}, \overrightarrow{(v,u)} \in \overrightarrow{G} \,,\\
        & \mbox{with probability }  \frac{q}{4}, \mbox{ set } \overrightarrow{(u,v)} \in \overrightarrow{G}, \overrightarrow{(v,u)}  \in \overrightarrow{G} \,, \\
        & \mbox{with probability }  \frac{q}{4}, \mbox{ set } \overrightarrow{(u,v)} \not \in \overrightarrow{G}, \overrightarrow{(v,u)} \not \in \overrightarrow{G} \,;
    \end{align*}
    \item if $(u,v) \not \in E(G)$, then set $\overrightarrow{(u,v)} \not \in \overrightarrow{G}, \overrightarrow{(v,u)} \not \in \overrightarrow{G}$.
\end{itemize}
We define $\overrightarrow {\mathsf G}$ from $\mathsf G$ in the same manner such that $\overrightarrow G$ and $\overrightarrow {\mathsf G}$ are conditionally independent given $(G, \mathsf G)$. We continue to use the convention that $\overrightarrow G_{u, v} = \mathbf 1_{ \{ \overrightarrow{(u,v)} \in \overrightarrow G \}}$.
It is then straightforward to verify that $\{ \overrightarrow{G}_{u,v} : u \neq v \}$ and $\{ \mathsf{\overrightarrow{\mathsf{G}}_{u,v} : u \neq v}  \}$ are two families of i.i.d.\ Bernoulli random variables with parameter $\frac{q}{2}$. In addition, we have
\begin{align*}
    \mathbb{E}[ \overrightarrow{G}_{u,v} {\overrightarrow{\mathsf{G}}_{\pi(u),\pi(v)}} ] = \mathbb{E}[ \overrightarrow{G}_{u,v} {\overrightarrow{\mathsf{G}}_{\pi(v),\pi(u)}} ]= \frac{q(q+ \rho(1-q))}{4}\,.
\end{align*}
Thus, $\overrightarrow{G},\overrightarrow{\mathsf{G}}$ are edge-correlated directed graphs, denoted as $\overrightarrow{\mathcal{G}}(n,\Hat{q},\Hat{\rho})$, such that $\Hat{q} =\frac{q}{2}$ and $\Hat{\rho}= \frac{1-q}{2-q} \rho$. Also note that $\Hat{q} \geq n^{-\alpha+o(1)}$ and $\Hat{\rho} \in [ \frac{\rho}{3} ,\frac{\rho}{2})$ since $q \leq 1/2$. From now on we will work on the directed graph  $(\overrightarrow{G}, \overrightarrow{\mathsf{G}})$.

\subsection{Initialization}\label{sec:initialization}

For a pair of standard bivariate normal variables $(X, Y)$ with correlation $u$, we define $\phi: [-1, 1] \mapsto [0, 1]$ by (below the number 10 is somewhat arbitrarily chosen)
\begin{align}
    \phi(u) = \mathbb{P} (|X| \geq 10 ,|Y| \geq 10) \,.
    \label{equ-def-func-phi}
\end{align}
In addition, we define 
\begin{equation}{\label{eq-def-iota-ub-lb}}
    \iota_{\mathrm{ub}} = \sup_{x \in (0,1]} \Big\{ \frac{\phi(x)-\phi(0)}{x^2} \Big\} \mbox{ and } \iota_{\mathrm{lb}} = \inf_{x \in (0,1]} \Big\{ \frac{\phi(x)-\phi(0)}{x^2} \Big\} \,.
\end{equation}
From the definition we know $\phi$ is strictly increasing and by \cite[Claims 2.6 and 2.8]{DL22+} we have $\phi^{\prime} (0) =0, \phi^{\prime \prime} (0) >0$, and thus both $\iota_{\mathrm{ub}}$ and $\iota_{\mathrm{lb}}$ are positive and finite. Also we write $\mathfrak{a}=\phi(1) = \mathbb{P}(|X| \geq 10)$. Recalling in Subsection~\ref{sec:preprocessing} it was shown that $\rho$ can be assumed to be a sufficiently small constant, from now on we will assume that
\begin{equation}{\label{eq-assumetion-rho}}
    \Hat{\rho} \leq \rho \leq \min \big\{ \mathfrak{a} - \mathfrak{a}^2 , \iota_{\mathrm{ub}}^{-1} , \tfrac{1}{10} \big\} \,. 
\end{equation}

Let $\kappa = \kappa(\Hat{\rho})$ be a sufficiently large constant depending on $\Hat{\rho}$ whose exact value will be decided later in \eqref{eq-kappa-choice}. Set $K_0 = \kappa$. We then arbitrarily choose a sequence $A = (u_1, u_2, \ldots , u_{K_0})$ where $u_i$'s are distinct vertices in $V$, and we list all the sequences of length $K_0$ with distinct elements in $\mathsf{V}$  as $\mathsf{A}_1, \mathsf{A}_2, \ldots , \mathsf{A}_{\mathtt{M}}$ where $\mathtt{M} = \mathtt{M}(n,\Hat{\rho},\Hat{q}) = \prod_{i=0}^{K_0-1}(n-i) $. As in \cite{DL22+}, for each $1 \leq \mathtt{m} \leq \mathtt{M}$, we will run a procedure of initialization and iteration and clearly for one of them (although \emph{a priori} we do not know which one it is) we are running the algorithm as if we have $K_0$ true pairs as seeds. For convenience, when describing the initialization and the iteration we will drop $\mathtt m$ from notations, but we emphasize that this procedure will be applied to each $\mathsf A_{\mathtt m}$. Having this clarified, we take a fixed $\mathtt m$ and denote $\mathsf{A}_{\mathtt{m}} = ( \mathsf{u}_1, \ldots, \mathsf{u}_{K_0} )$. In what follows, we abuse the notation and write $V \setminus A$ when regarding $A$ as a set (similarly for $\mathsf A_{\mathtt m}$). 

We next describe our initialization procedure. As discussed earlier, to the contrary of the case for Wigner matrices, we have to investigate the neighborhood around a seed up to a certain depth in order to get information for a large number of vertices. To this end, we choose an integer $\chi \leq \frac{1}{1-\alpha}$ as the depth such that 
\begin{equation} \label{eq-def-chi}
    (n \Hat{q})^{\chi} = o\big( n e^{-(\log \log n)^{100}} \big) \mbox{ and } (n \hat{q})^{\chi+1} = \Omega \big(n e^{-(\log \log n)^{100}} \big)
\end{equation} 
(this is possible since $\alpha<1$). We choose such $\chi$ since on the one hand we wish to see a large number of vertices near the seed and on the other hand we want to reveal only a vanishing fraction of the edges. Now for $1 \leq k \leq K_0$, define the seeds
\begin{align}
    \aleph^{(0)}_k = \{ u_{k} \}, \Upsilon^{(0)}_k = \{ \mathsf{u}_{k} \} \,,
    \mbox{ and }  \vartheta_0 = \varsigma_0 = 1 / n \,.    \label{equ-def-initial-aleph-Upsilon}
\end{align}
Then for $1 \leq a \leq \chi$, we iteratively define the $a$-neighborhood of each seed by
\begin{equation}
\begin{aligned}
    & \aleph^{(a)}_k = \Big\{ v \in V \setminus \big( \cup_{1 \leq k \leq K_0, 0 \leq j \leq a-1} \aleph^{(j)}_k \big) :  \overrightarrow{G}_{v,u} = 1 \mbox{ for some } u \in \aleph^{(a-1)}_k \Big\} \,, \\
    & \Upsilon^{(a)}_k = \Big\{ \mathsf{v} \in \mathsf{V} \setminus \big( \cup_{1 \leq k \leq K_0, 0 \leq j \leq a-1} \Upsilon^{(j)}_k \big) :  \overrightarrow{\mathsf{G}}_{\mathsf{v},\mathsf{u}} = 1 \mbox{ for some } \mathsf{u} \in \Upsilon^{(a-1)}_k \Big\} \,.
    \label{equ-def-iter-aleph-Upsilon}
\end{aligned}
\end{equation} 
Also, for $1 \leq a \leq \chi$ we iteratively define
\begin{equation}
\begin{aligned}
    & \vartheta_a = \mathbb{P}( X \geq 1 ) \mbox { where } X \sim \mathrm{Bin}(\vartheta_{a-1}n,\Hat{q})  \,, \\ 
    & \varsigma_{a} = \mathbb{P} ( X \geq 1,Y \geq 1 ) \mbox{ where } (X,Y) \sim \mathrm{CorBin}(\vartheta_{a-1}n,\vartheta_{a-1}n, \Hat{q}; \varsigma_{a-1}n, \Hat{\rho})  \,. \label{equ-def-iter-vartheta-varsigma}
\end{aligned}
\end{equation}
We will show in Subsection~\ref{sec:priliminary-events} that actually we have 
$$
    |\aleph^{(a)}_k|/n, |\Upsilon^{(a)}_k|/n \approx \vartheta_a \mbox{ and } |\aleph^{(a)}_k| \cap \Upsilon^{(a)}_k|/n \approx \varsigma_a \,.
$$
Let $\mathtt{d}_{\chi}=\mathtt{d}_{\chi}(n,\Hat{q})$ be the minimal integer such that $\mathbb{P}( \mathrm{Bin}(n \vartheta_{\chi}, \Hat{q}) \geq \mathtt{d}_{\chi} ) < \frac{1}{2}$, and set 
\begin{equation}
\begin{aligned}
    & \Gamma^{(0)}_k = \aleph^{(\chi+1)}_k = \Big\{ v \in V \setminus \big( \cup_{1 \leq k \leq K_0, 0 \leq j \leq \chi} \aleph^{(j)}_k \big) : \sum_{u \in \aleph^{(\chi)}_k} \overrightarrow{G}_{v,u} \geq \mathtt{d}_{\chi} \Big\} \,,  \\
    & \Pi^{(0)}_k = \Upsilon^{(\chi+1)}_k = \Big\{ \mathsf{v} \in \mathsf{V} \setminus \big( \cup_{1 \leq k \leq K_0, 0 \leq j \leq \chi} \Upsilon^{(j)}_k \big) : \sum_{\mathsf{u} \in \Upsilon^{(\chi)}_k}  \overrightarrow{\mathsf{G}}_{\mathsf{v},\mathsf{u}} \geq \mathtt{d}_{\chi} \Big\} \,.
    \label{equ-def-initial-set}
\end{aligned}
\end{equation}
And we further define
\begin{equation}
\begin{aligned}
    & \vartheta = \vartheta_{\chi+1} = \mathbb{P}( X \geq \mathtt{d}_{\chi}) \mbox{ where } X \sim \mathrm{Bin}(\vartheta_{\chi}n,\Hat{q})  \,,  \\ 
    & \varsigma = \varsigma_{\chi+1} = \mathbb{P} ( X,Y \geq \mathtt{d}_{\chi}) \mbox{ where } X,Y \sim \mathrm{CorBin} (\vartheta_{\chi}n, \vartheta_{\chi}n, \Hat{q}; \varsigma_{\chi}n, \Hat{\rho})\,. \label{equ-def-vartheta-varsigma}
\end{aligned}
\end{equation}
We may then choose $K_0 = \kappa$ sufficiently large such that $K_0 \geq \frac{10^{34} \iota_{\mathrm{ub}}^2 \Hat{\rho}^{-20}}{\iota_{\mathrm{lb}}^2 (\mathfrak{a}- \mathfrak{a}^2)^2}$ and
\begin{equation}
    \label{eq-kappa-choice}
    \frac{ \log ( K_0 \iota_{\mathrm{lb}}^2 (\mathfrak{a}-\mathfrak{a}^2)^2 \Hat{\rho}^{20} / 10^{30} \iota_{\mathrm{ub}}^2 ) }{ \log ( { K_0 \iota_{\mathrm{lb}}^4 \Hat{\rho}^{24} \varepsilon_0^2 } / 16 \cdot 10^{30} \iota_{\mathrm{ub}}^2 ) } \leq 1.01 \mbox{ where } \varepsilon_{0} = \frac{\varsigma-\vartheta^2} {2(\vartheta-\vartheta^2)}  \,.
\end{equation} 
In addition, we define $\Phi^{(0)}, \Psi^{(0)}$ to be $K_0\!*\!K_0$ matrices by
\begin{align}
    \Phi^{(0)} = \mathrm{I} \mbox{ and }
    \Psi^{(0)} = \frac{\varsigma-\vartheta^2} {2(\vartheta-\vartheta^2)} \mathrm{I}\,,
    \label{equ-initial-matrix}
\end{align}
and in the iterative steps we will also construct $K_t,\varepsilon_t,\Gamma^{(t)}_k,\Pi^{(t)}_k$ and $\Phi^{(t)},\Psi^{(t)}$ for $t \geq 1$. 
Similarly as in \cite{DL22+}, the matrices $\Phi^{(t)}$ and $\Psi^{(t)}$ are supposed to approximate the cardinalities of the sets $\big\{ \Gamma^{(t)}_k, \Pi^{(t)}_k \big\}$ in the following sense. Write
\begin{equation}{\label{eq-def-mathfrak-a-t}}
    \mathfrak{a}_t=
    \begin{cases}
        \mathfrak{a}, & t \geq 1 \,; \\
        \vartheta , & t=0 \,.
    \end{cases}
\end{equation}
Then somewhat informally, we expect that 
\begin{align}
    & \frac{1}{n}|\Gamma^{(t)}_i|, \frac{1}{n}|\Pi^{(t)}_i| \approx \mathfrak{a}_t \,, \label{eq-intuition-concentration-0} \\
    &\frac{ \frac{1}{n}|\Gamma^{(t)}_i \cap \Gamma^{(t)}_j| - \frac{1}{n}|\Gamma^{(t)}_i| - \frac{1}{n} |\Gamma^{(t)}_j| + \mathfrak{a}_t^2 }{ \mathfrak{a}_t-\mathfrak{a}_t^2 } \approx \frac{ \frac{1}{n}|\Gamma^{(t)}_i \cap \Gamma^{(t)}_j| - \mathfrak{a}_t^2 }{ \mathfrak{a}_t-\mathfrak{a}_t^2 } \approx \Phi^{(t)}(i,j) \,, \label{eq-intuition-concentration-1} \\
    &\frac{ \frac{1}{n}|\Pi^{(t)}_i \cap \Pi^{(t)}_j| - \frac{1}{n}|\Pi^{(t)}_i| - \frac{1}{n} |\Pi^{(t)}_j| + \mathfrak{a}_t^2 }{ \mathfrak{a}_t-\mathfrak{a}_t^2 } \approx \frac{ \frac{1}{n} |\Pi^{(t)}_i \cap \Pi^{(t)}_j| - \mathfrak{a}_t^2 }{ \mathfrak{a}_t-\mathfrak{a}_t^2 } \approx \Phi^{(t)}(i,j) \,, \label{eq-intuition-concentration-2} \\
    &\frac{ \frac{1}{n}|\Gamma^{(t)}_i \cap \Pi^{(t)}_j| - \frac{1}{n}|\Gamma^{(t)}_i| - \frac{1}{n} |\Pi^{(t)}_j| + \mathfrak{a}_t^2 }{ \mathfrak{a}_t-\mathfrak{a}_t^2 } \approx \frac{ \frac{1}{n}|\Gamma^{(t)}_i \cap \Pi^{(t)}_j| - \mathfrak{a}_t^2 }{ \mathfrak{a}_t-\mathfrak{a}_t^2 } \approx \Psi^{(t)}(i,j) \,. \label{eq-intuition-concentration-3}
\end{align}
As in \cite[Lemma 2.1]{DL22+}, in order to facilitate our analysis later we will also need an important property on the eigenvalues of $\Phi^{(t)}$ and $\Psi^{(t)}$:
\begin{align}
    &\Phi^{(t)} \mbox{ has at least } \frac{3}{4}K_t \mbox{ eigenvalues between $0.9$ and $1.1$} \,, \label{eq-intuition-spectral-1} \\
    \mbox{and } &\Psi^{(t)} \mbox{ has at least } \frac{3}{4}K_t \mbox{ eigenvalues between $ 0.9 \varepsilon_t$ and $1.1 \varepsilon_t$} \,. \label{eq-intuition-spectral-2}
\end{align}
We will show in Subsection~\ref{sec:priliminary-events} that \eqref{eq-intuition-concentration-0}--\eqref{eq-intuition-spectral-2} are satisfied for $t=0$. The main challenge is to construct $(\Gamma_k^{(t+1)}, \Pi_k^{(t+1)})$ and $\Phi^{(t+1)}, \Psi^{(t+1)}$ such that \eqref{eq-intuition-concentration-0}--\eqref{eq-intuition-spectral-2} hold for $t+1$, \emph{under the inductive assumption} that \eqref{eq-intuition-concentration-0}--\eqref{eq-intuition-spectral-2} hold for $t$.
We conclude this subsection with some bounds on $(\vartheta_k, \varsigma_k)$.
\begin{Lemma}
\label{lemma-property-vartheta-varsigma}
$\vartheta_k,\varsigma_k = \Theta( n^{-1} (n\Hat{q})^{k} )$ for $0 \leq k \leq \chi$ and $\vartheta_{\chi+1},\varsigma_{\chi+1} = \Omega( e^{- (\log \log n)^{100}} )$. Also, we have $\varsigma_k -\vartheta_k^2 = \Theta(\vartheta_k)$ for $0 \leq k \leq \chi+1$. In addition, we have either $\vartheta_{\chi+1} = \Theta(1)$ or $\vartheta_{\chi} \leq n^{-\alpha+o(1)}$.
\end{Lemma}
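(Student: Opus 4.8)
The plan is to split on the range of $k$: for $0\le k\le\chi$ the estimates follow from a routine ``subcritical branching'' induction, while $k=\chi+1$ needs a separate analysis built around the threshold $\mathtt{d}_\chi$, distinguishing $\mathtt{d}_\chi=1$ from $\mathtt{d}_\chi\ge 2$ (heuristically, whether the relevant Binomial mean $\lambda:=\vartheta_\chi n\hat q$ is $o(1)$ or $\gtrsim 1$).

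\emph{The range $0\le k\le\chi$.} I would induct on $k$, the base case being immediate from $\vartheta_0=\varsigma_0=1/n$. By \eqref{eq-def-chi} one has $(n\hat q)^\chi=o(n)$, so every Binomial/CorBin mean occurring at a level $\le\chi$ is $o(1)$, namely $\vartheta_{k-1}n\hat q=\Theta\big((n\hat q)^k/n\big)=o(1)$. A first/second moment estimate then gives $\vartheta_k=1-(1-\hat q)^{\vartheta_{k-1}n}=\Theta(\vartheta_{k-1}n\hat q)=\Theta\big(n^{-1}(n\hat q)^k\big)$. For $\varsigma_k$ the upper bound $\varsigma_k\le\mathbb{P}(X\ge 1)=\vartheta_k$ is trivial, and for the lower bound I would retain only the event that some correlated pair $(b_i,b_i')$ with $i\le\varsigma_{k-1}n$ equals $(1,1)$; since these events are independent over $i$ and each has probability $\Theta(\hat q)$, this yields $\varsigma_k=\Omega(\varsigma_{k-1}n\hat q)=\Omega\big(n^{-1}(n\hat q)^k\big)$ (the implied constant depending on $\hat\rho=\Theta(1)$). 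Finally, the pair $(X,Y)$ is positively associated: writing each correlated pair via the monotone coupling $(c_i\vee d_i,\,c_i\vee d_i')$ with independent $\mathrm{Ber}$ variables $c_i,d_i,d_i'$ of suitable parameters, both $X$ and $Y$ become coordinatewise non-decreasing functions of one independent family, so $\varsigma_k\ge\vartheta_k^2$ by the Harris/FKG inequality; together with $\vartheta_k^2=o(\vartheta_k)$ this gives $\varsigma_k-\vartheta_k^2=\Theta(\vartheta_k)$.

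\emph{The range $k=\chi+1$.} Set $N=\vartheta_\chi n$ and $\lambda=N\hat q=\Theta\big((n\hat q)^{\chi+1}/n\big)$; by \eqref{eq-def-chi}, $\lambda=\Omega\big(e^{-(\log\log n)^{100}}\big)$, and $N\to\infty$ as soon as $\chi\ge 1$. If $\mathtt{d}_\chi=1$, then $\vartheta_{\chi+1}=1-(1-\hat q)^N\asymp\min(\lambda,1)=\Omega\big(e^{-(\log\log n)^{100}}\big)$, and here $\lambda\lesssim 1$; conditioning on the number $U$ of ``common successes'' of the monotone coupling --- given which $X$ and $Y$ are i.i.d.\ --- one gets $\varsigma_{\chi+1}-\vartheta_{\chi+1}^2=\mathrm{Var}\big(h(U)\big)$ with $h(u)=\mathbb{P}(X\ge 1\mid U=u)$, which (since $h\equiv 1$ on $\{u\ge 1\}$) equals $\mathbb{P}(R=0)^2\,\mathbb{P}(U=0)\big(1-\mathbb{P}(U=0)\big)$ for a residual Binomial $R$ of mean $\Theta(\lambda)$, and (using $\varsigma_\chi\asymp\vartheta_\chi$ from the first range) this is $\Theta(\min(\lambda,1))=\Theta(\vartheta_{\chi+1})$, which also gives $\varsigma_{\chi+1}=\Omega(e^{-(\log\log n)^{100}})$. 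If $\mathtt{d}_\chi\ge 2$, then necessarily $\chi\ge 1$ (as $\chi=0$ forces $N=1$ and hence $\mathtt{d}_0=1$) and, by minimality of $\mathtt{d}_\chi$, $\mathbb{P}(\mathrm{Bin}(N,\hat q)\ge 1)\ge 1/2$, which with $\hat q\le 1/2$ forces $\lambda\gtrsim 1$; I would then show $\vartheta_{\chi+1}=\mathbb{P}(\mathrm{Bin}(N,\hat q)\ge\mathtt{d}_\chi)=\Omega(1)$ by peeling off one trial and using either the uniform point-mass bound $\max_j\mathbb{P}(\mathrm{Bin}(N,\hat q)=j)=O(\lambda^{-1/2})$ (so $\vartheta_{\chi+1}\ge 1/2-O(\lambda^{-1/2})\ge 1/4$ once $\lambda$ is large) or, when $\lambda=\Theta(1)$, Markov's inequality to get $\mathtt{d}_\chi=O(1)$ plus a direct Poisson-type lower bound $\mathbb{P}(\mathrm{Bin}(N,\hat q)=\mathtt{d}_\chi)=\Omega(1)$; positive association then gives $\varsigma_{\chi+1}\ge\vartheta_{\chi+1}^2=\Omega(1)$.

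\emph{The dichotomy, and the main obstacle.} If $\vartheta_{\chi+1}$ is not $\Theta(1)$, then by the above we must be in the case $\mathtt{d}_\chi=1$ with $\lambda=o(1)$; hence $(n\hat q)^{\chi+1}=\Theta(n\lambda)=o(n)$, which combined with $(n\hat q)^{\chi+1}=\Omega(ne^{-(\log\log n)^{100}})$ forces $(n\hat q)^{\chi+1}=n^{1-o(1)}$, and then $\vartheta_\chi=\Theta\big(n^{-1}(n\hat q)^\chi\big)=\Theta\big(n^{-1}(n\hat q)^{\chi+1}/(n\hat q)\big)$ is polynomially small; inserting $\hat q=n^{-\alpha+o(1)}$ gives the claimed bound on $\vartheta_\chi$. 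The one point still to be settled is the lower bound $\varsigma_{\chi+1}-\vartheta_{\chi+1}^2=\Omega(\vartheta_{\chi+1})$ in the case $\vartheta_{\chi+1}=\Theta(1)$, i.e.\ $\lambda\gtrsim 1$, and this is the technical heart. As before $\varsigma_{\chi+1}-\vartheta_{\chi+1}^2=\mathrm{Var}(h(U))$ with $h(u)=\mathbb{P}(X\ge\mathtt{d}_\chi\mid U=u)$ non-decreasing; one checks $h(1)-h(0)\ge\tfrac12\mathbb{P}(R=\mathtt{d}_\chi-1)$ for an explicit residual Binomial $R$ of mean $\Theta(\lambda)$, so when $\lambda=\Theta(1)$ this difference is $\Omega(1)$ and $\mathbb{P}(U=0),\mathbb{P}(U\ge 1)=\Omega(1)$ (as $\mathbb{E}U=\Theta(\lambda)=\Theta(1)$), giving $\mathrm{Var}(h(U))=\Omega(1)$. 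When $\lambda\to\infty$ one must instead invoke a (local) central limit theorem: $U$ and $R$ are asymptotically Gaussian with means and variances all of order $\lambda$, and $\mathtt{d}_\chi$ lies within $O(1)$ of $\mathbb{E}U+\mathbb{E}R$, so $h(U)$ is, to leading order, a fixed smooth increasing function of the standardized variable $(\mathbb{E}U-U)/\sqrt{\mathrm{Var}(R)}\Rightarrow\mathcal{N}(0,c)$ with $c\asymp 1$, whence $\mathrm{Var}(h(U))$ tends to a positive constant. I expect this last step --- making those Gaussian approximations quantitative (Berry--Esseen for $U$, a local limit theorem for $R$), uniformly as $\lambda\to\infty$ and over the exact position of $\mathtt{d}_\chi$ near the median of $\mathrm{Bin}(N,\hat q)$ --- to be the main obstacle; the remainder is the subcritical bookkeeping of the first range together with elementary Binomial anti-concentration.
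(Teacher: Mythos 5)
You follow essentially the same route as the paper: induction with Poisson-type approximation for $0\le k\le\chi$, and a case analysis at level $\chi+1$ governed by the size of $\lambda=\vartheta_\chi n\hat q$ (your split $\mathtt{d}_\chi=1$ versus $\mathtt{d}_\chi\ge2$ is the paper's split $\lambda\ll1$ versus $\lambda=\Omega(1)$ in slightly different clothing). Your write-up is in fact more explicit than the paper's: the identity $\varsigma_{\chi+1}-\vartheta_{\chi+1}^2=\mathrm{Var}(h(U))$ obtained from the monotone coupling, with $h$ nondecreasing, is a clean way to organize what the paper dispatches in a single clause (``using Poisson approximation''), and your elementary anti-concentration argument for $\vartheta_{\chi+1}=\Omega(1)$ when $\mathtt{d}_\chi\ge2$ is fine.

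Two caveats. (1) The sub-case you single out as the main obstacle, namely $\mathrm{Var}(h(U))=\Omega(1)$ when $\lambda\to\infty$, is left as a sketch; it is precisely the step the paper also compresses into one phrase. Your plan does work: $\mathtt{d}_\chi$ is within $O(1)$ of the median, hence of the mean, of $U+R$; both $U$ and $R$ have mean and variance $\Theta(\lambda)$; and a Berry--Esseen bound for binomials shows $h$ increases by a constant across an $O(\sqrt{\lambda})$-window which $U$ straddles with constant probability, so $\mathrm{Var}(h(U))=\Omega(1)$. But as written this is a program, not a proof. (2) In the dichotomy, your computation gives $\vartheta_\chi=\Theta\big(\lambda/(n\hat q)\big)\le n^{-(1-\alpha)+o(1)}$ when $\lambda=O(1)$; the closing claim that ``inserting $\hat q=n^{-\alpha+o(1)}$ gives the claimed bound'' is valid only when $1-\alpha\ge\alpha$, i.e.\ $\alpha\le\tfrac12$, so for $\alpha>\tfrac12$ your argument does not reach the stated bound $\vartheta_\chi\le n^{-\alpha+o(1)}$. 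You are in good company: the paper's own one-line argument ($\vartheta_\chi=n^{-\alpha+\epsilon+o(1)}\Rightarrow n\hat q\vartheta_\chi\gg1$) carries the same restriction, and what your derivation honestly yields, that $\vartheta_\chi$ is polynomially small (at most $n^{-(1-\alpha)+o(1)}$), is what is actually needed downstream (the proof of Lemma~\ref{lemma-E-0-holds} only uses polynomial smallness of $n\vartheta_\chi$). It would be worth stating this exponent discrepancy explicitly rather than asserting the bound as written.
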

\begin{proof}
We prove the first claim by induction. The claim is trivial for $k=0$. Now suppose the claim holds up to some $k \leq \chi-1$. Using \eqref{equ-def-iter-vartheta-varsigma} and Poisson approximation (note that when $k \leq \chi-1$ we have $n\Hat{q} \vartheta_{k} = \Theta(n^{-1}(n\Hat{q})^{k+1}) = o(1)$)
\begin{align*}
    \vartheta_{k+1} = \Theta((\vartheta_k n \Hat{q}) = \Theta(n^{-1} (n \Hat{q})^{k+1}) \mbox{ and } \vartheta_{k+1} \geq \varsigma_{k+1} \geq \Theta( \Hat{\rho} \varsigma_k n \Hat{q} ) = \Theta(\vartheta_{k+1}) \,,
\end{align*}
which verifies the claim for $k+1$ and thus verifies the first claim (for $ 0 \leq k\leq \chi$). If $\vartheta_{\chi} n \Hat{q} = \Theta ( (n \Hat{q})^{\chi + 1} ) \ll 1$, we have $\mathtt{d}_{\chi} = 1$ and thus $\vartheta_{\chi+1}, \varsigma_{\chi+1} = \Theta( (n \Hat{q})^{\chi+1} ) = \Omega( e^{-(\log \log n)^{100}} )$; if $\vartheta_{\chi} n \Hat{q} = \Omega(1)$, we have $\varsigma_{\chi} n \Hat{q} = \Omega(1)$ and thus by the choice of $\mathtt{d}_{\chi}$ we have $\vartheta_{\chi+1},\varsigma_{\chi+1} = \Theta(1)$ and $\varsigma_{\chi+1}-\vartheta_{\chi+1}^2 = \Theta(1)$ using Poisson approximation. Thus, we have $\varsigma_k - \vartheta_k^2 = \Theta(\vartheta_k)$ for $k = \chi + 1$ (note that the case for $1 \leq k \leq \chi$ can be checked in a straightforward manner). In addition,  if $\vartheta_{\chi} = n^{-\alpha+\epsilon+o(1)}$ for some arbitrarily small but fixed $\epsilon>0$, then $n\Hat{q} \vartheta_{\chi} \gg 1$ and thus $\vartheta_{\chi+1} = \Theta(1)$. This completes the proof of the lemma.
\end{proof}

\subsection{Iteration}\label{sec:iteration}
 
We reiterate that in this subsection we are describing the iteration for a fixed $1 \leq \mathtt m \leq \mathtt M$ and eventually this iterative procedure will be applied to each $\mathtt m$. Define
\begin{align}
    K_{t+1} = \frac{1}{ \varkappa } K_{t}^{2} \mbox{ where } \varkappa = \varkappa(\Hat{\rho}) =  \frac{10^{30} \iota_{\mathrm{ub}}^2 \Hat{\rho}^{-20}}{\iota_{\mathrm{lb}}^2 (\mathfrak{a}- \mathfrak{a}^2)^2}
    \label{equ-def-iter-K}
\end{align}
for $t \geq 0$. Since we have assumed $K_0 \geq 10^4 \varkappa$, we can then prove by induction that
\begin{equation}{\label{eq-increasing-dimension}}
    10^{30} \Hat{\rho}^{20} (\mathfrak{a}-\mathfrak{a}^2)^2 K_t^2 \geq K_{t+1} \geq 10^4 K_t \,.
\end{equation}
We now suppose that $( \Gamma^{(s)}_k,\Pi^{(s)}_k)_{1 \leq k \leq K_s}$ and $\Phi^{(s)},\Psi^{(s)}$ have been constructed for $s \leq t$ (which will be implemented inductively via \eqref{equ-def-iter-sets} as we describe next). Recall that we are working under the assumption that \eqref{eq-intuition-concentration-0}--\eqref{eq-intuition-spectral-2} hold for $s \leq t$. For $v \in V$ and $\mathsf{v} \in \mathsf{V} $, define $D^{(t)}_v, \mathsf{D}^{(t)}_{\mathsf{v}} \in \mathbb{R}^{K_t}$ to be the ``normalized degrees'' of $v$ in $\Gamma^{(t)}_k$ and of $\mathsf{v}$ in $\Pi^{(t)}_k$ as follows:
\begin{equation}
    \begin{aligned}
        D^{(t)}_v(k) &= \frac{1}{\sqrt{( \mathfrak{a}_t - \mathfrak{a}_t^2 )n \Hat{q}(1-\Hat{q})}} \sum_{u \in V}( \mathbf{1}_{u \in \Gamma^{(t)}_k} - \mathfrak{a}_t) (\overrightarrow{G}_{v,u}-\Hat{q})\,, \\
        \mathsf{D}^{(t)}_{\mathsf{v}}(k) &= \frac{1}{\sqrt{(\mathfrak{a}_t-\mathfrak{a}_t^2)n \Hat{q}(1-\Hat{q})}} \sum_{\mathsf{u} \in \mathsf{V} }(\mathbf{1}_{\mathsf{u} \in \Pi^{(t)}_k} -\mathfrak{a}_t) (\overrightarrow{\mathsf{G}}_{\mathsf{v,u}}- \Hat{q})\,.
        \label{equ-def-degree}
    \end{aligned}
\end{equation}
Recalling \eqref{eq-def-mathfrak-a-t}, we note that there is a difference between the definition \eqref{equ-def-degree} for $t = 0$ and $t\geq 1$; this is because $\Gamma^{(0)}_k$ and $\Pi^{(0)}_k$ may only contain a vanishing fraction of vertices. We also point out that similar to \cite{DL22+}, in the above definition we used the “centered” version of $\mathbf{1}_{u \in \Gamma^{(t)}_k}$ and $\mathbf{1}_{\mathsf u \in \Pi^{(t)}_k}$ since \eqref{eq-intuition-concentration-0} suggests that intuitively each vertex $u$ (respectively, $\mathsf u$) has probability approximately $\mathfrak{a}_t$ to belong to $\Gamma^{(t)}_k$ (respectively, $\Pi^{(t)}_k$); such centering will be useful for our proof later as it leads to additional cancellation.

Assuming Lemma~\ref{lemma_matrix_eigenvalue}, we can then write $\Phi^{(t)}$ and $\Psi^{(t)}$ as their spectral decompositions:
\begin{equation}
    \label{eq-spectral-decomposition}
    \Phi^{(t)}=\sum^{K_t}_{i=1} \lambda^{(t)}_i
    \left({\nu^{(t)}_i} \right)^{*} \left(\nu^{(t)}_i \right) \mbox{ and }
    \Psi^{(t)}= \sum_{i=1}^{K_t} \mu^{(t)}_i \left({\xi^{(t)}_i} \right)^{*} \left(\xi^{(t)}_i \right)
\end{equation}
where 
\begin{equation}
    \label{eq-lambda-mu-bound}
    \lambda^{(t)}_i \in (0.9,1.1), \mu^{(t)}_i \in ( 0.9 \varepsilon_t, 1.1 \varepsilon_t ) \mbox{ for } 1 \leq i \leq \frac{3K_t}{4}
\end{equation} and $\nu_i^{(t)},\xi_i^{(t)}$ are the unit eigenvectors with respect to $\lambda_i^{(t)},\mu_i^{(t)}$ respectively. 
Next, for $s, t$ we define
$\mathrm{M}_{\Gamma}^{(t,s)},\mathrm{M}_{\Pi}^{(t,s)},\mathrm{P}_{\Gamma,\Pi}^{(t,s)}$ to be $K_t\!*\!K_s$ matrices as follows:
\begin{equation}
    \begin{aligned}
        \mathrm{M}_{\Gamma}^{(t,s)}(i,j) & = \frac{  |\Gamma^{(t)}_i \cap \Gamma^{(s)}_j | - \mathfrak{a}_s |\Gamma^{(t)}_i | - \mathfrak{a}_t |\Gamma^{(s)}_j | + \mathfrak{a}_s \mathfrak{a}_t n }{ \sqrt{(\mathfrak{a}_s-\mathfrak{a}_s^2) (\mathfrak{a}_t-\mathfrak{a}_t^2)} n} \,,\\
        \mathrm{M}_{\Pi}^{(t,s)} (i,j)  & = \frac{ |\Pi^{(t)}_i \cap \Pi^{(s)}_j| - \mathfrak{a}_s |\Pi^{(t)}_i| - \mathfrak{a}_t |\Pi^{(s)}_j| + \mathfrak{a}_s \mathfrak{a}_t n} { \sqrt{(\mathfrak{a}_s-\mathfrak{a}_s^2) (\mathfrak{a}_t-\mathfrak{a}_t^2)} n} \,, \\
        \mathrm{P}_{\Gamma,\Pi}^{(t,s)}(i,j) & = \frac{ |\pi(\Gamma^{(t)}_i) \cap \Pi^{(s)}_j| - \mathfrak{a}_s |\Gamma^{(t)}_i| - \mathfrak{a}_t |\Pi^{(s)}_j| + \mathfrak{a}_s \mathfrak{a}_t n }{ \sqrt{(\mathfrak{a}_s-\mathfrak{a}_s^2) (\mathfrak{a}_t-\mathfrak{a}_t^2)} n} \,.
        \label{equ_martix_M_P}
    \end{aligned}
\end{equation}
These matrices actually represent the covariance matrices for random vectors of the form $D^{(t)}_v$ and $\mathsf{D}^{(s)}_{\pi(v)}$. To get a rough intuition of this, we (formally incorrectly) regard $D_v^{(s)}$ as a linear combination of $\{G_{u,v}\}$ with deterministic coefficients (and the same applies to $\mathsf D^{(s)}_{\mathsf v}$). Then we can see that for all $v \in V$, the ``correlation'' between $D^{(t)}_v(i)$ and $D^{(s)}_v(j)$ equals
$$
\frac{1}{\sqrt{(\mathfrak{a}_t-\mathfrak{a}_t^2) (\mathfrak{a}_s-\mathfrak{a}_s^2)} n} \sum_{u \in V \setminus A} ( \mathbf{1}_{u \in \Gamma^{(t)}_i} - \mathfrak{a}_t ) ( \mathbf{1}_{u \in \Gamma^{(s)}_j} - \mathfrak{a}_s ) = \mathrm{M}_{\Gamma}^{(t,s)} (i,j) \,.
$$
This justifies our definition of $\mathrm{M}_{\Gamma}^{(t,s)}$ which aims to record the correlation between $D^{(t)}_v$ and $D^{(s)}_v$.
Similarly under the same simplification we have $\mathrm{M}_{\Pi}^{(t,s)}$ (respectively, $\Hat{\rho} \mathrm{P}_{\Gamma,\Pi}^{(t,s)}$) is the correlation matrix between $\mathsf{D}^{(t)}_{\mathsf v}$ and $\mathsf{D}^{(s)}_{\mathsf v}$ (respectively, between $D^{(t)}_v$ and $\mathsf{D}^{(s)}_{\pi(v)}$). In addition, from \eqref{eq-intuition-concentration-0}--\eqref{eq-intuition-concentration-3} we expect that $\mathrm{M}_{\Gamma}^{(t,t)}, \mathrm{M}_{\Pi}^{(t,t)} \approx \Phi^{(t)}$ and $\mathrm{P}_{\Gamma,\Pi}^{(t,t)} \approx \Psi^{(t)}$.
Note that $\mathrm{M}_{\Gamma},\mathrm{M}_{\Pi}$ are accessible by the algorithm but $\mathrm{P}_{\Gamma,\Pi}$ is not (since it relies on the latent matching). We further define two linear subspaces as follows:
\begin{equation}
    \begin{aligned}
        \mathrm{W}^{(t)} & \overset{\triangle}{=}  \big \{ x \in \mathbb{R}^{K_t} :
        x \mathrm{M}_{\Gamma}^{(t,s)} = 0, 
        x \mathrm{M}_{\Pi}^{(t,s)} = 0,  \mbox{ for all } s<t \big\} \,, \\
        \mathrm{V}^{(t)} & \overset{\triangle}{=}  \mathrm{span} \big\{ \nu^{(t)}_1, \nu^{(t)}_2, \ldots, \nu^{(t)}_{\frac{3}{4}K_t} \big\}  \cap \mathrm{span} \big\{ \xi^{(t)}_1, \xi^{(t)}_2, \ldots, \xi^{(t)}_{\frac{3}{4}K_t}  \big\} \cap \mathrm{W}^{(t)} \,.
        \label{equ-linear-space}
    \end{aligned}
\end{equation}
We refer to \cite[Remark 3.3]{DL22+} for underlying reasons of the definition above. Note that the number of linear constraints posed on $\mathrm{W}^{(t)}$ are at most $ 2\sum_{i=1}^{t}K_{i-1}$. So
\begin{equation*}
    \dim ( \mathrm{V}^{(t)} ) \geq \frac{3}{4}K_t + \frac{3}{4} K_t + \dim (\mathrm{W}_t) - 2K_t \geq \frac{1}{2} K_t - 2\sum_{i=1}^{t}K_{i-1} \overset{\eqref{eq-increasing-dimension}}{\geq} 0.49 K_t \,. 
\end{equation*}
As proved in \cite[(2.10) and (2.11)]{DL22+}, we can choose $\eta^{(t)}_1,\eta^{(t)}_2,\ldots,\eta^{(t)}_{\frac{1}{12}K_t}$ from $\mathrm{V}^{(t)}$ such that
\begin{align}
    & \eta^{(t)}_i \mathrm{M}_{\Gamma}^{(t,t)} \big(\eta^{(t)}_j \big)^{*}
    =\eta^{(t)}_i \mathrm{M}_{\Pi}^{(t,t)} \big( \eta^{(t)}_j \big)^{*}
    =\eta^{(t)}_i \Psi^{(t)} \big(\eta^{(t)}_j \big)^{*} =0 \,, \label{equ-vector-orthogonal} \\
    & \eta^{(t)}_i \Phi^{(t)} \big(\eta^{(t)}_i \big)^{*} =1, \quad  2 \varepsilon_t \geq   \eta^{(t)}_i \Psi^{(t)} \big(\eta^{(t)}_i \big)^{*} \geq 0.5 \varepsilon_t\,. \label{equ-vector-unit}
\end{align}
Furthermore, we must have $\big\| \eta^{(t)}_i \big\|^2 \in (\frac{1}{2},2)$. As in \cite{DL22+}, we will project the degrees $D^{(t)}_v, \mathsf{D}^{(t)}_{\mathsf v}$ to a set of carefully chosen directions in the space spanned by all $\eta_i$'s. These directions are defined as follows: 
we sample $\beta^{(t)}_k(j)$ as i.i.d.\ uniform variables on $\{-1, 1\}$. By \cite[Proposition 2.4]{DL22+}, these $\beta^{(t)}_k(j)$'s satisfy \cite[(2.21)--(2.24)]{DL22+} with probability at least 0.5. As in \cite{DL22+}, we will keep resampling until these requirements are satisfied. Define
\begin{align}
    \sigma_k^{(t)}= \sqrt{\frac{12}{K_t}}   \sum_{j=1}^{\frac{1}{12}K_t} \beta^{(t)}_k(j)  \eta_j^{(t)} \mbox{ for } k=1,2,\ldots,K_{t+1}\,.
    \label{equ-def-sigma}
\end{align}
We sample i.i.d.\ standard normal variables $\{ W^{(t)}_v(i), \mathsf{W}^{(t)}_{\mathsf{v}}(i): 1 \leq i \leq \frac{K_t}{12} \}$ and complete our iteration by setting
\begin{equation}
    \begin{aligned}
        &\Gamma^{(t+1)}_k= \Big\{  v \in V : \frac{1}{\sqrt{2}}  \Big| \sqrt{\frac{12}{K_t}} \langle \beta^{(t)}_k, W^{(t)}_v \rangle + \langle \sigma^{(t)}_k,D^{(t)}_v \rangle \Big| \geq 10  \Big\}\,,  \\
        &\Pi^{(t+1)}_k= \Big\{  \mathsf{v} \in \mathsf{V} : \frac{1}{\sqrt{2}} \Big| \sqrt{\frac{12}{K_t}} \langle \beta^{(t)}_k, \mathsf{W}^{(t)}_v \rangle + \langle \sigma^{(t)}_k, \mathsf{D}^{(t)}_{\mathsf{v}} \rangle \Big| \geq 10 \Big\}\,.
        \label{equ-def-iter-sets}
    \end{aligned}
\end{equation}
In the above, we introduced a Gaussian smoothing $\{ W^{(t)}_v(i), \mathsf{W}^{(t)}_{\mathsf{v}}(i): 1 \leq i \leq K_t \}$. We believe this is not essential but provides technical convenience: on the one hand it probably simply reduces the efficiency of the algorithm since it weakens the signal, but on the other hand it facilitates the analysis since it brings the distribution closer to Gaussian. In addition, we have used the absolute value of a random variable instead of a random variable itself, with the purpose of introducing more symmetry as in \cite{DL22+} (e.g., to bound \eqref{equ-tail-quadratic-part-I} below). Recall that we have explained that $\mathrm{M}_{\Gamma}^{(t,t)}$ records the covariance matrix of $D^{(t)}_v$ for all $v \in V$. Thus, we expect that the correlation between $\langle \sigma^{(t)}_k, D^{(t)}_v \rangle $ and $\langle \sigma^{(t)}_l, D^{(t)}_v \rangle$ is approximately
$$
    \frac{12}{K_t} \sum_{i,j=1}^{\frac{1}{12}K_t} \beta^{(t)}_k(i) \beta^{(t)}_l(j) \eta^{(t)}_i \mathrm{M}_{\Gamma}^{(t,t)} \big(\eta^{(t)}_j\big)^{*} \overset{\eqref{equ-vector-orthogonal} ,\eqref{equ-vector-unit}}{=} \frac{12}{K_t} \sum_{i=1}^{\frac{1}{12}K_t} \beta^{(t)}_k(i) \beta^{(t)}_l(i) = \frac{12}{K_t} \big\langle \beta^{(t)}_k, \beta^{(t)}_l \big\rangle \,.
$$
In particular, the variance of each $\langle \sigma^{(t)}_i, D^{(t)}_v \rangle$ is approximately $1$.
Similarly, we can show the correlation between $\langle \sigma^{(t)}_i, \mathsf{D}^{(t)}_{\mathsf v} \rangle$ and $\langle \sigma^{(t)}_j, \mathsf{D}^{(t)}_{\mathsf v} \rangle$ is approximately $\frac{12}{K_t} \big\langle \beta^{(t)}_k, \beta^{(t)}_l \big\rangle$, and the correlation between $\langle \sigma^{(t)}_i, D^{(t)}_v \rangle$ and $\langle \sigma^{(t)}_j, \mathsf{D}^{(t)}_{\pi(v)} \rangle$ is approximately $\Hat{\rho} \cdot \frac{12}{K_t} \big\langle \Hat{\beta}^{(t)}_i , \Hat{\beta}^{(t)}_j \big\rangle$, where 
\begin{equation}\label{eq-def-hat-beta}
    \Hat{\beta}^{(t)}_k(j)= \Big(\eta^{(t)}_j \Psi^{(t)} \big( \eta^{(t)}_j \big)^{*}\Big)^{1/2} \cdot \beta^{(t)}_k(j) \,.
\end{equation}
(Here we also used that \eqref{eq-intuition-concentration-3} implies that $\mathrm{P}_{\Gamma,\Pi}^{(t,t)} \approx \Psi^{(t)}$). Recall our desire for \eqref{eq-intuition-concentration-0}--\eqref{eq-intuition-concentration-3} to hold for $t+1$.
Thus the signal contained in each pair at time $t+1$ is approximately
\begin{align}
    \varepsilon_{t+1} = \frac{1}{ (\mathfrak{a} - \mathfrak{a}^2) } \Big( \phi \Big( \frac{\Hat{\rho}}{2} \frac{12}{K_t} \sum_{j=1}^{\frac{K_t}{12}} \eta^{(t)}_j \Psi^{(t)} \big( \eta^{(t)}_j \big)^{*} \Big) - \phi(0) \Big)  \,.
    \label{equ-def-iter-varepsilon}
\end{align}
By \eqref{equ-vector-unit}, we have that
\begin{equation}
    \label{equ-epsilon-t-bound}
    \varepsilon_{t+1} \in \big[ \frac{ \iota_{\mathrm{lb}} \Hat{\rho}^{2} } { 4(\mathfrak{a}-\mathfrak{a}^2) }  (0.5 \varepsilon_t)^2 , \frac{\iota_{\mathrm{ub}} \Hat{\rho}^{2}} { 4(\mathfrak{a}-\mathfrak{a}^2)}  (2 \varepsilon_t)^2 \big]\,.
\end{equation}
Recalling \eqref{eq-assumetion-rho}, we have $\varepsilon_{t+1} \leq \varepsilon_t^2$, and thus (recall from \eqref{eq-kappa-choice} that $\varepsilon_0<\tfrac{1}{2}$)
\begin{equation}{\label{eq-decrease-varepsilon}}
    \varepsilon_{t+1} \leq \varepsilon_t \leq \ldots \leq \varepsilon_0 \leq \tfrac{1}{2} \,,
\end{equation}
which verifies our statement that the signal $\varepsilon_t$ in each pair is decreasing.
We can then finish the iteration by defining $\Phi^{(t+1)},\Psi^{(t+1)}$ to be $K_{t+1}\!*\!K_{t+1}$ matrices such that
\begin{equation}
    \begin{aligned}
    & \Phi^{(t+1)}(i,j) = (\mathfrak{a}-\mathfrak{a}^2)^{-1} \Big\{  \phi \Big( \frac{12}{K_t} \langle {\beta}^{(t)}_i,{\beta}^{(t)}_j \rangle \Big) - \mathfrak{a}^2 \Big\}  \,,  \\
    & \Psi^{(t+1)}(i,j)= (\mathfrak{a}-\mathfrak{a}^2)^{-1} \Big\{ \phi \Big( \frac{\Hat{\rho}}{2} \frac{12}{K_t} \langle \Hat{\beta}^{(t)}_i , \Hat{\beta}^{(t)}_j \rangle \Big) - \mathfrak{a}^2    \Big\}\,.
    \label{equ-def-iter-matrix}
    \end{aligned}
\end{equation}
Next, we state a lemma which then inductively justifies \eqref{eq-intuition-spectral-1} and \eqref{eq-intuition-spectral-2}.
\begin{Lemma}{\label{lemma_matrix_eigenvalue}}
Let $(\Phi^{(t)}, \Psi^{(t)})$ be initialized as in \eqref{equ-initial-matrix} and inductively defined as in \eqref{equ-def-iter-matrix}, also let $\varepsilon_t$ be initialized in \eqref{equ-def-iter-K} and iteratively defined in \eqref{equ-def-iter-varepsilon}. Then, $\Phi^{(t)}$ has $\frac{3}{4}K_t$ eigenvalues between $0.9$ and $1.1$, and $\Psi^{(t)}$ has $\frac{3}{4}K_t$ eigenvalues between $ 0.9 \varepsilon_t$ and $1.1 \varepsilon_t$.
\end{Lemma}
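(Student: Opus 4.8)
The plan is to argue by induction on $t$, exploiting the combinatorial structure of $\Phi^{(t+1)}$ and $\Psi^{(t+1)}$ as Gram-type matrices built from the sign vectors $\beta^{(t)}_k$. For the base case $t=0$ the matrices are $\mathrm{I}$ and $\varepsilon_0 \mathrm{I}$ by \eqref{equ-initial-matrix}, so the claim is immediate. For the inductive step, the first move is to Taylor-expand $\phi$ around $0$. Recall $\phi'(0)=0$ and $\phi''(0)>0$, so writing $\phi(u) = \phi(0) + \tfrac12 \phi''(0) u^2 + O(u^3) = \mathfrak{a}^2 + \tfrac12\phi''(0)u^2 + O(u^3)$ (using $\phi(0)=\mathfrak a^2$ up to the relevant normalization — here one must be careful that $\phi(0)$ equals the appropriate reference value so that the $\mathfrak a^2$ subtraction in \eqref{equ-def-iter-matrix} leaves a genuinely small quantity; this is the content of the choices in \cite{DL22+}). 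Plugging $u = \tfrac{12}{K_t}\langle \beta^{(t)}_i,\beta^{(t)}_j\rangle$ into the definition of $\Phi^{(t+1)}(i,j)$, and noting $\langle\beta^{(t)}_k,\beta^{(t)}_k\rangle = \tfrac{1}{12}K_t$ so the diagonal term is $\phi(1)=\mathfrak a$, one finds that $\Phi^{(t+1)}$ is, up to a rank-controlled and norm-controlled perturbation, a constant multiple of $\tfrac{144}{K_t^2}B^{(t)}(B^{(t)})^*$ where $B^{(t)}$ is the $K_{t+1}\!*\!\tfrac{1}{12}K_t$ matrix with rows $\beta^{(t)}_k$.

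The key input is then the property \cite[(2.21)--(2.24)]{DL22+} satisfied by the resampled $\beta^{(t)}_k$'s, which (as in \cite{DL22+}) guarantees precisely that the Gram matrix $\tfrac{12}{K_t}B^{(t)}(B^{(t)})^*$ has the bulk of its spectrum concentrated near $1$: concretely, at least $\tfrac34 K_{t+1}$ of its eigenvalues lie in a narrow window around $1$, with the off-diagonal entries $\tfrac{12}{K_t}\langle\beta^{(t)}_i,\beta^{(t)}_j\rangle$ being of order $K_t^{-1/2}$ typically, so that the ``error'' matrix has small Hilbert--Schmidt norm relative to $K_{t+1}$. Combining the Taylor expansion with this spectral control — and using the growth bound \eqref{eq-increasing-dimension}, $K_{t+1}\geq 10^4 K_t$, to absorb the lower-order and cubic terms — gives that $\Phi^{(t+1)}$ has at least $\tfrac34 K_{t+1}$ eigenvalues in $(0.9,1.1)$. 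The argument for $\Psi^{(t+1)}$ is the same with $\beta^{(t)}_k$ replaced by $\hat\beta^{(t)}_k$ and $u = \tfrac{\hat\rho}{2}\tfrac{12}{K_t}\langle\hat\beta^{(t)}_i,\hat\beta^{(t)}_j\rangle$; here the diagonal entry involves $\phi\big(\tfrac{\hat\rho}{2}\tfrac{12}{K_t}\sum_j \eta^{(t)}_j\Psi^{(t)}(\eta^{(t)}_j)^*\big)$, which by \eqref{equ-def-iter-varepsilon} is exactly $(\mathfrak a - \mathfrak a^2)\varepsilon_{t+1} + \mathfrak a^2$, so the leading term is $\varepsilon_{t+1}$ times a Gram matrix of the $\hat\beta$'s, and \eqref{equ-vector-unit} plus \cite[(2.21)--(2.24)]{DL22+} control the relevant inner products so that the bulk eigenvalues land in $(0.9\varepsilon_{t+1},1.1\varepsilon_{t+1})$.

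The main obstacle, I expect, is making the perturbation bookkeeping rigorous: one must show that the deviation of $\Phi^{(t+1)}$ from its rank-$\tfrac{1}{12}K_t$ leading term — coming both from the cubic remainder in $\phi$ and from the fluctuations of $\langle\beta^{(t)}_i,\beta^{(t)}_j\rangle$ away from their typical size — is small enough in operator norm (on the relevant invariant subspace) that Weyl's inequality still places $\tfrac34 K_{t+1}$ eigenvalues inside the target window. This is exactly where the quantitative choices of $\kappa$ in \eqref{eq-kappa-choice}, the constant $\varkappa$ in \eqref{equ-def-iter-K}, and the assumption \eqref{eq-assumetion-rho} enter, and it is the analogue of \cite[Lemma 2.1]{DL22+}; the extra subtlety here compared to the Wigner setting is that $\varepsilon_t$ itself is shrinking doubly-exponentially, so one has to track that the windows scale correctly with $\varepsilon_t$ and that the error terms for $\Psi^{(t+1)}$ are small \emph{relative to} $\varepsilon_{t+1}$, not merely in absolute terms. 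I would handle this by first establishing a clean statement of the form ``if $\{\beta^{(t)}_k\}$ satisfy \cite[(2.21)--(2.24)]{DL22+} then $\|\tfrac{12}{K_t}B^{(t)}(B^{(t)})^* - P\|$ is small for a suitable projection-like $P$'' and then feeding it through the Taylor expansion, closing the induction.
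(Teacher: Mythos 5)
The paper gives no proof of this lemma; it simply observes that the definitions in \eqref{equ-def-iter-matrix} coincide verbatim with \cite[(2.15)]{DL22+} and cites \cite[Lemma 2.1]{DL22+}. So your task here is really to reconstruct that proof, and your high-level plan (Taylor-expand $\phi$ at $0$, invoke the properties \cite[(2.21)--(2.24)]{DL22+} of the resampled $\beta^{(t)}_k$'s, close by a perturbation argument) is the right shape.

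There is, however, a concrete structural error in the middle of your sketch. You correctly record that $\phi'(0)=0$ and $\phi(u)-\phi(0)=\tfrac12\phi''(0)u^2 + O(u^3)$, but then conclude that $\Phi^{(t+1)}$ is ``up to a rank-controlled and norm-controlled perturbation, a constant multiple of $\tfrac{144}{K_t^2}B^{(t)}(B^{(t)})^*$.'' That matrix has $(i,j)$-entry proportional to $\langle\beta^{(t)}_i,\beta^{(t)}_j\rangle$, i.e.\ \emph{linear} in the inner product, whereas your own Taylor expansion shows that $\Phi^{(t+1)}(i,j)-\delta_{ij}$ is, to leading order, $\propto\bigl(\tfrac{12}{K_t}\langle\beta^{(t)}_i,\beta^{(t)}_j\rangle\bigr)^2$ --- \emph{quadratic} in it. So $\Phi^{(t+1)}-\mathrm{I}$ is (approximately) the Hadamard (entrywise) square of $\tfrac{12}{K_t}B^{(t)}(B^{(t)})^*$, not a scalar multiple of it. This matters in two ways. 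First, the Hadamard square of a rank-$\tfrac{1}{12}K_t$ PSD matrix is \emph{not} low-rank (its rank can be as large as $\binom{K_t/12+1}{2}$, comparable to $K_{t+1}$), so the ``rank-controlled perturbation plus Weyl'' mechanism you invoke does not apply; Weyl's inequality bounds how far each eigenvalue moves under an operator-norm perturbation but does not by itself \emph{count} how many eigenvalues fall in a fixed window. Second, the correct mechanism is a Hilbert--Schmidt bound: one shows $\|\Phi^{(t+1)}-\mathrm{I}\|_{\mathrm{HS}}^2 \lesssim K_{t+1}^2/K_t^2 = K_{t+1}/\varkappa$ using the control on $\sum_{i\ne j}\langle\beta^{(t)}_i,\beta^{(t)}_j\rangle^4$ coming from \cite[(2.21)--(2.24)]{DL22+}, and then applies the Markov/Hoffman--Wielandt-type bound that the number of eigenvalues outside $(1-\delta,1+\delta)$ is at most $\|\Phi^{(t+1)}-\mathrm{I}\|_{\mathrm{HS}}^2/\delta^2$; choosing $\varkappa$ large makes this at most $\tfrac14 K_{t+1}$. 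The same applies to $\Psi^{(t+1)}$ after dividing through by $\varepsilon_{t+1}$, which you correctly identify via \eqref{equ-def-iter-varepsilon} as the value of the normalized diagonal. So: the skeleton of your argument is fine, but the ``close to a Gram matrix'' step contradicts your own Taylor expansion and has to be replaced by the Hadamard-square/Frobenius-norm argument before the induction closes.
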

We note that the definition of \eqref{equ-def-iter-matrix} is identical to that of \cite[(2.15)]{DL22+} and thus Lemma~\ref{lemma_matrix_eigenvalue} is identical to \cite[Lemma 2.1]{DL22+}.

\subsection{Almost exact matching} \label{sec:ending-iteration}
In this subsection we describe how we get an almost exact matching once we accumulate enough signal along the iteration. To this end, define
\begin{equation}\label{eq-def-t-*}
    t^{*}=\min \{  t \geq 0: K_t \geq (\log n)^{2} \}\,.
\end{equation}
Obviously $K_{t^*} \leq (\log n)^4$. By \eqref{equ-def-iter-K}, we have $K_t = (K_0^{2^t})/ (\varkappa^{2^t-1})$ and as a result we have $t^* = O( \log \log \log n )$. 
In addition, recalling \eqref{equ-epsilon-t-bound} we have
\begin{align*}
    K_{t+1} \varepsilon_{t+1}^2 &\geq \frac{ \Hat{\rho}^{20} \iota_{\mathrm{lb}}^2 (\mathfrak{a}-\mathfrak{a}^2) }{ 10^{30} \iota_{\mathrm{ub}}^2 } K_t^2 \cdot \Big( \frac{ \iota_{\mathrm{lb}} \Hat{\rho}^2 }{ 16(\mathfrak{a} - \mathfrak{a}^2) } \varepsilon_t^2 \Big)^2 \\
    &= \frac{ \Hat{\rho}^{24} \iota_{\mathrm{lb}}^4 }{ 16^2 \cdot 10^{30} \iota_{\mathrm{ub}}^2 (\mathfrak{a}-\mathfrak{a}^2) } (K_t \varepsilon_t^2)^2 \,.
\end{align*}
Using the choice of $K_0=\kappa$ in \eqref{eq-kappa-choice} we see that the total signal $K_t \varepsilon_t^2$ is increasing in $t$.
We also have that
\begin{align}
    K_{t^*} \varepsilon_{t^*}^2 \geq \Big( \frac{ K_0 \iota_{\mathrm{lb}}^2 \Hat{\rho}^{4} \varepsilon^2_0 }{ 16 (\mathfrak{a}-\mathfrak{a}^2)^2 \varkappa } \Big)^{2^{t^*}} \overset{\eqref{eq-kappa-choice}}{\geq} \Big( \frac{K_0}{\varkappa} \Big)^{ 2^{t^*} / 1.01 } \geq K_{t_*}^{1/1.01} \geq (\log n)^{1.9} \,.
    \label{equ-K-Varepsilon-bound}
\end{align}
For each $1 \leq \mathtt m \leq \mathtt M$, we run the procedure of initialization and then run the iteration up to time $t^{*}$, and then we construct a permutation $\pi_{\mathtt{m}}$ (with respect to  $\mathsf{A}_{\mathtt{m}}$) as follows. For $A= ( u_1, \ldots, u_{K_0} )$ and $\mathsf{A}_{\mathtt{m}} = ( \mathsf{u}_1 , \ldots, \mathsf{u}_{K_0} )$, set $\pi_{\mathtt{m}}(u_j) = \mathsf{u}_j$ for $1 \leq j \leq K_0$. We set a prefixed ordering of $V \setminus A$ and $\mathsf{V} \setminus \mathsf{A}_{\mathtt{m}}$ as $V \setminus A = \{ v_1, \ldots, v_{n-K_0} \}$ and $\mathsf{V} \setminus \mathsf{A}_{\mathtt{m}} = \{ \mathsf{v}_1,\ldots, \mathsf{v}_{n-K_0} \} $, initialize the set $\mathrm{CAND}$ to be $\mathsf{V} \setminus \mathsf{A}_{\mathtt{m}}$, and initialize the sets $\mathrm{SUC}$, $\mathrm{PAIRED}$ and $\mathrm{FAIL}$ to be empty sets. The algorithm processes $v_k$ in the increasing order of $k$: for each $v_k$, we find the minimal $\mathsf{k}$ such that $\mathsf{v}_{\mathsf{k}} \in \mathrm{CAND}$ and
\begin{equation}
    \sum_{j=1}^{\frac{1}{12}K_{t^*}} \big(  W^{(t^*)}_{v_k}(j) + \langle \eta^{(t^*)}_j, D^{(t^*)}_{v_k} \rangle \big) \big(  \mathsf{W}^{(t^*)}_{\mathsf{v}_{\mathsf{k}}}(j) + \langle \eta^{(t^*)}_j, \mathsf{D}^{(t^*)}_{\mathsf{v}_{\mathsf{k}}} \rangle \big) \geq \frac{1}{100} K_{t^{*}} \varepsilon_{t^{*}}\,.
    \label{equ-def-matching-ver}
\end{equation}
We then define $\pi_{\mathtt{m}}(v_k) = \mathsf{v}_{\mathsf{k}}$, put $v_k$ into $\mathrm{SUC}$ and move $\mathsf{v}_{\mathsf{k}}$ from $\mathrm{CAND}$ to $\mathrm{PAIRED}$. If there is no $\mathsf{k}$ satisfying \eqref{equ-def-matching-ver}, we put $v_k$ into $\mathrm{FAIL}$. Having processed all vertices in $V \setminus A$, we pair the vertices in $\mathrm{FAIL}$ and the (remaining) vertices in $\mathrm{CAND}$ in an arbitrary but pre-fixed manner to get the matching $\pi_{\mathtt{m}}$.

We say a pair of sequences $A = (u_1,u_2,\ldots, u_{K_0})$ and $\mathsf A = (\mathsf{u}_1, \mathsf{u}_2, \ldots, \mathsf{u}_{K_0} )$ is a good pair if
\begin{equation}\label{eq-correct-seeds}
    \mathsf{u}_j = \pi(u_j) \mbox{ for } 1 \leq j \leq K_0\,.
\end{equation}
The success of our algorithm lies in the following proposition which states that starting from a good pair we have that $\pi_{\mathtt{m}}$ correctly recovers almost all vertices. 
\begin{Proposition}{\label{prop-almost-exact-matching}}
    For a pair $(A, \mathsf A)$, define $\pi(A, \mathsf A) = \pi_{\mathtt m}$ if $\mathsf A = \mathsf A_{\mathtt m}$. If $(A, \mathsf A)$ is a good pair, then with probability $1-o(1)$, we have 
   \begin{align*}
       | \{ v: {\pi}(A,\mathsf{A})(v) = \pi(v) \} | \geq (1-\frac{10}{\log n}) n \,.
   \end{align*}
\end{Proposition}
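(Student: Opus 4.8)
The plan is to deduce the proposition from an inductive control of the iteration, followed by a direct analysis of the finishing procedure of Section~\ref{sec:ending-iteration}. Fix the index $\mathtt m$ with $\mathsf A_{\mathtt m} = \mathsf A = \pi(A)$, so that the initialization and iteration are run with genuinely correlated seeds. I would introduce a decreasing family of high-probability ``regularity events'' $\mathcal E_0 \supseteq \mathcal E_1 \supseteq \cdots$, where $\mathcal E_t$ records that: (i) the size and overlap estimates \eqref{eq-intuition-concentration-0}--\eqref{eq-intuition-concentration-3} and the spectral estimates \eqref{eq-intuition-spectral-1}--\eqref{eq-intuition-spectral-2} hold at step $t$ up to errors negligible at the relevant scale; (ii) for all $s \le t$ the matrices $\mathrm{M}_{\Gamma}^{(t,s)}, \mathrm{M}_{\Pi}^{(t,s)}, \mathrm{P}_{\Gamma,\Pi}^{(t,s)}$ agree with the values predicted from $\Phi^{(t)}, \Psi^{(t)}$ (as explained after \eqref{equ_martix_M_P}); and (iii) at most $o(n/\log n)$ vertices have been ``touched'' in a way that could spoil Gaussian approximation, these being the vertices lying in the $\chi$-neighborhoods explored during initialization --- of total number $O(K_0 \vartheta_\chi n) = o(n e^{-(\log\log n)^{100}})$ by Lemma~\ref{lemma-property-vartheta-varsigma} and \eqref{eq-def-chi} --- together with a comparably small exceptional set generated at each iteration step. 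The base case $\mathcal E_0$ holds with probability $1-o(1)$ by Subsection~\ref{sec:priliminary-events}, and the inductive step $\P(\mathcal E_{t+1}^c \mid \mathcal E_t) = o(1/t^*)$ is the technical core of the paper, carried out in Sections~\ref{sec:density-compare} and \ref{sec:Gaussian-analysis}; since $t^* = O(\log\log\log n)$ by \eqref{eq-def-t-*}, a union bound yields $\P(\mathcal E_{t^*}) = 1 - o(1)$.

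For the inductive step I would condition on the $\sigma$-field $\mathcal F_t$ generated by everything constructed through time $t$ (the sets $\Gamma^{(s)}_k, \Pi^{(s)}_k$, the sign vectors $\beta^{(s)}_k$, the smoothing variables $W^{(s)}, \mathsf W^{(s)}$ for $s \le t$, and the edges already consulted). On $\mathcal E_t$, the collection $\{(D^{(t)}_v, \mathsf D^{(t)}_{\pi(v)})\}_v$ restricted to untouched $v$ is, up to a controlled error, an i.i.d.\ family of $2K_t$-dimensional vectors with covariance governed by $\mathrm{M}_{\Gamma}^{(t,t)} \approx \Phi^{(t)}$, $\mathrm{M}_{\Pi}^{(t,t)} \approx \Phi^{(t)}$ and $\Hat{\rho}\, \mathrm{P}_{\Gamma,\Pi}^{(t,t)} \approx \Hat{\rho}\, \Psi^{(t)}$. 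Since each coordinate is a centered, suitably normalized sum of many weakly dependent bounded increments, a local-limit/density-comparison estimate shows that the joint law of any bounded number of these vectors has Radon--Nikodym derivative, relative to the matching centered Gaussian law, which --- once the touched vertices are removed --- is bounded by a quantity that, although large, is still dominated by the reciprocal of the failure probabilities encountered on the Gaussian side. One then transfers back to the Bernoulli model all the estimates needed at level $t+1$: the sizes $|\Gamma^{(t+1)}_k|, |\Pi^{(t+1)}_k| \approx \mathfrak a n$ and the overlaps produced by the thresholding \eqref{equ-def-iter-sets}, hence the agreement of $\mathrm{M}_{\Gamma}^{(t+1,\cdot)}, \mathrm{M}_{\Pi}^{(t+1,\cdot)}, \mathrm{P}_{\Gamma,\Pi}^{(t+1,\cdot)}$ with $\Phi^{(t+1)}, \Psi^{(t+1)}$ from \eqref{equ-def-iter-matrix}, and the size of the newly touched set; the spectral estimates \eqref{eq-intuition-spectral-1}--\eqref{eq-intuition-spectral-2} at $t+1$ then follow from Lemma~\ref{lemma_matrix_eigenvalue}, which concerns only $\Phi^{(t+1)}, \Psi^{(t+1)}$, while $\varepsilon_{t+1}$ is controlled by \eqref{equ-def-iter-varepsilon}--\eqref{equ-epsilon-t-bound}.

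Granting $\mathcal E_{t^*}$, I would then analyze the finishing procedure. Conditionally on $\mathcal F_{t^*}$, for each untouched vertex $v$ the vectors $( W^{(t^*)}_v(j) + \langle \eta^{(t^*)}_j, D^{(t^*)}_v \rangle )_{j}$ and $( \mathsf W^{(t^*)}_{\mathsf w}(j) + \langle \eta^{(t^*)}_j, \mathsf D^{(t^*)}_{\mathsf w} \rangle )_{j}$ are, again via the density comparison, close to Gaussian vectors whose coordinates are independent across $j$ (by \eqref{equ-vector-orthogonal}) and of unit variance, with per-coordinate cross-covariance $\Hat{\rho}\, \eta^{(t^*)}_j \Psi^{(t^*)} ( \eta^{(t^*)}_j )^* \asymp \varepsilon_{t^*}$ when $\mathsf w = \pi(v)$ (by \eqref{equ-vector-unit}) and (conditionally) essentially independent of each other when $\mathsf w \ne \pi(v)$. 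Hence the statistic on the left of \eqref{equ-def-matching-ver} at a true pair $(v,\pi(v))$ has conditional mean of order $K_{t^*} \varepsilon_{t^*}$ and standard deviation $O(\sqrt{K_{t^*}})$, so --- the threshold in \eqref{equ-def-matching-ver} being calibrated to this scale and using \eqref{equ-K-Varepsilon-bound}, for which $K_{t^*}\varepsilon_{t^*}^2 \ge (\log n)^{1.9}$ --- the statistic stays above the threshold with conditional probability $1 - \exp(-\Omega(K_{t^*}\varepsilon_{t^*}^2))$; at a false pair the statistic is mean-zero with sub-exponential tails, so it exceeds the threshold with conditional probability $\exp(-\Omega(K_{t^*}\varepsilon_{t^*}^2))$ as well. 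Multiplying these Gaussian-side probabilities by the large-but-subdominant density ratio and taking a union bound --- over all $n$ untouched vertices for the first estimate and over all at most $n^2$ ordered pairs for the second --- one obtains, with probability $1-o(1)$: (a) no false pair satisfies \eqref{equ-def-matching-ver}; and (b) every untouched $v$ has $(v,\pi(v))$ satisfying \eqref{equ-def-matching-ver}, with at most $o(n/\log n)$ exceptions. On (a) the finishing procedure is trivial to trace: since only true pairs satisfy \eqref{equ-def-matching-ver}, the target $\pi(v_k)$ is never removed from $\mathrm{CAND}$ before $v_k$ is processed and no earlier candidate can be chosen, so every matched $v_k$ is matched to $\pi(v_k)$; thus $\{v : \pi_{\mathtt m}(v) \ne \pi(v)\} \subseteq \mathrm{FAIL} \cup A$, and $\mathrm{FAIL}$ is contained in the touched set together with the exceptional set of (b). Therefore $|\{v: \pi_{\mathtt m}(v) \ne \pi(v)\}| = o(n/\log n) \le \tfrac{10}{\log n}\, n$ for $n$ large, which is the claim.

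The main obstacle is the inductive step, and within it the density comparison. Gaussian approximation for $(D^{(t)}_v, \mathsf D^{(t)}_{\pi(v)})$ is valid only in the weak sense that the density ratio is large (not $1+o(1)$), and since the subsequent behavior of the algorithm has to be analyzed \emph{conditionally on everything revealed so far}, one cannot simply discard the atypical configurations. The device that makes the argument close --- discarding a vanishing, carefully sized set of vertices, which is exactly why $\chi$ is chosen as in \eqref{eq-def-chi} so that only a vanishing fraction of edges is consulted in the initialization --- keeps the residual density ratio merely $e^{(\log n)^{O(1)}}$, whereas all genuinely harmful Gaussian-side events have probability $n^{-\omega(1)}$; carrying out this two-sided bookkeeping simultaneously with the sequential conditioning is the delicate part.
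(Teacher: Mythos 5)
Your outline of the inductive control — the hierarchy of regularity events $\mathcal E_t$, the bookkeeping of touched vertices, the density comparison against a Gaussian substitute after removing a vanishing exceptional set, the spectral and overlap concentration relayed through Lemma~\ref{lemma_matrix_eigenvalue} — matches the paper's overall architecture (Sections~\ref{sec:density-compare}--\ref{sec:actual-proof}). But the finishing step has a genuine gap, and it is not a minor one: it is exactly where the difficulty you flag at the end (``the density ratio is large, not $1+o(1)$'') bites hardest, and your proposal does not close it.

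Your claims (a) (``no false pair satisfies \eqref{equ-def-matching-ver}'') and (b) (``all but $o(n/\log n)$ true pairs satisfy it'') are obtained by multiplying a per-pair Gaussian tail $\exp\{-\Omega(K_{t^*}\varepsilon_{t^*}^2)\} = \exp\{-\Omega((\log n)^{1.9})\}$ by the density ratio and taking a union bound over $n^2$ pairs (resp.\ $n$ vertices). This cannot work. The density ratio that the argument produces is of size $\exp\{O(n\Delta_{t^*}^{c})\}$ for some fixed $c$ (cf.\ Lemma~\ref{lemma-bound-conditional-unbiased-density} and the event $\mathcal B_{t^*}$), and since $\Delta_{t^*} \le e^{-(\log\log n)^8}$ this is still $\exp\{n^{1-o(1)}\}$ — \emph{vastly} larger than the reciprocal of the per-pair Gaussian tail $\exp\{(\log n)^{1.9}\}$ and of $n^2$ times it. A pointwise bound of the form (Gaussian tail) $\times$ (density ratio) is therefore vacuous: it does not show a single false pair is unlikely to pass, let alone all $n^2$ of them. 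Moreover, (a) is almost surely false as stated: on the Gaussian side alone one expects on the order of $n^2 e^{-\Theta((\log n)^{1.9})}$ false pairs to pass, and after losing the density-ratio factor you have no control on this count at all.

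The paper's escape is structurally different and is the content of Section~\ref{sec:almost-exact-matching}. Instead of controlling single pairs, it controls \emph{large aggregated} failure events whose Gaussian-side probability decays exponentially in the aggregate size $M$, which is enough to beat the density ratio (a single global factor) and the $2^{O(n)}$ enumeration of the aggregate. Lemma~\ref{lem-self-loops} bounds $|\mathtt U|$ (the set of true pairs that fail) by considering the event $|\mathtt U| > n/\log n$ as a whole: the Gaussian probability that a prescribed $M=n/(2\log n)$ vertices all fail is $\exp\{-\Omega(M(\log n)^{1.8})\} = \exp\{-\Omega(n(\log n)^{0.8})\}$, which dominates both $\exp\{n K^{30}\Delta^2\}$ and the $2^n$ union bound. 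Lemma~\ref{lem-bad-matching} does the analogous thing for false pairs, but only for subsets of $\mathtt E$ that form a \emph{matching} (each vertex incident to at most one chosen edge) — independence on the Gaussian side is exactly what this restriction buys. Then, crucially, the proof does not assert that no false pair passes; it instead shows, via the auxiliary directed graph $\overrightarrow H$ built from the greedy mis-matches, that if the algorithm mis-assigns $m$ vertices then $\mathtt E$ contains a matching of size $\gtrsim m/4$. Combined with Lemmas~\ref{lem-self-loops} and \ref{lem-bad-matching} this forces $m \lesssim n/\log n$. Without this combinatorial reduction, the ``trivial to trace'' description of the finishing procedure in your proposal has no justification, because false positives do occur and can steal candidates from later true pairs.

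One smaller remark: your conclusion $|\{v:\pi_{\mathtt m}(v)\ne\pi(v)\}| = o(n/\log n)$ is stronger than what is needed or what the above argument yields; the paper's bound is $\le 10n/\log n$, and that is the form that one should aim for given the thresholds in Lemmas~\ref{lem-self-loops}--\ref{lem-bad-matching}.
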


\subsection{From almost exact matching to exact matching}
\label{sec:seeded-matching}

In this subsection, we employ a seeded matching algorithm \cite{BCL19} (see also \cite{MX20, YXL21}) to enhance an almost exact matching (which we denote as $\Tilde {\pi}$ in what follows) to an exact matching.   Our matching algorithm is a simplified version of \cite[Algorithm 4]{BCL19}.
\begin{breakablealgorithm}
			\label{algo:seeded-matching}
			\caption{Seeded Matching Algorithm}
	\begin{algorithmic}[1]
		\STATE {\bf Input:} A triple $(G,\mathsf{G},\Tilde{\pi})$ where $(G,\mathsf{G}) \sim \mathcal{G}(n,q,\rho)$ and $\Tilde{\pi}$ agrees with $\pi$ on $1-o(1)$ fraction of vertices.
		\STATE For $u \in V(G), \mathsf{v \in V(G)}$, define their 1-neighborhood $N(u,\mathsf{v})= |\{ w \in V: u \sim w, \mathsf v \sim \Tilde{\pi}(w) \}| $. 
		\STATE Define $\Delta=\frac{\rho^2 n q}{100}$ and set $\Hat{\pi} = \Tilde{\pi}$.
		\STATE Repeat the following: if there exists a pair $u,\mathsf{v}$ such that $N(u,\mathsf{v}) \geq \Delta$ and $N(u,\Hat{\pi}(u))$, $  N(\Hat{\pi})^{-1}(\mathsf{v}),\mathsf{v}) < \frac{1}{10} \Delta$, then modify $\Hat{\pi}$ to map $u$ to $\mathsf{v}$ and map ${\Hat{\pi}}^{-1}(\mathsf{v})$ to $\Hat{\pi}(u)$; otherwise, move to Step 5.
		\STATE {\bf Output:} $\Hat{\pi}$.
	\end{algorithmic}
\end{breakablealgorithm}
At this point, we can run  Algorithm~\ref{algo:seeded-matching} for each $\pi_{\mathtt{m}}$ (which serves as the input $\Tilde{\pi}$), and obtain the corresponding refined matching $\Hat{\pi}_{\mathtt{m}}$ (which is the output $\Hat{\pi}$). By \cite[Lemma 4.2]{BCL19}  and Proposition~\ref{prop-almost-exact-matching}, we see that  $\Hat{\pi}_{\mathtt{m}}= \pi$  with probability $1-o(1)$ (note that \cite[Lemma 4.2]{BCL19} applies to an adversarially chosen input $\Tilde{\pi}$ as long as  $\Tilde{\pi}$ agrees with $\pi$ on $1-o(1)$ fraction of vertices). Finally, we set 
\begin{align}
    \Hat{\pi}_\diamond = \arg \max_{ \Hat{\pi}_{\mathtt{m}} } \Big \{  \sum_{(u,v) \in E(V)} G_{u,v} \mathsf{G}_{\Hat{\pi}_{\mathtt{m}}(u),\Hat{\pi}_{\mathtt{m}}(v)} \Big \}\,.
    \label{equ-def-final-pi-hat}
\end{align}
Combined with \cite[Theorem 4]{WXY21+}, it yields the following theorem.
\begin{Theorem}{\label{main-thm}}
   With probability $1-o(1)$, we have $\Hat{\pi}_\diamond = \pi$.
\end{Theorem}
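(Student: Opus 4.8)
The plan is to assemble three facts that are already available at this point: Proposition~\ref{prop-almost-exact-matching} (an almost exact matching is produced from a good seed pair), the seeded matching guarantee \cite[Lemma 4.2]{BCL19} (an almost exact matching can be upgraded to an exact one), and \cite[Theorem 4]{WXY21+} (the latent matching $\pi$ is the unique maximizer of the number of common edges). First I would locate the ``good'' run. Since $\mathsf{A}_1,\ldots,\mathsf{A}_{\mathtt{M}}$ enumerate all length-$K_0$ sequences of distinct vertices of $\mathsf{V}$ and $A=(u_1,\ldots,u_{K_0})$ is fixed, there is exactly one index $\mathtt{m}_0$ with $\mathsf{A}_{\mathtt{m}_0}=(\pi(u_1),\ldots,\pi(u_{K_0}))$, so that $(A,\mathsf{A}_{\mathtt{m}_0})$ is the unique good pair in the sense of \eqref{eq-correct-seeds}. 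Applying Proposition~\ref{prop-almost-exact-matching} to this pair produces an event $\mathcal{E}_1$ with $\mathbb{P}(\mathcal{E}_1)=1-o(1)$ on which $\pi_{\mathtt{m}_0}=\pi(A,\mathsf{A}_{\mathtt{m}_0})$ agrees with $\pi$ on at least $(1-\tfrac{10}{\log n})n$ vertices.

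Next I would hand $\pi_{\mathtt{m}_0}$ to Algorithm~\ref{algo:seeded-matching}. Since $q=n^{-\alpha+o(1)}$ with $\alpha<1$, we have $nq=n^{1-\alpha+o(1)}\to\infty$ polynomially, and $\rho$ is a fixed constant, so the hypotheses of \cite[Lemma 4.2]{BCL19} are met. The delicate point here is that $\pi_{\mathtt{m}_0}$ is a function of $(G,\mathsf{G})$ rather than an independent input; this is exactly why one uses that \cite[Lemma 4.2]{BCL19} remains valid for an adversarially (hence possibly data-dependently) chosen input $\Tilde{\pi}$, provided only that $\Tilde{\pi}$ agrees with $\pi$ on a $1-o(1)$ fraction of vertices. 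Consequently there is an event $\mathcal{E}_2$ with $\mathbb{P}(\mathcal{E}_2)=1-o(1)$ on which the algorithm returns $\pi$ for every such admissible input; on $\mathcal{E}_1\cap\mathcal{E}_2$ the input $\pi_{\mathtt{m}_0}$ is admissible, so $\Hat{\pi}_{\mathtt{m}_0}=\pi$.

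Finally I would analyze the selection step \eqref{equ-def-final-pi-hat}. By \cite[Theorem 4]{WXY21+}, in our regime --- where $nq$ grows polynomially and $\rho$ is a fixed constant, which lies comfortably inside the exact-recovery region --- there is an event $\mathcal{E}_3$ with $\mathbb{P}(\mathcal{E}_3)=1-o(1)$ on which $\pi$ is the \emph{unique} maximizer of $\sigma\mapsto\sum_{(u,v)\in E(V)}G_{u,v}\mathsf{G}_{\sigma(u),\sigma(v)}$ over all bijections $\sigma:V\to\mathsf{V}$. On $\mathcal{E}_1\cap\mathcal{E}_2\cap\mathcal{E}_3$, which has probability $1-o(1)$ by a union bound, the permutation $\pi$ lies in the finite family $\{\Hat{\pi}_{\mathtt{m}}\}_{1\le \mathtt{m}\le \mathtt{M}}$ (it equals $\Hat{\pi}_{\mathtt{m}_0}$) and is the strict global maximizer of the objective in \eqref{equ-def-final-pi-hat}; hence no other member of the family can attain the maximum, the $\arg\max$ is forced to be $\pi$, and $\Hat{\pi}_\diamond=\pi$. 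This gives the theorem.

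I do not expect a serious obstacle in this argument as such: once Proposition~\ref{prop-almost-exact-matching} is granted it is essentially bookkeeping, the only subtle point being the data-dependence of $\pi_{\mathtt{m}_0}$ noted above, which is absorbed by the robustness of the seeded matching lemma. The substantive difficulty of the paper is concentrated in Proposition~\ref{prop-almost-exact-matching} itself --- the density comparison between the Bernoulli iterates of \eqref{equ-def-iter-sets} and their Gaussian counterparts, the control of the conditional law given the information revealed in earlier rounds, and the signal-accumulation estimate \eqref{equ-K-Varepsilon-bound} that makes the thresholding rule \eqref{equ-def-matching-ver} succeed on all but a $o(1)$ fraction of vertices.
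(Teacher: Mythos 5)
Your proposal is correct and follows essentially the same route as the paper: Proposition~\ref{prop-almost-exact-matching} for the good pair, the adversarial-input robustness of \cite[Lemma 4.2]{BCL19} to upgrade $\pi_{\mathtt{m}_0}$ to $\Hat{\pi}_{\mathtt{m}_0}=\pi$, and \cite[Theorem 4]{WXY21+} to ensure the overlap-maximization step \eqref{equ-def-final-pi-hat} selects $\pi$ among the candidates. No gaps beyond what the paper itself defers to Proposition~\ref{prop-almost-exact-matching}.
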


\subsection{Formal description of the algorithm} \label{sec:formal-algorithm}
We are now ready to present our algorithm formally. 
\begin{breakablealgorithm}
			\label{algo:matching}
			\caption{Random Graph  Matching Algorithm}
	\begin{algorithmic}[1]
		\STATE Define $\overrightarrow{G}, \overrightarrow{\mathsf{G}}, \Hat{q}, \Hat{\rho}, A, \phi, \mathtt{M}, \iota_{\mathrm{lb}}, \iota_{\mathrm{ub}}, \mathfrak{a}, \varkappa, \kappa, \chi$ and $\Phi^{(0)}, \Psi^{(0)}$ as above.
		\STATE List all sequences with $K_0$ distinct elements in $\mathsf{V}$ by $\mathsf{A}_1, \mathsf{A}_2, \ldots, \mathsf{A}_{\mathtt{M}}$.
		\FOR{$\mathtt{m}=1, \ldots, \mathtt{M}$}
		\STATE Define $\aleph^{(a)}_k,\Upsilon^{(a)}_k$ for $0 \leq a \leq \chi, 1 \leq k \leq K_0$ as in \eqref{equ-def-initial-aleph-Upsilon} and \eqref{equ-def-iter-aleph-Upsilon}.
		\STATE Define $\Gamma^{(0)}_k, \Pi^{(0)}_k$  for $1 \leq k \leq K_0$ as in \eqref{equ-def-initial-set}.
		\STATE Define $\varepsilon_0, K_0$ as above.
		\STATE Set $\pi_{\mathtt{m}}(v_j) = \mathsf{v}_j$ where $v_j, \mathsf{v}_j$ are the $j$-th coordinate of $A, \mathsf{A}_{\mathtt{m}}$ respectively.
		\WHILE{ $K_t \leq (\log n)^2 $ }
		\STATE Calculate $K_{t+1}$ according to \eqref{equ-def-iter-K}.
		\STATE Calculate $\mathrm{M}^{(t,s)}_{\Gamma}, \mathrm{M}^{(t,s)}_{\Pi}$ for $0 \leq s \leq t$ according to \eqref{equ_martix_M_P}.
		\STATE Calculate the eigenvalues and eigenvectors of $\Phi^{(t)}, \Psi^{(t)}$, as in \eqref{eq-spectral-decomposition}.
		\STATE Define $\eta^{(t)}_1, \eta^{(t)}_2, \ldots, \eta^{(t)}_{ \frac{K_t}{12} }$ according to \eqref{equ-vector-orthogonal} and \eqref{equ-vector-unit}.
		\STATE Calculate $\varepsilon_{t+1}$ according to \eqref{equ-def-iter-varepsilon}.
		\STATE Sample random vectors $\beta^{(t)}_k$ for $1\leq k\leq K_{t+1}$ as described below \eqref{eq-lambda-mu-bound}.
		\STATE Define $\sigma^{(t)}_k$ for $1 \leq k \leq K_{t+1}$ according to \eqref{equ-def-sigma}.
		\STATE Define $\Phi^{(t+1)}, \Psi^{(t+1)}$ according to \eqref{equ-def-iter-matrix}.
		\STATE Define $\Gamma^{(t+1)}_k, \Pi^{(t+1)}_k$ for $1 \leq k \leq K_{t+1}$ according to \eqref{equ-def-iter-sets}.
		\ENDWHILE
		\STATE Suppose we stop at $t=t^{*}$.
		\STATE Define $\eta^{(t)}_1, \eta^{(t)}_2, \ldots, \eta^{(t)}_{ \frac{K_{t^{*}}}{12} }$ according to \eqref{equ-vector-orthogonal} and \eqref{equ-vector-unit}.
		\STATE List $V\setminus A$ and $\mathsf{V} \setminus \mathsf{A}_{\mathtt{m}} $ in a prefixed order $V\setminus A = \{ v_1,\ldots,v_{n-K_0} \}$ and $\mathsf{V} \setminus \mathsf{A}_{\mathtt{m}} = \{ \mathsf{v}_1,\ldots,\mathsf{v}_{n-K_0} \}$.
		\STATE Set $\mathrm{SUC}, \mathrm{PAIRED}, \mathrm{FAIL} = \emptyset$ and $\mathrm{CAND}= \mathsf{V} \setminus \mathsf{A}_{\mathtt{m}}$.
		\FOR{ $1 \leq k \leq n-K_0$ }
		\STATE Define $\textup{S}_u =0$.
		\FOR{ $1 \leq \mathsf{k} \leq n-K_0$ }
		\IF{ $\mathsf{v}_{\mathsf{k}} \in \mathrm{CAND}$ and $(v_k,\mathsf{v}_{\mathsf{k}})$ satisfies (\ref{equ-def-matching-ver})}
		\STATE Define $\pi_{\mathtt{m}}(v_k) = \mathsf{v}_{\mathsf{k}}$.
		\STATE Set $\textup{S}_u =1$.
		\STATE Put $v_k$ into $\mathrm{SUC}$ and move $\mathsf{v}_{\mathsf{k}}$ into $\mathrm{PAIRED}$.
		\ENDIF
		\ENDFOR
		\IF{ $\textup{S}_u =0$ }
		\STATE Put $v_k$ into $\mathrm{FAIL}$.
		\ENDIF
		\ENDFOR
		\STATE Complete $\pi_{\mathtt{m}}$ into an entire matching by mapping $\mathrm{FAIL}$ to $\mathrm{CAND}$ in an arbitrary and prefixed manner.
		\STATE Run Algorithm~\ref{algo:seeded-matching} with the input $(G,\mathsf{G},{\pi}_{\mathtt{m}})$ and denote the output as $\Hat{\pi}_{\mathtt{m}}$.
		\ENDFOR 
		\STATE Find $\Hat{\pi}_{\mathtt{m}^{*}}$ which maximizes $\sum_{(u,v) \in E(V)} G_{u,v} \mathsf{G}_{\pi(u), \pi(v)}$ among $\{ \Hat{\pi}_{\mathtt{m}} : 1 \leq \mathtt{m} \leq \mathtt{M}\}$.
		\RETURN $\Hat{\pi}_\diamond = \Hat{\pi}_{\mathtt{m}^{*}}$.
	\end{algorithmic}
\end{breakablealgorithm}

\subsection{Running time analysis}
\label{sec:runtime-analysis}
In this subsection, we analyze the running time for Algorithm~\ref{algo:matching}. 
\begin{Proposition}{\label{prop_time_complexity}}
     The running time for computing each $\pi_{\mathtt m}$ is $O(n^{3})$. Furthermore, the running time for Algorithm \ref{algo:matching}  is $O(n^{\kappa+3})$.
\end{Proposition}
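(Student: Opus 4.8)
The plan is to walk through Algorithm~\ref{algo:matching} phase by phase, bound the cost of each phase, and assemble the pieces. First I would record that $\mathtt M=\prod_{i=0}^{K_0-1}(n-i)\le n^{K_0}=n^{\kappa}$, so the outer \textbf{for} loop runs $O(n^{\kappa})$ times, and that the concluding $\arg\max$ step \eqref{equ-def-final-pi-hat} evaluates the edge-overlap objective $\sum_{(u,v)}G_{u,v}\mathsf G_{\Hat{\pi}_{\mathtt m}(u),\Hat{\pi}_{\mathtt m}(v)}$ --- an $O(n^2)$ computation --- for each of the $\mathtt M$ candidates $\Hat{\pi}_{\mathtt m}$, hence costs $O(n^{\kappa+2})$; the one-time preprocessing (forming $\overrightarrow G,\overrightarrow{\mathsf G}$ and computing $\phi,\iota_{\mathrm{lb}},\iota_{\mathrm{ub}},\mathfrak{a},\varkappa,\kappa,\chi$ to the required numerical precision) is $O(n^2)$. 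So the task reduces to showing that one pass of the \textbf{for}-loop body --- producing $\pi_{\mathtt m}$ and then, via Algorithm~\ref{algo:seeded-matching}, the refinement $\Hat{\pi}_{\mathtt m}$ --- costs $O(n^{3})$; multiplying by $\mathtt M$ and adding $O(n^{\kappa+2})$ then gives $O(n^{\kappa+3})$.

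For a fixed $\mathtt m$ I would treat four stages. In \emph{initialization}, the sets $\aleph^{(a)}_k,\Upsilon^{(a)}_k$ ($0\le a\le\chi$, $1\le k\le K_0$) come from a breadth-first exploration from $K_0=O(1)$ seeds to the constant depth $\chi\le\tfrac1{1-\alpha}$; level $a$ costs $O(n^2)$ (for each $v\in V$ scan $\overrightarrow G_{v,u}$ over the $\le n$ vertices $u$ of the previous frontier), so the whole exploration together with the thresholded step \eqref{equ-def-initial-set} producing $\Gamma^{(0)}_k,\Pi^{(0)}_k$ costs $O(\chi K_0 n^2)=O(n^2)$, while $\vartheta_a,\varsigma_a,\mathtt d_\chi$ are computed once, to sufficient precision, in $\mathrm{poly}(n)$ time. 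For the \emph{iteration} I would first use \eqref{eq-def-t-*}, the formula $K_t=K_0^{2^t}/\varkappa^{2^t-1}$, and \eqref{eq-increasing-dimension} to note that the loop runs only $t^*=O(\log\log\log n)$ rounds with $K_t\le(\log n)^4$ throughout; then, in round $t$, the dominant cost is forming the normalized degrees $D^{(t)}_v,\mathsf D^{(t)}_{\mathsf v}\in\mathbb R^{K_t}$ of \eqref{equ-def-degree} for all $2n$ vertices, costing $O(K_t n^2)$, while everything else is smaller --- the matrices $\mathrm M^{(t,s)}_{\Gamma},\mathrm M^{(t,s)}_{\Pi}$ of \eqref{equ_martix_M_P} have $O(K_tK_s)$ entries, each an $O(n)$ set-intersection; the spectral decomposition \eqref{eq-spectral-decomposition}, the choice of $\eta^{(t)}_1,\dots,\eta^{(t)}_{K_t/12}$ via \eqref{equ-vector-orthogonal}--\eqref{equ-vector-unit}, and the matrices $\Phi^{(t+1)},\Psi^{(t+1)}$ of \eqref{equ-def-iter-matrix} are $\mathrm{poly}(K_t)=\mathrm{polylog}(n)$ linear-algebra operations to the needed precision; resampling $\beta^{(t)}_k$ (and checking the conditions of \cite[Proposition 2.4]{DL22+}) succeeds after $O(1)$ tries in expectation, and capping at $O(\log n)$ tries fails with probability $n^{-\omega(1)}$ and in any case adds only $\mathrm{polylog}(n)$; and forming $\Gamma^{(t+1)}_k,\Pi^{(t+1)}_k$ via \eqref{equ-def-iter-sets} costs $O(K_{t+1}K_t n)$. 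Summing over $t\le t^*$ and using $\sum_t K_t=O(K_{t^*})$, the iteration costs $O(n^2\,\mathrm{polylog}\,n)\subseteq O(n^3)$.

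For \emph{almost exact matching}, having $D^{(t^*)}_v,\mathsf D^{(t^*)}_{\mathsf v}$ and precomputing $W^{(t^*)}_v(j)+\langle\eta^{(t^*)}_j,D^{(t^*)}_v\rangle$ (and its $\mathsf v$-analogue) for all $v$ and $j\le K_{t^*}/12$ in $O(K_{t^*}^2 n)$ time, the double loop over pairs $(v_k,\mathsf v_{\mathsf k})$ tests \eqref{equ-def-matching-ver} --- a length-$K_{t^*}/12$ inner product --- for $O(n^2)$ pairs, costing $O(n^2 K_{t^*})\subseteq O(n^3)$, and completing $\pi_{\mathtt m}$ by pairing $\mathrm{FAIL}$ with the leftover of $\mathrm{CAND}$ is $O(n)$. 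For \emph{seeded matching}, the key point I would stress is that the numbers $N(u,\mathsf v)$ are defined from the fixed input $\Tilde{\pi}=\pi_{\mathtt m}$, not from the evolving $\Hat{\pi}$, so they are computed once --- for each $u$, distribute over the pairs $(w,\mathsf v)$ with $u\sim w$ and $\mathsf v\sim\Tilde{\pi}(w)$, i.e. $O(n^2)$ per $u$ and $O(n^3)$ overall --- and never recomputed; each execution of the repeat step then scans the $O(n^2)$ pairs $(u,\mathsf v)$ with the three tests reduced to $O(1)$ look-ups, and I would observe that once $\Hat{\pi}$ is modified to send $u\mapsto\mathsf v$ with $N(u,\mathsf v)\ge\Delta$ the assignment of $u$ is frozen --- $u$ cannot be the remapped vertex again (its test $N(u,\Hat{\pi}(u))=N(u,\mathsf v)<\tfrac1{10}\Delta$ fails) and $\mathsf v$'s preimage cannot change (that test is again $N(u,\mathsf v)<\tfrac1{10}\Delta$) --- so each execution consumes a fresh pivot vertex and there are at most $n$ executions; hence Algorithm~\ref{algo:seeded-matching} is $O(n^3)$. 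Adding the four stages gives $O(n^3)$ per $\mathtt m$, which combined with the first paragraph yields $O(n^{\kappa+3})$ for Algorithm~\ref{algo:matching}.

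I expect the accounting to be entirely routine except for two points that need a genuine observation: (i) that the iteration phase, though it manipulates $\mathrm{polylog}(n)$-dimensional objects, runs for only $O(\log\log\log n)$ rounds so that its total cost is $n^{2+o(1)}$ rather than something larger; and (ii) the $O(n)$ bound on the number of updates in the seeded-matching loop via the freezing argument above (cf. \cite[Algorithm 4]{BCL19}). The main obstacle is (ii), since it is the only place where a naive implementation --- recomputing the $N$-table after an update, or rescanning without the monotonicity observation --- would overshoot $O(n^3)$.
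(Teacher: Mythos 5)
Your proof is correct and reaches the same conclusion as the paper's, but it is considerably more self-contained: where the paper writes a two-line argument that (a) defers the cost of the iteration-plus-finishing stage to an adaptation of \cite[Proposition 2.13]{DL22+} and (b) declares the $O(n^3)$ bound for Algorithm~\ref{algo:seeded-matching} to be ``easy to see,'' you supply direct derivations for both. Your accounting of the iteration phase ($t^*=O(\log\log\log n)$ rounds, $K_t\le(\log n)^4$ dimensions, degree computation dominating at $O(K_tn^2)$ per round, geometric sum $\sum_t K_t=O(K_{t^*})$ via \eqref{eq-increasing-dimension}) correctly reproduces the $n^{2+o(1)}$ bound that the paper borrows from \cite{DL22+}. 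Your treatment of Algorithm~\ref{algo:seeded-matching} is the one place where you supply a genuine argument the paper omits: you correctly note that the $N$-table is built once from the fixed input $\Tilde\pi$ at cost $O(n^3)$ and never recomputed, and you justify the $O(n)$ bound on the number of swaps by observing that after a swap sends $u\mapsto\mathsf v$ with $N(u,\mathsf v)\ge\Delta$, both gating tests at $(u,\cdot)$ and $(\cdot,\mathsf v)$ permanently fail, so $u$ is frozen and each swap consumes a fresh pivot. That freezing argument is exactly what the paper leaves implicit by citing the algorithm's provenance in \cite{BCL19}. Both routes are valid; yours buys a proof that does not lean on prior works, at the cost of being longer.
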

\begin{proof} 
We first prove the first claim. For each $\mathtt{m}$, it takes $O(n^2)$ time to compute all $\Gamma^{(0)}_k, \Pi^{(0)}_k$ and \cite[Proposition 2.13] {DL22+} can be easily adapted to show that computing $\pi_{\mathtt{m}}$ based on the initialization takes time $O(n^{2+o(1)})$. In addition, it is easy to see that Algorithm~\ref{algo:seeded-matching} runs in time $O(n^3)$. Altogether, this yields the claim.

We now prove the second claim. Since $\mathtt M \leq n^{\kappa}$, the running time for computing all $\pi_{\mathtt{m}}$ is $O(n^{\kappa+3})$. In addition, finding $\Hat{\pi}_\diamond$ from $\{ \pi_{\mathtt{m}} \}$ takes $O(n^{\kappa+2})$ time. So the total running time is $O(n^{\kappa+3})$.
\end{proof}

We complete this section by pointing out that Theorem~\ref{Main-Thm} follows directly from Theorem~\ref{main-thm} and Proposition~\ref{prop_time_complexity}.

\section{Analysis of the matching algorithm}
\label{sec:analysis}
The main goal of this section is to prove Proposition~\ref{prop-almost-exact-matching}.
\subsection{Outline of the proof}
\label{sec:proof-outline}
We fix a good pair $(A, \mathsf A)$. As in \cite{DL22+}, the basic intuition is that each pair of $( \Gamma^{(t)}_k,\Pi^{(t)}_k)$ carries signal of strength at least $\varepsilon_t$, and thus the total signal strength of all $K_t$ pairs will grow in $t$ (recall \eqref{equ-K-Varepsilon-bound}). A natural attempt is to prove this via induction, for which a key challenge is to control correlations among different iterative steps. As a related challenge, we also need to show that the signals carried by different pairs are essentially non-repetitive. To this end, we will (more or less) follow \cite{DL22+} and propose the following \emph{admissible} conditions on $( \Gamma^{(t)}_k,\Pi^{(t)}_k)$ and we hope that this will allow us to verify these admissible conditions by induction. Define the targeted approximation error at time $t$ by 
\begin{align}
    \Delta_t = \Delta_t(\Hat{q},\Hat{\rho}) = e^{ - (\log \log n)^{10} } (\log n)^{10t} \prod_{i \leq t} K_i^{100} \,.
    \label{equ-def-delta}
\end{align}
Since $K_t \leq K_{t^*} \leq (\log n)^4$ and $2^{t^*} \leq 2 \log \log n$, we have 
\begin{equation}{\label{eq-bound-Delta-t}}
    \Delta_t \leq \Delta_{t^*} \leq e^{- (\log \log n)^8} \ll 1 \,.
\end{equation}

\begin{Definition}{\label{def-admissible}}
For $t \geq 0$ and a collection of pairs $( \Gamma^{(s)}_k, \Pi^{(s)}_k)_{1 \leq k \leq K_s,0 \leq s \leq t}$ with $\Gamma^{(s)}_k \subset V$ and $\Pi^{(s)}_k \subset \mathsf{V}$, we say $( \Gamma^{(s)}_k,\Pi^{(s)}_k )_{1 \leq k \leq K_s,0 \leq s \leq t}$ is $t$-admissible if the following hold:
\begin{enumerate}
    \item[(i.)] $\Big| \frac{|\Gamma^{(0)}_k|}{n} - \vartheta \Big|, \Big| \frac{|\Pi^{(0)}_k|}{n} - \vartheta \Big| < \vartheta \Delta_0$ for $1 \leq k \leq K_0$;
    \item[(ii.)] $\Big| \frac{|\Gamma^{(0)}_k \cap \Gamma^{(0)}_{l}|}{n} - \vartheta^2 \Big|, \Big| \frac{|\Pi^{(0)}_k \cap \Pi^{(0)}_{l}|}{n} - \vartheta^2 \Big| < \vartheta \Delta_0$ for $1\leq k \neq l \leq K_0$;
    \item[(iii.)] $\Big| \frac{| \pi(\Gamma^{(0)}_k) \cap \Pi^{(0)}_{k}|}{n} - \varsigma  \Big| < \vartheta \Delta_0$ and $\Big| \frac{| \pi(\Gamma^{(0)}_k) \cap \Pi^{(0)}_{l}|}{n} - \vartheta^2 \Big| < \vartheta \Delta_0$ for $1\leq k \neq l \leq K_0$;
    \item[(iv.)] $\Big| \frac{|\Gamma^{(s)}_k|}{n} -\mathfrak{a} \Big|, \Big| \frac{|\Pi^{(s)}_k|}{n} -\mathfrak{a} \Big| < \mathfrak{a} \Delta_s$ for  $1\leq k\leq K_s$ and $1 \leq s \leq t$;
    \item[(v.)] $\Big | \frac{|\Pi^{(s)}_k \cap \Pi^{(s)}_{l}|}{n} -\phi(  \frac{12}{K_{s-1}}  \langle {\beta}^{(s-1)}_k , {\beta}^{(s-1)}_l \rangle) \Big|,  \Big| \frac{|\Gamma^{(s)}_k \cap \Gamma^{(s)}_{l}|}{n} -\phi( \frac{12}{K_{s-1}}  \langle {\beta}^{(s-1)}_k ,{\beta}^{(s-1)}_l \rangle) \Big| < \mathfrak{a} \Delta_s$ for $1\leq k, l\leq K_s$ and $1 \leq s \leq t$;
    \item[(vi.)] $\Big | \frac{| \pi(\Gamma^{(s)}_k) \cap \Pi^{(s)}_{l}|}{n} -\phi( \frac{ \Hat{\rho} }{2}  \frac{12}{K_{s-1}}   \langle \Hat{\beta}^{(s-1)}_k , \Hat{\beta}^{(s-1)}_l \rangle) \Big| < \mathfrak{a} \Delta_s$ for $1\leq k, l\leq K_s$ and $1 \leq s \leq t$;
    \item[(vii.)] $\Big| \frac{|\Gamma^{(s)}_k \cap \Gamma^{(r)}_{l}|}{n} - \mathfrak{a}^2 \Big|, \Big| \frac{|\Pi^{(s)}_k \cap \Pi^{(r)}_{l}|}{n} - \mathfrak{a}^2 \Big| < \mathfrak{a} \Delta_s$ for $1\leq k\leq K_s$, $1\leq l\leq K_r$ and $1 \leq r < s \leq t$;
    \item[(viii.)] $\Big | \frac{| \pi(\Gamma^{(s)}_k) \cap \Pi^{(r)}_{l}|}{n} - \mathfrak{a}^2 \Big| < \mathfrak{a} \Delta_{\max(s,r)}$ for $1\leq k\leq K_s$, $1\leq l\leq K_r$ and $1 \leq r \not = s \leq t$.
    \item[(ix.)] $\Big| \frac{|\Gamma^{(s)}_k \cap \Gamma^{(0)}_{l}|}{n} - \mathfrak{a} \vartheta \Big|, \Big| \frac{|\Pi^{(s)}_k \cap \Pi^{(0)}_{l}|}{n} - \mathfrak{a} \vartheta \Big| < \sqrt{\vartheta \mathfrak{a}} \Delta_s$ for $1\leq k\leq K_s$, $1\leq l\leq K_0$ and $0 < s \leq t$;
    \item[(x.)] $\Big | \frac{| \pi(\Gamma^{(s)}_k) \cap \Pi^{(0)}_{l}|}{n} - \mathfrak{a} \vartheta \Big|, \Big | \frac{| \pi(\Gamma^{(0)}_l) \cap \Pi^{(s)}_{k}|}{n} - \mathfrak{a} \vartheta \Big| < \sqrt{\vartheta \mathfrak{a}} \Delta_s$ for $1\leq k\leq K_s$, $1\leq l\leq K_0$ and $0 < s \leq t$.
\end{enumerate}
Here $\mathfrak{a},\vartheta,K_t,\phi,\beta^{(t)}$ and $\Hat{\beta}^{(t)}$ are defined previously in Section~\ref{sec:algorithm-description}.
\end{Definition}
Proofs for \cite[(3.3)-(3.8)]{DL22+} can be easily adapted (with no essential change) to show that under the assumption of admissibility, the matrices $\mathrm{M}^{(t,t)}_{\Gamma}, \mathrm{M}^{(t,t)}_{\Pi}$ and $ \mathrm{P}^{(t,t)}_{\Gamma,\Pi}$ concentrate around $\Phi^{(t)},\Phi^{(t)}$ and $\Psi^{(t)}$ respectively with error $\Delta_t$, and $\mathrm{M}^{(t,s)}_{\Gamma}, \mathrm{M}^{(t,s)}_{\Pi}, \mathrm{P}^{(t,s)}_{\Gamma,\Pi}$ have entries bounded by $\Delta_t$. For notational convenience, define
\begin{align}
    \mathcal{E}_t = \{  (\Gamma^{(s)}_k,\Pi^{(s)}_k)_{ 0 \leq s \leq t, 1 \leq k \leq K_s } \mbox{ is $t$-admissible} \}\,. \label{equ-def-admissible}
\end{align}
As hinted earlier, to verify $\mathcal{E}_t$ inductively, the main difficulty is the complicated dependency among the iteration steps. In \cite{DL22+}, much effort was dedicated to this issue even with the help of Gaussian property. In the present work, this is even much harder than in \cite{DL22+} due to the lack of Gaussian property (for instance, a crucial Gaussian property is that conditioned on linear statistics a Gaussian process remains Gaussian). To this end, our rough intuition is to try to compare our process with a Gaussian process whenever possible, and (as we will see) major challenge arises in case when such comparison is out of control.

We first consider the initialization. Define 
\begin{equation}\label{eq-def-REV}
    \mathrm{REV}^{(a)} = \cup_{0 \leq j \leq a} \cup_{1 \leq k \leq K_0} \big( \aleph^{(j)}_k \cup \pi^{-1}( \Upsilon^{(j)}_k ) \big)
\end{equation}
to be the set of vertices that we have explored for initialization either directly or indirectly (e.g., through correlation of the latent matching). We further denote $\mathrm{REV}=\mathrm{REV}^{(\chi)}$. Define
\begin{align*}
    \mathfrak{S}_{\mathrm{init}} = \sigma \big\{ \mathrm{REV}, \{ \overrightarrow{G}_{u,w}, \overrightarrow{\mathsf{G}}_{\pi(u), \pi(w)}: u \in \mathrm{REV} \mbox{ or } w \in \mathrm{REV} \} \big\}
\end{align*}
(note that $\mathrm{REV} = A$ is measurable with respect to $ \{ \overrightarrow{G}_{u,w}, \overrightarrow{\mathsf{G}}_{\pi(u), \pi(w)}: u \in A \mbox{ or } w \in A \} $).
We have $\{ \Gamma^{(0)}_k, \Pi^{(0)}_k \}$ is measurable with respect to $\mathfrak{S}_{\mathrm{init}}$. We will show that conditioning on a realization of $(\aleph^{(a)}_j,\Upsilon^{(a)}_j)$ will not affect the degree of ``most'' vertices, thus verifying the concentration of $\aleph^{(a+1)}_j, \Upsilon^{(a+1)}_j$ inductively. This will eventually yield that $\mathcal{E}_0$ holds with probability $1-o(1)$, as incorporated in Section~\ref{sec:priliminary-events}.

We now consider the iterations. When comparing our process to the case of Wigner matrices, a key challenge is that in each time $t$ we can only control the behavior (i.e., show they are close to ``Gaussian'') of all but a vanishing fraction of $\langle \eta^{(t)}_k, D^{(t)}_v \rangle$. This set of uncontrollable vertices will be inductively defined as $\mathrm{BAD}_{t}$ in \eqref{equ-def-set-BAD}. However, the algorithm forces us to deal with the behavior of $\langle \eta^{(t+1)}_k, D^{(t+1)}_v \rangle$ conditioned on \emph{all} $\{ \langle \eta^{(s)}_k, D^{(s)}_v \rangle : 0 \leq s \leq t \}$ (since the algorithm has explored all these variables), and thus we also need to control the influence from these ``bad'' vertices. To address this problem, we will separate $\langle \eta^{(t+1)}_k, D^{(t+1)}_v \rangle$ into two parts, one involved with bad vertices and one not. To this end,
for $0\leq s \leq t+1$ and  for $v \not \in \mathrm{BAD}_t$,  we decompose $D^{(s)}_v (k)$ into a sum of two terms with $D^{(s)}_v (k) = b_t{D}^{(s)}_v (k) + g_t{D}^{(s)}_v (k)$ (and similarly for the mathsf version), where 
\begin{align}
     b_t{D}^{(s)}_v(k) \overset{\triangle}{=} \frac{1}{\sqrt{(\mathfrak{a}_s-\mathfrak{a}_s^2)n \Hat{q}(1-\Hat{q})}} \sum_{u \in \mathrm{BAD}_t} (\mathbf{1}_{u \in \Gamma^{(s)}_k} - \mathfrak{a}_s) (\overrightarrow{G}_{v,u}-\Hat{q})  \,.
     \label{eq-def-b-t-D-t}
\end{align}
Further, we define $\mathfrak{S}_{t}$ to be the $\sigma$-field generated by
\begin{equation}\label{eq-def-mathfrak-S-t}
\begin{aligned}
    & \{ W^{(s)}_v(k) + \langle \eta^{(s)}_k,D^{(s)}_v \rangle, \mathsf{W}^{(s)}_{ \pi(v) }(k) + \langle \eta^{(s)}_k,\mathsf{D}^{(s)}_{\pi(v)} \rangle : 1\leq k \leq \tfrac{K_s}{12}, 0 \leq s \leq t, v\in V \}\,,  \\
    &\mathrm{BAD}_t \mbox{ and } \{  \overrightarrow{G}_{u,w} , \overrightarrow{\mathsf{G}}_{\pi(u), \pi(w)} : u \mbox{ or } w \in \mathrm{BAD}_t \}   \,.
 \end{aligned}
\end{equation}
Then we see that $D_v^{(s)}$ is fixed under $\mathfrak S_t$ for $v\in \mathrm{BAD}_t$ and $1\leq s\leq t$, and for $v \not\in \mathrm{BAD}_t$ in a sense conditioned on $\mathfrak S_t$ we may view $b_t{D}^{(s)}_v(k) $ and $g_t{D}^{(s)}_v(k)$ as the ``biased'' part and the ``free'' part, respectively. 
We set $\mathrm{BAD}_{-1} = \mathrm{REV}$, and inductively define 
\begin{equation}
    \begin{aligned}
        & \mathrm{BIAS}_{D,t,s,k} = \Big\{ v \in V \setminus \mathrm{BAD}_{t-1} : | b_{t-1}{D}^{(s)}_v (k) | > e^{ - 10 (\log \log n)^{10} } \Big\}   \,, \\
        & \mathrm{BIAS}_{\mathsf{D},t,s,k} = \Big\{ v \in V \setminus \mathrm{BAD}_{t-1} : | b_{t-1}\mathsf{D}^{(s)}_{\pi(v)} (k) | > e^{ - 10 (\log \log n)^{10} } \Big\} \,,
    \label{equ-def-set-BIAS}
    \end{aligned}
\end{equation}
and then define $\mathrm{BIAS}_t = \cup_{0 \leq s \leq t} \cup_{1 \leq k \leq K_s} \big( \mathrm{BIAS}_{D,t,s,k} \cup \mathrm{BIAS}_{\mathsf{D},t,s,k} \big)$ to be the collection of vertices that are overly biased by the set $\mathrm{BAD}_{t-1}$ (see \eqref{equ-def-set-BAD}). Accordingly, we will give up on controlling the behavior for vertices in $\mathrm{BIAS}_t$.

Now we turn to the term $g_t D^{(t+1)}_v$. We will try to argue that this ``free'' part behaves like a suitably chosen Gaussian process via the technique of density comparison, as in Section~\ref{sec:density-compare}. To this end, we sample a pair of Gaussian matrices $(\overrightarrow{Z}, \overrightarrow{\mathsf{Z}})$ with zero diagonal terms  such that their off-diagonals $\{ \overrightarrow{Z}_{u,v}, \overrightarrow{\mathsf{Z}}_{\mathsf{u,v}} \}$ is a family of Gaussian process with mean 0 and variance $\Hat{q}(1-\Hat{q})$, and the only non-zero covariance occurs on pairs of the form $(\overrightarrow{Z}_{u,v},  \overrightarrow{\mathsf{Z}}_{\pi(u),\pi(v)})$ or $(\overrightarrow{Z}_{u,v},  \overrightarrow{\mathsf{Z}}_{\pi(v),\pi(u)})$ (for $u\neq v\in V$) where  $\mathbb{E}[ \overrightarrow{Z}_{u,v} \overrightarrow{\mathsf{Z}}_{\pi(u),\pi(v)} ] = \mathbb{E}[ \overrightarrow{Z}_{u,v} \overrightarrow{\mathsf{Z}}_{\pi(v),\pi(u)} ] = \Hat{\rho}$ (this is analogous to the process defined in \cite[Section 2.1]{DL22+}). In addition, we sample i.i.d\ standard normal variables $\Tilde{W}^{(s)}_v(k), \Tilde{\mathsf{W}}^{(s)}_{\pi(v)}(k)$ for $0 \leq s \leq t^*, v \in V, 1 \leq k \leq K_s/12$. (We emphasize that we will sample $(\overrightarrow{Z}, \overrightarrow{\mathsf{Z}}, \Tilde{W},\Tilde{\mathsf{W}})$ only once and then will stick to it throughout the analysis.) We can then define the following ``Gaussian substitution'', where we replace each $\overrightarrow{G}_{v,u}$ with $\overrightarrow{Z}_{v,u}$ for each $u,v \not \in \mathrm{BAD}_t$: for $v \not \in \mathrm{BAD}_{t}$, define $g_t \Tilde{D}^{(s)}_v$ to be a $\sum_{s \leq t+1} K_s$-dimensional vector whose $k$-th entry is given by  
\begin{align}
    g_{t} \Tilde{D}^{(s)}_v (k) = \frac{ \sum_{u \in V \setminus \mathrm{BAD}_{t}} (\mathbf{1}_{u \in \Gamma^{(s)}_k} - \mathfrak{a}_s) \overrightarrow{Z}_{v,u} } {\sqrt{(\mathfrak{a}_s-\mathfrak{a}_s^2) n \Hat{q}(1-\Hat{q})}} \mbox{ for } 0 \leq s \leq t+1 \,,
    \label{equ-def-g-Tilde-D}
\end{align}
and we define $g_{t} \Tilde{\mathsf{D}}^{(s)}_{\mathsf{v}}$ similarly. We will use a delicate Lindeberg's interpolation argument to bound the ratio of the densities before and after the substitution. To be more precise, we will process each pair $(u,w)$ sequentially (in an arbitrarily prefixed order) where we replace $\{ \overrightarrow{G}_{u,w}- \Hat{q}, \overrightarrow{G}_{w,u}-\Hat{q}, \overrightarrow{\mathsf{G}}_{\pi(u),\pi(w)}-\Hat{q}, \overrightarrow{\mathsf{G}}_{\pi(w),\pi(u)}-\Hat{q}  \}$ by $\{ \overrightarrow{Z}_{u,w}, \overrightarrow{Z}_{w,u}, \overrightarrow{\mathsf{Z}}_{\pi(u),\pi(w)}, \overrightarrow{\mathsf{Z}}_{\pi(w),\pi(u)}  \}$. Define this operation as $\mathbf{O}_{ \{ u,w \} }$, and the key is to bound the change of density ratio for each operation. To this end, list $\{ \mathbf{O}_{\{u,w\}} : (u,w) \in E_0, u \neq w \}$ as $\{ \mathbf{O}_{\{ u_1,w_1 \}}, \ldots, \mathbf{O}_{\{ u_N,w_N \}} \}$ in the aforementioned prefixed order, and define a corresponding operation $\mathbf{U}_{\{ u,w \}}$ which replaces $\{ \overrightarrow{G}_{u,w}- \Hat{q}, \overrightarrow{G}_{w,u}-\Hat{q}, \overrightarrow{\mathsf{G}}_{\pi(u),\pi(w)}-\Hat{q}, \overrightarrow{\mathsf{G}}_{\pi(w),\pi(u)}-\Hat{q} \}$ by $\{0, 0, 0, 0\}$. For $0 \leq j \leq N$ and for any random variable $\mathbf{X}$, define 
\begin{align}
    \mathbf{X}_{(j)} = \circ_{i \leq j} \mathbf{O}_{ \{ u_i,w_i \} } \big( \mathbf{X} \big), \quad \mathbf{X}_{ \langle j \rangle} = \circ_{i \leq j} \mathbf{U}_{ \{ u_i,w_i \} } \big( \mathbf{X} \big), \quad \mathbf{X}_{[j]} = \mathbf{X}_{(j)} - \mathbf{X}_{ \langle j \rangle }\,, \label{equ-def-degree-after-j-substitution}
\end{align}
where $\circ_{i \leq j} \mathbf{O}_{ \{ u_i,w_i \} }$ is the composition of the first $j$ operations $\{\mathbf{O}_{ \{ u_1,w_1 \} }, \ldots, \mathbf{O}_{ \{ u_j,w_j \} }\}$ and $\circ_{i \leq j} \mathbf{U}_{ \{ u_i, w_i \} }$ is the composition of the first $j$ operations  $\{\mathbf{U}_{ \{ u_1,w_1 \} }, \ldots, \mathbf{U}_{ \{ u_j,w_j \} }\}$. For notational convenience, we simply define $\circ_{i \leq 0} \mathbf{O}_{\{ u_i, w_i \}} = \circ_{i \leq 0} \mathbf{U}_{\{ u_i, w_i \}} = \mathbf{Id}$ to be the identity map. We will see that a crucial point of our argument is to employ suitable truncations; such truncations will be useful when establishing Lemma~\ref{lem-good-set-good-variable-realization}.
To this end, we define $\mathrm{LARGE}^{(0)}_{t,s,k} $ to be the collection of vertices $ v \in V \setminus \mathrm{BAD}_{t-1} $ such that  for some $0\leq j\leq N$
\begin{equation}\label{equ-def-set-LARGE-(0)}
    \big| W^{(s)}_v(k) \big| \mbox{ or } \big| \langle \eta^{(s)}_k, g_{t-1} {D}^{(s)}_v \rangle_{ \langle j \rangle } \big|  \mbox{ or }  \big| \mathsf{W}^{(s)}_{\pi(v)}(k) \big|  \mbox{ or } \big| \langle \eta^{(s)}_k, g_{t-1} {\mathsf D}^{(s)}_{\pi(v)} \rangle_{ \langle j \rangle } \big|> n^{ \frac{1} {\log \log \log n}}\,.
\end{equation}
Let $\mathrm{LARGE}^{(0)}_t = \cup_{s=0}^{t} \cup_{k=1}^{\frac{K_s}{12}}\mathrm{LARGE}^{(0)}_{ t,s,k }$.
We then define
    \begin{align*}
    b_{t,0} {D}^{(s)}_v(k) &= \frac{1}{\sqrt{ ( \mathfrak{a}_s - \mathfrak{a}_s^2) n \Hat{q}(1-\Hat{q}) }} \sum_{ u \in \mathrm{LARGE}_{t}^{(0)} \cup \mathrm{BIAS}_t \cup \mathrm{PRB}_t } ( \mathbf{1}_{ u \in \Gamma^{(s)}_k } - \mathfrak{a}_s ) (\overrightarrow{G}_{v,u} - \Hat{q})\,,\\
    b_{t,0} {\mathsf D}^{(s)}_{\pi(v)}(k) &= \frac{1}{\sqrt{ ( \mathfrak{a}_s - \mathfrak{a}_s^2) n \Hat{q}(1-\Hat{q}) }} \sum_{ u \in \mathrm{LARGE}_{t}^{(0)} \cup \mathrm{BIAS}_t \cup \mathrm{PRB}_t } ( \mathbf{1}_{ \pi(u) \in \Pi^{(s)}_k } - \mathfrak{a}_s ) (\overrightarrow{\mathsf{G}}_{\pi(v),\pi(u)} - \Hat{q}) \,.
\end{align*}
Here $\mathrm{PRB}_t$ is defined in \eqref{equ-def-set-PRB} below and later we will explain that this definition is valid (although we have not defined $\mathrm{PRB}_t$ yet). 
For $a\geq 0$, we inductively define $\mathrm{LARGE}^{(a+1)}_{t,s,k} $ to be the collection of vertices $ v \in V \setminus \mathrm{BAD}_{t-1} $ such that  for some $0\leq j\leq N$
\begin{equation}
\big| \langle \eta^{(s)}_k , b_{t,a} {D}^{(s)}_v \rangle_{ \langle j \rangle } \big| \mbox{ or } \big| \langle \eta^{(s)}_k , b_{t,a} {\mathsf D}^{(s)}_{\pi(v)} \rangle_{ \langle j \rangle } \big|  > n^{ \frac{1} {\log \log \log n}}\,,  \label{equ-def-set-LARGE}
\end{equation}
define $\mathrm{LARGE}^{(a+1)}_t = \bigcup_{ s=0}^{ t } \bigcup_{k=1}^{\frac{K_s}{12}} \mathrm{LARGE}^{(a+1)}_{t,s,k }$, and define
\begin{align*}
    b_{t,a+1} {D}^{(s)}_v(k) = \frac{1}{\sqrt{ ( \mathfrak{a}_s - \mathfrak{a}_s^2) n \Hat{q}(1-\Hat{q}) }} \sum_{ u \in \mathrm{LARGE}^{(a+1)}_t } ( \mathbf{1}_{ u \in \Gamma^{(s)}_k } - \mathfrak{a}_s ) (\overrightarrow{G}_{v,u} - \Hat{q}) \,.
\end{align*}
Also we similarly define $b_{t,a+1} {\mathsf D}^{(s)}_{\pi(v)}(k)$ (using similar analogy for $a=0$). 
Having completed this inductive definition, we finally write $\mathrm{LARGE}_{t} =  \cup_{a=0}^{\infty} \mathrm{LARGE}^{(a)}_t$. We will argue that after removing $\mathrm{LARGE}_t$ the remaining random variables have smoothed density whose change in each substitution can be bounded, thereby verifying that their original joint density is not too far away from that of a Gaussian process by a delicate Lindeberg's argument. The details of such Lindeberg's argument are incorporated in Section~\ref{sec:density-compare}. 

Thanks to the above discussions, we have \emph{essentially} reduced the problem to analyzing the corresponding Gaussian process (not exactly yet since we still need to consider yet another type of bad vertices arising from Gaussian approximation as in \eqref{equ-def-set-PRB} below). To this end, we will employ the techniques of Gaussian projection. Define $\mathcal{F}_t = \sigma(\mathfrak F_t)$ where
\begin{equation}
    \mathfrak{F}_t = \Bigg\{ \begin{aligned}
    \Tilde{W}^{(s)}_v(k) + \langle \eta^{(s)}_k, g_{t} \Tilde{D}^{(s)}_v \rangle \\ \Tilde{\mathsf{W}}^{(s)}_{\mathsf{v}}(k) + \langle \eta^{(s)}_k, g_{t} \Tilde{\mathsf{D}}^{(s)}_{\mathsf{v}} \rangle \end{aligned} : 0 \leq s \leq t, 1 \leq k \leq \frac{K_s}{12} , v, \pi^{-1}(\mathsf{v}) \not \in \mathrm{BAD}_{t} \Bigg\}\,. 
    \label{eq-def-mathcal-F-t}
\end{equation} 
We will condition on (see \eqref{eq-convention-announcement} below) 
\begin{equation}\label{eq-to-be-conditioned-on}
\{\Gamma^{(s)}_k, \Pi^{(s)}_k, \mathrm{BAD}_s: 0\leq s\leq t, 1\leq k\leq K_s\}
\end{equation}
and thus $\mathfrak F_t$ is viewed as a Gaussian process. We can then obtain the conditional distribution of $\Tilde{W}^{(t+1)}_v(k) + \langle \eta^{(t+1)}_k, g_t \Tilde{D}^{(t+1)}_v \rangle $ given $\mathcal F_t$ (see Remark~\ref{remark-conditional-Gaussian}). In particular, we will show that the projection of $\Tilde{W}^{(t+1)}_v(k) + \langle \eta^{(t+1)}_k, g_t \Tilde{D}^{(t+1)}_v \rangle $ onto $\mathcal{F}_t$ has the form
\begin{align*}
    \begin{pmatrix}
        g_t[\Tilde{Y}]_{t} &
        g_t[\Tilde{\mathsf{Y}}]_{t}
    \end{pmatrix}
    \mathbf{Q}_t     
    \begin{pmatrix}
        H_{t+1,k,v} & \mathsf{H}_{t+1,k,v}  
    \end{pmatrix}^{*}  \,.
\end{align*}
Here $g_{t}[\Tilde{Y}]_{t} (s,k,u) = \Tilde{W}^{(s)}_u(k) + \langle \eta^{(s)}_k, g_{t} \Tilde{D}^{(s)}_u \rangle, g_{t}[\Tilde{\mathsf{Y}}]_{t} (s,k,\mathsf u) = \Tilde{\mathsf W}^{(s)}_{\mathsf u}(k) + \langle \eta^{(s)}_k, g_{t} \Tilde{\mathsf{D}}^{(s)}_{\mathsf u} \rangle$, and $\mathbf{Q}^{-1}_t$ is the  conditional covariance matrix  of $\mathfrak{F}_t$ given \eqref{eq-to-be-conditioned-on}. In addition, ${H}_{t+1,k,v}(s,l,u)$ is the conditional covariance between $\Tilde{W}^{(t+1)}_v(k) + \langle \eta^{(t+1)}_k, g_t \Tilde{D}^{(t+1)}_v \rangle$ and $ \Tilde{W}^{(s)}_u(l) + \langle \eta^{(s)}_l, g_t \Tilde{D}^{(s)}_u \rangle$, and $\mathsf{H}_{t+1,k,v}(s,l,\mathsf{u})$ is that between $\Tilde{W}^{(t+1)}_v(k) + \langle \eta^{(t+1)}_k, g_t \Tilde{D}^{(t+1)}_v \rangle$ and $\Tilde{\mathsf{W}}^{(s)}_{\mathsf{u}}(l) + \langle \eta^{(s)}_l, g_t \Tilde{\mathsf{D}}^{(s)}_{\mathsf{u}} \rangle$. See \eqref{equ-formulate-projection} and Remark~\ref{remark-conditional-Gaussian} for precise definitions of $\mathbf{Q}$ and $H,\mathsf{H}$. Furthermore, we define
\begin{align*}
g_{t-1}[Y]_{t-1} (s,k,u)& =  W^{(s)}_u(k) + \langle \eta^{(s)}_k, g_{t-1} D^{(s)}_u \rangle\,,\\ 
[g Y]_{t-1} (s,k,u) &= W^{(s)}_u(k) + \langle \eta^{(s)}_k, g_{s-1} D^{(s)}_u \rangle\,,
\end{align*}  
and define the mathsf version similarly.
We let $\mathrm{PRB}_{t,k} $ be the collection of $v\in V$ such that
\begin{align} 
    \big| \begin{pmatrix}
    [gY]_{t-1} - g_{t-1}[Y]_{t-1} & [g\mathsf{Y}]_{t-1} - g_{t-1}[\mathsf{Y}]_{t-1}
    \end{pmatrix}  \mathbf{Q}_{t-1}
    \begin{pmatrix}
        H_{t,k,v} & \mathsf{H}_{t,k,v}
    \end{pmatrix}^{*} \big| > \Delta_{t}
    \label{equ-def-set-PRB}
\end{align}
and let $\mathrm{PRB}_{t} = \cup_{1 \leq k \leq K_t} \mathrm{PRB}_{t,k}$; we will see that removing the vertices in $\mathrm{PRB}_{t}$ is crucial for establishing Lemma~\ref{lemma-projection-replace}. Since $[g Y]_{t-1}, g_{t-1}[Y]_{t-1}$, $\mathbf{Q}_{t-1}$ and $H_{t,k,v}$ are all measurable with respect to $\mathfrak{S}_{t-1}$,  we could have defined $\mathrm{PRB}_t$ before defining $\mathrm{LARGE}^{(a)}_t$ as in \eqref{equ-def-set-LARGE}; we chose to postpone the definition of $\mathrm{PRB}_t$ until now since it is only natural to introduce it after writing down the form of the Gaussian projection. 
Finally, we are ready to complete the inductive definition for the ``bad'' set as follows:
\begin{align}
    \mathrm{BAD}_{t} = \mathrm{BAD}_{t-1} \cup \mathrm{LARGE}_t \cup \mathrm{BIAS}_{t} \cup \mathrm{PRB}_t \,.  \label{equ-def-set-BAD} 
\end{align}
We summarize in Figure \ref{fig:logic} the logic flow for defining $\mathrm{BAD}_t$ from $\mathrm{BAD}_{t-1}$ and variables in $\mathfrak{S}_{t-1}$, which should illustrate clearly that our definitions are not ``cyclic''.
\begin{figure}[!ht]
    \centering
    \vspace{0cm}
    \includegraphics[height=7cm,width=15.8cm]{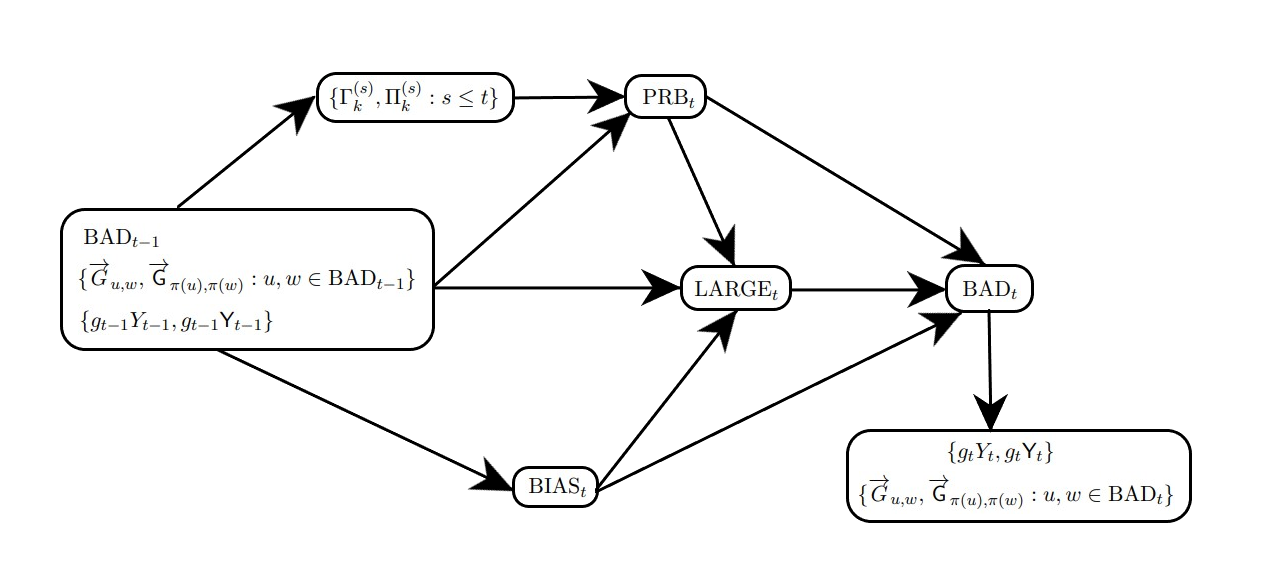}
    \caption{Logic of the definition}
    \label{fig:logic}
\end{figure}

Recall \eqref{eq-def-REV} and recall $\mathrm{BAD}_{-1} = \mathrm{REV}$.
Let $\mathcal{T}_{-1}= \{ |\mathrm{BAD}_{-1}| \leq  ( 4 n K_0^2 \vartheta_{\chi} + \sqrt{n/\Hat{q}(1-\Hat{q})} + n^{1-\frac{2}{\log \log \log n}} ) \} $, and for $t \geq 0$ let $\mathcal{T}_t$ be the event such that for $s \leq t$
\begin{equation}\label{eq-def-mathcal-T-t}
\begin{aligned}
    & |\mathrm{BAD}_s| \leq e^{ 30(s+1)(\log \log n)^{20}} \vartheta^{-3(s+1)} ( 4 n K_0^2 \vartheta_{\chi} + \sqrt{n/\Hat{q}(1-\Hat{q})} + n^{1-\frac{2}{\log \log \log n}} ) \,, \\
    & \mbox{and } \mathrm{LARGE}_s^{(\log n)} = \emptyset \,.
\end{aligned}
\end{equation}
By Lemma~\ref{lemma-property-vartheta-varsigma} and the fact that $t \leq 2 \log \log \log n$, we have  $|\mathrm{BAD}_t| \ll n \vartheta^{10} \Delta_t^{10}$ on the event $\mathcal T_t$. Our hope is that, on the one hand $\mathcal T_t$ occurs typically (as stated in Proposition~\ref{prop-cardinality-BAD} below) so that the number of ``bad'' vertices is under control, and on the other hand on $\mathcal T_t$ most vertices can be dealt with by techniques of Gaussian projection which then allows to control the conditional distribution.
\begin{Proposition} {\label{prop-cardinality-BAD}}
    We have $\mathbb P(\cap_{0 \leq t \leq t^*} \mathcal{T}_t \cap \mathcal E_t) = 1-o(1)$.
\end{Proposition}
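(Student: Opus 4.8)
The plan is to prove Proposition~\ref{prop-cardinality-BAD} by a joint induction on $t$, simultaneously establishing that $\mathcal{E}_t$ holds and that $\mathcal{T}_t$ holds (both with the relevant high probability), since the admissibility conditions and the cardinality bounds on the bad sets are entangled: verifying $\mathcal{E}_{t+1}$ requires controlling the conditional behavior of the degree vectors at step $t+1$ given $\mathfrak{S}_t$, which in turn is only possible once we know $|\mathrm{BAD}_t|$ is small (so that the ``bad'' contribution $b_t D^{(t+1)}_v$ is negligible for most vertices), while bounding $|\mathrm{BAD}_{t+1}| = |\mathrm{BAD}_t \cup \mathrm{LARGE}_{t+1} \cup \mathrm{BIAS}_{t+1} \cup \mathrm{PRB}_{t+1}|$ requires that the Gaussian approximation be valid on the bulk, which is exactly what $\mathcal{E}_t$ and the density-comparison machinery of Section~\ref{sec:density-compare} provide. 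The base case $t=-1$ (i.e.\ the bound on $|\mathrm{BAD}_{-1}| = |\mathrm{REV}|$ in $\mathcal{T}_{-1}$, plus $\mathcal{E}_0$) is handled by the initialization analysis: $|\mathrm{REV}| \leq 2 K_0 \sum_{j \leq \chi} \max_k |\aleph^{(j)}_k \cup \pi^{-1}(\Upsilon^{(j)}_k)| \lesssim n K_0^2 \vartheta_\chi$ follows from the concentration $|\aleph^{(a)}_k|/n \approx \vartheta_a$ proved in Section~\ref{sec:priliminary-events} together with Lemma~\ref{lemma-property-vartheta-varsigma}, and $\mathcal{E}_0$ is exactly the content of that subsection.

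For the inductive step, assume $\cap_{s \leq t}(\mathcal{T}_s \cap \mathcal{E}_s)$ holds. First I would bound the sizes of the three new bad sets. For $\mathrm{BIAS}_{t+1}$: since $|\mathrm{BAD}_t| \ll n\vartheta^{10}\Delta_t^{10}$ on $\mathcal{T}_t$, for each fixed $(s,k)$ the random variable $b_t D^{(s)}_v(k)$ is a sum of roughly $|\mathrm{BAD}_t|$ bounded terms with small variance, so a second-moment (Chebyshev) bound over $v$, summed over the $\sum_s K_s \leq \mathrm{polylog}(n)$ choices of $(s,k)$, shows $|\mathrm{BIAS}_{t+1}|$ is at most the target bound with overwhelming probability. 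For $\mathrm{LARGE}_{t+1}$: the inductively-defined chain $\mathrm{LARGE}^{(a)}_{t+1}$ shrinks geometrically — each $\mathrm{LARGE}^{(a+1)}$ is built from vertices whose projected statistics restricted to $\mathrm{LARGE}^{(a)} \cup \mathrm{BIAS} \cup \mathrm{PRB}$ exceed $n^{1/\log\log\log n}$, and a variance computation (using admissibility $\mathcal{E}_s$ to control the relevant covariances, which are all $O(\mathrm{polylog})$) forces $|\mathrm{LARGE}^{(a+1)}_{t+1}| \leq n^{-c}|\mathrm{LARGE}^{(a)}_{t+1}|$, so after $O(\log n)$ rounds it is empty; this is precisely why $\mathrm{LARGE}_s^{(\log n)} = \emptyset$ is part of $\mathcal{T}_s$. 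For $\mathrm{PRB}_{t+1}$: here one uses that the correction term $[gY]_{t-1} - g_{t-1}[Y]_{t-1}$ is itself supported on the small set $\mathrm{BAD}_{t-1}$-related coordinates and that $\mathbf{Q}_{t-1}$, $H_{t,k,v}$ have controlled norms (again via $\mathcal{E}_{t-1}$), so a moment bound over $v$ shows the quadratic form exceeds $\Delta_{t+1}$ for at most the allowed number of vertices. Summing the three bounds and adding $|\mathrm{BAD}_t|$ gives the recursive cardinality estimate in \eqref{eq-def-mathcal-T-t} with the factor $e^{30(\log\log n)^{20}}\vartheta^{-3}$.

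Next I would verify $\mathcal{E}_{t+1}$. Conditioning on \eqref{eq-to-be-conditioned-on} and on $\mathfrak{S}_t$, for $v \notin \mathrm{BAD}_t$ the statistic $\langle \eta^{(t+1)}_k, D^{(t+1)}_v\rangle$ splits as $\langle \eta^{(t+1)}_k, b_t D^{(t+1)}_v\rangle + \langle \eta^{(t+1)}_k, g_t D^{(t+1)}_v\rangle$; the first term is fixed and $O(e^{-10(\log\log n)^{10}})$ off $\mathrm{BIAS}_{t+1}$, while the ``free'' term is shown via the Lindeberg density-comparison of Section~\ref{sec:density-compare} to have joint density (with the Gaussian-smoothing variables) within a controlled — though large — factor of the Gaussian substitute $g_t\tilde D^{(t+1)}_v$. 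For the Gaussian process, the conditional-Gaussian projection formula reduces everything to a covariance computation; the removal of $\mathrm{PRB}_{t+1}$ guarantees (Lemma~\ref{lemma-projection-replace}) that replacing $g_{t}$-projections by $g_{s-1}$-projections changes things by at most $\Delta_{t+1}$, and then the Gaussian tail estimates of Section~\ref{sec:Gaussian-analysis} give that the bad event defining violation of admissibility at step $t+1$ has probability so small in the Gaussian model that, even after multiplying by the (fairly large) density ratio, it remains $o(1/\mathrm{polylog})$; a union bound over the $O(\mathrm{polylog})$ constraints in Definition~\ref{def-admissible} closes the induction for $\mathcal{E}_{t+1}$. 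Finally, since $t^* = O(\log\log\log n)$, a union bound over the $t^*+1$ levels yields $\mathbb{P}(\cap_{0\leq t\leq t^*} \mathcal{T}_t \cap \mathcal{E}_t) = 1 - o(1)$.

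The main obstacle is the step where we transfer bad events from the Gaussian model back to the Bernoulli model: the Radon–Nikodym derivative between the conditional laws of the two processes is \emph{not} close to $1$ — it is polynomially (or worse) large — so one cannot simply compare probabilities directly. The resolution, and the technical heart of the argument, is the careful choreography of truncations: removing $\mathrm{LARGE}_{t+1}$ ensures the relevant statistics are bounded so that each Lindeberg replacement step changes the density by a factor controllably close to $1$, and removing $\mathrm{PRB}_{t+1}$ and $\mathrm{BIAS}_{t+1}$ ensures the conditional structure is the clean Gaussian-projection one; only after all three removals does the density ratio become small enough (relative to the tininess of the Gaussian bad event) for the multiplication to still yield $o(1)$. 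Keeping this non-circular — the bad sets at level $t+1$ are defined from $\mathfrak{S}_t$-measurable data only, as emphasized by Figure~\ref{fig:logic} — while propagating all the quantitative bounds through $O(\log\log\log n)$ levels without the error $\Delta_t$ or the bad-set size blowing up, is where essentially all the difficulty lies.
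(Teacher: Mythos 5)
Your overall architecture (a joint induction over $t$, bounds on the bad sets, the Lindeberg density comparison, the Gaussian projection analysis, and a union bound over the $t^*+1$ levels) is the same as the paper's, but as stated the induction hypotheses are too weak to close. To verify $\mathcal{E}_{t+1}$ you tilt to the Gaussian model and multiply by a ``fairly large density ratio''; in the paper this tilting is not a consequence of Lemma~\ref{lemma-bound-conditional-unbiased-density} alone but of an additional inductively maintained event $\mathcal{B}_t$, which bounds the ratio between the conditional law of $(g_{t-1}Y_t,g_{t-1}\mathsf{Y}_t)$ given $\mathfrak{S}_{t-1},\mathrm{BAD}_t$ and the reference product Gaussian $(\Check{Y}_t,\Check{\mathsf{Y}}_t)$ by $\exp\{nK_t^{30}\Delta_t^2\}$. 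Establishing $\mathcal{B}_t$ requires an $\ell_2$ bound on the Gaussian projection (the event $\mathcal{H}_t$, entering via Claim~\ref{claim-2-norm-L}) and an $\ell_2$ bound on the accumulated statistics (the event $\mathcal{A}_t$), and these are themselves only proved inductively (via Lemma~\ref{lemma-projection-replace}, Lemma~\ref{lemma-one-part-projection}, Hanson--Wright and Azuma, which in turn use $\mathcal{A}_t$ and $\mathcal{B}_t$). So your plan to ``invoke the Gaussian tail estimates of Section~\ref{sec:Gaussian-analysis}'' is circular unless $\mathcal{A}_t,\mathcal{B}_t,\mathcal{H}_t$ are carried as extra induction hypotheses alongside $\mathcal{E}_t,\mathcal{T}_t$; your proposal never identifies these or closes that loop, and this is exactly where the paper's five-step interleaved induction lives.

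The second concrete gap is in your cardinality bounds for $\mathrm{BIAS}$, $\mathrm{PRB}$ and $\mathrm{LARGE}$. A Chebyshev/second-moment bound over $v$ does not address the actual difficulty: the events $\{v\in\mathrm{BIAS}_{t+1}\}$ and \eqref{equ-def-set-PRB} are defined through the random sets $\Gamma^{(s)}_k,\Pi^{(s)}_k,\mathrm{BAD}_t$, which are strongly correlated with the very edge variables entering $b_tD^{(s)}_v(k)$ and the projection coefficients, so the relevant conditional probabilities cannot be computed directly. The paper's device (Lemmas~\ref{lemma-bound-cardinality-BIAS} and \ref{lemma-bound-cardinality-set-PRB}) is to fix an arbitrary realization of all these sets, compute tails under the decoupled measure $\Hat{\mathbb{P}}$ using Bernstein's inequality twice to get tails of order $e^{-\Omega(nK_t)}$ (and for $\mathrm{PRB}$ an additional enumeration over the indicator vectors $\chi_{t,k}$ together with operator and $\ell_\infty$ norm bounds on $\mathbf{Q}_{t-1},\mathbf{H}_t$), and only then union bound over the $2^{O(K_tn)}$ possible realizations; the exponentially small tail is precisely what makes this union bound affordable. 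Polynomial (Chebyshev-type) tails cannot survive that enumeration, and without the enumeration your moment computations are carried out under the wrong, unconditional measure, so the step as you describe it would fail.
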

In Section~\ref{sec:actual-proof}, we will prove Proposition~\ref{prop-cardinality-BAD} via induction.

\subsection{Preliminaries on probability and linear algebra}
In this subsection we collect some standard lemmas on probability and linear algebra that will be useful for our further analysis.

The following version of Bernstein's inequality appeared in \cite[Theorem 1.4]{DP09}.
\begin{Lemma} {\label{lemma-Bernstein-inequality}}
\textup{(Bernstein's inequality)}.
Let $X=\sum_{i=1}^{m} X_i$, where $X_i$'s are independent random variables such that $|X_i| \leq K$ almost surely. Then, for $s>0$ we have
\begin{align*}
    \mathbb{P}( |X-\mathbb{E}[X]| > s ) \leq 2 \exp \Big\{ - \frac{ s^2 }{ 2(\sigma^2 + Ks/3) } \Big\} \,,
\end{align*}
where $\sigma^2 = \sum_{i=1}^{m} \mathrm{Var}(X_i)$ is the variance of $X$.
\end{Lemma}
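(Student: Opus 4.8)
The plan is to run the standard Chernoff / moment-generating-function argument. First I would reduce to the mean-zero case: set $Y_i = X_i - \mathbb{E}[X_i]$ and $Y = \sum_{i=1}^m Y_i = X - \mathbb{E}[X]$, so that $\mathbb{E}[Y_i] = 0$, $\mathrm{Var}(Y_i) = \mathrm{Var}(X_i)$, hence $\sum_{i=1}^m \mathrm{Var}(Y_i) = \sigma^2$, and $|Y_i| \le K$ almost surely (this is the form in which the cited reference is stated; the general boundedness hypothesis reduces to it by the usual centering). It then suffices to bound $\mathbb{P}(Y > s)$ and $\mathbb{P}(-Y > s)$ separately and add the two estimates, since $-Y_i$ has the same mean, variance, and almost-sure bound.

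The heart of the argument is the exponential moment estimate: for every $\lambda$ with $0 < \lambda < 3/K$ and each $i$,
\[
\mathbb{E}\big[ e^{\lambda Y_i} \big] \le \exp\Big( \frac{\lambda^2 \mathrm{Var}(Y_i)/2}{1 - \lambda K/3} \Big)\,.
\]
I would prove this by writing $e^{\lambda Y_i} = 1 + \lambda Y_i + \sum_{k \ge 2} \lambda^k Y_i^k / k!$, taking expectations (the linear term vanishes), bounding $|\mathbb{E}[Y_i^k]| \le \mathbb{E}[|Y_i|^k] \le K^{k-2}\,\mathbb{E}[Y_i^2]$, then using the elementary inequality $k! \ge 2\cdot 3^{k-2}$ for $k \ge 2$ to sum the resulting geometric series in $\lambda K/3$, and finally $1 + x \le e^x$. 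By independence this tensorizes to $\mathbb{E}[e^{\lambda Y}] = \prod_{i=1}^m \mathbb{E}[e^{\lambda Y_i}] \le \exp\big( \tfrac{\lambda^2 \sigma^2/2}{1 - \lambda K/3} \big)$.

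Markov's inequality then gives, for $0 < \lambda < 3/K$,
\[
\mathbb{P}(Y > s) \le \exp\Big( -\lambda s + \frac{\lambda^2 \sigma^2/2}{1 - \lambda K/3} \Big)\,,
\]
and I would optimize by choosing $\lambda = s/(\sigma^2 + Ks/3)$, which lies in $(0,3/K)$ because $\lambda K/3 = (Ks/3)/(\sigma^2 + Ks/3) < 1$. Substituting $1 - \lambda K/3 = \sigma^2/(\sigma^2 + Ks/3)$ makes the exponent collapse exactly to $-s^2/(2(\sigma^2 + Ks/3))$. Applying the same bound to $-Y$ and taking a union bound over the two tails yields the claimed inequality. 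There is no genuine obstacle here — this is a textbook computation; the only points that require a bit of care are the bookkeeping in the exponential moment bound (the combinatorial inequality $k!\ge 2\cdot 3^{k-2}$ and the geometric sum) and verifying that the optimizing $\lambda$ stays in the admissible range $(0,3/K)$.
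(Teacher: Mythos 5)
You should first note that the paper does not actually prove this lemma: it is quoted directly from [DP09, Theorem 1.4], so there is no internal proof to compare against. Your Chernoff/moment-generating-function argument is the standard route, and its mechanics are fine: the series bound $\mathbb{E}[e^{\lambda Y_i}]\le \exp\big(\tfrac{\lambda^2\mathrm{Var}(Y_i)/2}{1-\lambda K/3}\big)$ via $|\mathbb{E}[Y_i^k]|\le K^{k-2}\mathbb{E}[Y_i^2]$ and $k!\ge 2\cdot 3^{k-2}$, tensorization by independence, Markov, and the choice $\lambda=s/(\sigma^2+Ks/3)$ (which is admissible and makes the exponent collapse exactly to $-s^2/(2(\sigma^2+Ks/3))$) are all correct, as is the union bound over the two tails.

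The genuine gap is the opening reduction. From $|X_i|\le K$ you cannot conclude $|Y_i|=|X_i-\mathbb{E}[X_i]|\le K$; centering only gives $|Y_i|\le 2K$, and the bound on the \emph{centered} variable is exactly what your moment estimate $\mathbb{E}[|Y_i|^k]\le K^{k-2}\mathbb{E}[Y_i^2]$ uses. Running your argument honestly with $2K$ yields the weaker denominator $2(\sigma^2+2Ks/3)$. Moreover, this is not a lossy step that a smarter argument could repair: with the hypotheses exactly as stated ($|X_i|\le K$ and $\sigma^2=\sum_i\mathrm{Var}(X_i)$) the displayed inequality is false in general. Take $X_i$ i.i.d.\ with $\mathbb{P}(X_i=K)=0.1$ and $\mathbb{P}(X_i=-K)=0.9$, and $s=0.6nK$. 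Then $\sigma^2=0.36nK^2$ and the right-hand side equals $2\exp\{-0.36n^2K^2/(2(0.36nK^2+0.2nK^2))\}=2e^{-0.3214...\,n}$, while $\mathbb{P}(X-\mathbb{E}[X]>s)=\mathbb{P}(\mathrm{Bin}(n,0.1)>0.4n)\ge \tfrac{1}{n+1}e^{-n\,\mathrm{KL}(0.4\,\|\,0.1)}$ with $\mathrm{KL}(0.4\,\|\,0.1)\approx 0.3112$, which exceeds the claimed bound for all large $n$. So the lemma as literally stated needs the hypothesis $|X_i-\mathbb{E}[X_i]|\le K$ (the form your proof actually establishes, and the form in which Bernstein is standardly stated); in the paper's applications the centered variables are indeed bounded by the relevant $K$, so nothing downstream is affected, but your sentence asserting that "the usual centering" reduces the stated hypothesis to the centered one is a step that fails.
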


We will also use Hanson-Wright inequality (see \cite{HW71, Wright73, FR13} and see \cite[Theorem 1.1]{RV14}), which is useful in controlling the quadratic form of sub-Gaussian random variables.
\begin{Lemma}{\label{lemma-Hanson-Wright}}
    \textup{(Hanson-Wright Inequality)}. Let $X=(X_1,\ldots,X_m)$ be a random vector with independent components which satisfy  $\mathbb{E}[X_i]=0$ and  $\| X_i \|_{\psi_2} \leq K$ for all $1\leq i\leq m$. If $\mathrm{A}$ is an $m\!*\!m$ symmetric matrix, then for an absolute constant $c>0$ we have
    \begin{equation}
        \mathbb{P} \big( |X \mathrm{A} X^{*} - \mathbb{E}[X \mathrm{A} X^{*} ] | > s \big) \leq 2 \exp \Big\{  -c \min \Big( \frac{s^2}{ K^4 \| \mathrm{A} \|^2_{\mathrm{HS}}}, \frac{s}{ K^2 \| \mathrm{A} \|_{\mathrm{op}}} \Big) \Big\}\,.
        \label{equ_Hanson_Wright_tail}
    \end{equation}
\end{Lemma}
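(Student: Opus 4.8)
\textbf{Proof proposal for Lemma~\ref{lemma-Hanson-Wright}.} The plan is to follow the scheme of \cite{RV14}: reduce the tail bound to a bound on the moment generating function (MGF), split $\mathrm{A}$ into its diagonal and off-diagonal parts, treat the diagonal part by Bernstein's inequality (Lemma~\ref{lemma-Bernstein-inequality}), and treat the off-diagonal part by a decoupling argument followed by a conditional sub-Gaussian estimate and a comparison with a Gaussian chaos. First I would normalize: replacing each $X_i$ by $X_i/K$ and $\mathrm{A}$ by $K^2\mathrm{A}$ leaves $X\mathrm{A}X^*$ unchanged while multiplying $\|\mathrm{A}\|_{\mathrm{HS}}$ and $\|\mathrm{A}\|_{\mathrm{op}}$ each by $K^2$, so it suffices to prove the statement with $K=1$ and then restore the constants. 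In this normalization it is enough to exhibit absolute constants $c,C>0$ with
\[
\mathbb{E}\exp\!\big(\lambda\,(X\mathrm{A}X^*-\mathbb{E}[X\mathrm{A}X^*])\big)\;\leq\;\exp\!\big(C\lambda^2\|\mathrm{A}\|_{\mathrm{HS}}^2\big)\quad\text{whenever }|\lambda|\leq c/\|\mathrm{A}\|_{\mathrm{op}}\,,
\]
since then the Chernoff bound applied with $\lambda=\min\{s/(2C\|\mathrm{A}\|_{\mathrm{HS}}^2),\,c/\|\mathrm{A}\|_{\mathrm{op}}\}$ yields the two-regime upper-tail estimate, the lower tail follows by applying the same bound to $-\mathrm{A}$, and the factor $2$ absorbs the union bound. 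Writing $\mathrm{A}=\mathrm{A}_{\mathrm{diag}}+\mathrm{A}_{\mathrm{off}}$ and using $\mathbb{E}e^{\lambda(U+V)}\leq(\mathbb{E}e^{2\lambda U})^{1/2}(\mathbb{E}e^{2\lambda V})^{1/2}$ together with $\|\mathrm{A}\|_{\mathrm{HS}}^2=\|\mathrm{A}_{\mathrm{diag}}\|_{\mathrm{HS}}^2+\|\mathrm{A}_{\mathrm{off}}\|_{\mathrm{HS}}^2$, it suffices to bound the MGF of each centered part separately.

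For the diagonal part, $X\mathrm{A}_{\mathrm{diag}}X^*-\mathbb{E}[\cdot]=\sum_i a_{ii}(X_i^2-\mathbb{E}X_i^2)$ is a sum of independent mean-zero random variables, and since $\|X_i\|_{\psi_2}\leq 1$ each centered summand $X_i^2-\mathbb{E}X_i^2$ is sub-exponential with sub-exponential norm bounded by an absolute constant. A standard truncation reduces this to bounded summands, and Bernstein's inequality (Lemma~\ref{lemma-Bernstein-inequality}), in its MGF form, gives $\mathbb{E}\exp(\lambda\sum_i a_{ii}(X_i^2-\mathbb{E}X_i^2))\leq\exp(C\lambda^2\sum_i a_{ii}^2)$ for $|\lambda|\leq c/\max_i|a_{ii}|$; we then use $\sum_i a_{ii}^2\leq\|\mathrm{A}\|_{\mathrm{HS}}^2$ and $\max_i|a_{ii}|\leq\|\mathrm{A}\|_{\mathrm{op}}$.

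The off-diagonal part is the crux. Let $X'=(X_1',\dots,X_m')$ be an independent copy of $X$. By the decoupling inequality for off-diagonal (zero-diagonal) chaos, applied to the convex function $x\mapsto e^{\lambda x}$, one has $\mathbb{E}\exp(\lambda X\mathrm{A}_{\mathrm{off}}X^*)\leq\mathbb{E}\exp(4\lambda X\mathrm{A}_{\mathrm{off}}(X')^*)$. Conditioning on $X'$, the quantity $X\mathrm{A}_{\mathrm{off}}(X')^*=\langle X,\mathrm{A}_{\mathrm{off}}(X')^*\rangle$ is a sum of independent mean-zero sub-Gaussian variables, so $\mathbb{E}_X[\exp(4\lambda X\mathrm{A}_{\mathrm{off}}(X')^*)\mid X']\leq\exp(C\lambda^2\|\mathrm{A}_{\mathrm{off}}(X')^*\|_2^2)$, and it remains to control $\mathbb{E}_{X'}\exp(C\lambda^2\|\mathrm{A}_{\mathrm{off}}(X')^*\|_2^2)$. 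Since $\|\mathrm{A}_{\mathrm{off}}(X')^*\|_2=\sup_{\|y\|_2\leq1}\langle\mathrm{A}_{\mathrm{off}}y,X'\rangle$ is the supremum of a sub-Gaussian process indexed by the unit ball, a comparison inequality for chaos processes (Talagrand's comparison of sub-Gaussian/Bernoulli processes with Gaussian ones) lets me replace $X'$ by a standard Gaussian vector $g$ at the cost of an absolute multiplicative constant, reducing the task to bounding $\mathbb{E}\exp(C\lambda^2\, g\,\mathrm{A}_{\mathrm{off}}^2\, g^*)$. Diagonalizing $\mathrm{A}_{\mathrm{off}}^2$ with nonnegative eigenvalues $\mu_i$ turns this into $\prod_i(1-2C\lambda^2\mu_i)^{-1/2}$, which via $-\log(1-x)\leq 2x$ on $[0,1/2]$ is at most $\exp(2C\lambda^2\sum_i\mu_i)=\exp(2C\lambda^2\|\mathrm{A}_{\mathrm{off}}\|_{\mathrm{HS}}^2)$ provided $2C\lambda^2\max_i\mu_i=2C\lambda^2\|\mathrm{A}_{\mathrm{off}}\|_{\mathrm{op}}^2\leq1/2$, i.e.\ $|\lambda|\lesssim 1/\|\mathrm{A}_{\mathrm{off}}\|_{\mathrm{op}}\leq 1/\|\mathrm{A}\|_{\mathrm{op}}$. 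Combining the diagonal and off-diagonal MGF bounds on the common admissible range of $\lambda$ yields the displayed MGF estimate, and hence the lemma.

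I expect the main obstacle to be precisely the last step of the off-diagonal analysis: passing from the sub-Gaussian vector $X'$ to a genuine Gaussian vector in the estimate for $\mathbb{E}_{X'}\exp(C\lambda^2\|\mathrm{A}_{\mathrm{off}}(X')^*\|_2^2)$ while losing only an absolute factor. The clean route is the comparison principle for chaos processes, but care is needed because the comparison must hold at the level of the exponential moment rather than merely the first moment; an alternative is to first establish concentration of $\|\mathrm{A}_{\mathrm{off}}(X')^*\|_2$ around its mean (itself controlled by $\|\mathrm{A}_{\mathrm{off}}\|_{\mathrm{HS}}$, with fluctuations governed by $\|\mathrm{A}_{\mathrm{off}}\|_{\mathrm{op}}$) and then exponentiate, or to run the whole off-diagonal argument through the moment method, bounding $\|X\mathrm{A}_{\mathrm{off}}X^*\|_p$ by $\sqrt{p}\,\big\|\|\mathrm{A}_{\mathrm{off}}(X')^*\|_2\big\|_p$ after decoupling and iterating. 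All other steps are routine once the admissible range $|\lambda|\lesssim 1/\|\mathrm{A}\|_{\mathrm{op}}$ is respected throughout.
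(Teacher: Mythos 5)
The paper does not actually prove Lemma~\ref{lemma-Hanson-Wright}: it is quoted from the literature, with the proof deferred to the cited references (notably \cite[Theorem 1.1]{RV14}). Your outline is essentially the argument of that reference — reduction to an MGF bound on the range $|\lambda|\lesssim 1/\|\mathrm A\|_{\mathrm{op}}$, diagonal/off-diagonal splitting, Bernstein for the diagonal part, and decoupling plus a Gaussian chaos computation for the off-diagonal part — so in that sense you are reconstructing the standard proof behind the citation.

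The one step I would not accept as written is the passage from the sub-Gaussian vector $X'$ to a Gaussian vector inside $\mathbb{E}_{X'}\exp\big(C\lambda^2\|\mathrm A_{\mathrm{off}}(X')^*\|_2^2\big)$. Talagrand's comparison theorem controls expected suprema (or, in its generic-chaining tail form, deviations above a multiple of the Gaussian width), not exponential moments of the squared norm, so it cannot be invoked verbatim; and your fallback via concentration of $\|\mathrm A_{\mathrm{off}}(X')^*\|_2$ around its mean is circular in the usual development, since that concentration is itself standardly deduced from Hanson--Wright. The standard repair, and what \cite{RV14} actually does, is elementary: write $\exp(C\lambda^2\|v\|_2^2)=\mathbb{E}_g\exp\big(\sqrt{2C}\,\lambda\langle g,v\rangle\big)$ for an auxiliary standard Gaussian $g$, swap the order of integration, apply the sub-Gaussian linear MGF bound in $X'$ conditionally on $g$, and only then face the genuine Gaussian quadratic form $\mathbb{E}_g\exp\big(C'\lambda^2\, g\,\mathrm A_{\mathrm{off}}^2\,g^*\big)$, which your eigenvalue computation finishes correctly on the admissible range (note that only $\|\mathrm A_{\mathrm{off}}\|_{\mathrm{op}}\le 2\|\mathrm A\|_{\mathrm{op}}$ holds, not the reverse inequality you wrote, but this is harmless). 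With that substitution your proof closes; the moment-method alternative you mention would also work.
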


The following is a corollary of Hanson-Wright inequality (see
 \cite[Lemma 3.8]{DL22+}).
 \begin{Lemma} {\label{lemma-modify-Hanson-Wright}}
    Let $X_1,\ldots,X_m,Y_1,\ldots,Y_m$ be mean-zero variables with $\| X_i \|_{\psi_2}, \| Y_i \|_{\psi_2} \leq K$ for all $1\leq i\leq m$. In addition, assume for all $i$ that $(X_i, Y_i)$ is independent with $(X_{\setminus i}, Y_{\setminus i})$ where $X_{\setminus i}$ is obtained from $X$ by dropping its $i$-th component (and similarly for $Y_{\setminus i}$). Let $\mathrm{A}$ be an $m\!*\!m$ matrix with diagonal entries being 0. Then for an absolute constant $c>0$ and for every $s > 0$ 
    \begin{align*}
        \mathbb{P} ( | X \mathrm{A} Y^{*} | > s ) \leq 2 \exp \Big \{  -c \min \Big(  \frac{s^2}{K^4 \| \mathrm{A} \|^2_{\mathrm{HS}}}, \frac{ s }{ K^2 \| \mathrm{A} \|_{\mathrm{op}} } \Big)  \Big \}\,.
    \end{align*}
\end{Lemma}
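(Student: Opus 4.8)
The plan is to reduce the estimate directly to the Hanson--Wright inequality (Lemma~\ref{lemma-Hanson-Wright}) after a concatenation trick. Set $Z = (X_1,\ldots,X_m,Y_1,\ldots,Y_m) \in \mathbb{R}^{2m}$ and let $\mathrm{B}$ be the $2m\!*\!2m$ symmetric matrix in block form
\[
\mathrm{B} = \tfrac{1}{2}\begin{pmatrix} \mathrm{O} & \mathrm{A} \\ \mathrm{A}^{*} & \mathrm{O} \end{pmatrix}\,.
\]
A direct computation gives $Z\mathrm{B}Z^{*} = \tfrac12\big(X\mathrm{A}Y^{*} + Y\mathrm{A}^{*}X^{*}\big)$, and since $X\mathrm{A}Y^{*}$ is a scalar it equals its own transpose $Y\mathrm{A}^{*}X^{*}$, so $Z\mathrm{B}Z^{*} = X\mathrm{A}Y^{*}$. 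Because $\mathrm{A}$ has vanishing diagonal, so does $\mathrm{B}$, and moreover $\mathbb{E}[Z\mathrm{B}Z^{*}] = \mathbb{E}[X\mathrm{A}Y^{*}] = \sum_{i\neq j}\mathbb{E}[X_i]\mathbb{E}[Y_j]\mathrm{A}(i,j) = 0$ (the diagonal terms drop, and for $i\neq j$ the pair $(X_i,Y_i)$ is independent of $(X_j,Y_j)$). It therefore suffices to prove the claimed tail bound for the centered quadratic form $Z\mathrm{B}Z^{*}-\mathbb{E}[Z\mathrm{B}Z^{*}]$, with $\|\mathrm{B}\|_{\mathrm{HS}}$ and $\|\mathrm{B}\|_{\mathrm{op}}$ in place of $\|\mathrm{A}\|_{\mathrm{HS}}$ and $\|\mathrm{A}\|_{\mathrm{op}}$.

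The only obstruction is that the coordinates of $Z$ are \emph{not} independent, since $X_i$ and $Y_i$ are correlated. However, by hypothesis the $\mathbb{R}^2$-valued blocks $W_i := (X_i,Y_i)$, $1\leq i\leq m$, are mutually independent, each with sub-Gaussian norm $\|W_i\|_{\psi_2}\leq \|X_i\|_{\psi_2}+\|Y_i\|_{\psi_2}\leq 2K$; and — crucially, because $\mathrm{A}$ has zero diagonal — the $2\!*\!2$ principal submatrix of $\mathrm{B}$ on the index set $\{i,m+i\}$, which would carry the within-block contribution $\mathrm{A}(i,i)X_iY_i$, is the zero matrix for every $i$. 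Hence all block-diagonal blocks of $\mathrm{B}$ vanish and $Z\mathrm{B}Z^{*} = \sum_{i\neq j}f_{ij}(W_i,W_j)$ is an off-diagonal quadratic form in independent sub-Gaussian blocks. For such forms the conclusion of Hanson--Wright holds verbatim, and this block generalization follows from the scalar case Lemma~\ref{lemma-Hanson-Wright} by the usual decoupling argument: replace $Y$ by an independent copy $Y'$ and condition on it, so that $X\mathrm{A}(Y')^{*} = \sum_i X_i\big(\mathrm{A}(Y')^{*}\big)_i$ is a linear form in the independent $X_i$ and hence sub-Gaussian with parameter $\lesssim K^2\|\mathrm{A}(Y')^{*}\|_2^2$; then control $\|\mathrm{A}(Y')^{*}\|_2^2 = (Y')\mathrm{A}^{*}\mathrm{A}(Y')^{*}$ by applying Lemma~\ref{lemma-Hanson-Wright} to the symmetric matrix $\mathrm{A}^{*}\mathrm{A}$, using $\|\mathrm{A}^{*}\mathrm{A}\|_{\mathrm{op}} = \|\mathrm{A}\|_{\mathrm{op}}^2$ and $\|\mathrm{A}^{*}\mathrm{A}\|_{\mathrm{HS}}\leq \|\mathrm{A}\|_{\mathrm{op}}\|\mathrm{A}\|_{\mathrm{HS}}$, and combine the two bounds over the typical event $\{\|\mathrm{A}(Y')^{*}\|_2^2 \lesssim K^2\|\mathrm{A}\|_{\mathrm{HS}}^2\}$.

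Finally it remains to compare norms: $\|\mathrm{B}\|_{\mathrm{HS}}^2 = \tfrac14\big(\|\mathrm{A}\|_{\mathrm{HS}}^2+\|\mathrm{A}^{*}\|_{\mathrm{HS}}^2\big) = \tfrac12\|\mathrm{A}\|_{\mathrm{HS}}^2$, and the eigenvalues of $\begin{pmatrix}\mathrm{O}&\mathrm{A}\\\mathrm{A}^{*}&\mathrm{O}\end{pmatrix}$ are exactly $\pm$ the singular values of $\mathrm{A}$, so $\|\mathrm{B}\|_{\mathrm{op}} = \tfrac12\|\mathrm{A}\|_{\mathrm{op}}$; both are $\lesssim$ their $\mathrm{A}$-counterparts, and the resulting constants (together with the factor $2$ from $\|W_i\|_{\psi_2}\leq 2K$) are absorbed into the absolute constant $c$, giving the stated inequality. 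I expect no serious obstacle here — consistent with the lemma being billed as a corollary — the one point requiring care being the within-block dependence between $X$ and $Y$, which is dissolved precisely by the zero-diagonal hypothesis on $\mathrm{A}$; the rest is bookkeeping of norms and the routine decoupling step.
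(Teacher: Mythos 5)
The paper does not actually prove this lemma: it is presented with the remark ``the following is a corollary of Hanson--Wright inequality (see \cite[Lemma 3.8]{DL22+})'', i.e.\ the proof is deferred entirely to the earlier paper. So there is no paper-internal argument to compare against, and your proposal has to be judged on its own merits.

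Your overall structure is sound — concatenation into $Z=(X,Y)$, the observation that the zero diagonal of $\mathrm{A}$ forces all \emph{block}-diagonal $2\!*\!2$ blocks of $\mathrm{B}$ to vanish, so $X\mathrm{A}Y^*$ is an off-block-diagonal quadratic form in the mutually independent sub-Gaussian pairs $W_i=(X_i,Y_i)$, followed by decoupling and the standard conditioning/quadratic-form argument. But the pivotal sentence, ``replace $Y$ by an independent copy $Y'$ \ldots by the usual decoupling argument,'' is the one place that cannot be waved through. The ``usual'' decoupling invoked inside Rudelson--Vershynin's proof of Lemma~\ref{lemma-Hanson-Wright} applies to a symmetric form $X\mathrm{A}X^*$ built from a \emph{single} independent sequence, and does not directly say anything about a bilinear form $X\mathrm{A}Y^*$ in which $X_i$ and $Y_i$ are correlated. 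What you actually need is the U-statistic tail-decoupling inequality of de la Pe\~na--Montgomery-Smith, applied with the independent blocks $W_1,\dots,W_m$ as the underlying variables and the (non-symmetric, and this is allowed) kernel $h_{ij}(W_i,W_j)=\mathrm{A}(i,j)X_iY_j$; since $\mathrm{A}(i,i)=0$ the sum runs only over $i\neq j$, so the theorem applies and gives $\mathbb{P}(|\sum_{i\neq j}h_{ij}(W_i,W_j)|>s)\leq C\,\mathbb{P}(C|\sum_{i\neq j}h_{ij}(W_i,W_j')|>s)$, and the decoupled sum is precisely $X\mathrm{A}(Y')^*$ because $h_{ij}$ reads only the first coordinate of $W_i$ and the second coordinate of $W_j'$. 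That theorem, not the symmetric scalar decoupling, is what legitimizes your replacement, and you should cite it explicitly; as currently phrased the step is a genuine gap. Once that is done, the remainder — conditioning on $Y'$ to get a linear form in the independent $X_i$, then controlling $\|\mathrm{A}(Y')^*\|^2=Y'\mathrm{A}^*\mathrm{A}(Y')^*$ via Lemma~\ref{lemma-Hanson-Wright} applied to $\mathrm{A}^*\mathrm{A}$ and combining via an MGF bound or an $s$-dependent truncation event (your fixed ``typical event'' is slightly too crude; the truncation threshold should grow with $s$ in the Bernstein regime) — is indeed routine. Finally, note that once you phrase the decoupling in terms of the kernel $h_{ij}$ rather than the matrix $\mathrm{B}$, the closing paragraph comparing $\|\mathrm{B}\|_{\mathrm{HS}}$, $\|\mathrm{B}\|_{\mathrm{op}}$ to their $\mathrm{A}$ counterparts is never used: the $\mathrm{A}$-norms enter directly and $\mathrm{B}$ was only a heuristic device.
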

The following inequality is standard for posterior estimation.
\begin{Lemma} {\label{lemma-posterior-estimator}}
For a random variable $X$ and an event $A$ in the same probability space, we have $\mathbb{P}_{x \sim X|A} \big( \mathbb{P}(A|X=x) \geq \epsilon \mathbb{P}(A) \big) \geq 1-\epsilon$ for $\epsilon \in [0,1]$.
\end{Lemma}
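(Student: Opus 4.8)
The plan is to argue by a direct first-moment (union-bound-free) computation on the ``bad'' set of values of $x$. Assume $\mathbb P(A)>0$ (otherwise the conditional law $X\mid A$ is undefined and there is nothing to prove), and let $\mathbb P_X$ denote the law of $X$. Define the bad set
\[
B = \big\{ x : \mathbb P(A \mid X=x) < \epsilon\, \mathbb P(A) \big\}\,,
\]
where $x\mapsto \mathbb P(A\mid X=x)$ is a version of the conditional probability (a measurable function of $x$, well-defined $\mathbb P_X$-a.s.). The goal is exactly to show $\mathbb P_{x\sim X\mid A}(x\in B)\le \epsilon$, since the complement of $B$ is the event in the statement.

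First I would rewrite the quantity of interest via the definition of conditioning on $A$:
\[
\mathbb P_{x\sim X\mid A}(x\in B) = \mathbb P(X\in B \mid A) = \frac{\mathbb P(A,\, X\in B)}{\mathbb P(A)}\,.
\]
Next I would disintegrate the numerator over $X$ using the defining property of $\mathbb P(A\mid X=\cdot)$:
\[
\mathbb P(A,\, X\in B) = \int_{B} \mathbb P(A\mid X=x)\, \dif\mathbb P_X(x)\,.
\]
On $B$ the integrand is strictly bounded by $\epsilon\,\mathbb P(A)$, so
\[
\mathbb P(A,\, X\in B) \le \epsilon\, \mathbb P(A)\, \mathbb P_X(B) \le \epsilon\, \mathbb P(A)\,,
\]
using $\mathbb P_X(B)\le 1$. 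Dividing by $\mathbb P(A)>0$ gives $\mathbb P_{x\sim X\mid A}(x\in B)\le \epsilon$, which is the claim. For $\epsilon=0$ the inequality $\ge 1-\epsilon$ is trivial, and for $\epsilon=1$ it is trivial as well (the probability of any event is $\ge 0$); the argument above covers all $\epsilon\in(0,1)$.

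There is essentially no obstacle here: the only point requiring a word of care is the measure-theoretic bookkeeping — namely that $\mathbb P(A\mid X=x)$ is only defined up to a $\mathbb P_X$-null set and that the disintegration identity $\mathbb P(A, X\in B)=\int_B \mathbb P(A\mid X=x)\dif\mathbb P_X(x)$ is precisely the defining property of the conditional probability (equivalently, it is the tower property $\mathbb E[\mathbf 1_A \mathbf 1_{X\in B}]=\mathbb E[\mathbf 1_{X\in B}\,\mathbb E[\mathbf 1_A\mid X]]$). In the discrete setting relevant to the applications in this paper (where $X$ takes countably many values), one may simply take $\mathbb P(A\mid X=x)=\mathbb P(A, X=x)/\mathbb P(X=x)$ on the support of $X$, and the integral becomes a sum, so no subtlety arises at all.
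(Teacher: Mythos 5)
Your proof is correct and takes essentially the same route as the paper's: both arguments rest on the disintegration identity $\mathbb{P}(X\in S\mid A)=\frac{1}{\mathbb{P}(A)}\int_S \mathbb{P}(A\mid X=x)\,\dif\mathbb{P}_X(x)$, the only (cosmetic) difference being that you upper-bound the bad set $\{x:\mathbb{P}(A\mid X=x)<\epsilon\mathbb{P}(A)\}$ by $\epsilon$, whereas the paper lower-bounds the good set directly via the pointwise inequality $\mathbb{P}(A\mid X=x)\mathbf{1}_{\{\mathbb{P}(A\mid X=x)\geq\epsilon\mathbb{P}(A)\}}\geq \mathbb{P}(A\mid X=x)-\epsilon\mathbb{P}(A)$ and the fact that $\mathbb{E}_{x\sim X}[\mathbb{P}(A\mid X=x)]=\mathbb{P}(A)$.
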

\begin{proof}
We have that
\begin{align*}
    \mathbb{P}_{x \sim X|A} \big( \mathbb{P}(A|X=x) \geq \epsilon \mathbb{P}(A) \big) = & \mathbb{E}_{x \sim X} \Big[ \frac{1}{\mathbb{P}(A)} \mathbb{P}(A|X=x) \mathbf{1}_{ \{ \mathbb{P}(A|X=x) \geq \epsilon \mathbb{P}(A) \} } \Big]  \\
    \geq & \frac{1}{\mathbb{P}(A)} \mathbb{E}_{x \sim X} [ \mathbb{P}(A|X=x) - \epsilon \mathbb{P}(A) ] = 1 - \epsilon \,,  
\end{align*} 
completing the proof of this lemma.
\end{proof}

We also need some results in linear algebra.
\begin{Lemma} {\label{lemma-bound-on-infty-norm}}
For two $m\!*\!m$ matrices $\mathrm{A,B}$, if $\| \mathrm{(A+B)}^{-1} \|_{\mathrm{op}} \leq C, \| \mathrm{A}^{-1} \|_{\infty} \leq L$ and the entries of $\mathrm{B}$ are bounded by $\frac{K}{m}$, then $\| \mathrm{(A+B)}^{-1} \|_{\infty} \leq \max \{ 2KCL, 2L \}$.
\end{Lemma}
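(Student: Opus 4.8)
The plan is to start from the second resolvent identity and then estimate $(\mathrm{A}+\mathrm{B})^{-1}$ row by row, routing the bounds through the $\ell_2$ norm so that both hypotheses can be used at once. A naive estimate via $\|\mathrm{B}\|_\infty$ is too lossy: the entrywise bound $|\mathrm{B}_{ij}|\le K/m$ only yields $\|\mathrm{B}\|_\infty\le K$, which is not small. The point is that the entrywise smallness of $\mathrm{B}$ gives a genuine gain of a factor $1/m$, but only after passing to $\ell_\infty$, while the operator-norm control of $(\mathrm{A}+\mathrm{B})^{-1}$ lives in $\ell_2$; the conversions between these norms on $\mathbb{R}^m$ cost powers of $\sqrt m$ that precisely cancel the gained $1/m$.

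Concretely, set $\mathrm{M}=\mathrm{A}+\mathrm{B}$, which is invertible by hypothesis, and record the identity $\mathrm{M}^{-1}=\mathrm{A}^{-1}-\mathrm{A}^{-1}\mathrm{B}\,\mathrm{M}^{-1}$. Writing $e_i$ for the $i$-th coordinate (row) vector, let $r_i=e_i\mathrm{M}^{-1}$ and $s_i=e_i\mathrm{A}^{-1}$ be the $i$-th rows of $\mathrm{M}^{-1}$ and $\mathrm{A}^{-1}$; then $r_i=s_i-s_i\mathrm{B}\,\mathrm{M}^{-1}$. Since $\|\mathrm{M}^{-1}\|_\infty=\max_i\|r_i\|_1$, it suffices to bound each $\|r_i\|_1$. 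First, $\|\mathrm{A}^{-1}\|_\infty\le L$ gives $\|s_i\|_1\le L$. Next, because every entry of $\mathrm{B}$ has absolute value at most $K/m$, each coordinate of the row vector $s_i\mathrm{B}$ is bounded by $\|s_i\|_1\cdot(K/m)\le LK/m$, so $\|s_i\mathrm{B}\|_\infty\le LK/m$ and hence $\|s_i\mathrm{B}\|_2\le\sqrt m\,\|s_i\mathrm{B}\|_\infty\le LK/\sqrt m$. Applying $\|(\,\cdot\,)\mathrm{M}^{-1}\|_2\le\|\mathrm{M}^{-1}\|_{\mathrm{op}}\|\cdot\|_2\le C\|\cdot\|_2$ yields $\|s_i\mathrm{B}\,\mathrm{M}^{-1}\|_2\le LKC/\sqrt m$, and converting back by Cauchy--Schwarz gives $\|s_i\mathrm{B}\,\mathrm{M}^{-1}\|_1\le\sqrt m\,\|s_i\mathrm{B}\,\mathrm{M}^{-1}\|_2\le LKC$. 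Combining, $\|r_i\|_1\le\|s_i\|_1+\|s_i\mathrm{B}\,\mathrm{M}^{-1}\|_1\le L+LKC$, and since $L+LKC\le\max\{2L,\,2KCL\}$, taking the maximum over $i$ completes the argument.

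I expect no substantial obstacle here; the only thing to be careful about is the bookkeeping of the mixed $\ell_\infty\to\ell_2\to\ell_1$ conversions and keeping track of which norm each hypothesis controls. Everything else is routine, and this is the step of the proof where the mixture of operator and $\infty$ norms in the statement is used in an essential way.
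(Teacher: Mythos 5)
Your proof is correct, but it takes a genuinely different route from the paper. The paper argues directly on the forward map: fixing $\|x\|_\infty=1$, it shows $\|(\mathrm{A}+\mathrm{B})x^*\|_\infty \geq \min\{\tfrac{1}{2KCL},\tfrac{1}{2L}\}$ by a dichotomy on $\|x\|_2$ — when $\|x\|_2 \geq \sqrt m/(2KL)$ it converts the $\ell_2$ lower bound $\|(\mathrm{A}+\mathrm{B})x^*\|_2 \geq C^{-1}\|x\|_2$ into an $\ell_\infty$ lower bound by dividing by $\sqrt m$; when $\|x\|_2 \leq \sqrt m/(2KL)$ it uses Cauchy--Schwarz to get $\|x\|_1 \leq m/(2KL)$ and hence $\|\mathrm{B}x^*\|_\infty \leq \tfrac{1}{2L}$, so the $\mathrm{A}$-term $\|\mathrm{A}x^*\|_\infty \geq L^{-1}$ dominates. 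You instead use the resolvent identity $(\mathrm{A}+\mathrm{B})^{-1} = \mathrm{A}^{-1} - \mathrm{A}^{-1}\mathrm{B}(\mathrm{A}+\mathrm{B})^{-1}$ and bound the $\ell_1$ norm of each row of $(\mathrm{A}+\mathrm{B})^{-1}$, routing the middle factor through $\ell_\infty \to \ell_2 \to \ell_1$ so that the $1/m$ gain from the entrywise smallness of $\mathrm{B}$ exactly cancels the two factors of $\sqrt m$ from norm conversion. Both arguments exploit the same dimensional cancellation; the paper's case split avoids invoking the resolvent identity and yields the final constants slightly more directly, while your version avoids cases and is arguably more transparent about where each hypothesis is used. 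The two routes are honestly comparable in length and difficulty, and your bound $L + KCL \leq \max\{2L, 2KCL\}$ matches the stated constants, so the proposal stands as a valid alternative proof.
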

\begin{proof}
It suffices to show that $\| \mathrm{(A+B)} x^* \|_{\infty} \geq \min \{ \frac{1}{2KCL}, \frac{1}{2L} \}$ for any $\| x \|_{\infty} =1$. First we consider the case when $\| x \|_2 \geq \frac{\sqrt{m}}{2KL}$. Since $\| \mathrm{(A+B)} x^* \|_2 \geq \| \mathrm{(A+B)}^{-1} \|_{\mathrm{op}}^{-1} \| x \|_2 \geq C^{-1} \| x \|_2$, we get $\| \mathrm{(A+B)} x^* \|^2_{\infty} \geq \frac{ \| \mathrm{(A+B)} x^* \|^2_2 }{ m } \geq \frac{1}{4 K^2 C^2 L^2}$. Next we consider the case when $\| x \|_2 \leq \frac{\sqrt{m}}{2KL}$. In this case $\| x \|_{1} \leq \frac{m}{2KL}$. Thus, $\| \mathrm{A} x^* \|_{\infty} \geq \| \mathrm{A}^{-1} \|_{\infty}^{-1} \| x \|_{\infty} \geq L^{-1} $ and $\| \mathrm{B} x^* \|_{\infty} \leq \frac{K}{m} \| x \|_{1} \leq (2L)^{-1}$. Therefore, $\| \mathrm{(A+B)} x^* \|_{\infty} \geq \| \mathrm{A} x^* \|_{\infty} - \| \mathrm{B} x^* \|_{\infty}  \geq \frac{1}{2L}$, as desired.
\end{proof}

\begin{Lemma} {\label{lemma-bound-on-op-norm}}
For an $m*\ell$ matrix $\mathrm{A}$, suppose that there exist two partitions $\{ 1,\ldots,m \} = \sqcup_{k=1}^{K} \mathcal{I}_k$ and $\{ 1,\ldots,\ell \} = \sqcup_{k=1}^{K} \mathcal{J}_k$ with $|\mathcal{I}_k|, |\mathcal{J}_k| \leq D$ (for $1\leq k\leq K$) such that $| \mathrm{A}_{a,b}| \leq \delta$ for $1 \leq k \leq K$ and $(a,b) \in \mathcal{I}_k \times \mathcal{J}_k$,  and that $\sum_{k \neq l} \sum_{ (a,b) \in \mathcal{I}_k \times \mathcal{J}_l } \mathrm{A}_{a,b}^2 \leq C^2$. Then we have $\| \mathrm{A} \|_{\mathrm{op}} \leq D \delta + C$.
\end{Lemma}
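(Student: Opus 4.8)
The plan is to decompose $\mathrm{A}$ into the part supported on the ``diagonal blocks'' $\bigcup_k(\mathcal{I}_k\times\mathcal{J}_k)$ and the part supported on the remaining ``off-diagonal blocks'', and then to bound the operator norm of each part separately via the triangle inequality $\|\mathrm{A}\|_{\mathrm{op}}\leq\|\mathrm{B}\|_{\mathrm{op}}+\|\mathrm{A}-\mathrm{B}\|_{\mathrm{op}}$, where $\mathrm{B}$ denotes the matrix obtained from $\mathrm{A}$ by zeroing out every entry $(a,b)$ that does not lie in some common block $\mathcal{I}_k\times\mathcal{J}_k$. It then suffices to prove $\|\mathrm{B}\|_{\mathrm{op}}\leq D\delta$ and $\|\mathrm{A}-\mathrm{B}\|_{\mathrm{op}}\leq C$.

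For the off-block part I would simply invoke $\|\cdot\|_{\mathrm{op}}\leq\|\cdot\|_{\mathrm{HS}}$, which gives $\|\mathrm{A}-\mathrm{B}\|_{\mathrm{op}}\leq\|\mathrm{A}-\mathrm{B}\|_{\mathrm{HS}}=\big(\sum_{k\neq l}\sum_{(a,b)\in\mathcal{I}_k\times\mathcal{J}_l}\mathrm{A}_{a,b}^2\big)^{1/2}\leq C$ directly from the hypothesis. For the block-diagonal part, after reindexing so that the indices of each $\mathcal{I}_k$ and each $\mathcal{J}_k$ are consecutive, $\mathrm{B}$ becomes genuinely block-diagonal with blocks $\mathrm{B}^{(k)}$ of size $|\mathcal{I}_k|\times|\mathcal{J}_k|$ whose entries are bounded by $\delta$ in absolute value. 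Writing a test vector $x$ as $(x^{(1)},\dots,x^{(K)})$ along the column partition $\{\mathcal{J}_k\}$, I would use $\|\mathrm{B}x^*\|_2^2=\sum_k\|\mathrm{B}^{(k)}(x^{(k)})^*\|_2^2\leq\big(\max_k\|\mathrm{B}^{(k)}\|_{\mathrm{op}}^2\big)\|x\|_2^2$ to get $\|\mathrm{B}\|_{\mathrm{op}}\leq\max_k\|\mathrm{B}^{(k)}\|_{\mathrm{op}}$, and then bound each block crudely by $\|\mathrm{B}^{(k)}\|_{\mathrm{op}}\leq\|\mathrm{B}^{(k)}\|_{\mathrm{HS}}\leq\delta\sqrt{|\mathcal{I}_k|\,|\mathcal{J}_k|}\leq\delta D$. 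Combining the two estimates yields $\|\mathrm{A}\|_{\mathrm{op}}\leq D\delta+C$.

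There is essentially no obstacle here; the argument is a routine splitting. The one step meriting an explicit (if elementary) justification is the inequality $\|\mathrm{B}\|_{\mathrm{op}}\leq\max_k\|\mathrm{B}^{(k)}\|_{\mathrm{op}}$ for a block-diagonal matrix, which follows from the orthogonal decompositions of $\mathbb{R}^{\ell}$ and $\mathbb{R}^{m}$ induced by the two partitions. The per-block bound $\delta D$ is deliberately wasteful (it discards all cancellation and uses the Hilbert--Schmidt norm), but it is all that is needed for how the lemma will be applied.
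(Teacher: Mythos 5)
Your proposal is correct and follows essentially the same route as the paper: decompose $\mathrm{A}$ into the diagonal-block part and the off-block part, bound the off-block part by its Hilbert--Schmidt norm via the hypothesis, reduce the diagonal-block part (by reindexing, equivalently conjugation by permutation matrices as in the paper) to a genuine block-diagonal matrix whose operator norm is the maximum block norm, bound each block crudely by $\delta D$, and finish with the triangle inequality. No gaps.
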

\begin{proof}
Denote $|\mathcal I_k|= m_k$ and $|\mathcal J_k|=\ell_k$.
Define $\mathrm{A}^{\mathrm{diag}}$ such that $\mathrm{A}^{\mathrm{diag}}_{a,b} = \mathrm{A}_{a,b}$ for $(a,b) \in \mathcal{I}_k \times \mathcal{J}_k$ and that $\mathrm{A}^{\mathrm{diag}}_{a,b} = 0$ otherwise. Then, there exist two permutation matrices $\mathrm{Q_1,Q_2}$ such that 
\[
\mathrm{Q}_1  \mathrm{A}^{\mathrm{diag}} \mathrm{Q}_2 = \begin{pmatrix}
    \mathrm{A}_{1} & \mathrm{O}_{m_1*\ell_2} & \cdots & \mathrm{O}_{m_1*\ell_K} \\
    \mathrm{O}_{m_2*\ell_1} & \mathrm{A}_{2} & \cdots & \mathrm{O}_{m_2*\ell_K} \\
    \vdots & \vdots & \vdots & \vdots \\
    \mathrm{O}_{m_K*\ell_1} & \mathrm{O}_{m_K*\ell_2} & \cdots & \mathrm{A}_{K} 
\end{pmatrix} \,.
\]
where $\mathrm{A}_{k}$ is a matrix with size $m_k\!*\!\ell_k$ and with entries bounded by $\delta$. Thus $\| \mathrm{A}^{\mathrm{diag}} \|_{\mathrm{op}} = \max_{1 \leq k \leq K} \| \mathrm{A}_k \|_{\mathrm{op}} \leq D \delta$. Also we have $\| \mathrm{A}- \mathrm{A}^{\mathrm{diag}} \|_{\mathrm{op}}^2 \leq \| \mathrm{A}- \mathrm{A}^{\mathrm{diag}} \|_{\mathrm{HS}}^2 \leq C^2$, and thus the result follows from the triangle inequality.
\end{proof}

\subsection{Analysis of initialization} \label{sec:priliminary-events}
In this subsection we analyze the initialization. We will prove a concentration result for $\aleph^{(a)}_k,\Upsilon^{(a)}_k$ for $1 \leq a \leq \chi+1, 1 \leq k \leq K_0$ in Lemma~\ref{lemma-concentration-iterative-aleph-Upsilon}, which then implies that $\P(\mathcal{E}_{0}) = 1- o(1)$ as in Lemma~\ref{lemma-E-0-holds}. As preparations, we first collect a few technical estimates on binomial variables. 
\begin{Lemma} \label{lemma-difference-Bernoulli-tail-I}
For $m \ll N \ll p^{-1}$ and $l=\Theta(1)$, we have
\begin{align*}
    \mathbb{P}( \mathrm{Bin}(N+m,p) \geq l ) - \mathbb{P}( \mathrm{Bin}(N,p) \geq l ) \leq \frac{2ml}{N} \mathbb{P}( \mathrm{Bin}(N,p) \geq l ) \,.
\end{align*}
\end{Lemma}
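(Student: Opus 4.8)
\emph{Proof proposal.} The plan is to realize $\mathrm{Bin}(N+m,p)$ via a coupling: write it as $X+Y$ where $X\sim\mathrm{Bin}(N,p)$ and $Y\sim\mathrm{Bin}(m,p)$ are independent. Since $\{X+Y\geq l\}\supseteq\{X\geq l\}$, the left-hand side of the claimed inequality is exactly $\mathbb{P}(X+Y\geq l,\ X\leq l-1)$. Partitioning this event according to the value of $X$, and writing $X=l-i$ (so that $1\leq i\leq l$ and the event forces $Y\geq i$), independence of $X$ and $Y$ gives
\[
  \mathbb{P}(\mathrm{Bin}(N+m,p)\geq l) - \mathbb{P}(\mathrm{Bin}(N,p)\geq l) \;=\; \sum_{i=1}^{l} \mathbb{P}(X = l-i)\,\mathbb{P}(Y \geq i)\,.
\]

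The next step is to bound the two factors in each summand. For $Y$, the union bound over $i$-element subsets of the $m$ trials yields $\mathbb{P}(Y\geq i)\leq\binom{m}{i}p^{i}\leq (mp)^{i}$. For $X$, comparing point masses,
\[
  \frac{\mathbb{P}(X=l-i)}{\mathbb{P}(X=l)} \;=\; \frac{\binom{N}{l-i}}{\binom{N}{l}}\, p^{-i}(1-p)^{i} \;\leq\; \Big(\frac{l}{N-l}\Big)^{i} p^{-i}\,,
\]
where we used that $\binom{N}{l-i}/\binom{N}{l}$ is a product of $i$ factors each at most $l$ divided by a product of $i$ factors each at least $N-l$, together with $(1-p)^{i}\leq 1$. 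Multiplying the two bounds, the powers of $p$ cancel and $\mathbb{P}(X=l-i)\,\mathbb{P}(Y\geq i)\leq \big(\tfrac{lm}{N-l}\big)^{i}\,\mathbb{P}(X=l)$.

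Finally I would sum the geometric series: since $l=\Theta(1)$ and $m\ll N$, the ratio $\tfrac{lm}{N-l}=o(1)$, so
\[
  \sum_{i\geq 1}\Big(\tfrac{lm}{N-l}\Big)^{i} \;=\; \frac{lm/(N-l)}{1-lm/(N-l)} \;=\; \frac{lm}{N}\bigl(1+o(1)\bigr) \;\leq\; \frac{2lm}{N}
\]
for all large $n$. Combining with $\mathbb{P}(X=l)\leq\mathbb{P}(X\geq l)=\mathbb{P}(\mathrm{Bin}(N,p)\geq l)$ yields the claim. The computation is entirely routine, and there is no substantive obstacle; the only point needing care is that collapsing the lower-order corrections into the explicit constant $2$ relies on the asymptotic hypotheses $m\ll N$ and $l=\Theta(1)$ (the assumption $N\ll p^{-1}$ is not actually used in this particular estimate).
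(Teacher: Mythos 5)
Your proof is correct and is a genuine, if mild, variant of the paper's. Both start from the same coupling $\mathrm{Bin}(N+m,p)\overset{d}{=}X+Y$ with independent $X\sim\mathrm{Bin}(N,p)$, $Y\sim\mathrm{Bin}(m,p)$, and both reduce the difference of tail probabilities to a short sum over $1\leq i\leq l$; the paper partitions on the value of $Y$ while you partition on the value of $X$, which is an essentially cosmetic change. Where you diverge is in how the two factors are estimated. The paper invokes Poisson-type asymptotics for both: $\mathbb{P}(\mathrm{Bin}(m,p)\geq k)\sim (mp)^k/k!$ (using $mp\ll1$) and $\mathbb{P}(\mathrm{Bin}(N,p)\geq l-k)/\mathbb{P}(\mathrm{Bin}(N,p)\geq l)\sim l!/(l-k)!\,(Np)^{-k}$ (using $Np\ll1$, i.e., the hypothesis $N\ll p^{-1}$). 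You instead bound $\mathbb{P}(Y\geq i)\leq(mp)^i$ by a union bound and compare point masses directly, $\mathbb{P}(X=l-i)/\mathbb{P}(X=l)\leq (l/(N-l))^i p^{-i}$, so that the powers of $p$ cancel exactly; the final step $\mathbb{P}(X=l)\leq\mathbb{P}(X\geq l)$ costs nothing. This is entirely non-asymptotic in the $X$-factor and, as you correctly observe, never needs $N\ll p^{-1}$, so your argument is both a bit more elementary and a bit more general than the paper's. The only place an asymptotic enters is in absorbing the geometric tail into the constant $2$, which uses $m\ll N$ and $l=\Theta(1)$, exactly as you say.
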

\begin{proof}
The left hand side equals to $\sum_{k=1}^{l} \mathbb{P}( \mathrm{Bin}(m,p) = k ) * \mathbb{P}( l > \mathrm{Bin}(N,p) \geq l-k )$. Since $mp \ll 1$, a straightforward computation yields that $\mathbb{P}(\mathrm{Bin}(m,p) \geq k) \sim \frac{ (mp)^{k} }{k!} $ for any fixed $k$. Since $Np \ll 1$, we also have that $\frac{ \mathbb{P}( \mathrm{Bin}(N,p) \geq l-k ) }{ \mathbb{P}( \mathrm{Bin}(N,p) \geq l) } \sim \frac{l!}{(l-k)!} (Np)^{-k} $. Thus,
\begin{align*}
    \frac{ \mathbb{P}( \mathrm{Bin}(N+m,p) \geq l ) - \mathbb{P}( \mathrm{Bin}(N,p) \geq l )}{ \mathbb{P}( \mathrm{Bin}(N,p) \geq l )} \leq 1.5 \sum_{k=1}^{l} \frac{l!}{(l-k)!} (Np)^{-k} \cdot\frac{1}{k!}(mp)^{k} \leq \frac{2ml}{N},
\end{align*}
which yields the desired bound.
\end{proof}

\begin{Corollary} \label{corollary-difference-CorBernoulli-tail-I}
For $m \ll N,M \ll p^{-1}$ and $l = \Theta(1)$, we have
\begin{align*}
    & \mathbb{P}( \mathrm{CorBin} (N+m,N+m,p;M+m,\rho) \geq (l,l)) - \mathbb{P}(\mathrm{CorBin}(N,N,p;M,\rho) \geq (l,l) ) \\
    & \leq \frac{4 l m}{N} \mathbb{P} ( \mathrm{Bin}(N,p) \geq l ) \,.
\end{align*}
\end{Corollary}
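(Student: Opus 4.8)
The plan is to exploit the natural monotone coupling between the two correlated binomials together with Lemma~\ref{lemma-difference-Bernoulli-tail-I}. Realize $(X',Y')\sim\mathrm{CorBin}(N+m,N+m,p;M+m,\rho)$ on a common probability space as $X'=X+\Delta X$ and $Y'=Y+\Delta Y$, where $(X,Y)\sim\mathrm{CorBin}(N,N,p;M,\rho)$, $\Delta X=\sum_{k=N+1}^{N+m}b_k$, $\Delta Y=\sum_{k=N+1}^{N+m}b'_k$, the pairs $(b_k,b'_k)_{N+1\le k\le N+m}$ are i.i.d.\ correlated Bernoulli pairs with $\mathrm{Cov}(b_k,b'_k)=\rho$, and $(\Delta X,\Delta Y)$ is independent of $(X,Y)$; this is exactly the $\mathrm{CorBin}$ construction with the first $N$ coordinates reused (and all $m$ new coordinates put into the correlated block, which is legitimate since $M+m\le N+m$). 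Since $X'\ge X$ and $Y'\ge Y$ pointwise, the event $\{X\ge l,\,Y\ge l\}$ is contained in $\{X'\ge l,\,Y'\ge l\}$, so
\[
\mathbb{P}(X'\ge l,\,Y'\ge l)-\mathbb{P}(X\ge l,\,Y\ge l)=\mathbb{P}\big(X'\ge l,\,Y'\ge l,\,\{X<l\}\cup\{Y<l\}\big)\,.
\]

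By a union bound this is at most $\mathbb{P}(X'\ge l,\,Y'\ge l,\,X<l)+\mathbb{P}(X'\ge l,\,Y'\ge l,\,Y<l)$, and since the $\mathrm{CorBin}$ law is exchangeable under swapping the two coordinates (swap every $b_k$ with $b'_k$, which preserves the joint distribution), the two terms are equal. Dropping the constraint $\{Y'\ge l\}$ then bounds the difference by $2\,\mathbb{P}(X'\ge l,\,X<l)=2\big(\mathbb{P}(X'\ge l)-\mathbb{P}(X\ge l)\big)$. Finally, the marginal of $X$ (resp.\ of $X'$) is $\mathrm{Bin}(N,p)$ (resp.\ $\mathrm{Bin}(N+m,p)$), because within each correlated pair the within-pair correlation does not change the marginal law of $b_k$; hence Lemma~\ref{lemma-difference-Bernoulli-tail-I}, applicable since $m\ll N\ll p^{-1}$ and $l=\Theta(1)$, gives $\mathbb{P}(X'\ge l)-\mathbb{P}(X\ge l)\le \tfrac{2ml}{N}\,\mathbb{P}(\mathrm{Bin}(N,p)\ge l)$, and multiplying by $2$ yields the claimed bound $\tfrac{4lm}{N}\,\mathbb{P}(\mathrm{Bin}(N,p)\ge l)$.

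I do not expect a genuine obstacle here; the only points requiring a little care are setting up the monotone coupling so that the ``old'' coordinates are literally a sub-family of the ``new'' ones, and observing that the coordinate-exchange symmetry of $\mathrm{CorBin}$ lets one collapse the two-term union bound into a single one-dimensional tail comparison, at which stage Lemma~\ref{lemma-difference-Bernoulli-tail-I} applies verbatim.
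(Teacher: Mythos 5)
Your proposal is correct and follows essentially the same route as the paper's proof: couple $\mathrm{CorBin}(N+m,N+m,p;M+m,\rho)$ as the original pair plus $m$ independent fully-correlated Bernoulli pairs, use monotonicity and a union bound over which coordinate drops below $l$, reduce each term to the one-dimensional tail difference $\mathbb{P}(\mathrm{Bin}(N+m,p)\geq l)-\mathbb{P}(\mathrm{Bin}(N,p)\geq l)$, and invoke Lemma~\ref{lemma-difference-Bernoulli-tail-I}. Your appeal to the coordinate-swap symmetry merely replaces the paper's step of bounding the two symmetric terms separately (each by the same marginal comparison), so it is a cosmetic rather than substantive difference.
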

\begin{proof}
Let $(X,Y) \overset{d}{=} \mathrm{CorBin}(N,N,p;M,\rho)$ and $(U,U^{\prime}) \overset{d}{=} \mathrm{CorBin}(m,m,p;m,\rho)$ be such that $(X, Y)$ is independent of $(U,U^{\prime})$. Then the left hand side (in the corollary-statement) equals to
\begin{align*}
    & \mathbb{P}( X+U, Y+U^{\prime} \geq l ) - \mathbb{P}(X,Y \geq l) \leq \mathbb{P}( X+U \geq l > X ) + \mathbb{P}( Y+U^{\prime} \geq l > Y ) \\
    = \ & 2 ( \mathbb{P} ( \mathrm{Bin}(N+m,p) \geq l ) - \mathbb{P}( \mathrm{Bin}(N,p) \geq l ) ) \leq \frac{4lm}{N} \mathbb{P}(\mathrm{Bin}(N,p) \geq l)\,,
\end{align*}
where the last inequality follows from Lemma~\ref{lemma-difference-Bernoulli-tail-I}. This gives the desired bound.
\end{proof}

\begin{Lemma} \label{lemma-difference-Bernoulli-tail-II}
For all $N \gg 1,M,m,l \in \mathbb{N}$ and $p,\epsilon>0$ we have
\begin{align*}
     \mathbb{P}(\mathrm{Bin}(N+m,p) \geq l ) - \mathbb{P}( \mathrm{Bin}(N,p) \geq l ) 
    \lesssim \epsilon^{2} + \big( mp + \epsilon^{-1} \sqrt{ mp(1-p) } \big)  \frac{\log N}{\sqrt{Np(1-p)}}   \,.
\end{align*}
\end{Lemma}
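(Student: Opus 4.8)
We may assume $m\ge1$ and $p\in(0,1)$, since otherwise both sides are trivial. The plan is to couple the two binomials and reduce to an anti-concentration estimate for a single binomial. Write $\mathrm{Bin}(N+m,p)\overset{d}{=}X+Z$ with $X\sim\mathrm{Bin}(N,p)$ and $Z\sim\mathrm{Bin}(m,p)$ independent. Since $X+Z\ge X$ pointwise,
\[
\mathbb{P}(\mathrm{Bin}(N+m,p)\ge l)-\mathbb{P}(\mathrm{Bin}(N,p)\ge l)=\mathbb{P}(X<l\le X+Z)=\sum_{k\ge1}\mathbb{P}(Z=k)\,\mathbb{P}(l-k\le X\le l-1)\,,
\]
so it suffices to control, for each $k\ge1$, the probability that $X$ lands in a window of $k$ consecutive integers.

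The key input will be the uniform anti-concentration bound $\sup_j\mathbb{P}(\mathrm{Bin}(N,p)=j)\le C_0/\sqrt{Np(1-p)}$ for an absolute constant $C_0$, valid for every $N\ge1$ and $p\in(0,1)$. When $Np(1-p)\le1$ this is trivial; when $Np(1-p)\ge1$ (so that also $Np\ge1$ and $N(1-p)\ge1$), writing $p_j=\mathbb{P}(\mathrm{Bin}(N,p)=j)$ it follows from unimodality of $(p_j)$ together with the fact that the consecutive ratio $p_{j+1}/p_j=\frac{(N-j)p}{(j+1)(1-p)}$ equals $1+O\big(1/\sqrt{Np(1-p)}\big)$ uniformly over the integers $j$ within one standard deviation of $Np$ (all of which lie in $\{0,\dots,N\}$): these $\asymp\sqrt{Np(1-p)}$ probabilities — including the largest one — are then mutually within a constant factor, and since they sum to at most $1$ each is $O\big(1/\sqrt{Np(1-p)}\big)$. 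Hence $\mathbb{P}(l-k\le X\le l-1)\le C_0 k/\sqrt{Np(1-p)}$.

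Finally I would truncate $Z$ via Chebyshev's inequality: with $\tau:=mp+\epsilon^{-1}\sqrt{mp(1-p)}$ one has $\mathbb{P}(Z\ge\tau)\le\mathrm{Var}(Z)\big/\big(\epsilon^{-2}mp(1-p)\big)=\epsilon^2$. Splitting the displayed sum at $k=\tau$, the part with $k\ge\tau$ contributes at most $\mathbb{P}(Z\ge\tau)\le\epsilon^2$, while for $1\le k<\tau$ we use $\mathbb{P}(l-k\le X\le l-1)\le C_0\tau/\sqrt{Np(1-p)}$ and $\sum_k\mathbb{P}(Z=k)\le1$; altogether
\[
\mathbb{P}(\mathrm{Bin}(N+m,p)\ge l)-\mathbb{P}(\mathrm{Bin}(N,p)\ge l)\le\epsilon^2+\frac{C_0\,\tau}{\sqrt{Np(1-p)}}\,,
\]
and since $N\gg1$ the constant $C_0$ is absorbed into the factor $\log N$, which is exactly the assertion once $\tau$ is expanded.

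The only non-routine ingredient is the uniform anti-concentration bound, and the one thing to be careful about is that it must hold across all parameter regimes — small $p$, $p$ close to $1$, and $Np(1-p)=O(1)$ where it degenerates into a triviality. Everything else (the coupling and the Chebyshev truncation) is entirely standard.
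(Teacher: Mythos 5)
Your proof is correct and follows essentially the same route as the paper: couple $\mathrm{Bin}(N+m,p)=X+Z$, truncate $Z$ at the same threshold $mp+\epsilon^{-1}\sqrt{mp(1-p)}$ via Chebyshev, and bound the resulting window probability for $X$ by the window length times the maximal point mass of $\mathrm{Bin}(N,p)$. The only (immaterial) difference is how that point-mass bound $O(1/\sqrt{Np(1-p)})$ is obtained — you argue via unimodality and consecutive ratios, while the paper invokes Stirling's formula and lets the $\log N$ factor absorb the degenerate regime $Np(1-p)=O(1)$, exactly as you absorb your constant $C_0$.
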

\begin{proof}
Writing $Q=mp + \epsilon^{-1}  \sqrt{ mp(1-p) }$, we have that the left hand side is bounded by $\mathbb{P}( \mathrm{Bin}(m,p) \geq Q ) + \mathbb{P}( l-Q < \mathrm{Bin}(N,p) < l )$.
Applying Chebyshev's inequality, we have $\mathbb{P}( \mathrm{Bin}(m,p) \geq Q) \leq \epsilon^2 $. In addition, we have
\begin{align}\label{eq-bound-binomial-difference}
    & \mathbb{P}( l-Q < \mathrm{Bin}(N,p) < l ) \leq Q \max_{k>0} \{ \mathbb{P}( \mathrm{Bin}(N,p) = k ) \} \nonumber \\
    \leq \ & Q \cdot \max_{k \in \{ [Np], [Np]+1 \}, k \neq 0} \Big\{  \binom{ N }{ k } p^{k} (1-p)^{N-k}  \Big\} \lesssim \frac{Q \log N}{\sqrt{Np(1-p)}}\,,   
\end{align}
where $[Np] = \max\{ r : r \leq Np\}$. Here the last inequality above can be verified as follows: if $Np(1-p)= O(1)$, then $\frac{\log N}{\sqrt{Np(1-p)}} \gg 1$ (and thus the bound holds); if $Np(1-p) \gg 1$, then by Sterling's formula we have $\binom{N}{k} p^{k} (1-p)^{N-k} = O( \frac{\log N}{\sqrt{Np(1-p)}} )$, as desired.
\end{proof}

\begin{Corollary} \label{corollary-difference-CorBernoulli-tail-II}
For $N \gg 1,M,m_1,m_2 = o(N) ,l \in \mathbb{N}$ and $p,\epsilon > 0$, we have
\begin{align*}
    & \mathbb{P}( \mathrm{CorBin} (N+m_1,N+m_1,p;M+m_2,\rho) \geq (l,l) ) - \mathbb{P}( \mathrm{CorBin}(N,N,p;M,\rho) \geq (l,l) ) \\
    \lesssim \ & \epsilon^2 + ( 2m_1 p + 4 \epsilon^{-1} \sqrt{ m_1 p(1-p) } + 4 \epsilon^{-1} \sqrt{m_2 p(1-p)} ) \frac{\log N}{\sqrt{Np(1-p)}} \,.
\end{align*}
\end{Corollary}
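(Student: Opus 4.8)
The plan is to imitate the proof of Corollary~\ref{corollary-difference-CorBernoulli-tail-I}, but with Lemma~\ref{lemma-difference-Bernoulli-tail-II} in place of Lemma~\ref{lemma-difference-Bernoulli-tail-I}, together with one extra step to deal with the case when more coordinate pairs become correlated than are freshly added. The structural fact that makes everything work is that the marginals of a $\mathrm{CorBin}(N,N,p;M,\rho)$ vector are \emph{exactly} $\mathrm{Bin}(N,p)$, regardless of $M$. Hence, when $m_2\le m_1$, I would couple so that
\[
\mathrm{CorBin}(N+m_1,N+m_1,p;M+m_2,\rho)\overset{d}{=}(X+U,\ Y+U'),
\]
where $(X,Y)\sim\mathrm{CorBin}(N,N,p;M,\rho)$ is the given pair and $(U,U')\sim\mathrm{CorBin}(m_1,m_1,p;m_2,\rho)$ is independent of it (this realizes $M+m_2$ correlated coordinates out of $N+m_1$, which is possible precisely because $m_2\le m_1$). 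Then, exactly as in Corollary~\ref{corollary-difference-CorBernoulli-tail-I}, the left-hand side is at most $\mathbb{P}(X+U\ge l>X)+\mathbb{P}(Y+U'\ge l>Y)=2\big(\mathbb{P}(\mathrm{Bin}(N+m_1,p)\ge l)-\mathbb{P}(\mathrm{Bin}(N,p)\ge l)\big)$, and Lemma~\ref{lemma-difference-Bernoulli-tail-II} with $m=m_1$ gives the asserted bound (only the $m_1$-terms being needed in this case).

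For the remaining case $m_2>m_1$, the simple additive coupling fails because some independent coordinate pairs of the smaller model must be turned into correlated ones, and there is no monotone coupling for this operation. I would therefore pass through the intermediate law $\mathrm{CorBin}(N+m_1,N+m_1,p;M+m_1,\rho)$: the step from $\mathrm{CorBin}(N,N,p;M,\rho)$ to it is covered by the previous paragraph (contributing the $m_1$-terms), and for the step to $\mathrm{CorBin}(N+m_1,N+m_1,p;M+m_2,\rho)$ I would couple both to a common ``core'' $(X_0,Y_0)\sim\mathrm{CorBin}(N',N',p;M+m_1,\rho)$ with $N'=N+m_1-(m_2-m_1)\asymp N$, so that the intermediate law is $(X_0+J_1,Y_0+J_2)$ with $J_1,J_2\sim\mathrm{Bin}(m_2-m_1,p)$ independent and the target law is $(X_0+C_1,Y_0+C_2)$ with $(C_1,C_2)\sim\mathrm{CorBin}(m_2-m_1,m_2-m_1,p;m_2-m_1,\rho)$. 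Writing $Q=(m_2-m_1)p+\epsilon^{-1}\sqrt{(m_2-m_1)p(1-p)}$, Chebyshev's inequality bounds $\mathbb{P}(C_1\ge Q)$ and $\mathbb{P}(C_2\ge Q)$ by $\epsilon^2$ each, so
\[
\mathbb{P}(X_0+C_1\ge l,\ Y_0+C_2\ge l)\le 2\epsilon^2+\mathbb{P}(X_0>l-Q,\ Y_0>l-Q).
\]
Subtracting $\mathbb{P}(X_0\ge l,\ Y_0\ge l)\le\mathbb{P}(X_0+J_1\ge l,\ Y_0+J_2\ge l)$ and using $\{X_0>l-Q,Y_0>l-Q\}\setminus\{X_0\ge l,Y_0\ge l\}\subseteq\{l-Q<X_0<l\}\cup\{l-Q<Y_0<l\}$ together with the local estimate $\mathbb{P}(l-Q<\mathrm{Bin}(N',p)<l)\lesssim Q\log N'/\sqrt{N'p(1-p)}$ from \eqref{eq-bound-binomial-difference} gives a bound of order $\epsilon^2+(m_2p+\epsilon^{-1}\sqrt{m_2p(1-p)})\log N/\sqrt{Np(1-p)}$. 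Adding the two steps yields the claim.

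I do not expect a real obstacle here; the only point needing a little care is the $m_2>m_1$ regime, precisely because turning independent coordinate pairs into correlated ones has no monotone coupling — this is why I route through the intermediate model and truncate the freshly-correlated block by Chebyshev rather than attempting a direct coupling. Everything else is the template of Corollary~\ref{corollary-difference-CorBernoulli-tail-I} with Lemma~\ref{lemma-difference-Bernoulli-tail-II} substituted for Lemma~\ref{lemma-difference-Bernoulli-tail-I}, and the numerical slack in the target bound (the coefficient $2m_1p$ rather than $m_1p$, $4\epsilon^{-1}$ rather than $2\epsilon^{-1}$, and the spare $\sqrt{m_2p(1-p)}$ term in the case $m_2\le m_1$) comfortably absorbs all the union-bound factors.
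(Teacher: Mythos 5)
Your case $m_2\le m_1$ is fine, but there is a genuine gap in Step~B of the case $m_2>m_1$. After setting up the common core $(X_0,Y_0)$, you discard $J_1,J_2$ via $\mathbb{P}(X_0\ge l,\,Y_0\ge l)\le\mathbb{P}(X_0+J_1\ge l,\,Y_0+J_2\ge l)$ and then truncate $C_1$ (and $C_2$) alone by Chebyshev at $Q=(m_2-m_1)p+\epsilon^{-1}\sqrt{(m_2-m_1)p(1-p)}$. But $C_1$ and $J_1$ both have mean $(m_2-m_1)p$; once $J_1$ is thrown away you pay this full mean in the shift $Q$, and it propagates to a $m_2 p\,(\log N)/\sqrt{Np(1-p)}$ term in your final bound. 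That term is \emph{not} absorbed by the stated right-hand side $(2m_1p+4\epsilon^{-1}\sqrt{m_1p(1-p)}+4\epsilon^{-1}\sqrt{m_2p(1-p)})(\log N)/\sqrt{Np(1-p)}$ when $m_2\gg m_1$: e.g.\ with $m_1=0$ and $p,\epsilon$ fixed the target allows only $O(\sqrt{m_2})$ while your bound produces $O(m_2)$. So as written, Step~B proves a strictly weaker estimate, and ``adding the two steps'' does not yield the claim. The repair is to keep $J_1,J_2$ in the comparison and bound the mean-zero difference: write $\mathbb{P}(X_0+C_1\ge l,\,Y_0+C_2\ge l)-\mathbb{P}(X_0+J_1\ge l,\,Y_0+J_2\ge l)\le\mathbb{P}(X_0+C_1\ge l>X_0+J_1)+\mathbb{P}(Y_0+C_2\ge l>Y_0+J_2)$, intersect each event with $\{|C_1-J_1|\le Q'\}$ for $Q'=2\epsilon^{-1}\sqrt{(m_2-m_1)p(1-p)}$ (Chebyshev, since $\mathbb{E}[C_1-J_1]=0$ and $\mathrm{Var}(C_1-J_1)=2(m_2-m_1)p(1-p)$), and note that on the surviving event $l-Q'\le X_0+J_1<l$ with $X_0+J_1\sim\mathrm{Bin}(N+m_1,p)$; then \eqref{eq-bound-binomial-difference} gives the $Q'(\log N)/\sqrt{Np(1-p)}$ term without any $m_2p$ contribution.

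For comparison, the paper avoids the case split entirely by a single three-way decomposition: $\mathrm{CorBin}(N,N,p;M,\rho)\overset{d}{=}(X+Z,\,Y+Z')$ and $\mathrm{CorBin}(N+m_1,N+m_1,p;M+m_2,\rho)\overset{d}{=}(X+U+W,\,Y+U'+W')$ with a common core $(X,Y)\sim\mathrm{CorBin}(N-m_2,N-m_2,p;M,\rho)$, where $(Z,Z')\sim\mathrm{CorBin}(m_2,m_2,p;0,\rho)$, $(W,W')\sim\mathrm{CorBin}(m_2,m_2,p;m_2,\rho)$ and $(U,U')\sim\mathrm{CorBin}(m_1,m_1,p;0,\rho)$ are mutually independent. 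It then bounds $\mathbb{P}(X+U+W\ge l>X+Z)$ by intersecting with $\{U\le Q_1\}\cap\{|W-Z|\le Q_2\}$, where $Q_1=m_1p+2\epsilon^{-1}\sqrt{m_1p(1-p)}$ and $Q_2=2\epsilon^{-1}\sqrt{m_2p(1-p)}$; since $W-Z$ has mean zero, $Q_2$ carries no $m_2p$ piece, which is exactly the cancellation your Step~B loses. Your two-stage route is more modular and would work with the mean-zero repair above, at the cost of the extra bookkeeping with $N'$.
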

\begin{proof}
Let $(X,Y) \overset{d}{=} \mathrm{CorBin} (N-m_2,N-m_2,p;M,\rho)$, $(Z,Z^{\prime}) \overset{d}{=} \mathrm{CorBin}(m_2,m_2,p;0,\rho)$, $(W,W^{\prime}) \overset{d}{=} \mathrm{CorBin}(m_2,m_2,p;m_2,\rho)$ and $(U,U^{\prime}) \overset{d}{=} \mathrm{CorBin}(m_1,m_1,p;0,\rho)$ be independent pairs of variables. Let $Q_1 = m_1 p + 2\epsilon^{-1} \sqrt{ m_1 p(1-p) } $ and $Q_2 = 2\epsilon^{-1} \sqrt{m_2 p(1-p)}$. Then the difference of probabilities in the statement is equal to
\begin{align*}
    & \mathbb{P}( X+U+W, Y+U^{\prime}+W^{\prime} \geq l) - \mathbb{P}( X+Z, Y+Z^{\prime} \geq l)  \\
    \leq \ & \mathbb{P}( X+U+W \geq l > X+Z ) + \mathbb{P}( Y+U^{\prime}+W^{\prime} \geq l > Y+Z^{\prime} ) \\
    \leq \ & 2 \big( \mathbb{P}( |W-Z| > Q_2 ) + \mathbb{P}(U > Q_1 ) + \mathbb{P}(  l-Q_1-Q_2 \leq X + Z < l ) \big) \\
    \lesssim \ & 4 \epsilon^2 + 2(Q_1+Q_2) (\log N) / \sqrt{Np(1-p)} \,,
\end{align*}
where the last transition follows from Chebyshev's inequality and \eqref{eq-bound-binomial-difference}.
\end{proof}
Recall \eqref{equ-def-iter-vartheta-varsigma}. Define the targeted approximation error in the $a$-th iteration of the initialization by
    \begin{align}
        \Lambda_{a} = 100^{a} (n \Hat{q})^{-\frac{1}{2}} (\log n)  \vartheta_{a}  \mbox{ for } 0 \leq a \leq \chi   \,.
        \label{equ-def-Lambda}
    \end{align}
\begin{Lemma}
\label{lemma-concentration-iterative-aleph-Upsilon}
    The following hold with probability $1 - o(1)$ for all $0\leq a\leq \chi$: \\
        \noindent (1) $\Big| \frac{|\aleph^{(a)}_k|}{n} - \vartheta_{a} \Big|, \Big| \frac{|\Upsilon^{(a)}_k|}{n} - \vartheta_a \Big| \leq \Lambda_{a}$ for $1 \leq k \leq K_0$; \\
        \noindent (2) $\Big| \frac{|\aleph^{(a)}_k \cap \aleph^{(a)}_l|}{n} - \vartheta_a^2 \Big| , \Big| \frac{|\Upsilon^{(a)}_k \cap \Upsilon^{(a)}_l|}{n} - \vartheta_a^2 \Big|, \Big| \frac{|\pi(\aleph^{(a)}_k) \cap \Upsilon^{(a)}_l|}{n} - \vartheta_a^2 \Big| \leq \Lambda_{a} $ for $1 \leq k \neq l \leq K_0$; \\
        \noindent (3) $\Big| \frac{|\pi(\aleph^{(a)}_k) \cap \Upsilon^{(a)}_k|}{n} - \varsigma_{a} \Big| \leq \Lambda_{a} $ for $1 \leq k \leq K_0$.
\end{Lemma}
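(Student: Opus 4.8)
The plan is to prove Lemma~\ref{lemma-concentration-iterative-aleph-Upsilon} by induction on $a$, exactly mirroring the structure of the recursions \eqref{equ-def-iter-aleph-Upsilon} and \eqref{equ-def-iter-vartheta-varsigma}. The base case $a=0$ is trivial since $|\aleph^{(0)}_k| = |\Upsilon^{(0)}_k| = 1$ and $\vartheta_0 = 1/n$, while (2) and (3) hold vacuously (the sets are singletons of distinct seed vertices, whose $\pi$-images are again the seed vertices by the good-pair assumption). For the inductive step, suppose (1)--(3) hold at level $a-1$. The key observation is that $\aleph^{(a)}_k$ is obtained by revealing, for each vertex $v \notin \cup_{j \leq a-1}\aleph^{(j)}_\bullet$, the indicator of whether $\overrightarrow G_{v,u}=1$ for some $u \in \aleph^{(a-1)}_k$; conditionally on $\aleph^{(a-1)}_k$ (and on the sets revealed so far), for each fixed $v$ this is a fresh Bernoulli trial with success probability $1 - (1-\Hat q)^{|\aleph^{(a-1)}_k|}$, which by $n\Hat q\vartheta_{a-1} = o(1)$ is $\sim \Hat q |\aleph^{(a-1)}_k| \approx n\Hat q \vartheta_{a-1} = \vartheta_a(1+o(1))$ (cf. \eqref{equ-def-iter-vartheta-varsigma} and Lemma~\ref{lemma-property-vartheta-varsigma}). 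I would first condition on a realization of the filtration $\sigma(\aleph^{(j)}_k, \Upsilon^{(j)}_k : j \leq a-1, 1\leq k\leq K_0)$ that is \emph{typical} in the sense that the level-$(a-1)$ conclusions hold, and then apply a concentration inequality to the conditionally independent (across $v$) Bernoulli sum counting $|\aleph^{(a)}_k|$.

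For part (1): on the good event at level $a-1$, $|\aleph^{(a)}_k|$ is a sum of $n - O(K_0 n \vartheta_{a-1})$ conditionally independent Bernoulli's with parameter $p_v \in [\,\Hat q|\aleph^{(a-1)}_k|(1-o(1)),\, \Hat q |\aleph^{(a-1)}_k|\,]$; its conditional mean is $n\vartheta_a(1+o(1))$ and its variance is $O(n\vartheta_a)$, so Bernstein's inequality (Lemma~\ref{lemma-Bernstein-inequality}) with deviation $s = \tfrac{1}{2}n\Lambda_a = \tfrac12 100^a (n\Hat q)^{-1/2}(\log n)\,n\vartheta_a$ gives failure probability $\exp(-\Omega((\log n)^2))$ (using $\vartheta_a = \Theta(n^{-1}(n\Hat q)^a)$ and that $\Lambda_a^2 n / \vartheta_a \gtrsim (\log n)^2$, which needs $100^{2a}(\log n)^2 \gg \log n$ — trivially true; one must also absorb the $O(K_0 n\vartheta_{a-1})$ missing terms, which contribute a bias $O(K_0\vartheta_a)\ll \Lambda_a$). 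A union bound over $k \leq K_0$, over $\aleph$ vs. $\Upsilon$, and over $a \leq \chi$ (note $\chi = O(1)$ since $\chi \leq \frac{1}{1-\alpha}$) keeps the total failure probability $o(1)$. The $\Upsilon^{(a)}_k$ bound is identical by symmetry.

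For parts (2) and (3): here one counts, for fixed $k \neq l$, the number of $v$ lying in both $\aleph^{(a)}_k$ and $\aleph^{(a)}_l$ — conditionally on level $a-1$ this is again a sum of conditionally independent indicators, where the $v$-th indicator has probability $(1-(1-\Hat q)^{|\aleph^{(a-1)}_k|})(1-(1-\Hat q)^{|\aleph^{(a-1)}_l|}) \approx \vartheta_a^2$ \emph{provided} $\aleph^{(a-1)}_k$ and $\aleph^{(a-1)}_l$ are nearly disjoint so the two neighborhood events are conditionally independent; the level-$(a-1)$ bound $|\aleph^{(a-1)}_k \cap \aleph^{(a-1)}_l| \leq n\vartheta_{a-1}^2 + n\Lambda_{a-1} = o(n\vartheta_{a-1})$ controls the correction. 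Bernstein then gives the deviation $n\Lambda_a$ bound since the conditional mean is $n\vartheta_a^2(1+o(1))$ and the variance is $O(n\vartheta_a^2) = o(n\vartheta_a)$. Part (3), the cross term $|\pi(\aleph^{(a)}_k) \cap \Upsilon^{(a)}_k|$, is the most delicate: $v \in \pi(\aleph^{(a)}_k)$ iff $\pi^{-1}(v)\in\aleph^{(a)}_k$, an event about $\overrightarrow G$-edges, while $v \in \Upsilon^{(a)}_k$ is an event about $\overrightarrow{\mathsf G}$-edges, and these are correlated only through the $\Hat\rho$-correlation between $\overrightarrow G_{u,w}$ and $\overrightarrow{\mathsf G}_{\pi(u),\pi(w)}$. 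The conditional probability that a given $v$ contributes is precisely $\mathbb P(X \geq 1, Y \geq 1)$ for $(X,Y) \sim \mathrm{CorBin}(|\aleph^{(a-1)}_k|, |\Upsilon^{(a-1)}_k|, \Hat q; |\pi(\aleph^{(a-1)}_k)\cap\Upsilon^{(a-1)}_k|, \Hat\rho)$, which by definition \eqref{equ-def-iter-vartheta-varsigma} and the level-$(a-1)$ estimate $|\pi(\aleph^{(a-1)}_k)\cap\Upsilon^{(a-1)}_k| = n\varsigma_{a-1}(1+o(1))$ equals $\varsigma_a(1+o(1))$. Here Corollaries~\ref{corollary-difference-CorBernoulli-tail-I} and \ref{corollary-difference-CorBernoulli-tail-II} are exactly the tools that bound the error incurred by replacing the true (slightly-off) parameters $|\aleph^{(a-1)}_k|$, $|\Upsilon^{(a-1)}_k|$, $|\pi(\aleph^{(a-1)}_k)\cap\Upsilon^{(a-1)}_k|$ by their idealized values $n\vartheta_{a-1}$, $n\vartheta_{a-1}$, $n\varsigma_{a-1}$: with $m = n\Lambda_{a-1} = o(n\vartheta_{a-1})$ and $N = n\vartheta_{a-1}$, these corollaries give a parameter-mismatch error of order $\frac{m}{N}\vartheta_a + (\text{lower order}) = O(\Lambda_{a-1}\vartheta_a/\vartheta_{a-1}) = O(\Lambda_a/100)$, comfortably within budget. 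Then Bernstein (now applied to a sum of conditionally independent indicators with parameter $\asymp \varsigma_a \asymp \vartheta_a$) closes the induction.

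\textbf{The main obstacle.} The genuinely subtle point — and the reason the corollaries on binomial-tail differences are proved beforehand — is the propagation of error through the recursion: each level inherits the $a-1$ level's approximation error in the cardinalities $|\aleph^{(a-1)}_k|$, $|\Upsilon^{(a-1)}_k|$, $|\pi(\aleph^{(a-1)}_k)\cap\Upsilon^{(a-1)}_k|$, and one must show this error, once pushed through the (nonlinear) maps $N \mapsto \mathbb P(\mathrm{Bin}(N,\Hat q)\geq 1)$ and $(N,M)\mapsto\mathbb P(\mathrm{CorBin}\geq(1,1))$, does not blow up — i.e. the factor $100^a$ in \eqref{equ-def-Lambda} must dominate both this amplification and the extra $O(K_0\vartheta_a)$ bias from the $\cup_{j\leq a-1}\aleph^{(j)}_\bullet$ exclusions. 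Since in the sparse regime $n\Hat q \vartheta_{a-1} = o(1)$ the relevant tail probabilities scale \emph{linearly} in the cardinality (Poisson/first-moment regime), the maps are effectively linear with slope $\Theta(\Hat q)$, so the relative error is preserved up to a constant; choosing the prefactor $100^a$ (rather than a constant) absorbs that constant per level over the $\chi = O(1)$ levels. A secondary technical care is that the conditioning scheme must be set up so that, conditionally on level $a-1$, the level-$a$ indicators across distinct $v$ are genuinely independent — this holds because $\overrightarrow G_{v,u}$ for $v$ outside the revealed sets and $u$ inside are fresh, but one must be careful that forming $\aleph^{(a)}_k$ for all $k$ simultaneously reveals a bit more (an edge $\overrightarrow G_{v,u}$ with $u$ in two different $\aleph^{(a-1)}$'s), which is handled by the near-disjointness from level $a-1$.
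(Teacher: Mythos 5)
Your proposal is correct and follows essentially the same approach as the paper's proof: induction on the level $a$, conditioning on the filtration up to level $a-1$, reducing each cardinality to a sum of conditionally i.i.d.\ Bernoulli indicators over $v\notin\mathrm{REV}^{(a-1)}$, controlling the per-indicator parameter mismatch via Lemma~\ref{lemma-difference-Bernoulli-tail-I} and Corollary~\ref{corollary-difference-CorBernoulli-tail-I}, and closing with Bernstein and a union bound over $k$, $l$, and the $O(1)$ levels. You also correctly identify the two subtleties the paper addresses implicitly: the bias from excluding revealed vertices (absorbed since $|\mathrm{REV}^{(a-1)}|\ll n\Lambda_a$) and the geometric amplification of relative error through the near-linear maps $N\mapsto\mathbb P(\mathrm{Bin}(N,\hat q)\geq1)$, which is exactly what the $100^a$ prefactor in $\Lambda_a$ is designed to absorb.
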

\begin{proof}
The proof is by induction on $a$. The base case for $a=0$ is trivial. Now suppose that Items (1), (2) and (3) hold up to some $a \leq \chi-1$ and we wish to prove that (1), (2) and (3) hold with probability $1-o(1)$ for $a+1$. To this end, applying \eqref{eq-def-chi} and Lemma~\ref{lemma-property-vartheta-varsigma} we have $\vartheta_a = \Theta( n^{-1} (n \Hat{q})^{\chi-1} ) \ll \frac{1}{n\Hat{q}}$. Recall the definition of $\mathrm{REV}^{(a)}$ as in \eqref{eq-def-REV}, which records the collection of vertices explored by our algorithm. By the induction hypothesis we know 
\begin{align}\label{eq-REV-approximation}
   |\mathrm{REV}^{(a)}| \leq 4K_0 \vartheta_a n \ll \Lambda_{a+1} n \,,
\end{align}
where the last transition follows from Lemma~\ref{lemma-property-vartheta-varsigma}.
Thus, it suffices to control the concentration of $|\aleph^{(a+1)}_k \setminus \mathrm{REV}^{(a)}|$ in order to control that for $|\aleph^{(a+1)}_k|$. Note that
\begin{align*}
    \frac{|\aleph^{(a+1)}_k \setminus \mathrm{REV}^{(a)}|}{n} - \vartheta_{a+1} = \frac{1}{n} \sum_{ u \in V \setminus \mathrm{REV}^{(a)} }  \Big( \mathbf{1}_{ \{ u \in \aleph^{(a+1)}_k \} } - \vartheta_{a+1} \Big) + O(\vartheta_{a+1} \vartheta_a) \,,
\end{align*}
where $\vartheta_{a+1} \vartheta_a \ll \vartheta_{a+1}/ n\Hat{q} \ll \Lambda_{a+1}$ by Lemma~\ref{lemma-property-vartheta-varsigma}. Since the indicators in the above sum are measurable with respect to $\{ \overrightarrow{G}_{v,u} : v \in \aleph^{(a)}_k\}$, we see that conditioned on a realization of $\{ \aleph^{(a)}_k, \Upsilon^{(a)}_k \}$ we have that $\{ \mathbf{1}_{ \{ u \in \aleph^{(a+1)}_k \} } : u \in V \setminus \mathrm{REV}^{(a)}\}$ is a collection of i.i.d.\ Bernoulli random variables with parameter given by
\begin{align*}
    p^{(a+1)}_{k} = \mathbb{P}\big( u \in \aleph^{(a+1)}_k \big) = \mathbb{P}\big( \mathrm{Bin} (|\aleph^{(a)}_k|, \Hat{q}) \geq 1 \big) \,.
\end{align*}
By the induction hypothesis, we have $\big| |\aleph^{(a)}_k| - n \vartheta_a \big| \leq n \Lambda_{a}$. Combined with Lemma~\ref{lemma-difference-Bernoulli-tail-I}, it yields that
\begin{align}
    & \big| \vartheta_{a+1} - p^{(a+1)}_{k} \big| \leq \big| \mathbb{P} \big(\mathrm{Bin} (n\vartheta_a, \Hat{q}) \geq 1 \big) - \mathbb{P} \big( \mathrm{Bin} (n\vartheta_a + n \Lambda_{a}, \Hat{q}) \geq 1 \big) \big| \nonumber \\
    \leq \ & \frac{ 2 n \Lambda_{a} }{ n \vartheta_a }  \mathbb{P}( \mathrm{Bin} (n\vartheta_a, \Hat{q}) \geq 1 ) \overset{\eqref{equ-def-iter-vartheta-varsigma}, \eqref{equ-def-Lambda}}{\leq} \frac{1}{10} \Lambda_{a+1} \,.
    \label{equ-bound-p-a+1-minus-vartheta-a+1}
\end{align}
Thus, we may apply  Lemma~\ref{lemma-Bernstein-inequality} and get that
\begin{align}
    & \mathbb{P}\Big( \Big| \frac{|\aleph^{(a+1)}_k|}{n} - \vartheta_{a+1} \Big| > \Lambda_{a+1} \Big) \overset{\eqref{eq-REV-approximation}}{\leq} \mathbb{P} \Big( \Big| \frac{|\aleph^{(a+1)}_k \setminus \mathrm{REV}^{(a)}|}{n} - \vartheta_{a+1} \Big| > \frac{9}{10} \Lambda_{a+1} \Big) \nonumber \\
    \overset{\eqref{equ-bound-p-a+1-minus-vartheta-a+1}} {\leq} & \mathbb{P} \Big( \frac{1}{n} \Big|  \mathrm{Bin}(n-|\mathrm{REV}^{(a)}|, p^{(a+1)}_{k}) - ( n-|\mathrm{REV}^{(a)}| ) p^{(a+1)}_{k} \Big| > \frac{1}{2} \Lambda_{a+1} \Big) \nonumber \\
    \leq \ & 2 \exp \Big\{ - \frac{ (\frac{1}{2} n \Lambda_{a+1})^2 }{ 2 (n p^{(a+1)}_{k} +  n\Lambda_{a+1}/3 ) } \Big\} \leq 2 \exp \{ - \Hat{q}^{-1} \vartheta_{a+1} (\log n)^2 \} \,.
    \label{equ-concentration-result-1-a>0}
\end{align}
Similar results hold for $\Upsilon^{(a+1)}_k$. We now move to Item (2). Similarly, conditioned on a realization of $\{ \aleph^{(a)}_k, \Upsilon^{(a)}_k \}$, we have that
\begin{align*}
    \frac{ |\aleph^{(a+1)}_k \cap \aleph^{(a+1)}_l \setminus \mathrm{REV}^{(a)}| }{n} = \frac{1}{n} \sum_{u \in V \setminus \mathrm{REV}^{(a)}} \mathbf{1}_{ \{ u \in \aleph^{(a+1)}_k \cap \aleph^{(a+1)}_l \} }
\end{align*}
is a (normalized) sum of i.i.d.\ Bernoulli random variables with parameter give by
\begin{align*}
    p^{(a+1)}_{k,l} = & \mathbb{P} \Big(u \in \aleph^{(a+1)}_k \cap \aleph^{(a+1)}_l \Big) = \mathbb{P} \Big( \sum_{ w \in \aleph^{(a)}_k } \overrightarrow{G}_{w,u} , \sum_{ w \in \aleph^{(a)}_l } \overrightarrow{G}_{w,u} \geq 1 \Big) \,, 
\end{align*}
where $\big( \sum_{w \in \aleph^{(a)}_k} \overrightarrow{G}_{w,u} , \sum_{w \in \aleph^{(a)}_l} \overrightarrow{G}_{w,u} \big) \overset{d}{=} \mathrm{CorBin}(|\aleph^{(a)}_k|, |\aleph^{(a)}_l|, \Hat{q} ; |\aleph^{(a)}_k \cap \aleph^{(a)}_l|, \Hat{\rho} )$. 
By the induction hypothesis we have $\big| |\aleph^{(a)}_k|- n\vartheta_a \big|, \big| |\aleph^{(a)}_k|- n\vartheta_a \big|, \big| |\aleph^{(a)}_k \cap \aleph^{(a)}_l| - n \vartheta_a^2 \big| \leq n \Lambda_{a}$. In addition,  by Lemma~\ref{lemma-property-vartheta-varsigma}  we have  $\vartheta_a^2 \leq \vartheta_a / n \Hat{q} \ll \Lambda_a$ and thus  $|\aleph^{(a)}_k \cap \aleph^{(a)}_l| \leq 1.1 n \Lambda_a$. Combined with Corollary~\ref{corollary-difference-CorBernoulli-tail-I}, these yield that
\begin{align*}
    \Big| \mathbb{P}\Big(u \in \aleph^{(a+1)}_k \cap \aleph^{(a+1)}_l \Big) - \vartheta_{a+1}^2 \Big| \leq \frac{5 \Lambda_{a}}{\vartheta_{a}} \vartheta_{a+1} \leq \frac{1}{10} \Lambda_{a+1} \,.
\end{align*}
Applying Lemma~\ref{lemma-Bernstein-inequality} again, we get that
\begin{align}
    & \mathbb{P} \Big(\Big| \frac{|\aleph^{(a+1)}_k \cap \aleph^{(a+1)}_l|}{n} - \vartheta_{a+1}^2 \Big| > \Lambda_{a+1} \Big)  \nonumber \\
    \leq \ & \mathbb{P}\Big( \frac{1}{n} \Big| \mathrm{Bin}( n-|\mathrm{REV}^{(a)}|, p^{(a+1)}_{k,l} ) - (n-|\mathrm{REV}^{(a)}|) p^{(a+1)}_{k,l} \Big| > \frac{1}{2} \Lambda_{a+1} \Big)  \nonumber \\
    \leq \ & 2 \exp \{ - \Hat{q}^{-1} \vartheta_{a+1} (\log n)^2 \} \,.
    \label{equ-concentration-result-2-a>0}
\end{align}
The terms $ \frac{|\aleph^{(a+1)}_k \cap \aleph^{(a+1)}_l|}{n}, \frac{|\Upsilon^{(a+1)}_k \cap \Upsilon^{(a+1)}_l|}{n} $ can be bounded in the same way. We next turn to Item (3). In order to bound $\big| \frac{|\aleph^{(a+1)}_k \cap \Upsilon^{(a+1)}_k|}{n} - \varsigma_{a+1} \big|$, note that 
\begin{align*}
    \frac{|\aleph^{(a+1)}_k \cap \Upsilon^{(a+1)}_k \setminus \mathrm{REV}^{(a)}|}{n} - \varsigma_{a+1} 
    = \frac{1}{n} \sum_{ u \in V \setminus \mathrm{REV}^{(a)} }  \Big( \mathbf{1}_{ \{ u \in \aleph^{(a+1)}_k \cap \Upsilon^{(a+1)}_k \} } - \varsigma_{a+1} \Big) + O(\varsigma_{a+1} \vartheta_{a}) \,.
\end{align*}
Given a realization of $\{ \aleph^{(a)}_k, \Upsilon^{(a)}_k \}$, we have $\{\mathbf{1}_{ \{ u \in \aleph^{(a+1)}_k \cap \Upsilon^{(a+1)}_k \}}\}$ is a collection of i.i.d.\ Bernoulli variables, with parameter (by Lemma~\ref{corollary-difference-CorBernoulli-tail-I} and the induction hypothesis again)
\begin{align*}
     \Big| \mathbb{P}\Big( u \in \aleph^{(a+1)}_k \cap \Upsilon^{(a+1)}_k \Big) - \varsigma_{a+1} \Big| =  \Big| \mathbb{P} ( X,Y \geq 1) - \varsigma_{a+1} \Big| \leq \frac{ 5 \Lambda_{a} }{ \vartheta_{a} } \vartheta_{a+1} \leq \frac{1}{10} \Lambda_{a+1}  \,,
\end{align*}
where $(X, Y) \sim \mathrm{CorBin}(|\aleph^{(a)}_k|, |\Upsilon^{(a)}_k|, q; |\aleph^{(a)}_k \cap \Upsilon^{(a)}_k|; \rho)$.
By Lemma~\ref{lemma-Bernstein-inequality} again, we get that
\begin{align}
    & \mathbb{P} \Big(\Big| \frac{|\aleph^{(a+1)}_k \cap \Upsilon^{(a+1)}_k|}{n} - \varsigma_{a+1} \Big| > \Lambda_{a+1} \Big) \nonumber \\
    \leq \ & \mathbb{P} \Big( \Big| \frac{|\aleph^{(a+1)}_k \cap \Upsilon^{(a+1)}_k \setminus \mathrm{REV}^{(a)}|}{n} - \varsigma_{a+1} \Big| > \frac{9}{10} \Lambda_{a+1} \Big) \nonumber \\
    \leq \ & 2 \exp \Big\{ - \frac{ (\frac{1}{2} \Lambda_{a+1} n)^2 }{ 2(n \varsigma_{a+1} + n \Lambda_{a+1}/3 )}  \Big\} \leq 2 \exp \{ - \Hat{q}^{-1} \vartheta_{a+1} (\log n)^2 \} \,.
    \label{equ-concentration-result-3-a>0}
\end{align}
Combining \eqref{equ-concentration-result-1-a>0}, \eqref{equ-concentration-result-2-a>0}, and \eqref{equ-concentration-result-3-a>0} and applying a union bound, we see that (assuming (1), (2) and (3) hold for $a$) the conditional probability for (1), (2) or (3) to fail for $a+1$ is at most $2K_0^2 \exp\{ - (\log n)^2 \}$ since $\vartheta_{a+1} \geq \vartheta_{1} = \Hat{q}$. Therefore, we complete the proof of Lemma \ref{lemma-concentration-iterative-aleph-Upsilon} by induction (note that $\chi = O(1)$).
\end{proof}

\begin{Lemma}
\label{lemma-E-0-holds}
We have $\P(\mathcal{E}_{0}) = 1-o(1)$.
\end{Lemma}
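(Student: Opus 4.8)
The plan is to carry out one more step of the neighbourhood-exploration analysis behind Lemma~\ref{lemma-concentration-iterative-aleph-Upsilon}, now passing from $\aleph^{(\chi)}_k,\Upsilon^{(\chi)}_k$ to $\Gamma^{(0)}_k=\aleph^{(\chi+1)}_k$ and $\Pi^{(0)}_k=\Upsilon^{(\chi+1)}_k$ as in \eqref{equ-def-initial-set}; the genuinely new feature is that the threshold $\mathtt{d}_\chi$ may now exceed $1$. First I would condition on the event $\mathcal{H}$ of Lemma~\ref{lemma-concentration-iterative-aleph-Upsilon} (of probability $1-o(1)$), so that $\big||\aleph^{(\chi)}_k|-\vartheta_\chi n\big|$ and $\big||\Upsilon^{(\chi)}_k|-\vartheta_\chi n\big|$ are at most $n\Lambda_\chi$, the intersections $|\aleph^{(\chi)}_k\cap\aleph^{(\chi)}_l|$ and $|\pi(\aleph^{(\chi)}_k)\cap\Upsilon^{(\chi)}_l|$ ($k\ne l$) are within $n\Lambda_\chi$ of $\vartheta_\chi^2 n$, and $|\pi(\aleph^{(\chi)}_k)\cap\Upsilon^{(\chi)}_k|$ is within $n\Lambda_\chi$ of $\varsigma_\chi n$. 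Using $|\mathrm{REV}|\le 4K_0\vartheta_\chi n$ together with Lemma~\ref{lemma-property-vartheta-varsigma} and the choice of $\chi$ (which forces $\vartheta_\chi=o(e^{-(\log\log n)^{100}})$), one checks in both regimes $\vartheta=\Theta(1)$ and $\vartheta=\Theta\big(n^{-1}(n\Hat{q})^{\chi+1}\big)$ that $|\mathrm{REV}|/n\ll\vartheta\Delta_0$, so $\mathrm{REV}$ contributes negligibly to all quantities in Definition~\ref{def-admissible}(i)--(iii) and it suffices to handle $v\in V\setminus\mathrm{REV}$. For such $v$ the edges $\{\overrightarrow{G}_{v,u}:u\in\aleph^{(\chi)}_k\}$ and $\{\overrightarrow{\mathsf{G}}_{\pi(v),\mathsf{u}}:\mathsf{u}\in\Upsilon^{(\chi)}_k\}$ were not inspected while building levels $\le\chi$ (the $\aleph^{(a)}_k$ at distinct levels are disjoint), hence conditionally on a realization in $\mathcal{H}$ the families $\{\mathbf{1}_{v\in\Gamma^{(0)}_k}\}$, $\{\mathbf{1}_{v\in\Gamma^{(0)}_k\cap\Gamma^{(0)}_l}\}$ and $\{\mathbf{1}_{v\in\Gamma^{(0)}_k,\,\pi(v)\in\Pi^{(0)}_l}\}$ (and the $\Pi$-side and $k=l$ analogues) are i.i.d.\ in $v$, with Bernoulli parameters equal respectively to the $\mathtt{d}_\chi$-level upper tails $\mathbb{P}(\mathrm{Bin}(|\aleph^{(\chi)}_k|,\Hat{q})\ge\mathtt{d}_\chi)$, $\mathbb{P}\big(\mathrm{CorBin}(|\aleph^{(\chi)}_k|,|\aleph^{(\chi)}_l|,\Hat{q};|\aleph^{(\chi)}_k\cap\aleph^{(\chi)}_l|,\Hat{q}(1-\Hat{q}))\ge(\mathtt{d}_\chi,\mathtt{d}_\chi)\big)$, and $\mathbb{P}\big(\mathrm{CorBin}(|\aleph^{(\chi)}_k|,|\Upsilon^{(\chi)}_l|,\Hat{q};|\pi(\aleph^{(\chi)}_k)\cap\Upsilon^{(\chi)}_l|,\Hat{\rho})\ge(\mathtt{d}_\chi,\mathtt{d}_\chi)\big)$. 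The claim thus reduces to showing (a) each such parameter is within $\tfrac{1}{10}\vartheta\Delta_0$ of its target $\vartheta$, $\vartheta^2$, or $\varsigma$ from \eqref{equ-def-vartheta-varsigma}, after which (b) Bernstein's inequality handles the fluctuations.

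For (a) the dichotomy is whether $\vartheta_\chi n\Hat{q}=o(1)$ or $\vartheta_\chi n\Hat{q}=\Omega(1)$. In the first case $\mathtt{d}_\chi=1$ and $\vartheta=\vartheta_{\chi+1}=\Theta(\vartheta_\chi n\Hat{q})$, and since the parameters differ from those defining $\vartheta$, $\varsigma$ by at most $n\Lambda_\chi\ll\vartheta_\chi n$, Lemma~\ref{lemma-difference-Bernoulli-tail-I} and Corollary~\ref{corollary-difference-CorBernoulli-tail-I} (using monotonicity of these tails in their parameters to absorb the unequal shifts) give a \emph{relative} bound: the bias is $\lesssim\frac{\Lambda_\chi}{\vartheta_\chi}\cdot\vartheta\asymp 100^\chi(n\Hat{q})^{-1/2}(\log n)\cdot\vartheta$, which is $\ll\vartheta\Delta_0$ because $(n\Hat{q})^{-1/2}\log n=n^{-(1-\alpha)/2+o(1)}\ll e^{-(\log\log n)^{10}}\asymp\Delta_0$. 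In the second case $\vartheta=\varsigma=\Theta(1)$, so we must make the bias $\ll e^{-(\log\log n)^{10}}$ without Poissonization; here the sensitivity of the $\mathtt{d}_\chi$-level Binomial/CorBin tails to a perturbation of size $n\Lambda_\chi$ in the parameters is controlled by Lemma~\ref{lemma-difference-Bernoulli-tail-II} and Corollary~\ref{corollary-difference-CorBernoulli-tail-II}, which crucially do not require the mean to vanish. Optimizing the free parameter $\epsilon$ in those estimates, the bias is $\lesssim(n\Hat{q})^{-c}(\log n)^{O(1)}+(\log n)^{O(1)}(n\Hat{q})^{-1/2}\sqrt{\vartheta_\chi n\Hat{q}}$ for some constant $c>0$; the first term is polynomially small, and for the second we use $(n\Hat{q})^{-1/2}\sqrt{\vartheta_\chi n\Hat{q}}=(n\Hat{q})^{\chi/2}n^{-1/2}=o\big(e^{-(\log\log n)^{100}/2}\big)$, which is exactly what the choice of $\chi$ in \eqref{eq-def-chi} buys. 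Hence the bias is $\ll e^{-(\log\log n)^{10}}\asymp\vartheta\Delta_0$. One extra point is needed for the $k\ne l$ statements in Definition~\ref{def-admissible}(ii) and the second half of (iii): there the correlated-coordinate count ($|\aleph^{(\chi)}_k\cap\aleph^{(\chi)}_l|$ or $|\pi(\aleph^{(\chi)}_k)\cap\Upsilon^{(\chi)}_l|$) is $\le(\vartheta_\chi^2+\Lambda_\chi)n$, i.e.\ a lower-order fraction of the margins; writing the CorBin tail as $\mathbb{E}_A[f(A)g(A)]$ with $A$ the common Binomial part and $f,g$ monotone maps obeying $|f(a)-f(a+1)|\le\|p\|_\infty$ (the maximal point mass of the relevant Binomial), the Lipschitz variance bound $\mathrm{Var}(f(A))\le\|p\|_\infty^2\,\mathrm{Var}(A)$ and $\|p\|_\infty^2\lesssim(1\vee\vartheta_\chi n\Hat{q})^{-1}$ give $|\mathbb{E}_A[f(A)g(A)]-\mathbb{E}_A[f(A)]\,\mathbb{E}_A[g(A)]|=O(\vartheta_\chi)=o\big(e^{-(\log\log n)^{100}}\big)$, so this tail is within $o(\vartheta\Delta_0)$ of $\vartheta_{\chi+1}^2=\vartheta^2$.

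Granting (a), for (b) I would apply Bernstein's inequality (Lemma~\ref{lemma-Bernstein-inequality}) to each of the $O(K_0^2)$ normalized sums of $n-|\mathrm{REV}|$ i.i.d.\ Bernoulli summands: the parameter is $\asymp\vartheta$ and we deviate by $\tfrac{1}{2}\vartheta\Delta_0$, so the failure probability is $\le 2\exp\{-c\,n\vartheta\Delta_0^2\}\le 2\exp\{-n^{1-o(1)}\}$, using $n\vartheta\ge n\vartheta_{\chi+1}=\Omega\big(ne^{-(\log\log n)^{100}}\big)$ and $\Delta_0^2=e^{-2(\log\log n)^{10}}K_0^{200}=n^{-o(1)}$. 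A union bound over these $O(1)$ constraints together with $\mathbb{P}(\mathcal{H})=1-o(1)$ then gives $\mathbb{P}(\mathcal{E}_0)=1-o(1)$.

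The main obstacle is the regime $\vartheta_\chi n\Hat{q}=\Omega(1)$, where $\vartheta=\Theta(1)$: the crude Poisson-type estimates that suffice in Lemma~\ref{lemma-concentration-iterative-aleph-Upsilon} no longer help, and one must squeeze out of the refined tail-difference estimates (Lemma~\ref{lemma-difference-Bernoulli-tail-II}, Corollary~\ref{corollary-difference-CorBernoulli-tail-II}), together with the covariance estimate for the overlap terms, a bias smaller than $e^{-(\log\log n)^{10}}$ — which is precisely why $\chi$ is chosen so small in \eqref{eq-def-chi} that $\vartheta_\chi=o(e^{-(\log\log n)^{100}})$, i.e.\ only a vanishing fraction of edges is revealed during initialization.
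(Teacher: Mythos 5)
Your proposal follows essentially the same route as the paper's proof: condition on the event of Lemma~\ref{lemma-concentration-iterative-aleph-Upsilon}, check $|\mathrm{REV}|\ll n\vartheta\Delta_0$, view the counts over $V\setminus\mathrm{REV}$ as conditionally i.i.d.\ Bernoulli sums whose parameters are $\mathtt{d}_\chi$-level (Cor)Binomial tails, bound the parameter bias via the tail-difference estimates, and finish with Bernstein plus a union bound over the $O(K_0^2)$ constraints. Your only deviations — invoking Lemma~\ref{lemma-difference-Bernoulli-tail-I}/Corollary~\ref{corollary-difference-CorBernoulli-tail-I} in the sparse regime and adding an explicit Lipschitz-covariance bound for the $k\neq l$ overlap terms — are harmless refinements of steps the paper handles uniformly with Lemma~\ref{lemma-difference-Bernoulli-tail-II} and Corollary~\ref{corollary-difference-CorBernoulli-tail-II} (with $\epsilon=\vartheta\Delta_0$), so the argument is correct.
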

\begin{proof}
By Lemma~\ref{lemma-concentration-iterative-aleph-Upsilon}, with probability $1-o(1)$ we have 
$$ |\mathrm{REV}| \leq 4K_0 n \vartheta_{\chi} \ll n \vartheta \Delta_{0} \,. $$
Here the last inequality can be derived as follows: from  Lemma~\ref{lemma-property-vartheta-varsigma}, we have either $\vartheta=\Theta(1)$ or $\vartheta_{\chi} \leq n^{-\alpha+o(1)}$. In addition, when $\vartheta=\Theta(1)$ we have (recall \eqref{equ-def-vartheta-varsigma})
$$ n \vartheta \Delta_{0} = \Theta( n \Delta_0 ) \overset{\eqref{equ-def-delta}}{\gg} n e^{-(\log \log n)^{100}} \overset{\eqref{eq-def-chi}}{\gg} 4K_0 n \vartheta_{\chi} \,; $$
And when $\vartheta_{\chi} \leq n^{-\alpha+o(1)}$ we have
$$ 4K_0 n \vartheta_{\chi} = n^{1-\alpha+o(1)} \ll n e^{-(\log \log n)^{200}} \overset{\eqref{eq-def-chi}, \eqref{equ-def-delta}}{\ll} n \vartheta \Delta_{0} \,. $$
Provided with the preceding bound on $|\mathrm{REV}|$, it suffices to analyze $\frac{|\Gamma^{(0)}_k \setminus \mathrm{REV}|}{n}$, whose conditional distribution given $\{ \aleph^{(a)}_k, \Upsilon^{(a)}_k: 1\leq k\leq K_0, 0\leq a\leq \chi\}$ is the same as the sum of $n-|\mathrm{REV}|$ i.i.d.\ Bernoulli variables with parameter given by
\begin{align*}
    \mathbb{P}( u \in \Gamma^{(0)}_k ) = \mathbb{P}( \mathrm{Bin}(|\aleph^{(\chi)}_k|,\Hat{q}) \geq \mathtt{d}_{\chi} ) \,.
\end{align*}
By Lemma \ref{lemma-concentration-iterative-aleph-Upsilon} again, we have $\Big| |\aleph^{(\chi)}_k|- \vartheta_{\chi} n \Big| \leq n \Lambda_{\chi}$ with probability $1-o(1)$. Provided with this and applying Lemma~\ref{lemma-difference-Bernoulli-tail-II} (with $\epsilon=\vartheta \Delta_0$), we get that
\begin{align*}
    | \mathbb{P}( u \in \Gamma^{(0)}_k ) - \vartheta|
    \leq & | \mathbb{P} ( \mathrm{Bin}(\vartheta_{\chi} n + n \Lambda_{\chi} ,\Hat{q}) \geq \mathtt{d}_{\chi} ) - \mathbb{P} (\mathrm{Bin}(\vartheta_{\chi} n, \Hat{q}) \geq \mathtt{d}_{\chi} )  | \\
    \leq & (\vartheta \Delta_0)^2 + (\log n) \frac{ n \Lambda_{\chi} \Hat{q} + 2 (\vartheta \Delta_0)^{-1} \sqrt{ n \Lambda_{\chi} \Hat{q} }  }{ \sqrt{\vartheta_{\chi} n \Hat{q}} } \ll \vartheta \Delta_0\,,
\end{align*}
where we used \eqref{equ-def-delta} and the inequalities $\frac{ n \Lambda_{\chi} \Hat{q}  }{ \sqrt{\vartheta_{\chi} n \Hat{q}} } = 100^\chi\sqrt{\vartheta_{\chi}} \log n \ll \vartheta \Delta_{0}/\log n$ (by Lemma~\ref{lemma-property-vartheta-varsigma}) as well as $\frac{ (\vartheta \Delta_0)^{-1} \sqrt{ n \Lambda_{\chi} \Hat{q} }  }{ \sqrt{\vartheta_{\chi} n \Hat{q}} } = (n \Hat{q})^{- \frac{1}{4}} (\vartheta \Delta_0)^{-1} \ll \vartheta \Delta_0/\log n$. By Lemma~\ref{lemma-Bernstein-inequality}, 
\begin{align}
    & \mathbb{P} \Big(\Big| \frac{|\Gamma^{(0)}_k|}{n} - \vartheta \Big| > \vartheta \Delta_{0} \Big) = \mathbb{P} \Big( \Big| \frac{|\Gamma^{(0)}_k \setminus \mathrm{REV}|}{n} - \vartheta \Big| > \frac{9}{10} \vartheta \Delta_{0}  \Big)  \nonumber\\
    \leq \ & \mathbb{P} \Big(  \Big| \frac{1}{n} \mathrm{Bin}( n-|\mathrm{REV}|, \vartheta+o( \vartheta \Delta_0 ) ) - \vartheta \Big| > \frac{9}{10} \vartheta \Delta_{0}  \Big) \nonumber \\
    \leq \ & 2 \exp \Big\{ - \frac{ ( \frac{1}{2} \vartheta \Delta_0 n )^2 }{ 2 (n \vartheta +  \vartheta \Delta_0 n /3 ) }  \Big\} \leq 2 \exp \{ -2 \vartheta \Delta_0^2 n \} \label{eq-Gamma-0-concentration}\,.
\end{align}
We can obtain the concentration for $\frac{|\Pi^{(0)}_k|}{n}$, $\frac{|\Gamma^{(0)}_k \cap \Gamma^{(0)}_l|}{n}$, $\frac{|\Pi^{(0)}_k \cap \Pi^{(0)}_l|}{n}$ and $\frac{| \pi(\Gamma^{(0)}_k) \cap \Pi^{(0)}_l|}{n}$ similarly. For instance, for $\frac{|\pi(\Gamma^{(0)}_k) \cap \Pi^{(0)}_l|}{n}$, we note that given $\{ \aleph^{(a)}_k, \Upsilon^{(a)}_k: 1 \leq k \leq K_0, 0 \leq a \leq \chi\}$,
\begin{align*}
    \frac{|\Gamma^{(0)}_k \cap \pi^{-1}(\Pi^{(0)}_l) \setminus \mathrm{REV}|}{n} = \sum_{ u \in V \setminus \mathrm{REV} } \mathbf{1}_{ \{ u \in \Gamma^{(0)}_k \cap \pi^{-1}(\Pi^{(0)}_l) \} } 
\end{align*}
is a sum of i.i.d.\ Bernoulli  variables with parameter given by
\begin{align*}
    \mathbb{P}(u \in \Gamma^{(0)}_k \cap \pi^{-1}(\Pi^{(0)}_l) ) = \mathbb{P}(  \mathrm{CorBin}( |\aleph^{(\chi)}_k|, |\Upsilon^{(\chi)}_l|, \Hat{q}; |\aleph^{(0)}_k \cap \pi^{-1}(\Upsilon^{(0)}_l)| , \Hat{\rho} ) \geq (\mathtt{d}_{\chi},\mathtt{d}_{\chi})  )\,.
\end{align*}
By Lemma~\ref{lemma-concentration-iterative-aleph-Upsilon}, we have $\big| |\aleph^{(\chi)}_k| - n \vartheta_{\chi} \big|, \big| |\Upsilon^{(\chi)}_k| - n \vartheta_{\chi} \big|, \big| |\aleph^{(0)}_k \cap \pi^{-1}(\Upsilon^{(0)}_l)| - n \varsigma_{\chi} \big| \leq n \Lambda_{\chi}$ with probability $1-o(1)$. Provided with this and applying Corollary~\ref{corollary-difference-CorBernoulli-tail-II} again (with $\epsilon = \vartheta \Delta_0$), we get that $\mathbb{P}( u \in \Gamma^{(0)}_k \cap \pi^{-1}(\Pi^{(0)}_l) )= \varsigma + o(\vartheta \Delta_0)$. Thus we can obtain a similar concentration bound using Lemma~\ref{lemma-Bernstein-inequality}. We omit further details due to similarity. 

By \eqref{eq-Gamma-0-concentration} (and its analogues) and a union bound, we deduce that
\begin{equation}\label{eq-mathcal-E-0-bound}
    \mathbb{P} (\mathcal{E}_0^{c}) \leq 20 K_0^2 \exp \{ -2 \vartheta \Delta_0^2 n \} = o(1) \,,
\end{equation}
where for the last step we recalled \eqref{equ-def-delta} and Lemma~\ref{lemma-property-vartheta-varsigma}. This completes the proof.
\end{proof}

\subsection{Density comparison}
\label{sec:density-compare}
Our proof of the admissibility along the iteration relies on a direct comparison of the smoothed Bernoulli density and the Gaussian density, which then allows us to  use the techniques developed in \cite{DL22+} for correlated Gaussian Wigner matrices. Recall \eqref{eq-def-mathfrak-S-t}, \eqref{equ-def-g-Tilde-D} and \eqref{eq-def-mathcal-F-t}. Our main result in this subsection is Lemma~\ref{lemma-bound-conditional-unbiased-density} below, and we need to introduce more notations before its statement.

For a random variable $X$ and a $\sigma$-field $\mathcal F$, we denote by $p_{\{X\mid \mathcal F\}}$ the conditional density of $X$ given $\mathcal F$. For a realization $\Xi_t = \{\xi^{(s)}_k, \zeta^{(s)}_k : s \leq t, 1\leq k\leq K_s \}$ of $\{ \Gamma^{(s)}_k, \Pi^{(s)}_k : s \leq t, 1\leq k\leq K_s \}$ and a realization $\mathrm{B}_{t-1}$ of $\mathrm{BAD}_{t-1}$, we define vector-valued functions $\varphi^{(s)}_{v} (\Xi_t, \mathrm{B}_{t-1}), \psi^{(s)}_{v} (\Xi_t, \mathrm{B}_{t-1})$ for $v \not \in \mathrm{B}_{t-1}$ and $0\leq s\leq t$, where for $1\leq k\leq K_s$ the $k$-th component is given by  
\begin{equation}\label{eq-def-vector-value-function-1}
\begin{aligned}
    \varphi^{(s)}_{v, k} (\Xi_t, \mathrm{B}_{t-1}) &= \frac{1}{\sqrt{ (\mathfrak{a}_s - \mathfrak{a}_s^2) n \Hat{q}(1-\Hat{q}) }} \sum_{u \not \in \mathrm{B}_{t-1}} (\mathbf{1}_{u \in \xi^{(s)}_k}-\mathfrak{a}_s) (\overrightarrow{G}_{v,u}-\Hat{q}) \,, \\
    \psi^{(s)}_{v, k} (\Xi_t, \mathrm{B}_{t-1}) &= \frac{1}{\sqrt{ (\mathfrak{a}_s - \mathfrak{a}_s^2) n \Hat{q}(1-\Hat{q}) }} \sum_{u \not \in \mathrm{B}_{t-1}} (\mathbf{1}_{u \in \xi^{(s)}_k}-\mathfrak{a}_s) \overrightarrow{Z}_{v,u}\,.
\end{aligned}
\end{equation}
Similarly, we define $\varphi^{(s)}_{\pi(v)} (\Xi_t, \mathrm{B}_{t-1}), \psi^{(s)}_{\pi(v)} (\Xi_t, \mathrm{B}_{t-1})$ where for $1 \leq k \leq K_s$ the $k$-th component is given by  
\begin{equation}\label{eq-def-vector-value-function-2}
\begin{aligned}
    \varphi^{(s)}_{\pi(v), k} (\Xi_t, \mathrm{B}_{t-1}) &= \frac{1}{\sqrt{ (\mathfrak{a}_s - \mathfrak{a}_s^2) n \Hat{q}(1-\Hat{q}) }} \sum_{u \not \in \mathrm{B}_{t-1}} (\mathbf{1}_{\pi(u) \in \zeta^{(s)}_k}-\mathfrak{a}_s) (\overrightarrow{\mathsf G}_{\pi(v), \pi(u)}-\Hat{q}) \,, \\
    \psi^{(s)}_{\pi(v), k} (\Xi_t, \mathrm{B}_{t-1}) &= \frac{1}{\sqrt{ (\mathfrak{a}_s - \mathfrak{a}_s^2) n \Hat{q}(1-\Hat{q}) }} \sum_{u \not \in \mathrm{B}_{t-1}} (\mathbf{1}_{\pi(u) \in \zeta^{(s)}_k}-\mathfrak{a}_s) \overrightarrow{\mathsf Z}_{\pi(v), \pi(u)}\,.
\end{aligned}
\end{equation}
In addition, define $\mathrm{B}_t= \mathrm{B}_t ( \Xi_{t}, \mathrm{B}_{t-1}, \overrightarrow{G}, \overrightarrow{\mathsf{G}}, W, \mathsf{W} )$ to be the corresponding realization of $\mathrm{BAD}_t$, i.e., $\mathrm{B}_t$ is the collection of vertices satisfying either of \eqref{equ-def-set-BIAS}, \eqref{equ-def-set-LARGE-(0)}, \eqref{equ-def-set-LARGE} and \eqref{equ-def-set-PRB} with $(\Gamma^{(s)}_k, \Pi^{(s)}_k)$ replaced by $(\xi^{(s)}_k, \zeta^{(s)}_k)$ and $\mathrm{BAD}_{t-1}$ replaced by $\mathrm{B}_{t-1}$. Define a random vector $\mathbf{X}^{\leq t} = \mathbf{X}^{ \leq t } (\mathrm{B}_t, \mathrm{B}_{t-1})$ by 
\begin{align*}
    \mathbf{X}^{\leq t} (s,k,v) = W^{(s)}_v (k) + \langle \eta^{(s)}_k, \varphi^{(s)}_v \rangle \mbox{ and } \mathbf{X}^{\leq t} (s,k,\pi(v)) = \mathsf{W}^{(s)}_{\pi(v)} (k) + \langle \eta^{(s)}_k, \varphi^{(s)}_{\pi(v)} \rangle
\end{align*}
where $0 \leq s \leq t, 1 \leq k \leq \frac{K_s}{12}$, and $v \not \in \mathrm{B}_{t-1}$ when $s < t$, and $v \not \in \mathrm{B}_{t}$ when $s=t$. Define $\mathbf{Y}^{\leq t}$ similarly by replacing $\varphi^{(s)}_v, \varphi^{(s)}_{\pi(v)}$ with $\psi^{(s)}_v, \psi^{(s)}_{\pi(v)}$ and replacing $W,\mathsf{W}$ with $\Tilde{W},\Tilde{\mathsf{W}}$. Let $\mathbf X^{=t}$ be the vector obtained from $\mathbf X^{\leq t}$ by keeping its coordinates with $s = t$, and let $\mathbf X^{<t}$ be the vector obtained from $\mathbf X^{\leq t}$ by keeping its coordinates with $s <t$. Define $\mathbf{Y}^{=t}$ and $\mathbf{Y}^{<t}$ with respect to $\mathbf{Y}^{\leq t}$ similarly. We also define $\Hat{\mathbf{X}}^{\leq t} (s,k,v) = \mathbf{X}^{\leq t} (s,k,v) - W^{(s)}_v (k)$ and $\Hat{\mathbf{Y}}^{\leq t} (s,k,v) = \mathbf{Y}^{\leq t} (s,k,v) - \Tilde{W}^{(s)}_v (k)$. Also, define $(\overrightarrow{G}_{\mathrm{B}}, \overrightarrow{\mathsf{G}}_{\mathrm{B}}) = (\overrightarrow{G}_{u,w}, \overrightarrow{\mathsf{G}}_{\pi(u),\pi(w)} : u \mbox{ or } w \in \mathrm{B}_{t-1})$, and define $(\overrightarrow{G}_{\setminus \mathrm{B}}, \overrightarrow{\mathsf{G}}_{\setminus \mathrm{B}}) = (\overrightarrow{G}_{u,w}, \overrightarrow{\mathsf{G}}_{\pi(u),\pi(w)} : u,w \not \in \mathrm{B}_{t-1})$. Denote by $(\overrightarrow{g}_{\mathrm{B}}, \overrightarrow{\mathsf{g}}_{\mathrm{B}})$  the realization of $(\overrightarrow{G}_{\mathrm{B}}, \overrightarrow{\mathsf{G}}_{\mathrm{B}})$. For any fixed realization $(\Xi_t, \mathrm{B}_{t}, \mathrm{B}_{t-1}, \overrightarrow{g}_{\mathrm{B}}, \overrightarrow{\mathsf{g}}_{\mathrm{B}})$, we further define $\mathtt{p}_{ \{ \mathbf{X}^{\leq t} \} } (x^{\leq t}) = p( {x}^{\leq t}, \Xi_t, \mathrm{B}_t, \mathrm{B}_{t-1}, \overrightarrow{g}_{\mathrm{B}}, \overrightarrow{\mathsf{g}}_{\mathrm{B}} )$ to be the conditional density as follows:
\begin{align*}
    \mathtt{p}_{ \{ \mathbf{X}^{\leq t} \} } (x^{\leq t}) = p_{ \{ \mathbf{X}^{\leq t}  \mid \mathrm{BAD}_t = \mathrm{B}_t; \mathrm{BAD}_{t-1} = \mathrm{B}_{t-1}; (\overrightarrow{G}_{\mathrm{B}}, \overrightarrow{\mathsf{G}}_{\mathrm{B}}) = (\overrightarrow{g}_{\mathrm{B}}, \overrightarrow{\mathsf{g}}_{\mathrm{B}}) \} } (x^{\leq t}) \,,
\end{align*}
where the support of $x^{\leq t}$ is consistent with the choice of $(\mathrm{B}_{t}, \mathrm{B}_{t-1})$ (i.e., $x^{\leq t}$ is a legitimate realization for $\mathbf{X}^{\leq t} = \mathbf{X}^{\leq t}(\mathrm{B}_{t}, \mathrm{B}_{t-1})$). 
Define $\mathtt{p}_{ \{ \mathbf{Y}^{\leq t} \} } (  {x}^{\leq t} )$ similarly but with respect to $\mathbf{Y}^{\leq t}$. For the purpose of truncation later, we say a realization $( \mathrm{B}_t, \mathrm{B}_{t-1} )$ for $( \mathrm{BAD}_t, \mathrm{BAD}_{t-1} )$ is an \emph{amenable} set-realization, if 
\begin{equation}
    \mathbb{P}( \mathrm{BAD}_t = \mathrm{B}_t, \mathrm{BAD}_{t-1} = \mathrm{B}_{t-1} ) \geq \exp \{ - n \Delta_t^{9} \}  \,. \label{equ-def-good-realization-LARGE}
\end{equation}
Also, we say $(\overrightarrow{g}_{\mathrm{B}}, \overrightarrow{\mathsf{g}}_{\mathrm{B}})$ is an \emph{amenable} bias-realization with respect to $(\mathrm{B}_t, \mathrm{B}_{t-1})$, if
\begin{equation}
    \mathbb{P}( \mathrm{BAD}_t = \mathrm{B}_t, \mathrm{BAD}_{t-1} = \mathrm{B}_{t-1} | (\overrightarrow{G}_{\mathrm{B}}, \overrightarrow{\mathsf{G}}_{\mathrm{B}}) = (\overrightarrow{g}_{\mathrm{B}}, \overrightarrow{\mathsf{g}}_{\mathrm{B}}) ) \geq \exp \{ - n \Delta_t^{8} \} \,.
    \label{equ-def-good-bias-realization}
\end{equation}
In addition, we say a realization $x^{\leq t} = x^{\leq t}( \mathrm{B}_t, \mathrm{B}_{t-1} )$ for $\mathbf{X}^{\leq t} (\mathrm{B}_t, \mathrm{B}_{t-1})$ is an \emph{amenable} variable-realization with respect to $(\mathrm{B}_t, \mathrm{B}_{t-1})$ and $(\overrightarrow{g}_{\mathrm{B}}, \overrightarrow{\mathsf{g}}_{\mathrm{B}})$, if it is consistent with the choice of $(\mathrm{B}_t,\mathrm{B}_{t-1})$ and $ (\overrightarrow{g}_{\mathrm{B}}, \overrightarrow{\mathsf{g}}_{\mathrm{B}})$, and (below the vector $x^{<t}$ is obtained by keeping the coordinates of $x^{\leq t}$ with $s<t$)
\begin{align}
    & \| x^{\leq t} \|_{\infty} \leq 2 (\log n) n^{\frac{1}{\log \log \log n}}, \label{equ-def-good-realization-mathbf-X-I} \\
    & \frac{ p_{ \{ \mathbf{Y}^{< t} | \mathrm{BAD}_t = \mathrm{B}_t , \mathrm{BAD}_{t-1} = \mathrm{B}_{t-1}, (\overrightarrow{G}_{\mathrm{B}}, \overrightarrow{\mathsf{G}}_{\mathrm{B}}) = (\overrightarrow{g}_{\mathrm{B}}, \overrightarrow{\mathsf{g}}_{\mathrm{B}}) \} } ( x^{< t} ) }{ p_{ \{ \mathbf{X}^{< t} | \mathrm{BAD}_t = \mathrm{B}_t, \mathrm{BAD}_{t-1} = \mathrm{B}_{t-1}, (\overrightarrow{G}_{\mathrm{B}}, \overrightarrow{\mathsf{G}}_{\mathrm{B}}) = (\overrightarrow{g}_{\mathrm{B}}, \overrightarrow{\mathsf{g}}_{\mathrm{B}}) \} } ( x^{< t} ) } \leq \exp \{ n \Delta_t^{10} \} \,,\label{equ-def-good-realization-mathbf-X-II}   \\
    \mbox{and } & \frac{ p_{ \{ \mathbf{Y}^{\leq t} | \mathrm{BAD}_t = \mathrm{B}_t, \mathrm{BAD}_{t-1} = \mathrm{B}_{t-1}, (\overrightarrow{G}_{\mathrm{B}}, \overrightarrow{\mathsf{G}}_{\mathrm{B}}) = (\overrightarrow{g}_{\mathrm{B}}, \overrightarrow{\mathsf{g}}_{\mathrm{B}}) \} } ( x^{\leq t} ) }{ p_{ \{ \mathbf{X}^{\leq t} | \mathrm{BAD}_t = \mathrm{B}_t, \mathrm{BAD}_{t-1} = \mathrm{B}_{t-1}, (\overrightarrow{G}_{\mathrm{B}}, \overrightarrow{\mathsf{G}}_{\mathrm{B}}) = (\overrightarrow{g}_{\mathrm{B}}, \overrightarrow{\mathsf{g}}_{\mathrm{B}}) \} } ( x^{\leq t} ) } \leq \exp \{ n \Delta_t^{10} \} \,.  \label{equ-def-good-realization-mathbf-X-III}
\end{align}

\begin{Lemma} \label{lem-good-set-good-variable-realization}
On the event $\mathcal{T}_t$, with probability $1 - O( \exp \{ - n \Delta_t^{10} \} )$ we have that 
\begin{itemize}
\item $( \mathrm{B}_t, \mathrm{B}_{t-1})$ sampled according to $(\mathrm{BAD}_t, \mathrm{BAD}_{t-1})$ is an amenable set-realization;
\item  $(\overrightarrow{g}_{\mathrm{B}}, \overrightarrow{\mathsf{g}}_{\mathrm{B}})$  sampled according to  $ \{ (\overrightarrow{G}_{\mathrm{B}}, \overrightarrow{\mathsf{G}}_{\mathrm{B}}) | \mathrm{BAD}_t = \mathrm{B}_t, \mathrm{B}_{t-1} = \mathrm{B}_{t-1} \}$  is an amenable bias-realization with respect to $(\mathrm{B}_t, \mathrm{B}_{t-1})$;
\item $x^{\leq t} (\mathrm{B}_t, \mathrm{B}_{t-1})$ sampled according to  $\{ \mathbf{X}^{\leq t} | \mathrm{BAD}_t = \mathrm{B}_t ; \mathrm{BAD}_{t-1} = \mathrm{B}_{t-1}; (\overrightarrow{G}_{\mathrm{B}}, \overrightarrow{\mathsf{G}}_{\mathrm{B}}) = (\overrightarrow{g}_{\mathrm{B}}, \overrightarrow{\mathsf{g}}_{\mathrm{B}})\}$ is an amenable variable-realization with respect to $(\mathrm{B}_t, \mathrm{B}_{t-1})$ and $(\overrightarrow{g}_{\mathrm{B}}, \overrightarrow{\mathsf{g}}_{\mathrm{B}})$.
\end{itemize}
\end{Lemma}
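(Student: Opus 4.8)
The plan is to establish, separately, that each of the three ``amenability'' assertions fails with probability $O(\exp\{-n\Delta_t^{10}\})$ on $\mathcal{T}_t$, and then to conclude by a union bound. The nested conditionings are harmless: sampling $(\mathrm{B}_t,\mathrm{B}_{t-1})$ from $(\mathrm{BAD}_t,\mathrm{BAD}_{t-1})$, then $(\overrightarrow{g}_{\mathrm{B}},\overrightarrow{\mathsf{g}}_{\mathrm{B}})$ from the indicated conditional law, then $x^{\leq t}$ from the further conditional law, reproduces the joint law of $(\mathrm{BAD}_t,\mathrm{BAD}_{t-1},\overrightarrow{G}_{\mathrm{B}},\overrightarrow{\mathsf{G}}_{\mathrm{B}},\mathbf{X}^{\leq t})$ on the original space, so the whole argument can be organized by the tower property. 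The only quantitative inputs are the hierarchy $\Delta_t^{10}\ll\Delta_t^{9}\ll\Delta_t^{8}\ll 1$ and $\Delta_t\log n\to 0$, both consequences of \eqref{eq-bound-Delta-t}.

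The amenable set-realization is handled by counting atoms. On $\mathcal{T}_t$ one has $|\mathrm{BAD}_s|\ll n\vartheta^{10}\Delta_t^{10}\leq n\Delta_t^{10}$ for $s\leq t$, so there are at most $\binom{n}{\leq n\Delta_t^{10}}^2\leq\exp\{O(n\Delta_t^{10}\log n)\}$ possible values of $(\mathrm{BAD}_t,\mathrm{BAD}_{t-1})$, and the total probability carried by those of mass below $\exp\{-n\Delta_t^{9}\}$ is at most $\exp\{O(n\Delta_t^{10}\log n)-n\Delta_t^{9}\}=o(\exp\{-n\Delta_t^{10}\})$ since $\Delta_t\log n\to 0$. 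The amenable bias-realization then follows from Lemma~\ref{lemma-posterior-estimator}: conditioning on $(\mathrm{B}_t,\mathrm{B}_{t-1})$ being amenable gives $\mathbb{P}(A)\geq\exp\{-n\Delta_t^{9}\}$ for $A=\{\mathrm{BAD}_t=\mathrm{B}_t,\mathrm{BAD}_{t-1}=\mathrm{B}_{t-1}\}$, so applying the lemma with observed variable $(\overrightarrow{G}_{\mathrm{B}},\overrightarrow{\mathsf{G}}_{\mathrm{B}})$ and $\epsilon=\exp\{-n\Delta_t^{8}\}/\mathbb{P}(A)\leq\exp\{-\tfrac12 n\Delta_t^{8}\}$ shows that, with conditional probability $\geq 1-\epsilon$, the sampled $(\overrightarrow{g}_{\mathrm{B}},\overrightarrow{\mathsf{g}}_{\mathrm{B}})$ satisfies \eqref{equ-def-good-bias-realization}; and $\epsilon\ll\exp\{-n\Delta_t^{10}\}$ because $\Delta_t^{8}\gg\Delta_t^{10}$.

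For the amenable variable-realization, conditioning further on amenable $(\mathrm{B}_t,\mathrm{B}_{t-1},\overrightarrow{g}_{\mathrm{B}},\overrightarrow{\mathsf{g}}_{\mathrm{B}})$, I would verify \eqref{equ-def-good-realization-mathbf-X-I}--\eqref{equ-def-good-realization-mathbf-X-III}. The bound \eqref{equ-def-good-realization-mathbf-X-I} is essentially deterministic: on the conditioned event a coordinate $\mathbf{X}^{\leq t}(s,k,v)$ equals $W^{(s)}_v(k)+\langle\eta^{(s)}_k,g_{t-1}D^{(s)}_v\rangle$ with $v\notin\mathrm{LARGE}^{(0)}_{t,s,k}$, so \eqref{equ-def-set-LARGE-(0)} with $j=0$ bounds each summand by $n^{1/\log\log\log n}$ and hence $\|x^{\leq t}\|_\infty\leq 2n^{1/\log\log\log n}$. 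The density-ratio bounds \eqref{equ-def-good-realization-mathbf-X-II} and \eqref{equ-def-good-realization-mathbf-X-III} follow from Markov's inequality applied to a likelihood ratio: each coordinate of $\mathbf{X}^{\leq t}$ carries an independent standard-normal summand $W^{(s)}_v(k)$, each coordinate of $\mathbf{Y}^{\leq t}$ carries $\widetilde{W}^{(s)}_v(k)$, and the conditioning (a function of $\overrightarrow{G},W$ only) leaves $\mathbf{Y}^{\leq t}$ a non-degenerate Gaussian on the same index set, so $\mathtt{p}_{\{\mathbf{Y}^{\leq t}\}}$ is a genuine probability density and Markov gives that with probability $\geq 1-\exp\{-n\Delta_t^{10}\}$ (over $x^{\leq t}$ drawn from $\mathtt{p}_{\{\mathbf{X}^{\leq t}\}}$) one has $\mathtt{p}_{\{\mathbf{Y}^{\leq t}\}}(x^{\leq t})/\mathtt{p}_{\{\mathbf{X}^{\leq t}\}}(x^{\leq t})\leq\exp\{n\Delta_t^{10}\}$; the same argument on the $(s<t)$-marginals gives \eqref{equ-def-good-realization-mathbf-X-II}. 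A union bound over these sub-events, together with the tower property, completes the proof.

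The main obstacle is the single gap in the ``deterministic'' argument for \eqref{equ-def-good-realization-mathbf-X-I}: the coordinates of $\mathbf{X}^{\leq t}$ indexed by $s<t$ only require $v\notin\mathrm{B}_{t-1}$, so they may involve vertices that entered $\mathrm{LARGE}_t$ precisely at stage $t$, for which $v\notin\mathrm{LARGE}^{(0)}_{t,s,k}$ can fail. I expect to close this by using the layered structure $\mathrm{LARGE}_t=\bigcup_{a<\log n}\mathrm{LARGE}^{(a)}_t$ (finitely many layers on $\mathcal{T}_t$, since $\mathrm{LARGE}^{(\log n)}_t=\emptyset$) together with the crude bound $|\mathrm{BAD}_t|\ll n\vartheta^{10}\Delta_t^{10}$, reducing the issue to a union bound over a vanishingly small exceptional set; with that bookkeeping done, the entire proof rests on just three elementary ingredients --- counting atoms, Lemma~\ref{lemma-posterior-estimator}, and Markov's inequality for likelihood ratios --- glued together by the slack $\Delta_t^{8}\gg\Delta_t^{9}\gg\Delta_t^{10}$.
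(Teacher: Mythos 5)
Your treatment of the first three bullet points coincides with the paper's own proof: counting the at most $\binom{n}{n\Delta_t^{10}}^2$ possible values of $(\mathrm{BAD}_t,\mathrm{BAD}_{t-1})$ on $\mathcal{T}_t$, invoking Lemma~\ref{lemma-posterior-estimator} for the bias-realization, and applying Markov's inequality to the likelihood ratios (whose conditional expectation is $1$) for \eqref{equ-def-good-realization-mathbf-X-II} and \eqref{equ-def-good-realization-mathbf-X-III} are exactly the arguments used there, and your exponent bookkeeping $\Delta_t^{8}\gg\Delta_t^{9}\gg\Delta_t^{10}$ and $\Delta_t\log n\to 0$ is correct.

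The gap you flag in the verification of \eqref{equ-def-good-realization-mathbf-X-I} is genuine, and the fix you sketch does not close it. First, amenability is an $\ell_\infty$ bound over \emph{all} coordinates of $\mathbf{X}^{\leq t}$, and the coordinates $(s,k,v)$ with $s<t$ and $v\in\mathrm{B}_t\setminus\mathrm{B}_{t-1}$ are part of the vector by definition, so no ``vanishingly small exceptional set'' of coordinates may be discarded; nor can the layered structure of $\mathrm{LARGE}_t$ help for a vertex that lies \emph{in} $\mathrm{LARGE}_t$, since for such $v$ the stage-$t$ thresholds are precisely what fails. A probabilistic patch is also hopeless at this probability scale: a Bernstein/Gaussian tail at level $n^{1/\log\log\log n}$ for a variance-$O(1)$ statistic gives an exponent of order $n^{2/\log\log\log n}$, which is far smaller than $n\Delta_t^{10}=n^{1-o(1)}$ (recall $\Delta_t\geq e^{-(\log\log n)^{10}}$), quite apart from the distortion caused by conditioning on $\{\mathrm{BAD}_t=\mathrm{B}_t\}$. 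The correct mechanism, and the one the paper uses, is deterministic but goes one stage \emph{down} rather than up: for $s<t$ and $v\notin\mathrm{B}_{t-1}$, decompose $\langle\eta^{(s)}_k,g_{t-1}D^{(s)}_v\rangle$ into $\langle\eta^{(s)}_k,g_{t-2}D^{(s)}_v\rangle$ minus the contributions from $\mathrm{LARGE}^{(0)}_{t-1}\cup\mathrm{BIAS}_{t-1}\cup\mathrm{PRB}_{t-1}$ and from each layer $\mathrm{LARGE}^{(a)}_{t-1}$, $1\leq a\leq\log n$. Since $v\notin\mathrm{B}_{t-1}$, the vertex escaped every stage-$(t-1)$ LARGE set, so the $j=0$ instances of \eqref{equ-def-set-LARGE-(0)} and \eqref{equ-def-set-LARGE} (at stage $t-1$) bound each piece by $n^{1/\log\log\log n}$, and on $\mathcal{T}_t$ one has $\mathrm{LARGE}^{(\log n)}_{t-1}=\emptyset$, so there are at most $\log n+3$ pieces; this yields $|\mathbf{X}^{\leq t}(s,k,v)|\leq(3+\log n)\,n^{1/\log\log\log n}$. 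Note that this is exactly why the definition \eqref{equ-def-good-realization-mathbf-X-I} carries the factor $2\log n$ --- a factor your claimed bound $2n^{1/\log\log\log n}$ silently drops, which is a further sign that the stage-$t$ argument you propose is not the intended one.
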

\begin{proof}
We first consider $(\mathrm{B}_t, \mathrm{B}_{t-1})$. On $\mathcal{T}_{t}$ we have $|\mathrm{BAD}_t|, |\mathrm{BAD}_{t-1}|  \leq n \Delta_t^{10} $. Note that $(\mathrm{BAD}_t, \mathrm{BAD}_{t-1})$ are two subsets of $V$ and each has at most $\binom{n}{n \Delta_t^{10} }$ possible values. Thus,
\begin{align*}
    \mathbb P(\mathrm{BAD}_t, \mathrm{BAD}_{t-1} \not\in \{\mbox{amenable set-realization}\}; \mathcal T_t ) \leq \Big(\binom{n}{n \Delta_t^{10} }\Big)^2 \exp \{ - n \Delta_t^9 \} \ll e^{ - \tfrac{1}{2} n \Delta_t^{9} } \,.
\end{align*}
Given an amenable set-realization $(\mathrm{B}_t, \mathrm{B}_{t-1})$, we now consider $(\overrightarrow{g}_{\mathrm{B}}, \overrightarrow{\mathsf{g}}_{\mathrm{B}})$. By Lemma~\ref{lemma-posterior-estimator},
\begin{align*}
    \mathbb{P}_{ (\overrightarrow{g}_{\mathrm{B}}, \overrightarrow{\mathsf{g}}_{\mathrm{B}}) \sim \{ (\overrightarrow{G}_{\mathrm{B}}, \overrightarrow{\mathsf{G}}_{\mathrm{B}}) | \mathrm{BAD}_t = \mathrm{B}_t, \mathrm{B}_{t-1} = \mathrm{B}_{t-1} \} } \big(   (\overrightarrow{g}_{\mathrm{B}}, \overrightarrow{\mathsf{g}}_{\mathrm{B}}) \mbox{ is bias-amenable} \big) \geq 1- e^{ -n\Delta_t^{-9} } \,. 
\end{align*}
Finally we consider $\mathbf{X}^{\leq t}$. Combining Markov's inequality and (below we write $\mathcal{A} = \{ \mathrm{BAD}_t = \mathrm{B}_t ; \mathrm{BAD}_{t-1} = \mathrm{B}_{t-1}; (\overrightarrow{G}_{\mathrm{B}}, \overrightarrow{\mathsf{G}}_{\mathrm{B}}) = (\overrightarrow{g}_{\mathrm{B}}, \overrightarrow{\mathsf{g}}_{\mathrm{B}}) \}$)
\begin{align*}
    \mathbb{E}_{ x^{\leq t} \sim \{ \mathbf{X}^{\leq t} \mid \mathcal{A} \} } \Big[ \frac{ p_{ \{ \mathbf{Y}^{\leq t} \mid \mathcal{A} \} } ( x^{\leq t} ) }{ p_{ \{ \mathbf{X}^{\leq t} \mid \mathcal{A} \} } ( x^{\leq t} ) } \Big] = 1 \,,
\end{align*}
we get that conditioned on $\mathcal A$, the (random) realization $x^{\leq t}$ for $\mathbf{X}^{\leq t}$ satisfies \eqref{equ-def-good-realization-mathbf-X-III} with probability at least $1 - \exp \{ - n \Delta_t^{10} \}$. Similarly, the realization $x^{\leq t}$ satisfies \eqref{equ-def-good-realization-mathbf-X-II} with probability at least $1 - \exp \{ - n \Delta_t^{10} \}$.
In addition, for any $s \leq t-1$ and $v \not \in \mathrm{B}_{t-1}$, recalling that $\mathrm{B}_{t-1}$ is the realization for $\mathrm{BAD}_{t-1}$ and using $\mathcal{T}_{t-1}$, \eqref{equ-def-set-LARGE-(0)} and \eqref{equ-def-set-LARGE}, we derive from the triangle inequality that 
\begin{align}
    | \mathbf{X}^{\leq t}(s,k,v) | & \leq |W^{(s)}_k(v)| + | \langle \eta^{(s)}_k, g_{t-2} \varphi^{(s)}_v \rangle | + \sum_{a=0}^{\log n} | \langle \eta^{(s)}_k, b_{t-1,a} \varphi^{(s)}_v \rangle | \nonumber \\ & 
    \leq (3+\log n) n^{\frac{1}{\log \log \log n}}  \,, \label{equ-bound-degree-time-s<t-not-BAD}
\end{align}
where $g_{t-2} \varphi^{(s)}_v (k)$ denotes the (first) expression \eqref{eq-def-vector-value-function-1} but the summation is taken for all $u \not \in\mathrm{BAD}_{t-2}$, $b_{t-1,0} \varphi^{(s)}_v$ denotes the expression \eqref{eq-def-vector-value-function-1} but the summation is taken for all $u\in \mathrm{LARGE}^{(0)}_{t-1} \cup \mathrm{BIAS}_{t-1} \cup \mathrm{PRB}_{t-1}$, and $b_{t-1,a} \varphi^{(s)}_v$ denotes the expression \eqref{eq-def-vector-value-function-1} but the summation is taken for all $u\in \mathrm{LARGE}^{(a)}_{t-1}$.
We have $| \mathbf{X}^{\leq t}(s,k,\pi(v)) | \leq  (3+\log n) n^{\frac{1}{\log \log \log n}}$ similarly. Since $\mathrm{LARGE}_t \subset \mathrm{B}_t$, for $v\not\in \mathrm{B}_t$ choosing $s=t$ in \eqref{equ-def-set-LARGE-(0)} gives that
\begin{align}
    |\mathbf{X}^{\leq t}(t,k,v)| \leq |W^{(t)}_k(v)| + | \langle \eta^{(t)}_k, \varphi^{(t)}_v \rangle | \leq 2 n^{\frac{1}{\log \log \log n}} \,. \label{equ-bound-degree-time-t-not-in-BAD}
\end{align}
Combining \eqref{equ-bound-degree-time-s<t-not-BAD} and \eqref{equ-bound-degree-time-t-not-in-BAD} we can show that $\{ \mathrm{BAD}_t = \mathrm{B}_t, \mathrm{BAD}_{t-1} = \mathrm{B}_{t-1} \}$ implies that $\| \mathbf{X}^{\leq t} \|_{\infty}, \| \mathbf{X}^{\leq t} (t,k,\pi(v)) \|_{\infty} \leq (3+\log n) n^{ \frac{1}{\log \log \log n} }$. Altogether, we complete the proof of the lemma.
\end{proof}

\begin{Lemma} {\label{lemma-bound-conditional-unbiased-density}}
For $t \leq t^*$, on the event $\mathcal{E}_{t} \cap \mathcal{T}_{t}$, fix an amenable set-realization $(\mathrm{B}_t, \mathrm{B}_{t-1})$, fix an amenable bias-realization $(\overrightarrow{g}_{\mathrm{B}}, \overrightarrow{\mathsf{g}}_{\mathrm{B}})$ with respect to $(\mathrm{B}_t,\mathrm{B}_{t-1})$, and fix an amenable variable-realization $x^{\leq t}$ with respect to $(\mathrm{B}_t,\mathrm{B}_{t-1})$ and $(\overrightarrow{g}_{\mathrm{B}}, \overrightarrow{\mathsf{g}}_{\mathrm{B}})$. Then we have (below the vector $x^{=t}$ is defined by keeping the coordinates of $x^{\leq t}$ such that $s=t$)
\begin{align} \label{eq-Lemma-3.9}
    \frac{  p_{ \{ \mathbf{X}^{=t} | \mathfrak{S}_{t-1}; \mathrm{BAD}_t = \mathrm{B}_t \} } ( x^{=t} ) }{ p_{ \{ \mathbf{Y}^{=t} |  \mathcal{F}_{t-1} \} } ( x^{=t} ) } = \exp \big\{ O \big( n \Delta_t^5 \big) \big\} \,.
\end{align}
\end{Lemma}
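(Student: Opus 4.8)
The plan is to reduce the claimed identity to a comparison of the \emph{joint} densities of the Bernoulli-based vector $\mathbf{X}^{\leq t}$ and the Gaussian-based vector $\mathbf{Y}^{\leq t}$ (conditioned on the fixed amenable data), and then to prove that comparison by a Lindeberg-type interpolation which replaces the centered edge variables $\{\overrightarrow{G}_{u_i,w_i}-\Hat{q},\ldots\}$ by the Gaussians $\{\overrightarrow{Z}_{u_i,w_i},\ldots\}$ one pair at a time, using the operations $\mathbf{O}_{\{u_i,w_i\}}$ of \eqref{equ-def-degree-after-j-substitution}.

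For the reduction, note that with the amenable realizations $\mathrm{B}_{t-1}$, $(\overrightarrow{g}_{\mathrm{B}},\overrightarrow{\mathsf{g}}_{\mathrm{B}})$ and $x^{<t}$ held fixed, conditioning on $\mathfrak{S}_{t-1}$ together with $\{\mathrm{BAD}_t=\mathrm{B}_t\}$ is the same as conditioning on $\mathcal{A}=\{\mathrm{BAD}_t=\mathrm{B}_t,\ \mathrm{BAD}_{t-1}=\mathrm{B}_{t-1},\ (\overrightarrow{G}_{\mathrm{B}},\overrightarrow{\mathsf{G}}_{\mathrm{B}})=(\overrightarrow{g}_{\mathrm{B}},\overrightarrow{\mathsf{g}}_{\mathrm{B}})\}$ and on $\mathbf{X}^{<t}=x^{<t}$, whereas $\mathcal{F}_{t-1}$ fixes $\mathbf{Y}^{<t}=x^{<t}$ and the extra conditioning on $\overrightarrow{G}$ leaves the $\mathbf{Y}$-world untouched once the sets $\Xi_t,\mathrm{B}_t$ are fixed. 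The chain rule for conditional densities then gives
\begin{equation*}
\frac{p_{\{\mathbf{X}^{=t}\mid\mathfrak{S}_{t-1};\,\mathrm{BAD}_t=\mathrm{B}_t\}}(x^{=t})}{p_{\{\mathbf{Y}^{=t}\mid\mathcal{F}_{t-1}\}}(x^{=t})}=\frac{p_{\{\mathbf{X}^{\leq t}\mid\mathcal{A}\}}(x^{\leq t})}{p_{\{\mathbf{Y}^{\leq t}\mid\mathcal{A}\}}(x^{\leq t})}\cdot\frac{p_{\{\mathbf{Y}^{<t}\mid\mathcal{A}\}}(x^{<t})}{p_{\{\mathbf{X}^{<t}\mid\mathcal{A}\}}(x^{<t})}\,,
\end{equation*}
so it suffices to show that each factor on the right is $\exp\{O(n\Delta_t^5)\}$; the second factor is handled exactly as the first, applied only to the $(<t)$ coordinates, and its error is of the same (indeed smaller) order since $\Delta_{t-1}<\Delta_t$. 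Finally, since $\{\mathrm{BAD}_t=\mathrm{B}_t\}$ depends on the edge variables, I first strip it off by Bayes' rule: $p_{\{\mathbf{X}^{\leq t}\mid\mathcal{A}\}}$ is the density of $\mathbf{X}^{\leq t}$ restricted to this event divided by $\mathbb{P}(\mathrm{BAD}_t=\mathrm{B}_t\mid\mathrm{BAD}_{t-1}=\mathrm{B}_{t-1},(\overrightarrow{G}_{\mathrm{B}},\overrightarrow{\mathsf{G}}_{\mathrm{B}})=(\overrightarrow{g}_{\mathrm{B}},\overrightarrow{\mathsf{g}}_{\mathrm{B}}))$, which is $\geq\exp\{-n\Delta_t^{8}\}$ by the amenable bias-realization \eqref{equ-def-good-bias-realization} (and likewise $\geq\exp\{-n\Delta_t^{9}\}$ for the set-realization \eqref{equ-def-good-realization-LARGE}); these Bayes factors, and their analogues along the interpolation, cost at most $\exp\{O(n\Delta_t^{8})\}=\exp\{o(n\Delta_t^{5})\}$.

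It thus remains to carry out the Lindeberg interpolation for the \emph{restricted} densities of $\mathbf{X}^{\leq t}$ and $\mathbf{Y}^{\leq t}$. The key structural fact is that each coordinate of $\mathbf{X}^{\leq t}$ carries its own independent standard Gaussian ($W^{(s)}_v(k)$ or $\mathsf{W}^{(s)}_{\pi(v)}(k)$), so the density at the fixed point $x^{\leq t}$ is a standard-Gaussian convolution of a linear functional of the edge variables, hence smooth with derivatives controlled by polynomials of the argument. Conditioning on all edge variables except a pair $\{u_j,w_j\}$ and Taylor-expanding the smoothed density in that pair's block, the first-order term vanishes and the second-order term agrees between the Bernoulli and the Gaussian edge laws (matching means, variances, and the $\Hat{\rho}$-covariances built into $\overrightarrow{Z},\overrightarrow{\mathsf{Z}}$). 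The third-order remainder is, relative to the conditional density, of order $n^{o(1)}\cdot\mathbb{E}\|L^{(j)}\|_2^3$, where $L^{(j)}$ denotes the contribution of $\{u_j,w_j\}$ to the whole vector: it touches only $\mathrm{poly}(\log n)$ coordinates, with coefficients $O(\mathrm{poly}(\log n)/\sqrt{\vartheta n\Hat{q}})$, and the centered Bernoulli third absolute moment is $O(\Hat{q})$, so the per-swap error is $n^{o(1)}\vartheta^{-3/2}n^{-3/2}\Hat{q}^{-1/2}$. The $n^{o(1)}$ prefactor here --- accounting for both the polynomial growth of the Gaussian-density derivatives and the density ratio $\phi(\cdot)/p_{\{\mathbf{X}^{\leq t}\mid\cdots\}}$ --- is admissible precisely because on $\{\mathrm{BAD}_t=\mathrm{B}_t\}$ every relevant partial degree $\langle\eta^{(s)}_k,\varphi^{(s)}_v\rangle_{\langle j'\rangle}$ (and its $\pi$-version) for $v\notin\mathrm{B}_t$ and for \emph{every} interpolation time $j'$ is at most $n^{1/\log\log\log n}$; this is exactly what removing $\mathrm{LARGE}_t=\bigcup_a\mathrm{LARGE}^{(a)}_t$ (a finite union on $\mathcal{T}_t$, where $\mathrm{LARGE}^{(\log n)}_t=\emptyset$) achieves, through the $\langle j'\rangle$-truncations written into \eqref{equ-def-set-LARGE-(0)} and \eqref{equ-def-set-LARGE}, together with $\|x^{\leq t}\|_{\infty}\leq 2(\log n)n^{1/\log\log\log n}$ from \eqref{equ-def-good-realization-mathbf-X-I} and the density bounds \eqref{equ-def-good-realization-mathbf-X-II}--\eqref{equ-def-good-realization-mathbf-X-III}. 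Summing the per-swap errors over the $N=\Theta(n^2)$ pairs gives a total of $n^{o(1)}(n/\Hat{q})^{1/2}=n^{(1+\alpha)/2+o(1)}$, which since $\alpha<1$ is $\ll n\,e^{-(\log\log n)^{10}}\leq n\Delta_t^{5}$; collecting this with the Bayes factors and the second factor proves \eqref{eq-Lemma-3.9}.

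The main obstacle is the tension inside the Lindeberg step: one must condition on the \emph{specific} realization $\{\mathrm{BAD}_t=\mathrm{B}_t\}$ --- an event depending on the very variables being swapped and, in general, of only sub-exponentially small probability --- while simultaneously keeping the interpolation remainder negligible at all $N$ intermediate configurations, so that atypical configurations cannot simply be discarded as one would when bounding a transportation distance. Resolving this is precisely what the machinery of Section~\ref{sec:proof-outline} is built for: isolate the atypical vertices into $\mathrm{LARGE}_t\cup\mathrm{BIAS}_t\cup\mathrm{PRB}_t\subseteq\mathrm{BAD}_t$; show via the amenability notions and Lemma~\ref{lem-good-set-good-variable-realization} that for typical realizations the resulting conditioning costs only $\exp\{O(n\Delta_t^{8})\}$; and bake the $\langle j'\rangle$-truncations into the very definition of $\mathrm{LARGE}^{(a)}_t$ so that the good region survives the whole interpolation. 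The hypothesis $\alpha<1$ also enters essentially here, through $(n/\Hat{q})^{1/2}=n^{(1+\alpha)/2+o(1)}\ll n\Delta_t^{5}$.
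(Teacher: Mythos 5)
Your proposal is correct and follows essentially the same route as the paper: (i) a chain‑rule/Bayes reduction that rewrites the conditional density ratio in \eqref{eq-Lemma-3.9} as the quotient of two \emph{joint} density ratios (one on $x^{\leq t}$, one on $x^{<t}$), each conditioned on the amenable data $\mathcal{A}$; (ii) control of the Bayes factors via the amenability bounds \eqref{equ-def-good-realization-LARGE}--\eqref{equ-def-good-bias-realization}; and (iii) a Lindeberg swap over the $N$ edge pairs, with per‑swap Taylor bounds that exploit the Gaussian smoothing from $W^{(s)}_v(k)$, the matched first and second moments, and the $\langle j\rangle$‑truncations encoded in $\mathrm{LARGE}_t$ together with the $\ell_\infty$ bound \eqref{equ-def-good-realization-mathbf-X-I}. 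This is exactly the structure of the paper's Lemmas~\ref{lemma-joint-density-ratio-I}--\ref{lemma-substitute-one-group}: the factorization you write is the identity $p_{\{\mathbf{X}^{=t}|\mathfrak{S}_{t-1},\mathrm{BAD}_t=\mathrm{B}_t\}}=\mathtt{p}_{\{\mathbf{X}^{\leq t}\}}/\mathtt{p}_{\{\mathbf{X}^{<t}\}}$ combined with Remark~\ref{rem-Y-conditioning-B-t} on the $\mathbf{Y}$ side; your observation that the density at $x^{\leq t}$ is a Gaussian convolution whose derivatives are polynomially controlled is the content of Lemma~\ref{lemma-bound-difference-exp-moments}; and your per‑swap estimate, modulo the $n^{o(1)}$ bookkeeping (the paper carries $\vartheta^{-3}$ via $\|\Sigma^{-1}_{[j-1]}\|_\infty\lesssim K_t^5\vartheta^{-1}$, you wrote $\vartheta^{-3/2}$, but both are indeed $n^{o(1)}$), gives the same total $n^{o(1)}\sqrt{n/\Hat{q}}\ll n\Delta_t^5$ after summing over $N=\Theta(n^2)$ pairs, exactly as in Lemma~\ref{lemma-bound-truncated-unconditional-density-ratio}. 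One small notational slip: the ``$\phi(\cdot)$'' you use for the smoothed density should not be confused with the bivariate tail function $\phi$ of \eqref{equ-def-func-phi}; otherwise the argument lines up with the paper's proof.
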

\begin{Remark}\label{rem-Y-conditioning-B-t}
Since $\{ \mathrm{BAD}_t = \mathrm{B}_t \}$ is measurable with respect to $( \overrightarrow{G}, \overrightarrow{\mathsf{G}}, W, \mathsf{W} )$ and thus is independent of $(\overrightarrow{Z}, \overrightarrow{\mathsf{Z}}, \Tilde{W}, \Tilde{\mathsf{W}})$, we have that ${p}_{ \{ \mathbf{Y}^{\leq t} \} } (x^{\leq t}) ={p}_{ \{ \mathbf{Y}^{\leq t} | \mathrm{BAD}_t = \mathrm{B}_t \} } (x^{\leq t})$. That is, we may add conditioning on $\mathrm{BAD}_{t} = \mathrm{B}_t$ in the denominator of \eqref{eq-Lemma-3.9}, but this will not change its conditional density.
\end{Remark}
The key to the proof of Lemma~\ref{lemma-bound-conditional-unbiased-density} is the following bound on the ``joint'' density. 

\begin{Lemma} {\label{lemma-joint-density-ratio-I}}
For $t \leq t^*$, on the event $\mathcal{E}_{t} \cap \mathcal{T}_{t}$, for an amenable set-realization $(\mathrm{B}_t, \mathrm{B}_{t-1})$, an amenable bias-realization $( \overrightarrow{g}_{\mathrm{B}}, \overrightarrow{\mathsf{g}}_{\mathrm{B}} )$ with respect to $(\mathrm{B}_t, \mathrm{B}_{t-1})$ and an amenable variable-realization $x^{\leq t}$ with respect to $(\mathrm{B}_t, \mathrm{B}_{t-1})$ and $( \overrightarrow{g}_{\mathrm{B}}, \overrightarrow{\mathsf{g}}_{\mathrm{B}} )$, we have 
\begin{equation} \label{equ-def-ratio-joint-density}
    \frac{\mathtt{p}_{ \{ \mathbf{X}^{\leq t} \} } ( x^{\leq t}  ) }{ \mathtt{p}_{ \{ \mathbf{Y}^{\leq t} \} } ( x^{\leq t} )}, \frac{ \mathtt{p}_{ \{ \mathbf{X}^{<t} \} } ( x^{<t} ) }{ \mathtt{p}_{ \{ \mathbf{Y}^{<t} \} } ( x^{<t} ) } = \exp \big\{ O ( n \Delta_t^5 ) \big\}  \,.
\end{equation}
\end{Lemma}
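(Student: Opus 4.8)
The plan is to run a Lindeberg replacement over the edge blocks, after first paying a controlled price to strip off the conditioning on $\{\mathrm{BAD}_t=\mathrm{B}_t\}$. Throughout I regard $\Xi_t$, $\mathrm{B}_{t-1}$, $\mathrm{B}_t$, together with all the deterministic objects read off from the conditioning (the vectors $\eta^{(s)}_k,\beta^{(s)}_k$ and the matrices $\mathrm{M}^{(s,r)}$), as fixed; conditionally on these, both $\mathbf{X}^{\leq t}$ and $\mathbf{Y}^{\leq t}$ are affine images, under a \emph{common} linear map $L$, of independent inputs. Indeed, grouping the centered edge variables $\{\overrightarrow{G}_{u,w}-\Hat{q},\ \overrightarrow{\mathsf{G}}_{\pi(u),\pi(w)}-\Hat{q}:u,w\notin\mathrm{B}_{t-1}\}$ into the four-entry blocks indexed by the unordered pairs $\{u,w\}$ with $u,w\notin\mathrm{B}_{t-1}$, one reads off from \eqref{eq-def-vector-value-function-1}--\eqref{eq-def-vector-value-function-2} that $\mathbf{X}^{\leq t}=L\mathbf{G}+\mathbf{W}$ and $\mathbf{Y}^{\leq t}=L\mathbf{Z}+\Tilde{\mathbf{W}}$, where $\mathbf{W},\Tilde{\mathbf{W}}$ are the full-dimensional standard Gaussian smoothing vectors, $\mathbf{G}$ collects the Bernoulli blocks and $\mathbf{Z}$ the corresponding Gaussian blocks. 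By the computation of $\mathbb{E}[\overrightarrow{G}_{u,v}\overrightarrow{\mathsf{G}}_{\pi(u),\pi(v)}]$ in Section~\ref{sec:preprocessing} and the construction of $(\overrightarrow{Z},\overrightarrow{\mathsf{Z}})$ in Section~\ref{sec:proof-outline}, each block of $\mathbf{G}$ and the matching block of $\mathbf{Z}$ share their mean (zero) and their covariance; the smoothing term $\mathbf{W}$ makes all the densities below smooth and everywhere positive.

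First I would remove the conditioning on $\mathrm{BAD}_t$. Since $\mathbf{X}^{\leq t}$ depends only on the free blocks and on $\mathbf{W}$, and the free blocks are independent of $(\overrightarrow{G}_{\mathrm{B}},\overrightarrow{\mathsf{G}}_{\mathrm{B}})$, Bayes' rule gives
\[
\mathtt{p}_{\{\mathbf{X}^{\leq t}\}}(x^{\leq t})
=\frac{\mathbb{P}\big(\mathrm{BAD}_t=\mathrm{B}_t,\ \mathrm{BAD}_{t-1}=\mathrm{B}_{t-1}\ \big|\ \mathbf{X}^{\leq t}=x^{\leq t},\ \overrightarrow{G}_{\mathrm{B}}=\overrightarrow{g}_{\mathrm{B}}\big)}{\mathbb{P}\big(\mathrm{BAD}_t=\mathrm{B}_t,\ \mathrm{BAD}_{t-1}=\mathrm{B}_{t-1}\ \big|\ \overrightarrow{G}_{\mathrm{B}}=\overrightarrow{g}_{\mathrm{B}}\big)}\,p_{\{\mathbf{X}_{(0)}\}}(x^{\leq t}),
\]
where $p_{\{\mathbf{X}_{(0)}\}}$ is the \emph{unconditional} density of $\mathbf{X}^{\leq t}=L\mathbf{G}+\mathbf{W}$ (that is, of $\mathbf{X}_{(0)}$ in the notation of \eqref{equ-def-degree-after-j-substitution}); by the amenability \eqref{equ-def-good-bias-realization} of the bias-realization the denominator is at least $e^{-n\Delta_t^{8}}$, whence $\mathtt{p}_{\{\mathbf{X}^{\leq t}\}}(x^{\leq t})\leq e^{n\Delta_t^{8}}p_{\{\mathbf{X}_{(0)}\}}(x^{\leq t})$. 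The reverse inequality $\mathtt{p}_{\{\mathbf{X}^{\leq t}\}}(x^{\leq t})\geq e^{-n\Delta_t^{10}}\mathtt{p}_{\{\mathbf{Y}^{\leq t}\}}(x^{\leq t})$ is exactly \eqref{equ-def-good-realization-mathbf-X-III} combined with Remark~\ref{rem-Y-conditioning-B-t} (which identifies $\mathtt{p}_{\{\mathbf{Y}^{\leq t}\}}$ with the \emph{unconditional} Gaussian density). Hence \eqref{equ-def-ratio-joint-density} is reduced to the unconditional estimate $p_{\{\mathbf{X}_{(0)}\}}(x^{\leq t})\leq\exp\{O(n\Delta_t^{5})\}\,\mathtt{p}_{\{\mathbf{Y}^{\leq t}\}}(x^{\leq t})$, and the identical reduction (now using \eqref{equ-def-good-realization-mathbf-X-II}) handles the $x^{<t}$-statement, whose process is the restriction of $\mathbf{X}^{\leq t}$ to the coordinates with $s<t$.

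The core is a Lindeberg telescoping over the free pairs $\{u_1,w_1\},\dots,\{u_N,w_N\}$ ($N\leq\binom{n}{2}$), in the order fixing the operations $\mathbf{O}_{\{u_i,w_i\}}$. With $\mathbf{X}_{(j)}$ as in \eqref{equ-def-degree-after-j-substitution} --- Gaussian blocks on the first $j$ pairs, Bernoulli blocks elsewhere, so $\mathbf{X}_{(0)}$ has the unconditional density above and $\mathbf{X}_{(N)}=\mathbf{Y}^{\leq t}$ --- I would write $\big|\log p_{\{\mathbf{X}_{(0)}\}}(x^{\leq t})-\log\mathtt{p}_{\{\mathbf{Y}^{\leq t}\}}(x^{\leq t})\big|\leq\sum_{j=1}^{N}\big|\log p_{\{\mathbf{X}_{(j-1)}\}}(x^{\leq t})-\log p_{\{\mathbf{X}_{(j)}\}}(x^{\leq t})\big|$. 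At step $j$ the two laws differ only in the block indexed by $\{u_j,w_j\}$, which has dimension at most four; conditioning on the remaining blocks --- whose value contributes a shift $c$ to the mean --- and projecting onto the column space of the corresponding piece $L_j$ of $L$, the exact Gaussian translation identity $\gamma(y-v)=\gamma(y)\exp\{\langle y,v\rangle-\tfrac12\|v\|^2\}$ (with $\gamma$ the standard Gaussian density of the appropriate dimension) turns the conditional density ratio into the ratio of the moment-generating-type quantities $\mathbb{E}_{B_j}\big[\exp\{\langle y_{\parallel},L_j B_j\rangle-\tfrac12\|L_j B_j\|^2\}\big]$ and its Gaussian analogue, where $y_{\parallel}$ is the projection of $x^{\leq t}-c$ onto the column space of $L_j$. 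A third-order Taylor expansion of the exponentials makes the zeroth-, first- (vanishing by centering), and second-order contributions cancel between the two sides, because the matched blocks share their first two moments; the third-order remainder is controlled because the $(n\Hat{q})^{-1/2}$ normalization in \eqref{eq-def-vector-value-function-1}--\eqref{eq-def-vector-value-function-2} together with the cardinality bounds that hold on $\mathcal{E}_t$ force the columns of $L_j$ to have their expected small norm, and because the truncation $\|x^{\leq t}\|_{\infty}\leq 2(\log n)n^{1/\log\log\log n}$ from the amenable variable-realization \eqref{equ-def-good-realization-mathbf-X-I} --- engineered precisely so as to control, via the sets $\mathrm{LARGE}^{(a)}_t$ in \eqref{equ-def-set-LARGE-(0)}--\eqref{equ-def-set-LARGE}, not merely $x^{\leq t}$ but also all of the partial sums with subscript $\langle j\rangle$ occurring in the expansions --- keeps the relevant inner products polynomially small along \emph{typical} realizations of the remaining blocks. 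Summing the per-step bounds over the $\leq\binom{n}{2}$ pairs then gives the desired $O(n\Delta_t^{5})$ (recall \eqref{equ-def-delta}).

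I expect the main difficulty --- and the reason for the elaborate truncation machinery set up before this lemma --- to be twofold. First, along \emph{atypical} realizations of the remaining blocks the shift $c$, hence $y=x^{\leq t}-c$, is large, the Taylor bound degenerates, and one cannot simply discard such realizations because the entire argument concerns conditional \emph{densities} rather than probabilities; the remedy is to combine the sub-Gaussian and quadratic-form tail bounds (Lemmas~\ref{lemma-Bernstein-inequality}, \ref{lemma-Hanson-Wright} and~\ref{lemma-modify-Hanson-Wright}), which show that such realizations carry a super-polynomially small probability --- far below the $\Delta_t^{5}/n$ budget allotted to each pair --- with absolute lower bounds on the densities $p_{\{\mathbf{X}_{(j)}\}}(x^{\leq t})$ obtained from \eqref{equ-def-good-realization-mathbf-X-II}--\eqref{equ-def-good-realization-mathbf-X-III} and the $\|x^{\leq t}\|_{\infty}$ truncation, so that the atypical part of each $p_{\{\mathbf{X}_{(j)}\}}(x^{\leq t})$ is only a vanishing fraction of $p_{\{\mathbf{X}_{(j)}\}}(x^{\leq t})$ itself. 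Second, the event $\{\mathrm{BAD}_t=\mathrm{B}_t\}$ being conditioned on is itself intricate and measurable in a complicated way with respect to the edges; this is exactly what the first step dissolves, paying only the factor $e^{n\Delta_t^{8}}$ through \eqref{equ-def-good-bias-realization}, after which the Lindeberg argument never sees the conditioning again.
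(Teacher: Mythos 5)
Your overall skeleton is the paper's: pay a factor $e^{n\Delta_t^8}$ coming from \eqref{equ-def-good-bias-realization} to deal with the conditioning on $\{\mathrm{BAD}_t=\mathrm{B}_t,\mathrm{BAD}_{t-1}=\mathrm{B}_{t-1}\}$, get the lower bound for free from \eqref{equ-def-good-realization-mathbf-X-II}--\eqref{equ-def-good-realization-mathbf-X-III}, and then run a pair-by-pair Lindeberg replacement in which the first two moments of each Bernoulli block match those of the Gaussian block and a third-order Taylor bound (with the Gaussian smoothing providing the density) is summed over the $\leq n^2$ pairs. The genuine gap is at the one step this lemma's machinery was actually built for. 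After you dissolve the conditioning you must compare the \emph{unconditional} interpolating densities $p_{\{\mathbf{X}_{(j-1)}\}}(x^{\leq t})$ and $p_{\{\mathbf{X}_{(j)}\}}(x^{\leq t})$, and your per-step Taylor estimate is only valid on the event that the partial sums with subscript $\langle j\rangle$ (the ``$U$'' part coming from the not-yet-replaced blocks) are at most $n^{1/\log\log\log n}$. Your remedy for the complementary event does not hold up: (i) \eqref{equ-def-good-realization-mathbf-X-II}--\eqref{equ-def-good-realization-mathbf-X-III} constrain only the two endpoint laws $\mathbf{X}^{\leq t}$ and $\mathbf{Y}^{\leq t}$, so they give no lower bound on the intermediate densities $p_{\{\mathbf{X}_{(j)}\}}(x^{\leq t})$ for $0<j<N$; and (ii) the relevant comparison is between $\mathbb{P}(\text{atypical})\cdot\sup(\text{conditional density given the other blocks})$ and the value $p_{\{\mathbf{X}_{(j)}\}}(x^{\leq t})$ itself, which in dimension $\Theta(K_t n)$ is smaller than that sup by a factor of order $e^{-\Theta(K_t n)}$ (and can be as small as $e^{-n^{1+o(1)}}$ under \eqref{equ-def-good-realization-mathbf-X-I}), whereas by Bernstein the probability that some partial sum exceeds $n^{1/\log\log\log n}$ is only about $\exp\{-\Theta(\vartheta\, n^{2/\log\log\log n})\}=\exp\{-n^{o(1)}\}$, vastly larger. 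So the atypical region is not a vanishing fraction of the density, your ``$\Delta_t^5/n$ budget per pair'' conflates a probability with a relative density error, and this is precisely the obstacle the paper warns cannot be waved away.

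The missing idea is that the conditioning should not be discarded but converted into the truncation: on $\{\mathrm{BAD}_t=\mathrm{B}_t,\mathrm{BAD}_{t-1}=\mathrm{B}_{t-1}\}$ (rewritten as the event $\bar{\mathcal A}$ via \eqref{eq-fact-independence}) all the partial sums $\langle\cdot\rangle_{\langle j\rangle}$ appearing in the coordinates of $\mathbf X^{\leq t}$ are automatically bounded by $n^{1/\log\log\log n}$, because any vertex violating \eqref{equ-def-set-LARGE-(0)} or \eqref{equ-def-set-LARGE} would have been placed into $\mathrm{LARGE}_t\subset\mathrm{B}_t$; hence $\bar{\mathcal A}\subset\mathcal M_1$ for the monotone truncation events $\mathcal M_j$ of \eqref{equ-def-truncate-event-mathcal-M}. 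The paper therefore bounds $p_{\{\mathbf X^{\leq t}\mid\bar{\mathcal A}\}}(x^{\leq t})\leq \mathbb P(\bar{\mathcal A})^{-1}\,p_{\{\mathbf X^{\leq t};\mathcal M_1\}}(x^{\leq t})$ and telescopes the \emph{restricted} densities $p_{\{\mathbf X_{(j)};\mathcal M_j\}}(x^{\leq t})$, using that $\mathcal M_j$ increases in $j$ (so the event is relaxed in the favorable direction at each step and dropped at $j=N$ via $p_{X;\mathcal M_N}\leq p_X$), and that $\mathcal M_j$ is measurable with respect to the not-yet-replaced blocks and independent of the block being replaced, so Lemma~\ref{lemma-substitute-one-group} applies with $\|U\|_\infty\leq n^{1/\log\log\log n}$. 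Without this restricted-density bookkeeping (or an equivalent device) your per-step bound is not justified, so the proposal as written has a real gap at its central step.
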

\begin{proof}[Proof of Lemma~\ref{lemma-bound-conditional-unbiased-density} assuming Lemma~\ref{lemma-joint-density-ratio-I}]
Applying Lemma~\ref{lemma-joint-density-ratio-I} for both $t$ and $t-1$, we get that 
\begin{align*}
    & \frac{  \mathtt{p}_{ \{ \mathbf{X}^{=t} | \mathbf{X}^{<t} \} } ( x^{=t} | x^{<t} ) }{ \mathtt{p}_{ \{ \mathbf{Y}^{=t} |  \mathbf{Y}^{<t} \} } ( x^{=t} | x^{<t} ) }   = \exp \{ O ( n \Delta_t^5 ) \}\,. 
\end{align*}
Since $p_{ \{ \mathbf{X}^{=t} | \mathfrak{S}_{t-1}, \mathrm{BAD}_t = \mathrm{B}_t \}} (x^{=t})  = \frac{  \mathtt{p}_{ \{ \mathbf{X}^{\leq t} \} (x^{\leq t}) }  }{ \mathtt{p}_{ \{ \mathbf{X}^{<t} \} (x^{<t}) } }$, we complete the proof of the lemma.
\end{proof}
 
The rest of this subsection is devoted to the proof of Lemma~\ref{lemma-joint-density-ratio-I}. Due to similarity, we only prove it for $\frac{ \mathtt{p}_{\{ \mathbf{X}^{\leq t} \}} (x^{\leq t}) }{ \mathtt{p}_{ \{ \mathbf{Y}^{\leq t} \} } (x^{\leq t}) }$. Also since $x^{\leq t}$ is an amenable variable-realization, the lower bound is obvious and as a result we only need to prove the upper bound. Note that $(\mathrm{BAD}_{t}, \mathrm{BAD}_{t-1}) = (\mathrm{BAD}_{t}( \overrightarrow{G}, \overrightarrow{\mathsf{G}}, W,\mathsf{W} ), \mathrm{BAD}_{t-1}( \overrightarrow{G}, \overrightarrow{\mathsf{G}}, W, \mathsf{W} ))$ is a function of $(\overrightarrow{G}, \overrightarrow{\mathsf{G}}, W, \mathsf{W})$, and that $( \overrightarrow{G}_{\mathrm{B}}, \overrightarrow{\mathsf{G}}_{\mathrm{B}} )$  is independent with $\mathbf{X}^{\leq t}(\mathrm{B}_t, \mathrm{B}_{t-1}, W,\mathsf{W})$ and with $( \overrightarrow{G}_{\setminus \mathrm{B}}, \overrightarrow{\mathsf{G}}_{\setminus \mathrm{B}}, W,\mathsf{W})$. Since for any independent random vectors  $X,Y$  and any function $f$
\begin{equation}\label{eq-fact-independence}
(f(X,Y)|X=x) \overset{d}{=} f(x,Y)\,,
\end{equation}
we can then apply \eqref{eq-fact-independence} and get that (note that the forms of $p$ are different in the equality below)
\begin{equation}\label{eq-density-mathcal-A-bar}
\mathtt{p}_{ \{ \mathbf{X}^{\leq t} \} } (x^{\leq t}) = p_{ \{ \mathbf{X}^{\leq t} | \Bar{\mathcal{A}} \} } (x^{\leq t})
\end{equation}
where $\Bar{\mathcal{A}} = \cap_{r=\in \{t-1, t\}}\big\{ \mathrm{BAD}_{r}( ( \overrightarrow{g}_{\mathrm{B}}, \overrightarrow{\mathsf{g}}_{\mathrm{B}} ), ( \overrightarrow{G}_{\setminus \mathrm{B}}, \overrightarrow{\mathsf{G}}_{\setminus \mathrm{B}} ), W,\mathsf{W} ) = \mathrm{B}_{r} \big\}$.
For an amenable set-realization $(\mathrm{B}_t, \mathrm{B}_{t-1})$ (for convenience we will drop $(\mathrm{B}_t, \mathrm{B}_{t-1})$ from notations in what follows), recall  \eqref{equ-def-degree-after-j-substitution} and define $\Hat{\mathbf{X}}^{\leq t}_{ \langle j \rangle }$ from  $\Hat{\mathbf{X}}^{\leq t}$ via the procedure in \eqref{equ-def-degree-after-j-substitution} (and similarly define $\mathbf{X}^{\leq t}_{ ( j ) }$ from $\mathbf X^{\leq t}$). For $1\leq j\leq N$, let $\mathcal M_j = \mathcal M_j(\mathrm{B}_t, \mathrm{B}_{t-1})$ be the event that 
\begin{align}
    \| \Hat{\mathbf{X}}^{\leq t}_{ \langle i \rangle}  \|_{\infty} \leq n^{ \frac{1}{\log \log \log n}} \mbox{ for } j \leq i \leq N \,.  \label{equ-def-truncate-event-mathcal-M}
\end{align}
Recalling Remark~\ref{rem-Y-conditioning-B-t}, we get from \eqref{eq-density-mathcal-A-bar} that
\begin{align*}
    \frac{ \mathtt{p}_{ \{ \mathbf{X}^{\leq t} \} } (x^{\leq t}) }{  \mathtt{p}_{ \{ \mathbf{Y}^{\leq t} \} } (x^{\leq t})  } = \frac{ p_{ \{ \mathbf{X}^{\leq t} | \Bar{\mathcal{A}} \} } (x^{\leq t}) }{ p_{ \{ \mathbf{Y}^{\leq t} \} } (x^{\leq t}) } =
    \frac{ {p}_{ \{ \mathbf{X}^{\leq t} | \Bar{\mathcal{A}} \} } (x^{\leq t}) }{  {p}_{ \{ \mathbf{X}^{\leq t} | \mathcal{M}_{1} \} } (x^{\leq t}) } \cdot \frac{ {p}_{ \{ \mathbf{X}^{\leq t} | \mathcal{M}_1 \} } (x^{\leq t}) }{ {p}_{ \{ \mathbf{Y}^{\leq t} \} } (x^{\leq t}) } \,.
\end{align*}
Note that $\Bar{\mathcal{A}} \subset \mathcal{M}_{1}$.  Applying \eqref{eq-fact-independence} with  $f=(\mathrm{BAD}_t,\mathrm{BAD}_{t-1})$, $X=( \overrightarrow{G}_{\mathrm{B}}, \overrightarrow{\mathsf{G}}_{\mathrm{B}} )$ and $Y=( \overrightarrow{G}_{\setminus \mathrm{B}}, \overrightarrow{\mathsf{G}}_{\setminus \mathrm{B}}, W,\mathsf{W} )$, we get that
\begin{align*}
    \mathbb{P}(\Bar{\mathcal{A}}) = \mathbb{P}( \mathrm{BAD}_t = \mathrm{B}_t, \mathrm{BAD}_{t-1} = \mathrm{B}_{t-1} \mid ( \overrightarrow{G}_{\mathrm{B}}, \overrightarrow{\mathsf{G}}_{\mathrm{B}} ) = ( \overrightarrow{g}_{\mathrm{B}}, \overrightarrow{\mathsf{g}}_{\mathrm{B}} ) ) \geq \exp \{ - n \Delta_t^{8} \}\,.
\end{align*}
Thus, for an amenable set-realization $(\mathrm{B}_t, \mathrm{B}_{t-1})$ and an amenable bias-realization $( \overrightarrow{g}_{\mathrm{B}}, \overrightarrow{\mathsf{g}}_{\mathrm{B}} )$, 
\begin{align*}
    \frac{ {p}_{ \{ \mathbf{X}^{\leq t} | \Bar{\mathcal{A}} \} } (x^{\leq t}) }{ \mathbb{P}(\mathcal{M}_1) \cdot {p}_{ \{ \mathbf{X}^{\leq t} | \mathcal{M}_1 \} } (x^{\leq t}) } \leq \frac{ 1 }{ \mathbb{P}( \Bar{\mathcal{A}} ) }
    \leq \exp \{ n \Delta_t^8 \} \,.
\end{align*}
Therefore, it remains to show that for an amenable variable-realization $x^{\leq t}$
\begin{align}
    \frac{ \mathbb{P}( \mathcal{M}_1 ) \cdot {p}_{ \{ \mathbf{X}^{\leq t} | \mathcal{M}_1 \} } (x^{\leq t}) }{  {p}_{ \{ \mathbf{Y}^{\leq t} \} } (x^{\leq t})  } \leq \exp \big\{ O \big( n \Delta_t^8 \big) \big\}\,.
    \label{equ-unconditioned-density-ratio}
\end{align}
For a random variable $X$, define $p_{X;\mathcal{M}_j} (x)$ to be the density of $X$ on the event $\mathcal{M}_j$, i.e., $\int_{ x \in A } p_{X;\mathcal{M}_j} (x) dx = \mathbb{P}( X \in A; \mathcal{M}_j )$ for any $A$. From the definition we see that $\mathcal M_j$ is increasing in $j$ (and thus $p_{X;\mathcal{M}_j} (x)$ is increasing in $j$). Combined with the facts that $p_{X;\mathcal{M}_j} (x) \leq p_{X}(x) $ and $p_{X;\mathcal{M}_j} (x) =\mathbb{P}(\mathcal{M}_j) p_{X|\mathcal{M}_j} (x)$, it yields that the left hand side of \eqref{equ-unconditioned-density-ratio} is equal to (also note that $\mathbf{X}^{\leq t}_{(N)} \overset{d}{=} \mathbf{Y}^{\leq t}$)
\begin{align}
     \frac{ p_{ \{ \mathbf{X}^{\leq t}_{(0)} ; \mathcal{M}_1 \} } (x^{\leq t}) }{ {p}_{ \{ \mathbf{X}^{\leq t}_{(N)} \} } (x^{\leq t}) } \leq \frac{ p_{ \{ \mathbf{X}^{\leq t}_{(0)} ; \mathcal{M}_1 \} } (x^{\leq t}) }{  {p}_{ \{ \mathbf{X}^{\leq t}_{(N)} ; \mathcal{M}_N \} } (x^{\leq t})  } \leq \prod_{j=1}^{N} \frac{ {p}_{ \{ \mathbf{X}^{\leq t}_{(j-1)} ; \mathcal{M}_j \} } (x^{\leq t}) }{  {p}_{ \{ \mathbf{X}^{\leq t}_{(j)} ; \mathcal{M}_j \} } (x^{\leq t}) } \,.
\end{align}
Since $N\leq n^2$, we can conclude the proof of Lemma~\ref{lemma-joint-density-ratio-I} by combining Lemma~\ref{lemma-bound-truncated-unconditional-density-ratio} below.
\begin{Lemma} {\label{lemma-bound-truncated-unconditional-density-ratio}}
For an amenable set-realization $(\mathrm{B}_t, \mathrm{B}_{t-1})$ and an amenable variable-realization $x^{\leq t}$, we have for all $1\leq j\leq N$
\begin{align*}
    \frac{ {p}_{ \{ \mathbf{X}^{\leq t}_{(j-1)} ; \mathcal{M}_j \} } (x^{\leq t}) }{  {p}_{ \{ \mathbf{X}^{\leq t}_{(j)} ; \mathcal{M}_j \} } (x^{\leq t}) } = 1 + O \big( K_t^{20} \vartheta^{-3} (\log n)^3 n^{ \frac{3}{ \log \log \log n } } / n \sqrt{n\Hat{q}} \big) \,.
\end{align*}
\end{Lemma}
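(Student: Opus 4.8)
The plan is to read Lemma~\ref{lemma-bound-truncated-unconditional-density-ratio} as a single Lindeberg swap: by \eqref{equ-def-degree-after-j-substitution}, $\mathbf X^{\le t}_{(j-1)}$ and $\mathbf X^{\le t}_{(j)}$ differ only in that the centered edge-variables attached to the pair $(u_j,w_j)$, namely $\xi_j=(\overrightarrow G_{u_j,w_j}-\Hat q,\overrightarrow G_{w_j,u_j}-\Hat q,\overrightarrow{\mathsf G}_{\pi(u_j),\pi(w_j)}-\Hat q,\overrightarrow{\mathsf G}_{\pi(w_j),\pi(u_j)}-\Hat q)$, are replaced by their Gaussian analogue $\zeta_j=(\overrightarrow Z_{u_j,w_j},\overrightarrow Z_{w_j,u_j},\overrightarrow{\mathsf Z}_{\pi(u_j),\pi(w_j)},\overrightarrow{\mathsf Z}_{\pi(w_j),\pi(u_j)})$, while all other pairs and the smoothing vector $\{W^{(s)}_v(k),\mathsf W^{(s)}_{\pi(v)}(k)\}$ are kept fixed. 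Since each coordinate of $\hat{\mathbf X}^{\le t}$ is affine in the underlying edge-variables and the pair $(u_j,w_j)$ enters only the coordinates indexed by a vertex in $\{u_j,w_j,\pi(u_j),\pi(w_j)\}$, I would write $\hat{\mathbf X}^{\le t}_{(j-1)}=\mathbf b(\mathbf r)+\mathbf C\xi_j$ and $\hat{\mathbf X}^{\le t}_{(j)}=\mathbf b(\mathbf r)+\mathbf C\zeta_j$, where $\mathbf r$ collects all the remaining (frozen) randomness other than the smoothing, $\mathbf C$ is deterministic given $(\Xi_t,\{\eta^{(s)}_k\},\mathrm B_{t-1})$, is supported on only $O(\sum_{s\le t}K_s)=O(K_t)$ coordinates, and has entries of size $O\big(\sqrt{K_s}/\sqrt{(\mathfrak a_s-\mathfrak a_s^2)n\Hat q}\big)$, hence $O(\sqrt{K_t}/\sqrt{\vartheta n\Hat q})$ in the worst case $s=0$ (this is the source of the $\vartheta^{-3}$ factor). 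The first thing to record is that the truncation event $\mathcal M_j$ of \eqref{equ-def-truncate-event-mathcal-M} involves only the fully zeroed-out statistics $\hat{\mathbf X}^{\le t}_{\langle i\rangle}$ with $i\ge j$, each a function of edge-variables at pairs $>i\ge j$; so $\mathcal M_j$ is measurable with respect to $\mathbf r$ and is untouched by the swap.

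Writing $g$ for the density of the smoothing vector (a product of standard Gaussians — this is precisely why the smoothing was inserted in \eqref{equ-def-iter-sets}), the measurability above gives
\[
p_{\{\mathbf X^{\le t}_{(j-1)};\mathcal M_j\}}(x^{\le t})=\mathbb E_{\mathbf r}\Big[\mathbf 1_{\mathcal M_j(\mathbf r)}\,\mathbb E_{\xi_j}\,g\big(x^{\le t}-\mathbf b(\mathbf r)-\mathbf C\xi_j\big)\Big]
\]
and the analogous identity with $\xi_j$ replaced by $\zeta_j$ for $\mathbf X^{\le t}_{(j)}$. I would then Taylor-expand $\xi\mapsto g(x^{\le t}-\mathbf b-\mathbf C\xi)$ to second order about $\xi=0$; since $\xi_j$ and $\zeta_j$ share the same mean $0$ and the same covariance, the zeroth-, first- and second-order contributions cancel, leaving a third-order remainder bounded by $\|\mathbf C\xi_j\|_2^3\,\sup_{\theta\in[0,1]}|\nabla^3 g(x^{\le t}-\mathbf b-\theta\mathbf C\xi_j)|$ plus the analogous $\zeta_j$-term (using $\mathbb E\|\xi_j\|_2^3=O(\Hat q)$, which dominates $\mathbb E\|\zeta_j\|_2^3=O(\Hat q^{3/2})$). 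For a product Gaussian one has $|\partial^\alpha g(y)|\le P_\alpha(y)g(y)$ with $P_\alpha$ a polynomial of degree $|\alpha|$ in the $O(K_t)$ coordinates of $\mathrm{supp}(\mathbf C)$ only, and $g(y')/g(y)$ is close to $1$ whenever $y-y'$ is small in those coordinates. Combining the amenability bound $\|x^{\le t}\|_\infty\le 2(\log n)n^{1/\log\log\log n}$ from \eqref{equ-def-good-realization-mathbf-X-I}, a bound $\|\mathbf b\|_\infty=O((\log n)n^{1/\log\log\log n})$ on $\mathrm{supp}(\mathbf C)$ away from an exponentially unlikely tail (see below), and $\|\mathbf C\xi_j\|_2^2=O(K_t^2/(\vartheta n\Hat q))\to 0$, I would get both $g(y')/g(y)=1+o(1)$ on the relevant region and $\sup P_3=O((\log n)^3 n^{3/\log\log\log n})$, so that the one-step remainder is $O\big(\Hat q\cdot(K_t^2/(\vartheta n\Hat q))^{3/2}\cdot(\log n)^3 n^{3/\log\log\log n}\big)$ relative to $p_{\{\mathbf X^{\le t}_{(j)};\mathcal M_j\}}(x^{\le t})$, which after generous bounding of $K_t$-powers is exactly $K_t^{20}\vartheta^{-3}(\log n)^3 n^{3/\log\log\log n}/(n\sqrt{n\Hat q})$. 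The denominator $p_{\{\mathbf X^{\le t}_{(j)};\mathcal M_j\}}(x^{\le t})=\mathbb E_{\mathbf r}[\mathbf 1_{\mathcal M_j}\mathbb E_{\zeta_j}g(x^{\le t}-\mathbf b-\mathbf C\zeta_j)]$ is handled by the same Taylor argument, comparing it to the same expectation with $\mathbf C\zeta_j$ deleted.

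The hard part will be the control of $\mathbf b$ — equivalently, of $g$ at the shifted point — on the event $\mathcal M_j$. The subtlety is that $\mathcal M_j$ only bounds the fully zeroed-out statistics $\hat{\mathbf X}^{\le t}_{\langle i\rangle}$, whereas $\mathbf b$ keeps pairs $1,\dots,j-1$ Gaussian (and only pair $j$ zeroed), so $\mathbf b$ is \emph{not} literally constrained by $\mathcal M_j$. The resolution I would pursue is that the dominant mass of the $\mathbf r$-expectation comes from $\mathbf r$ with $\hat{\mathbf X}^{\le t}_{(j-1)}\approx x^{\le t}$, hence $\mathbf b\approx x^{\le t}$ and bounded by amenability, since otherwise the smoothing is forced to take an atypically large value and $g$ is exponentially small there; this is exactly why the smoothing is indispensable, and the polynomial gaps $\Delta_t^8,\Delta_t^9,\Delta_t^{10}$ in the definitions of ``amenable'' are calibrated for precisely this estimate. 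Finally, Lemma~\ref{lemma-joint-density-ratio-I} would follow by multiplying the per-step factor over $j=1,\dots,N\le n^2$, using that $n^2\cdot K_t^{20}\vartheta^{-3}(\log n)^3 n^{3/\log\log\log n}/(n\sqrt{n\Hat q})\ll n\Delta_t^5$ because $n\Hat q=n^{1-\alpha+o(1)}$ is polynomially large while $K_t^{20}\vartheta^{-3}(\log n)^3 n^{3/\log\log\log n}$ (with $K_t\le(\log n)^4$, $t\le t^*$, $\vartheta=\Omega(e^{-(\log\log n)^{100}})$) is only sub-polynomial.
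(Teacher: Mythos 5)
Your high-level plan (a one-pair Lindeberg swap with Taylor expansion of a Gaussian-smoothed density) is the right idea and is the paper's approach too, but you make a different and less convenient choice of decomposition, and the resulting difficulty — which you yourself flag as ``the hard part'' — is a genuine gap that your sketch does not close.

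Concretely, you split $\mathbf X^{\le t}_{(j-1)} = W + \mathbf b(\mathbf r) + \mathbf C\xi_j$ with $\mathbf b$ absorbing \emph{both} the frozen Bernoulli edges at pairs $>j$ \emph{and} the already-Gaussianized edges at pairs $<j$, and then Taylor-expand only against the standard-Gaussian density $g$ of $W$. Because $\mathbf b$ contains the already-substituted Gaussian entries, it is unbounded and \emph{not} controlled by $\mathcal M_j$ (which, as you correctly note, only constrains the fully zeroed-out statistics $\hat{\mathbf X}^{\le t}_{\langle i\rangle}$ for $i\ge j$). Your proposed fix — that the $\mathbf r$-mass concentrates where $\mathbf b\approx x^{\le t}$ so that $\mathbf b$ is effectively bounded by amenability, while the tail contributes negligibly because $g$ is tiny there — is plausible heuristically, but it has to be carried out inside a \emph{ratio} of densities: the denominator $p_{\{\mathbf X^{\le t}_{(j)};\mathcal M_j\}}(x^{\le t})$ also shrinks on that tail, and the $\sup_\theta|\nabla^3 g(x-\mathbf b-\theta\mathbf C\xi_j)|$ in the remainder introduces a polynomial-in-$\mathbf b$ prefactor that does not cancel against the denominator coordinate-by-coordinate. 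You would need a quantitative ``localization of $\mathbf b$'' argument inside the conditional expectation, and this is not trivial.

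The paper sidesteps the issue entirely by grouping the already-Gaussianized edges $\hat{\mathbf X}^{\le t}_{[j-1]}$ together with the smoothing $W$ into a single centered Gaussian $Z$ with covariance $\Sigma_{[j-1]}=\mathrm I + \mathrm{Cov}(\hat{\mathbf X}^{\le t}_{[j-1]})$, and letting the bounded background be $U=\hat{\mathbf X}^{\le t}_{\langle j\rangle}$ — which is \emph{exactly} what $\mathcal M_j$ controls via $\|U\|_\infty\le n^{1/\log\log\log n}$. This is the setting of Lemma~\ref{lemma-substitute-one-group}, whose proof (via Lemma~\ref{lemma-bound-difference-exp-moments}) uses precisely the boundedness $|U_k|\le C$ to control the third derivative of $\psi$, together with the crucial cancellation of the factor $e^{-\frac12\|U\|^2_{\Sigma^{-1}}}$ between numerator and denominator in the decomposition $\mathfrak I_1\times\mathfrak I_2$. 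The price of this grouping is that one must control $\|\Sigma_{[j-1]}^{-1}\|_\infty$, $\|\Sigma_{[j-1]}\|_{\mathrm{op}}$, $\|\Sigma_{[j-1]}^{-1}\|_{\mathrm{op}}$, which the paper does with Lemmas~\ref{lemma-bound-on-infty-norm} and~\ref{lemma-bound-on-op-norm} (block structure + bounded off-block entries). Your covariance-and-coefficient bookkeeping (source of the $\vartheta^{-3}$, $K_t^{20}$, $(\log n)^3 n^{3/\log\log\log n}$, and $1/(n\sqrt{n\hat q})$ factors) otherwise matches the paper's, and the observation that $\mathcal M_j$ is fixed under the swap is correct. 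But to turn your draft into a complete proof you would essentially need to reinvent the paper's $Z$-versus-$U$ split, or replace your hand-waved localization of $\mathbf b$ with a real estimate.
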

The proof of Lemma~\ref{lemma-bound-truncated-unconditional-density-ratio} requires a couple of results on the Gaussian-smoothed density.
\begin{Lemma}
\label{lemma-bound-difference-exp-moments}
    For $1 \leq d \leq m$ and $C > 0$, let $U=(U_1,\ldots,U_m)$ be a random vector such that $|U_{k}| \leq C$ for $1 \leq k \leq d$, and let $X_1,\ldots,X_d,Y_1,\ldots,Y_d$ be sub-Gaussian random variables independent with $\{ U_1,\ldots,U_d \}$ such that $\mathbb{E}[X_k] = \mathbb{E}[Y_k]$ and $\mathbb{E}[X_k X_l] = \mathbb{E}[Y_k Y_l]$ for any $1 \leq k,l \leq d$. Define $\Tilde{X}$ such that $\Tilde{X}_k = X_k$ for $1 \leq k \leq d$ and $\Tilde{X}_k = 0$ for $d+1 \leq k \leq m$ (and define $\Tilde{Y}$ similarly). Then, for any positive definite $m\!*\!m$ matrix $\mathrm{A}$,
    \begin{align}
        & \Big| \mathbb{E} \Big[ e^{ \frac{1}{2} ( - \| U \|^2_{\mathrm{A}} + 2 \langle U , \Tilde{X} \rangle_{\mathrm{A}} - \| \Tilde{X} \|^2_{\mathrm{A}} ) } - e^{\frac{1}{2} ( - \| U \|^2_{\mathrm{A}} + 2 \langle U , \Tilde{Y} \rangle_{\mathrm{A}} - \| \Tilde{Y} \|^2_{\mathrm{A}} ) }  \Big] \Big| \label{eq-Lemma-3.11-LHS}  \\
        \leq & 100 d^3 \big( C \| \mathrm{A} \|_{\infty} + \| \mathrm{A} \|_{\mathrm{op}} \| \mathrm{A}^{-1} \|_{\mathrm{op}}^{1/2} \big)^3 \mathbb{E} \Big[ e^{ - \frac{1}{2} \| U \|^2_{ \mathrm{A}} } \Big] \mathbb{E} \Big[e^{ C \| \mathrm{A} \|_{\infty} \|{X}\|_{1} } \| X \|^3 + e^{  C \| \mathrm{A} \|_{\infty} \|{Y}\|_{1} } \| Y \|^3 \Big].\nonumber
    \end{align}
\end{Lemma}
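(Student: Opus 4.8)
The plan is to prove \eqref{eq-Lemma-3.11-LHS} by a Lindeberg-type swapping argument: $\tilde X$ and $\tilde Y$ enter the exponent only as a small perturbation of $U$, and since their first two moments agree, a second-order Taylor expansion will cancel between the $X$- and $Y$-terms, leaving a third-order remainder to estimate. First I would complete the square: for $w\in\R^d$ with zero-padding $\tilde w\in\R^m$,
\[
\tfrac12\bigl(-\|U\|_{\mathrm A}^2+2\langle U,\tilde w\rangle_{\mathrm A}-\|\tilde w\|_{\mathrm A}^2\bigr)=-\tfrac12\|U-\tilde w\|_{\mathrm A}^2\,,
\]
so, setting $F(w):=\exp(-\tfrac12\|U-\tilde w\|_{\mathrm A}^2)$, the left side of \eqref{eq-Lemma-3.11-LHS} equals $\bigl|\E[F(X)-F(Y)]\bigr|$. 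Writing $\bar{\mathrm A}$ for the $d\!\times\!d$ principal submatrix of $\mathrm A$ and $b:=(\mathrm A U^{*})|_{[d]}$, one has $F(w)=F(0)\,e^{g(w)}$ with $F(0)=e^{-\frac12\|U\|_{\mathrm A}^2}$ and $g(w)=\langle b,w\rangle-\tfrac12 w\bar{\mathrm A}w^{*}$ a \emph{quadratic} in $w$ (note $\langle b,w\rangle=\langle U,\tilde w\rangle_{\mathrm A}$). The degree-$\le 2$ part of $e^{g(w)}$ is $1+\langle b,w\rangle+\tfrac12 w^{*}(bb^{*}-\bar{\mathrm A})w$, whose coefficients depend only on $U$; hence its expectation over $\tilde X$ involves only $\E[X_k]$ and $\E[X_kX_l]$, which by hypothesis equal the corresponding moments of $\tilde Y$. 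Thus the degree-$\le2$ parts cancel, and with $R_3(w):=\tfrac12\int_0^1(1-s)^2 D^3F(sw)[w,w,w]\,ds$ we get $\bigl|\E[F(X)-F(Y)]\bigr|\le \E|R_3(X)|+\E|R_3(Y)|$.

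The core is a pointwise bound on $|R_3(w)|$. Since $g$ is quadratic, a direct computation gives
\[
D^3F(sw)[w,w,w]=e^{-\frac12\|U-s\tilde w\|_{\mathrm A}^2}\Bigl(\langle\rho_s,w\rangle^{3}-3\langle\rho_s,w\rangle\,w\bar{\mathrm A}w^{*}\Bigr)\,,
\]
where $\rho_s:=b-s\,\bar{\mathrm A}w^{*}=(\mathrm A(U-s\tilde w)^{*})|_{[d]}$ is the ``score''. I would bound each factor separately: the Gaussian weight by $e^{-\frac12\|U-s\tilde w\|_{\mathrm A}^2}\le e^{-\frac12\|U\|_{\mathrm A}^2}e^{s\langle b,w\rangle}\le e^{-\frac12\|U\|_{\mathrm A}^2}e^{|\langle b,w\rangle|}$ (using $w\bar{\mathrm A}w^{*}\ge0$, $s\in[0,1]$); the quadratic form by $|w\bar{\mathrm A}w^{*}|\le\|\mathrm A\|_{\mathrm{op}}\|w\|^2$; and the linear forms $\langle b,w\rangle,\langle\rho_s,w\rangle$ by splitting $b$ into the contribution $\bar{\mathrm A}(U_1,\dots,U_d)^{*}$ of the first $d$ coordinates of $U$, which is bounded in $\ell^\infty$ by $C\|\mathrm A\|_{\infty}$ since $|U_k|\le C$, plus the contribution of the remaining coordinates, which is controlled in $\ell^2$ by $\|\mathrm A\|_{\mathrm{op}}\|\mathrm A^{-1}\|_{\mathrm{op}}^{1/2}\|U\|_{\mathrm A}$ via $\|U\|_2\le\|\mathrm A^{-1}\|_{\mathrm{op}}^{1/2}\|U\|_{\mathrm A}$ (together with $\|\bar{\mathrm A}^{-1}\|_{\mathrm{op}}\le\|\mathrm A^{-1}\|_{\mathrm{op}}$ by Cauchy interlacing and $\|\mathrm A\|_{\mathrm{op}}\le\|\mathrm A\|_{\infty}$ by symmetry). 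Converting $\ell^1$ to $\ell^2$ in the three factors of $w$ yields a power of $d$ (generously absorbed into the $d^3$ in the statement), so one arrives, after folding the ``unbounded-direction'' contribution of $U$ back into the Gaussian weight, at a bound of the shape
\[
|R_3(w)|\lesssim d^{3}\bigl(C\|\mathrm A\|_{\infty}+\|\mathrm A\|_{\mathrm{op}}\|\mathrm A^{-1}\|_{\mathrm{op}}^{1/2}\bigr)^{3}\,e^{-\frac12\|U\|_{\mathrm A}^2}\,e^{\,C\|\mathrm A\|_{\infty}\|w\|_1}\,\|w\|^{3}\,.
\]

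To finish, since $X$ and $Y$ are independent of $U$, I would take the expectation over $U$ and over $X$ (resp.\ $Y$) in product form: the factor $e^{-\frac12\|U\|_{\mathrm A}^2}$ integrates to $\E[e^{-\frac12\|U\|_{\mathrm A}^2}]$, and the factor $e^{C\|\mathrm A\|_{\infty}\|w\|_1}\|w\|^3$ with $w=X$ or $w=Y$ integrates to $\E[e^{C\|\mathrm A\|_{\infty}\|X\|_1}\|X\|^3+e^{C\|\mathrm A\|_{\infty}\|Y\|_1}\|Y\|^3]$, which is exactly the right-hand side of \eqref{eq-Lemma-3.11-LHS} (the constant $100$ covers the combinatorial and norm-conversion constants). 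The elementary inequality $\sup_{t\ge0}t^{k}e^{-t^2/2}\lesssim_k 1$ is used to clean up the leftover polynomial-in-$\|U\|_{\mathrm A}$ factors.

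The hard part will be exactly this last separation of the $U$-dependence from the $X$/$Y$-dependence inside $R_3$: the score $\langle b,w\rangle=\langle U,\tilde w\rangle_{\mathrm A}$ genuinely couples $U$ and $w$ and appears both inside the exponential Gaussian weight and polynomially, so one must peel off \emph{precisely} the factor $e^{-\frac12\|U\|_{\mathrm A}^2}$ (peeling off, say, $e^{-\frac14\|U\|_{\mathrm A}^2}$ would be worthless, since the target carries the smaller quantity $\E[e^{-\frac12\|U\|_{\mathrm A}^2}]$) and bound $|\langle b,w\rangle|$ by $C\|\mathrm A\|_{\infty}\|w\|_1$ up to terms that are negligible against the Gaussian weight. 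This is where the hypothesis $|U_k|\le C$ is essential, and it is what forces the \emph{linear} exponential weight $e^{C\|\mathrm A\|_{\infty}\|X\|_1}$ on the right-hand side rather than a quadratic one; getting the bookkeeping to close with the stated (finite, $U$-free) constant $\bigl(C\|\mathrm A\|_{\infty}+\|\mathrm A\|_{\mathrm{op}}\|\mathrm A^{-1}\|_{\mathrm{op}}^{1/2}\bigr)^3$ is the main technical delicacy.
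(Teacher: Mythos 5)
Your skeleton coincides with the paper's proof: factor out $e^{-\frac12\|U\|_{\mathrm A}^2}$ (equivalently, complete the square), Taylor-expand the remaining exponential in the first $d$ variables to third order, cancel the degree-$\le 2$ terms using $\mathbb{E}[X_k]=\mathbb{E}[Y_k]$, $\mathbb{E}[X_kX_l]=\mathbb{E}[Y_kY_l]$ together with independence from $U$, and bound the third derivative by $e^{C\|\mathrm A\|_\infty\|x\|_1}\big(C\|\mathrm A\|_\infty+\|\mathrm A\|_{\mathrm{op}}\|\mathrm A^{-1}\|_{\mathrm{op}}^{1/2}\big)^3$, where the $\tilde t$-dependent part of each score factor is absorbed into a third of the available factor $e^{-\frac12\|\tilde t\|_{\mathrm A}^2}$ (this is exactly how the paper produces $\|\mathrm A\|_{\mathrm{op}}\|\mathrm A^{-1}\|_{\mathrm{op}}^{1/2}$). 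Up to that point there is nothing to object to.

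The genuine gap is precisely the step you single out: your treatment of the coordinates $U_{d+1},\dots,U_m$. The paper does not do what you propose; it bounds $\langle\tilde t,U\rangle_{\mathrm A}\le\|\mathrm A U^*\|_\infty\|\tilde t\|_1\le\|U\|_\infty\|\mathrm A\|_\infty\|\tilde t\|_1$ and $|\langle f_\tau,U\rangle_{\mathrm A}|\le\|U\|_\infty\|\mathrm A\|_\infty$, i.e.\ it uses boundedness of \emph{every} coordinate of $U$ (which its application in the proof of the substitution lemma supplies, the vector playing the role of $U$ there being $\lambda-U$ with $\|\lambda-U\|_\infty\le\|\lambda\|_\infty+C$). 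Your plan to control the remaining coordinates through $\|U\|_2\le\|\mathrm A^{-1}\|_{\mathrm{op}}^{1/2}\|U\|_{\mathrm A}$ and then ``fold them back into the Gaussian weight'' cannot close: those coordinates enter through the exponential cross term $e^{s\langle U,\tilde w\rangle_{\mathrm A}}$, whose off-block part is only bounded by $e^{\|\mathrm A\|_{\mathrm{op}}\|\mathrm A^{-1}\|_{\mathrm{op}}^{1/2}\|U\|_{\mathrm A}\|w\|}$; dominating it forces you either to spend a positive fraction of $e^{-\frac12\|U\|_{\mathrm A}^2}$ (which, as you yourself observe, is not affordable because the target carries $\mathbb{E}[e^{-\frac12\|U\|_{\mathrm A}^2}]$) or to complete the square and end with a factor $e^{c\|w\|^2}$, a quadratic exponential that does not match the linear weight $e^{C\|\mathrm A\|_\infty\|w\|_1}$ and need not even be integrable for sub-Gaussian $X$. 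Indeed, under your literal reading of the hypothesis the inequality is false: take $m=2$, $d=1$, $\mathrm A=\begin{pmatrix}1&-a\\-a&1\end{pmatrix}$ with $0<a<1$, $U=(0,M)$, $X$ Rademacher and $Y\sim\mathcal N(0,1)$; then the left-hand side is of order $e^{-M^2/2}e^{a^2M^2/4}$ while the right-hand side is $O(e^{-M^2/2})$ uniformly in $M$, so the ratio diverges as $M\to\infty$. The statement must therefore be read (and is only ever used) with $|U_k|\le C$ for all $k\le m$; with that reading your argument reduces to the paper's and goes through.
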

\begin{proof}
For $x\in \mathbb R^d$ write $\psi(x) =  e^{-\frac{1}{2} \| \Tilde{x} \|^2_{\mathrm{A}} + \langle U, \Tilde{x} \rangle_{\mathrm{A}} }$, where $\Tilde{x}_i = x_i$ for $1 \leq i \leq d$ and $\Tilde{x}_i=0$ for $d+1 \leq i \leq m$. Then \eqref{eq-Lemma-3.11-LHS} is equal to $\big| \mathbb{E}  \big[ e^{- \frac{1}{2} \| U \|^2_{\mathrm{A}} } \big( \psi( X_1,\ldots,X_d) - \psi(Y_1,\ldots,Y_d) \big) \big] \big|$.
This motivates the consideration of high-dimensional Taylor's expansion. Define $\psi^{\prime}_a = \frac{\partial}{\partial x_a}\psi(0,\ldots,0)$, $\psi^{\prime \prime}_{ab} = \frac{\partial^2}{\partial x_a \partial x_b} \psi(0,\ldots,0)$ and $\psi^{\prime \prime \prime}_{abc}(t_1,\ldots,t_d) = \frac{\partial^3}{\partial x_a \partial x_b \partial x_c} \psi(t_1,\ldots,t_d)$. Then,
\begin{align*}
    \Big| \psi(x_1,\ldots,x_d) - \psi(0,\ldots,0) - \sum_{a=1}^{d} \psi^{\prime}_{a} x_a - \frac{1}{2} \sum_{a,b=1}^{d}  \psi^{\prime \prime}_{ab} x_a x_b \Big| \leq  \mathbf{R}_{\psi} d^3 \| x \|^3 \,,
\end{align*}
where the remainder $\mathbf{R}_{\psi}$ is bounded by
\begin{align}\label{eq-high-dimensional-taylor-expansion}
    |\mathbf{R}_{\psi}(x_1,\ldots,x_d)| \leq \sup_{ |t_j| \leq |x_j|, 1 \leq j \leq d } | \psi^{\prime \prime \prime}_{abc}(t_1,\ldots,t_d) | \,.
\end{align}
Since $\psi^{\prime}, \psi^{\prime \prime}$ are random variables measurable with respect to $\{ U_k: 1 \leq k \leq m \}$, we get
\begin{equation}\label{eq-first-two-derivaties-0}
\begin{aligned}
    & \mathbb{E} \Big[ e^{-\frac{1}{2} \| U \|^2_{ \mathrm{A} } } \psi^{\prime}_a (X_a-Y_a) \Big] = \mathbb{E} \Big[ e^{-\frac{1}{2} \| U \|^2_{ \mathrm{A} } } \psi^{\prime}_a \Big] \mathbb{E} \Big[ (X_a-Y_a) \Big] = 0 \,, \\
    & \mathbb{E} \Big[ e^{-\frac{1}{2} \| U \|^2_{ \mathrm{A} } } \psi^{\prime \prime}_{ab} (X_a X_b-Y_a Y_b) \Big] = \mathbb{E} \Big[ e^{-\frac{1}{2} \| U \|^2_{ \mathrm{A} } } \psi^{\prime \prime}_{ab} \Big] \mathbb{E} \Big[ (X_a X_b-Y_a Y_b) \Big] = 0 \,.
\end{aligned}
\end{equation}
Define $f_j$ to  be the $j$'th standard basis in $\mathbb{R}^{m}$, and define $\tilde t$ from $t$ similar to defining $\tilde x$ from $x$. Since 
\[
\langle \Tilde{t}, U \rangle_{\mathrm{A}} \leq \| \mathrm{A} U^{*} \|_{\infty} \| \Tilde{t} \|_{1} \leq \| U \|_{\infty} \| \mathrm{A} \|_{\infty} \| \Tilde{t} \|_{1} 
\]
which is bounded by $C \| \mathrm{A} \|_{\infty} \| x \|_1$ if $|t_j| \leq |x_j|$ for all $j$,  we have that for distinct $a, b, c$
\begin{align*}
    & \sup_{|t_j| \leq |x_j|} |\psi^{\prime \prime \prime}_{abc}(t_1,\ldots,t_d)| = \sup_{|t_j| \leq |x_j|} \Big| e^{-\frac{1}{2} \| \Tilde{t} \|_{\mathrm{A}}^2 + \langle \Tilde{t}, U \rangle_{\mathrm{A}} } \prod_{ \tau \in \{a,b,c \} } ( \langle f_{\tau}, U \rangle_{\mathrm{A}} - \langle f_{\tau}, \Tilde{t} \rangle_{\mathrm{A}} )  \Big| \\
    & \leq e^{ C \| \mathrm{A} \|_{\infty} \| x \|_1 }  \sup_{|t_j| \leq |x_j|} \Big| \prod_{ \tau \in \{a,b,c \} } \Big( ( \langle f_{\tau}, U \rangle_{\mathrm{A}} - \langle f_{\tau}, \Tilde{t} \rangle_{\mathrm{A}} ) e^{-\frac{1}{6} \| \Tilde{t} \|_{\mathrm{A}}^2 } \Big) \Big| \,,
\end{align*}
where $e^{-\frac{1}{6} \| \Tilde{t} \|_{\mathrm{A}}^2}$ arises since we split $e^{-\frac{1}{2} \| \Tilde{t} \|_{\mathrm{A}}^2}$ into the product of three copies of $e^{-\frac{1}{6} \| \Tilde{t} \|_{\mathrm{A}}^2}$.
Since $e^{- \frac{1}{6} x^2} x\leq 1$ for all $x\in \mathbb R$, we have that 
\begin{align*}
    e^{-\frac{1}{6} \| \Tilde{t} \|_{\mathrm{A}}^2  } | \langle f_{\tau}, \Tilde{t} \rangle_{\mathrm{A}} | \leq e^{-\frac{1}{6} \| \Tilde{t} \|_{\mathrm{A}}^2 } \| \Tilde{t} \mathrm{A} \| \leq e^{ -\frac{1}{6} \| \mathrm{A}^{-1} \|_{\mathrm{op}}^{-1} \| \Tilde{t} \|^2 } \| \mathrm{A} \|_{\mathrm{op}} \| \Tilde{t} \| \leq \| \mathrm{A} \|_{\mathrm{op}} \| \mathrm{A}^{-1} \|_{\mathrm{op}}^{\frac{1}{2}} \,.
\end{align*}
Combining the preceding two inequalities with 
\[
| \langle f_{\tau},U \rangle_{\mathrm{A}} | \leq \| \mathrm{A} U^{*} \|_{\infty} \| f_{\tau} \|_1 \leq \| U \|_{\infty} \| \mathrm{A} \|_{\infty} \| f_{\tau} \|_1 \leq C\| \mathrm{A} \|_{\infty} \,, 
\]
we get that $|\psi^{\prime \prime \prime}_{abc} (t_1,\ldots,t_d)| \leq e^{C \| \mathrm{A} \|_{\infty} \| x \|_1} ( C\| \mathrm{A} \|_{\infty} + \| \mathrm{A} \|_{\mathrm{op}} \| \mathrm{A}^{-1} \|_{\mathrm{op}}^{1/2} )^3$. Similarly, we have
\begin{align*}
    & \sup_{|t_j| \leq |x_j|} |\psi^{\prime \prime \prime}_{aab}(t_1,\ldots,t_d)|  \leq e^{ C \| \mathrm{A} \|_{\infty} \| x \|_{1} } ( 2 C \| \mathrm{A} \|_{\infty} + 4 \| \mathrm{A} \|_{\mathrm{op}} \| \mathrm{A}^{-1} \|_{\mathrm{op}}^{1/2} )^3 \,, \\
    & \sup_{|t_j| \leq |x_j|} |\psi^{\prime \prime \prime}_{aaa}(t_1,\ldots,t_d)| 
    \leq e^{ C \| \mathrm{A} \|_{\infty} \| x \|_{1} } \big( 4(C \| \mathrm{A} \|_{\infty})^3 + 100 (\| \mathrm{A} \|_{\mathrm{op}} \| \mathrm{A}^{-1} \|_{\mathrm{op}}^{1/2} )^3 \big) \,.
\end{align*}
Therefore, $\mathbf{R}_{\psi} (x_1,\ldots,x_d) \leq 100 \big( C \| \mathrm{A} \|_{\infty} + \| \mathrm{A} \|_{\mathrm{op}} \| \mathrm{A}^{-1} \|_{\mathrm{op}}^{1/2} \big)^3 \exp \{ C \| \mathrm{A} \|_{\infty} \| x \|_1 \}$. Combined with \eqref{eq-high-dimensional-taylor-expansion} and \eqref{eq-first-two-derivaties-0}, it yields that \eqref{eq-Lemma-3.11-LHS} is bounded by
\begin{align*}
    100 d^3 \big( C \| \mathrm{A} \|_{\infty} + \| \mathrm{A} \|_{\mathrm{op}} \| \mathrm{A}^{-1} \|_{\mathrm{op}}^{1/2} \big)^3 \mathbb{E} \Big[ e^{ - \frac{1}{2} \| U \|^2_{ \mathrm{A}} }  \Big(e^{ C \| \mathrm{A} \|_{\infty} \|{X}\|_{1} } \| X \|^3 + e^{  C \| \mathrm{A} \|_{\infty} \|{Y}\|_{1} } \| Y \|^3 \Big) \Big]\,,
\end{align*}
completing the proof since $\{X_1, \ldots, X_d, Y_1, \ldots, Y_d\}$ is independent of $\{U_1, \ldots, U_d\}$.
\end{proof}

The next lemma will be useful in bounding the density change when locally replacing Bernoulli variables by Gaussian variables.

\begin{Lemma} \label{lemma-substitute-one-group}
For $1 \leq d \leq m$, let $Z=(Z_1,\ldots,Z_m) \sim \mathcal{N}(0,\Sigma)$ be a normal vector, and let $U=(U_1,\ldots,U_m)$ be a sub-Gaussian vector independent with $Z$ such that $|U_{k}| \leq C$. Let $B,B^{\prime},\mathsf{B},\mathsf{B}^{\prime}$, $G,G^{\prime},\mathsf{G},\mathsf{G}^{\prime}$ be random variables independent with $Z,U$ such that $B,B^{\prime},\mathsf{B},\mathsf{B}^{\prime}$ all have the same law as $\mathrm{Ber}(q)-q$, and $G,G^{\prime},\mathsf{G},\mathsf{G}^{\prime} \sim \mathcal{N}(0,q(1-q))$. Also, suppose that $( B,B^{\prime},\mathsf{B},\mathsf{B}^{\prime})$ and $(G,G^{\prime},\mathsf{G},\mathsf{G}^{\prime})$ have the same covariance matrix. For any $\alpha,\beta,\theta,\gamma \in \mathbb{R}^d$ with $\ell_\infty$-norms at most $\epsilon$, define $\Tilde \alpha \in \mathbb R^m$ such that $\Tilde{\alpha}(i) = \alpha(i)$ for $1 \leq i \leq d$ and $\Tilde{\alpha} (i)=0$ for $d+1\leq i \leq m$; we similarly define $\Tilde{\beta},\Tilde{\theta},\Tilde{\gamma} \in \mathbb R^m$. Then for all $\lambda\in \mathbb R^m$, the joint densities of  $( Z+U+B \Tilde{\alpha} + B^{\prime} \Tilde{\beta} + \mathsf{B} \Tilde{\theta} + \mathsf{B}^{\prime} \Tilde{\gamma}  )$ and $( Z+U+G \Tilde{\alpha} + G^{\prime} \Tilde{\beta} + \mathsf{G} \Tilde{\theta} + \mathsf{G}^{\prime} \Tilde{\gamma} )$ satisfy that
\begin{align*}
    \Big| \frac{ p_{\{ Z+U+B \Tilde{\alpha} + B^{\prime} \Tilde{\beta} + \mathsf{B} \Tilde{\theta} + \mathsf{B}^{\prime} \Tilde{\gamma} \}} (\lambda) }{ p_{\{ Z+U+G \Tilde{\alpha} + G^{\prime} \Tilde{\beta} + \mathsf{G} \Tilde{\theta} + \mathsf{G}^{\prime} \Tilde{\gamma} \}} (\lambda) } - 1 \Big| \leq \ & 10^4 d^{5} \epsilon^3 q e^{8d \epsilon^2 q} \big( (\| \lambda \|_{\infty}+C) \| \Sigma^{-1} \|_{\infty} + \| \Sigma^{-1}\|_{\mathrm{op}} \| \Sigma \|_{ \mathrm{op}}^{1/2} \big)^3  \\
    & * \Big( e^{ 16d^2 \epsilon \| \Sigma^{-1} \|_{\infty} (C+\| \lambda \|_{\infty} ) } + e^{  4d^3 q \epsilon^2 \| \Sigma^{-1} \|^2_{\infty} (C+\| \lambda \|_{\infty})^2 } \Big)   \,.
\end{align*}
\end{Lemma}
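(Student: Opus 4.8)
The plan is to write both densities as Gaussian convolutions and then apply Lemma~\ref{lemma-bound-difference-exp-moments} essentially verbatim. Let $\phi_\Sigma$ denote the $\mathcal{N}(0,\Sigma)$ density and write $W = B\tilde\alpha + B'\tilde\beta + \mathsf{B}\tilde\theta + \mathsf{B}'\tilde\gamma$ and $W_G = G\tilde\alpha + G'\tilde\beta + \mathsf{G}\tilde\theta + \mathsf{G}'\tilde\gamma$ (both supported on the first $d$ coordinates). Since $Z$ is independent of $(U,B,B',\mathsf{B},\mathsf{B}',G,G',\mathsf{G},\mathsf{G}')$, we have $p_{\{Z+U+W\}}(\lambda) = \mathbb{E}[\phi_\Sigma(\lambda-U-W)]$ and likewise for $W_G$. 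Put $\hat U := U-\lambda$, so $|\hat U_k| \le C+\|\lambda\|_\infty =: C'$ and $\hat U$ is independent of the remaining variables; expanding $\|\lambda-U-W\|_{\Sigma^{-1}}^2 = \|\hat U\|_{\Sigma^{-1}}^2 + 2\langle \hat U,W\rangle_{\Sigma^{-1}} + \|W\|_{\Sigma^{-1}}^2$, the difference $p_{\{Z+U+W\}}(\lambda) - p_{\{Z+U+W_G\}}(\lambda)$ equals $(2\pi)^{-m/2}(\det\Sigma)^{-1/2}$ times
\[
\mathbb{E}\Big[ e^{\frac12(-\|\hat U\|_{\Sigma^{-1}}^2 - 2\langle \hat U,W\rangle_{\Sigma^{-1}} - \|W\|_{\Sigma^{-1}}^2)} - e^{\frac12(-\|\hat U\|_{\Sigma^{-1}}^2 - 2\langle \hat U,W_G\rangle_{\Sigma^{-1}} - \|W_G\|_{\Sigma^{-1}}^2)}\Big].
\]
This is precisely the left-hand side of Lemma~\ref{lemma-bound-difference-exp-moments} with $\mathrm{A}=\Sigma^{-1}$, its $U$ replaced by $\hat U$ and its $C$ by $C'$, with $X$ the first $d$ coordinates of $-W$ and $Y$ those of $-W_G$ (so $\tilde X=-W$, $\tilde Y=-W_G$). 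Its hypotheses hold: $X,Y$ are sub-Gaussian (linear images of bounded Bernoulli resp.\ Gaussian variables) and independent of $\hat U$; and since $(B,B',\mathsf{B},\mathsf{B}')$ and $(G,G',\mathsf{G},\mathsf{G}')$ share mean $0$ and covariance, the components satisfy $\mathbb{E}[X_k]=\mathbb{E}[Y_k]=0$ and $\mathbb{E}[X_kX_l]=\mathbb{E}[Y_kY_l]$ --- note the lemma does not require the $X_k$'s to be mutually independent.

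Next I would feed in the relevant estimates. Since $|B|,|B'|,|\mathsf{B}|,|\mathsf{B}'|\le 1$ and $\|\alpha\|_1,\ldots,\|\gamma\|_1\le d\epsilon$, we get $\|X\|_1\le 4d\epsilon$; bounding $\|X\|^3 \lesssim d^{3/2}(|B|^3\|\alpha\|^3+\cdots)$ with $\|\alpha\|_2\le\sqrt d\epsilon$ and $\mathbb{E}|\mathrm{Ber}(q)-q|^3\le q(1-q)$ gives $\mathbb{E}[\|X\|^3]\lesssim d^{5/2}\epsilon^3 q$, so $\mathbb{E}[e^{C'\|\Sigma^{-1}\|_\infty\|X\|_1}\|X\|^3]\lesssim d^{5/2}\epsilon^3 q\, e^{4dC'\epsilon\|\Sigma^{-1}\|_\infty}$. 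For the Gaussian term, Cauchy--Schwarz gives $\mathbb{E}[e^{C'\|\Sigma^{-1}\|_\infty\|Y\|_1}\|Y\|^3] \le (\mathbb{E}e^{2C'\|\Sigma^{-1}\|_\infty\|Y\|_1})^{1/2}(\mathbb{E}\|Y\|^6)^{1/2}$; the second factor is $\lesssim d^{3/2}\epsilon^3 q^{3/2}\le d^{3/2}\epsilon^3 q$, and the first is $\le e^{O((dC'\epsilon\|\Sigma^{-1}\|_\infty)^2 q)}$ by Hölder over the four Gaussians and the bound $\mathbb{E}e^{s|G|}\le 2e^{s^2 q}$. Using also $\|\mathrm{A}^{-1}\|_{\mathrm{op}}=\|\Sigma\|_{\mathrm{op}}$, the prefactor in Lemma~\ref{lemma-bound-difference-exp-moments} reads $100 d^3(C'\|\Sigma^{-1}\|_\infty + \|\Sigma^{-1}\|_{\mathrm{op}}\|\Sigma\|_{\mathrm{op}}^{1/2})^3$. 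Assembling these, and absorbing the combinatorial powers of $d$ into $d^5$ and the numerical constants into $10^4$, bounds $|p_{\{Z+U+W\}}(\lambda)-p_{\{Z+U+W_G\}}(\lambda)|$ by the claimed expression, carrying a common front factor $(2\pi)^{-m/2}(\det\Sigma)^{-1/2}\mathbb{E}[e^{-\frac12\|\hat U\|_{\Sigma^{-1}}^2}]$.

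Finally I would lower bound the denominator $p_{\{Z+U+W_G\}}(\lambda)$. Since $Z+W_G$ is Gaussian with covariance $\Sigma_+:=\Sigma+M\,\mathrm{Cov}(\xi_G)M^{*}\succeq\Sigma$ (where $M$ has columns $\tilde\alpha,\tilde\beta,\tilde\theta,\tilde\gamma$ and $\xi_G=(G,G',\mathsf{G},\mathsf{G}')$), one has $\Sigma_+^{-1}\preceq\Sigma^{-1}$ and $\det\Sigma_+/\det\Sigma\le e^{\mathrm{tr}(\Sigma^{-1}M\mathrm{Cov}(\xi_G)M^{*})}=e^{\mathbb{E}\|W_G\|_{\Sigma^{-1}}^2}$, whence $p_{\{Z+U+W_G\}}(\lambda)=\mathbb{E}_U[\phi_{\Sigma_+}(\lambda-U)] \ge (2\pi)^{-m/2}(\det\Sigma)^{-1/2}e^{-\frac12\mathbb{E}\|W_G\|_{\Sigma^{-1}}^2}\mathbb{E}[e^{-\frac12\|\hat U\|_{\Sigma^{-1}}^2}]$, with $\mathbb{E}\|W_G\|_{\Sigma^{-1}}^2 \lesssim d\epsilon^2 q$ (after absorbing the norm of $\Sigma^{-1}$ into the exponential factors of the statement). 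Dividing, the common factor $(2\pi)^{-m/2}(\det\Sigma)^{-1/2}\mathbb{E}[e^{-\frac12\|\hat U\|_{\Sigma^{-1}}^2}]$ cancels and one is left with the stated bound. The main obstacle is the bookkeeping in the Gaussian moment estimate $\mathbb{E}[e^{C'\|\Sigma^{-1}\|_\infty\|Y\|_1}\|Y\|^3]$, where $Y$ is a linear combination of possibly correlated Gaussians whose coefficients are entangled with $C'\|\Sigma^{-1}\|_\infty$: this is what forces the $e^{(\cdots)^2 q}$-type correction and requires care in tracking both the exponential weights and the powers of $d$; everything else is a direct expansion once one observes that Lemma~\ref{lemma-bound-difference-exp-moments} applies without modification.
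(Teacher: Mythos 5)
Your proposal is correct, and the core of it—recognizing Lemma~\ref{lemma-bound-difference-exp-moments} as the engine, applying it with $\mathrm{A}=\Sigma^{-1}$ and the shifted variable $\hat U = U - \lambda$ (so $C$ becomes $C+\|\lambda\|_\infty$), and estimating the resulting exponential moments of $\|X\|$ and $\|Y\|$—coincides with the paper's proof. The one place you genuinely diverge is in controlling the Gaussian reference density in the denominator. The paper factors the ratio as $\mathfrak{I}_1\cdot\mathfrak{I}_2$ with $\mathfrak{I}_1 = \mathbb{E}[e^{-\frac12\|\lambda-U\|^2_{\Sigma^{-1}}}]/\mathbb{E}[e^{-\frac12\|\lambda-U-\Tilde Y\|^2_{\Sigma^{-1}}}]$ and bounds $\mathfrak{I}_1$ by Jensen applied to the inner exponential. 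You instead marginalize the Gaussian noise exactly into the covariance, using $Z+W_G\sim\mathcal{N}(0,\Sigma_+)$ with $\Sigma_+=\Sigma + M\,\mathrm{Cov}(\xi_G)M^*\succeq\Sigma$, so that $\Sigma_+^{-1}\preceq\Sigma^{-1}$ pointwise dominates the exponent, and then control $\det\Sigma_+/\det\Sigma$ by $\det(\mathrm{I}+A)\le e^{\mathrm{tr}\,A}$, landing on $e^{\frac12\mathbb{E}\|W_G\|^2_{\Sigma^{-1}}}$. Both routes yield a factor $e^{O(dq\epsilon^2\|\Sigma^{-1}\|_{\mathrm{op}})}$ and feed into the same final bound; yours is a shade tighter (a factor $\tfrac12$ in the exponent) and arguably more transparent about why the Gaussian reference density cannot collapse, since the $\Sigma_+\succeq\Sigma$ ordering makes the mechanism explicit. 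The Jensen route is shorter to write once the product $\mathfrak{I}_1\mathfrak{I}_2$ has been set up, which is probably why the paper chose it.
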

\begin{proof}
We have that $p_{Z+U}(\lambda) = (\mathrm{det}(\Sigma))^{-\frac{1}{2}} (2\pi)^{-\frac{d}{2}}  \mathbb{E}_{U} \big[  \exp \{  -\frac{1}{2}\| \lambda - U \|_{\Sigma^{-1}}^2 \} \big]$, where $\mathbb{E}_{U}$ is the expectation by averaging over $U$.
Writing $\Tilde{X}=B \Tilde{\alpha} + B^{\prime} \Tilde{\beta} + \mathsf{B} \Tilde{\theta} + \mathsf{B}^{\prime} \Tilde{\gamma}$ and  writing $\Tilde{Y}= G \Tilde{\alpha} + G^{\prime} \Tilde{\beta} + \mathsf{G} \Tilde{\theta} + \mathsf{G}^{\prime} \Tilde{\gamma}$, we then have
\begin{align}\label{eq-mathfrak-I-1-2}
    \frac{ p_{Z+U+ \Tilde{X}}(\lambda) }{ p_{Z+U+ \Tilde{Y}}(\lambda) } -1 = \frac{ \mathbb{E}\big[ e^{ -\frac{1}{2} \| \lambda - U - \Tilde{X} \|^2_{\Sigma^{-1}}  }  \big]  }{ \mathbb{E}\big[ e^{ -\frac{1}{2} \| \lambda - U - \Tilde{Y} \|^2_{\Sigma^{-1}} } \big] } -1  = \mathfrak I_1 \times \mathfrak I_2
    \end{align}
where $\mathfrak I_1  = \frac{ \mathbb{E}\big[ e^{-\frac{1}{2} \| \lambda - U \|^2_{\Sigma^{-1}} } \big] }{ \mathbb{E}\big[ e^{-\frac{1}{2} \| \lambda - U - \Tilde{Y} \|^2_{\Sigma^{-1}} } \big] }$ and  \label{equ-density-ratio-part-II} 
    \begin{align}
    \mathfrak I_2 =   \frac{ \mathbb{E} \big[  e^{-\frac{1}{2} \| \lambda - U \|^2_{\Sigma^{-1}} } \big( e^{ \langle \lambda-U, \Tilde{X} \rangle_{\Sigma^{-1}} - \frac{1}{2} \| \Tilde{X} \|^2_{\Sigma^{-1}} } - e^{ \langle \lambda-U, \Tilde{Y} \rangle_{\Sigma^{-1}} - \frac{1}{2} \| \Tilde{Y} \|^2_{\Sigma^{-1}} } \big) \big] }{ \mathbb{E} \big[ e^{-\frac{1}{2} \| \lambda - U \|^2_{\Sigma^{-1}} } \big] } \label{equ-density-ratio-part-I}\,.
\end{align}
Here the equality in \eqref{eq-mathfrak-I-1-2} holds since one may simply cancel out the denominator in $\mathfrak I_2$ with the factor in $\mathfrak I_1$.
Thus, it suffices to bound $\mathfrak I_1$ and $\mathfrak I_2$ separately. 

We first bound $\mathfrak I_2$. Applying Lemma~\ref{lemma-bound-difference-exp-moments} with $\mathrm{A}=\Sigma^{-1}$ and using $|\lambda_{k}-U_{k}| \leq |\lambda_k|+C_k \leq \| \lambda \|_{\infty} + C$ and $\| X \|_1 \leq d \| X \|$ we get that   
\begin{align*}
    \mathfrak I_2 \leq \ & 100 d^3 ( (\| \lambda \|_{\infty}+C) \| \Sigma^{-1} \|_{\infty} + \| \Sigma^{-1}\|_{\mathrm{op}} \| \Sigma \|_{ \mathrm{op}}^{1/2} )^3  \\
    * & \Big( \mathbb{E} \Big[ e^{ d (C+\|\lambda \|_{\infty}) \| \Sigma^{-1} \|_{\infty} \| X \| } \|X\|^3 \Big]+ \mathbb{E} \Big[ e^{ d (C+\|\lambda \|_{\infty}) \| \Sigma^{-1} \|_{\infty} \| Y \| } \|Y\|^3 \Big] \Big)\,.
\end{align*}
For notational convenience, we denote $\mathtt E_X$ and $\mathtt E_Y$ the two expectations in the preceding inequality. Since $|X_k|=| B \alpha_k + B^{\prime} \beta_k + \mathsf{B} \theta_k + \mathsf{B}^{\prime} \gamma_k | \leq 4 \epsilon$, we have $\|X\| \leq 2 \sqrt{d} \epsilon$ and $\mathbb{P}[ \| X \| = 0 ] \geq \mathbb{P}[ B,B^{\prime},\mathsf{B},\mathsf{B}^{\prime} =0] \geq 1-4q$. Thus,
\begin{align}
    \mathtt E_X &\leq  \mathbb{E}_{X} \Big[ e^{ 4 d^2 \epsilon (C+\| \lambda \|_{\infty}) \| \Sigma^{-1} \|_{\infty} } (2\sqrt{d} \epsilon)^3 \mathbf{1}_{\| X \| \neq 0} \Big] \leq 100 d^2 \epsilon^3 q e^{ 4d^2 \epsilon (C+\| \lambda \|_{\infty}) \| \Sigma^{-1} \|_{\infty} }   \,.
    \label{equ-bound-density-ratio-part-I-1}
\end{align}
Since $Y_k= G \alpha_k + G^{\prime} \beta_k + \mathsf{G} \theta_k + \mathsf{G}^{\prime} \gamma_k$ is a sub-Gaussian variable with sub-Gaussian norm at most  $16 \epsilon^2 q$, we see that $\mathbb{P} \big[ \| Y \| \geq t \Big] \leq 2d e^{-\frac{t^2}{16 \epsilon^2 d q}}$. Consequently,
\begin{align*}
    \mathtt E_Y \leq 100 d^2 \epsilon^3 q  e^{4 d^3 q \epsilon^2 \| \Sigma^{-1} \|_{\infty}^2 (C + \| \lambda \|_{\infty})^2   }  \,.
\end{align*}
Combined with \eqref{equ-bound-density-ratio-part-I-1}, it yields that
\begin{align}
    \mathfrak I_2 \leq 10^4 d^{5} \epsilon^3 q \big( (\| \lambda \|_{\infty}+C) \| \Sigma^{-1} \|_{\infty} + \| \Sigma^{-1}\|_{\mathrm{op}} \| \Sigma \|_{ \mathrm{op}}^{1/2} \big)^3 \nonumber \\
    * \Big( e^{ 4d^2 \epsilon \| \Sigma^{-1} \|_{\infty} (C+\| \lambda \|_{\infty} ) } + e^{  4d^3 q \epsilon^2 \| \Sigma^{-1} \|^2_{\infty} (C+\| \lambda \|_{\infty})^2 } \Big) \,.
    \label{equ-bound-density-ratio-I}
\end{align}
We next bound $\mathfrak I_1$. Applying Jensen's inequality, we get that 
\begin{align*}
    & \mathbb{E}_{U,Y} \Big[ e^{  -\frac{1}{2} \| \lambda - U - \Tilde{Y} \|^2_{\Sigma^{-1}}  } \Big] \geq \mathbb{E}_{U} \Big[ e^{ \mathbb{E}_{Y} [ -\frac{1}{2} \| \lambda - U - \Tilde{Y} \|^2_{\Sigma^{-1}} ] }  \Big] \\
    \geq \ & e^{ - \| \Sigma^{-1} \|_{\mathrm{op}} \mathbb{E}[ \| \Tilde{Y} \|^2 ] } \mathbb{E}_{U} \Big[ e^{ -\frac{1}{2} \| \lambda - U \|^2_{\Sigma^{-1}} }  \Big] \geq e^{- 16 d q \epsilon^2 \| \Sigma^{-1} \|_{\mathrm{op}} }  \mathbb{E}_{U} \Big[ e^{ -\frac{1}{2} \| \lambda - U \|^2_{\Sigma^{-1}} }  \Big] \,,
\end{align*}
where the second inequality uses independence and the last inequality uses $\E Y_k^2 \leq 16 q \epsilon^2$. Thus,
$\mathfrak I_1 \leq e^{16d \epsilon^2 q \| \Sigma^{-1} \|_{\mathrm{op}}} $. 
Combined with \eqref{equ-bound-density-ratio-I}, this yields the desired bound.
\end{proof}

By Lemma~\ref{lemma-substitute-one-group}, we need to employ suitable truncations in order to control the density ratio; this is why we defined $\mathrm{LARGE}_t \subset \mathrm{BAD}_t$ as in \eqref{equ-def-set-LARGE}. We now prove Lemma~\ref{lemma-bound-truncated-unconditional-density-ratio}.

\begin{proof}[Proof of Lemma~\ref{lemma-bound-truncated-unconditional-density-ratio}.]
Fix $1\leq j\leq N$. We now set the framework for applying Lemma~\ref{lemma-substitute-one-group}. Recall (the third equality of) \eqref{equ-def-degree-after-j-substitution}.  Define 
\begin{align*}
    & Z(s,k,v) = W^{(s)}_v (k) + \mathbf{X}^{\leq t}_{[j-1]} (s,k,v), \quad \lambda(s,k,v) = x^{\leq t}(s,k,v) \,, \\
    & Z(s,k,\mathsf{v}) = \mathsf{W}^{(s)}_{\mathsf{v}} (k) + \mathbf{X}^{\leq t}_{[j-1]} (s,k,\mathsf{v}), \quad \lambda(s,k,\mathsf{v}) = {x}^{\leq t} (s,k,\mathsf{v}) \,.
\end{align*}
Let $( B, B^{\prime}, \mathsf{B}, \mathsf{B}^{\prime} ) = ( \overrightarrow{G}_{u_j,w_j}-\Hat{q}, \overrightarrow{G}_{w_j,u_j}-\Hat{q}, \overrightarrow{\mathsf{G}}_{\pi(u_j),\pi(w_j)}-\Hat{q}, \overrightarrow{\mathsf{G}}_{\pi(w_j),\pi(u_j)}-\Hat{q} )$, and let 
$$
\alpha(s,k,v) = \sum_{i=1}^{K_s} \eta^{(s)}_k (i) \frac{ \mathbf{1}_{ \{ w_j \in \xi^{(s)}_i \} } - \mathfrak{a}_s }{ \sqrt{ (\mathfrak{a}_s - \mathfrak{a}_s^2) n \Hat{q} (1-\Hat{q}) } } \mbox{ for } v = u_j \mbox{ and } \alpha(s,k,v) = 0 \mbox{ for } v \neq u_j \,.
$$ 
Thus we see that $\alpha(s,k,v)$ is the coefficient of $\overrightarrow{G}_{u_j,w_j}-\Hat{q}$ in $\langle \eta^{(s)}_k, \varphi^{(s)}_v \rangle$. Similarly denote by $\beta(s,k,v)$ the coefficient of $(\overrightarrow{G}_{w_j,u_j}-\Hat{q})$ in $\langle \eta^{(s)}_k, \varphi^{(s)}_v \rangle$, denote by $\theta(s,k,\mathsf{v})$ the coefficient of $(\overrightarrow{\mathsf{G}}_{\pi(u_j),\pi(w_j)}-\Hat{q})$ in $\langle \eta^{(s)}_k, \psi^{(s)}_{\mathsf{v}} \rangle$, and denote by $\gamma(s,k,\mathsf{v})$ the coefficient of $(\overrightarrow{\mathsf{G}}_{\pi(w_j),\pi(u_j)}-\Hat{q})$ in $\langle \eta^{(s)}_k, \psi^{(s)}_{\mathsf{v}} \rangle$. Further, define $U(s,k,v) = \langle \eta^{(s)}_k, \varphi^{(s)}_v \rangle_{\langle j \rangle}$ and $U(s,k,\pi(v)) = \langle \eta^{(s)}_k, \varphi^{(s)}_{\pi(v)} \rangle_{\langle j \rangle}$. Then we have $\mathbf{X}^{\leq t}_{(j-1)} = U + Z + (B \alpha + B^{\prime} \beta + \mathsf{B} \theta + \mathsf{B}^{\prime} \gamma  )$, and on $\mathcal{M}_j$ we have that $\| U \|_{\infty} \leq n^{\frac{1}{\log \log \log n}}$. Also, we have $| x^{\leq t} (s,k,v) | \leq 2 (\log n) n^{ \frac{1}{ \log \log \log n} }$ since $x^{\leq t}$ is an amenable variable-realization. 
Thus, by $\| \eta_k \|^2 \leq 2$ and Cauchy-Schwartz inequality,
\begin{align*}
    |\alpha(s,k,u_j)| , |\beta(s,k,w_j)|, |\theta(s,k,\pi(u_j))|, |\gamma(s,k,\pi(w_j))| \leq (K_s/ \mathfrak{a}_s n \Hat{q})^{1/2}\,.
\end{align*}
Finally, let $\Sigma_{[j-1]}$  be the covariance matrix of $Z$. Since $\Sigma_{[j-1]} - \mathrm{I}$ is the covariance matrix of $\{ \mathbf{X}^{\leq t}_{[j-1]} \}$, we have $\| \Sigma^{-1}_{[j-1]} \|_{\mathrm{op}} \leq 1$. Also, $\|\Sigma_{[j-1]}\|_{\infty} \leq 2$ and 
$$
\Sigma_{[j-1]} ((s,k,u);(r,l,v)) = O \big( (\mathbf{1}_{v \in \cup_{i} \Gamma^{(s)}_i} - \mathfrak{a}_s) (\mathbf{1}_{u \in \cup_{i} \Gamma^{(r)}_i} - \mathfrak{a}_r) / n \sqrt{\mathfrak{a}_s \mathfrak{a}_r}  \big) \mbox{ for } u \neq v\,.
$$ 
Thus, we may apply Lemma~\ref{lemma-bound-on-op-norm} with 
\begin{align*}
    \mathcal{I}_v = \mathcal{J}_v = \big\{ (s,k,v), (s,k,\pi(v)): 0\leq s\leq t, 1\leq k\leq K_s/12 \big\}
\end{align*}
and derive that $\| \Sigma_{[j-1]} \|_{\mathrm{op}} \leq 2 K_t^2$. Furthermore, we can bound $\| \Sigma^{-1}_{[j-1]} \|_\infty$ by Lemma~\ref{lemma-bound-on-infty-norm} as follows. Set $\mathrm{A}$ (in Lemma~\ref{lemma-bound-on-infty-norm}) to be a matrix with $\mathrm{A}((s,k,v),(r,l,v)) = \Sigma_{[j-1]} ((s,k,v),(r,l,v)) $ and all other entries being 0, and set $\mathrm{B} = \Sigma_{[j-1]} - \mathrm{A}$. Then $\| (\mathrm{A+B})^{-1} \|_{\mathrm{op}} = \| \Sigma_{[j-1]}^{-1} \|_{\mathrm{op}} \leq 1$, the entries of $\mathrm{B}$ are bounded by $\vartheta^{-1} K_t^2 / n$, and $\mathrm{A}= \mathrm{diag}( \mathrm{I} + \mathrm{A}_v )$ is a block-diagonal matrix where each block $\mathrm{A}_v$ is the covariance matrix of 
$$ \big\{ \mathbf{X}^{\leq t}_{[j-1]} (s,k,v), \mathbf{X}^{\leq t}_{[j-1]} (s,k,\pi(v)): 0\leq s\leq t, 1\leq k\leq K_s/12 \big\}\,.$$ 
So $\mathrm{A}_v$ is semi-positive definite and thus $\| (\mathrm{I} + \mathrm{A}_{v})^{-1} \|_{\mathrm{op}} \leq 1$. Since also the dimension of $\mathrm{A}_v$ is bounded by $K_t$,  we have $\| ( \mathrm{I} + \mathrm{A}_v )^{-1} \|_{\infty} \leq K_t^2 \| (\mathrm{I} + \mathrm{A}_{v})^{-1} \|_{\mathrm{op}} \leq K_t^2$. Thus, we have that $\| \mathrm{A}^{-1} \|_{\infty} = \max_v \{ \| (\mathrm{I} + \mathrm{A}_v)^{-1} \|_{\infty} \} \leq K_t^2$. Therefore, we can apply Lemma~\ref{lemma-bound-on-infty-norm} with $C=1,m=K_t n, K = \vartheta^{-1} K_t^3$ and $L=K_t^2$ and obtain that $\| \Sigma_{[j-1]}^{-1} \|_{\infty} \leq 2 K_t^5 \vartheta^{-1}$. Applying Lemma~\ref{lemma-substitute-one-group} and using independence between $\{ \mathcal{M}_j, U \}$ and $Z$, we get that
\begin{align*}
    & \Big| \frac{ {p}_{ \{ \mathbf{X}^{\leq t}_{(j-1)} ; \mathcal{M}_j \} } (x^{\leq t}) }{  {p}_{ \{ \mathbf{X}^{\leq t}_{(j)} ; \mathcal{M}_j \} } (x^{\leq t}) } - 1 \Big|  \\
    \lesssim \ & K_t^5 (\sqrt{n \Hat{q}})^{-3} \Hat{q} ( (\log n) n^{\frac{1}{\log \log \log n}} K_t^5 \vartheta^{-1} )^3 
    \exp \big\{ 4K_t \sqrt{ K_t / \vartheta n \Hat{q} } \cdot 2 (\log n) n^{ \frac{1}{\log \log \log n} } \big\}  \\
    \leq \ & K_t^{20} \vartheta^{-3} (\log n)^3 n^{ \frac{3}{ \log \log \log n } } / n\sqrt{n\Hat{q}}  \,. \qedhere
\end{align*}
\end{proof}

\subsection{Gaussian analysis} \label{sec:Gaussian-analysis}

Since 
\begin{equation}\label{eq-independence-observation}
    \eqref{eq-to-be-conditioned-on} \mbox{ is independent of Gaussian variables } \big\{W^{(s)}_v, \mathsf W^{(s)}_{\mathsf v},  \overrightarrow{Z}_{v,u}, \overrightarrow{\mathsf Z}_{\mathsf v,\mathsf u} \big\}\,,
\end{equation} 
when analyzing the process defined by \eqref{equ-def-g-Tilde-D} it would be convenient to and thus we will 
\begin{equation}\label{eq-convention-announcement}
\mbox{ condition on the realization of } \eqref{eq-to-be-conditioned-on}\,. 
\end{equation}
As such,  the following process defined by \eqref{equ-def-g-Tilde-D} can be viewed as a Gaussian process:
\begin{align*}
    \Bigg\{ \begin{split} & \Tilde{W}^{(s)}_v(k) + \langle \eta^{(s)}_k, g_{t}\Tilde{D}^{(s)}_v \rangle \\ & \Tilde{\mathsf{W}}^{(s)}_{\pi(v)} (k) + \langle \eta^{(s)}_k, g_{t}\Tilde{\mathsf{D}}^{(s)}_{\pi(v)} \rangle  \end{split} : 0 \leq s \leq t+1, 1 \leq k \leq \frac{K_s}{12}, v \in V \setminus \mathrm{BAD}_{t}  \Bigg\}.
\end{align*}
Note that our convention here is consistent with \eqref{eq-Lemma-3.9} (see also Remark~\ref{rem-Y-conditioning-B-t}).
Recall that $\mathcal{F}_t$ is the $\sigma$-field generated by $\mathfrak{F}_t$ (see \eqref{eq-def-mathcal-F-t}), which is slightly different from the above process since in $\mathfrak F_t$ we have $s\leq t$. We will study the conditional law of $\Tilde{W}^{(t+1)}_v(k) + \langle \eta^{(t+1)}_k, g_{t}\Tilde{D}^{(t+1)}_v \rangle$ given $\mathcal{F}_{t}$. A plausible approach is to apply  techniques of Gaussian projection developed in \cite{DL22+} (see also e.g. \cite{BM10} for important development on problems related to a single random matrix). To this end, we define the operation $\Hat{\mathbb{E}}$ as follows: for any function $h$ (of the form $h ( \Gamma, \Pi, \mathrm{BAD}_t, \overrightarrow{Z}, \overrightarrow{\mathsf{Z}}, \Tilde{W}, \Tilde{\mathsf{W}} )$), define 
\begin{equation}\label{eq-def-Hat-E}
    \Hat{\mathbb{E}}[ h ( \Gamma, \Pi, \mathrm{BAD}_t, \overrightarrow{Z}, \overrightarrow{\mathsf{Z}},  \Tilde{W}, \Tilde{\mathsf{W}} ) ] = \mathbb{E} [ h \mid \Gamma, \Pi, \mathrm{BAD}_t ]\,.
\end{equation} Our definition of $\Hat{\mathbb E}$ appears to be simpler than that in \cite{DL22+} thanks to \eqref{eq-independence-observation}. We emphasize that the two definitions are in fact identical, and the simplicity of the expression in \eqref{eq-def-Hat-E} is due to the reason that we have introduced an independent copy of Gaussian process already for the purpose of applying Lindeberg's argument.
Further, when calculating $\Hat{\mathbb{E}}$ with respect to $g_t \Tilde{D}^{(s)}_v$ and $g_t \Tilde{\mathsf{D}}^{(s)}_{\mathsf{v}}$, we regard $g_t \Tilde{D}^{(s)}_v = \psi_{v}^{(s)}$ and $g_t \Tilde{\mathsf{D}}^{(s)}_{\pi(v)} = \psi_{\pi(v)}^{(s)}$ as vector valued functions defined in \eqref{eq-def-vector-value-function-1} and \eqref{eq-def-vector-value-function-2}. For convenience of definiteness, we list variables in $\mathfrak{F}_t$ in the following order: first we list all $\Tilde{W}^{(s)}_v(k) + \langle \eta^{(s)}_k, g_t\Tilde{D}^{(s)}_v \rangle$ indexed by $(s,k,v)$ in the dictionary order and then we list all $\Tilde{\mathsf{W}}^{(s)}_{\mathsf{v}}(k) + \langle \eta^{(s)}_k, g_t \Tilde{\mathsf{D}}^{(s)}_{\mathsf{v}} \rangle$ indexed by $(s,k,\mathsf{v})$ in the dictionary order. Since $\{ \Tilde{W}^{(s)}_v(k), \Tilde{\mathsf{W}}^{(s)}_{\mathsf{v}}(k) \}$ are i.i.d.\ standard Gaussian variables, it suffices to calculate correlations between variables in the collection $\{ \langle \eta^{(s)}_k, g_t\Tilde{D}^{(s)}_v \rangle, \langle \eta^{(s)}_k, g_t \Tilde{\mathsf{D}}^{(s)}_{\mathsf{v}} \rangle \}$. Under this ordering, on the event $\mathcal{E}_{t}$ we have for all $r,s \leq t$
\begin{align*}
    \Hat{\mathbb{E}} [ g_t\Tilde{D}^{(r)}_v (k) g_t\Tilde{D}^{(s)}_u (l) ] = \Hat{\mathbb{E}} [ g_t\Tilde{\mathsf{D}}^{(r)}_{\mathsf{v}}(k) g_t\Tilde{\mathsf{D}}^{(s)}_{\mathsf{u}} (l) ] =0 \mbox{ for } u \neq v, \mathsf{u \neq v} \,,
\end{align*}
since $\{\overrightarrow{Z}_{u,w}: u\neq w\}$ is a collection of independent variables (and similarly for $\overrightarrow{\mathsf Z}$). Also, for all $0 \leq r,s \leq t$ (recall that $\mathfrak{a}_s = \mathfrak{a}$ for $s \geq 1$ and $\mathfrak{a}_0 = \vartheta$), on the event $\mathcal T_t$ we have
\begin{align*}
    & \Hat{\mathbb{E}} [ g_t\Tilde{D}^{(r)}_v (k) g_t\Tilde{D}^{(s)}_v (l) ] = \frac{1}{ n\sqrt{(\mathfrak{a}_r-\mathfrak{a}_r^2)(\mathfrak{a}_s-\mathfrak{a}_s^2)} } \sum_{w \in V \setminus \mathrm{BAD}_{t}} (\mathbf{1}_{w \in \Gamma^{(r)}_k} - \mathfrak{a}_r) (\mathbf{1}_{w \in \Gamma^{(s)}_l} - \mathfrak{a}_s) \\
    =& \frac{ \big| \Gamma^{(r)}_k \cap \Gamma^{(s)}_l \setminus \mathrm{BAD}_{t} \big| - \mathfrak{a}_s \big| \Gamma^{(r)}_k \setminus \mathrm{BAD}_{t} \big| - \mathfrak{a}_r \big| \Gamma^{(s)}_l \setminus \mathrm{BAD}_{t} \big| + \mathfrak{a}_r \mathfrak{a}_s (n-|\mathrm{BAD}_{t}|)  }{ n \sqrt{(\mathfrak{a}_r-\mathfrak{a}_r^2) (\mathfrak{a}_s-\mathfrak{a}_s^2)}}  \\
    =& \frac{\big|\Gamma^{(r)}_k \cap \Gamma^{(s)}_l \big| - \mathfrak{a}_s \big|\Gamma^{(r)}_k \big| - \mathfrak{a}_r \big|\Gamma^{(s)}_l \big| +  \mathfrak{a}_r \mathfrak{a}_s n}{n \sqrt{(\mathfrak{a}_r-\mathfrak{a}_r^2) (\mathfrak{a}_s-\mathfrak{a}_s^2)} } + O\big(\frac{ |\mathrm{BAD}_{t}| }{n \vartheta}\big) = \mathrm{M}^{(r,s)}_{\Gamma} (k,l) + o( \Delta^2_{0} ) \,,
\end{align*}
where the last transition follows from \eqref{eq-def-mathcal-T-t}. Similarly we have (again on $\mathcal T_t$)
\begin{align*}
    & \Hat{\mathbb{E}} [ g_t\Tilde{\mathsf{D}}^{(r)}_{\mathsf{v}} (k) g_t\Tilde{\mathsf{D}}^{(s)}_{\mathsf{v}} (l) ] = \mathrm{M}^{(r,s)}_{\Pi} (k,l) + o( \Delta^2_{0} ) \,, \\
    & \Hat{\mathbb{E}} [ g_t\Tilde{{D}}^{(r)}_v (k) g_t\Tilde{\mathsf{D}}^{(s)}_{\pi(v)} (l) ] = \mathrm{P}^{(r,s)}_{\Gamma,\Pi} (k,l) + o( \Delta^2_{0} ) \,.
\end{align*}
Thus, on $\mathcal E_t \cap \mathcal T_t$ we have
\begin{align}
    \Hat{\mathbb{E}}[\langle \eta^{(r)}_k, g_t\Tilde{D}^{(r)}_v \rangle \langle \eta^{(s)}_m, g_t\Tilde{D}^{(s)}_u \rangle] &= \Hat{\mathbb{E}}[ \langle \eta^{(r)}_k, g_t\Tilde{\mathsf{D}}^{(r)}_{\mathsf{v}} \rangle \langle \eta^{(s)}_m, g_t\Tilde{\mathsf{D}}^{(s)}_{\mathsf{u}} \rangle ]=0  \mbox{ for } u\neq v, \mathsf u\neq \mathsf v\,, \label{equ-degree-correlation-1} \\
    \Hat{\mathbb{E}}[\langle \eta^{(r)}_k, g_t\Tilde{D}^{(r)}_v \rangle \langle \eta^{(s)}_m, g_t\Tilde{D}^{(s)}_v \rangle] &= \eta^{(r)}_k \mathrm{M}_{\Gamma}^{(r,s)} \big( \eta^{(s)}_m \big)^{*} + o(K_t \Delta_0^2)  \nonumber  \\ 
    & \overset{ \eqref{equ-linear-space}, \eqref{equ-vector-orthogonal} }{=} o( K_t \Delta_0^2) \mbox{ for } (r,k) \neq (s,m) \,,  \label{equ-degree-correlation-2}\\
    \Hat{\mathbb{E}}[\langle \eta^{(r)}_k, g_t\Tilde{\mathsf{D}}^{(r)}_{\mathsf{v}} \rangle \langle \eta^{(s)}_m, g_s\Tilde{\mathsf{D}}^{(s)}_{\mathsf{v}} \rangle] &= \eta^{(r)}_k \mathrm{M}_{\Pi}^{(r,s)} \big( \eta^{(s)}_m \big)^{*} + o(K_t \Delta_0^2)  \nonumber \\
    &\overset{ \eqref{equ-linear-space}, \eqref{equ-vector-orthogonal} }{=} o(K_t \Delta_0^2 ) \mbox{ for } (r,k) \neq (s,m)  \,. \label{equ-degree-correlation-3}  
\end{align}
In addition, we have
\begin{align}
    \Hat{\mathbb{E}}[\langle \eta^{(r)}_k, g_t\Tilde{{D}}^{(r)}_{v} \rangle \langle \eta^{(s)}_m, g_s\Tilde{\mathsf{D}}^{(s)}_{\pi(v)} \rangle] &= \Hat{\rho} \cdot \eta^{(r)}_k \mathrm{P}_{\Gamma,\Pi}^{(r,s)} \big( \eta^{(s)}_m \big)^{*} + o(K_t \Delta_0^2) \nonumber \\
    &\overset{ \eqref{equ-linear-space}, \eqref{equ-vector-orthogonal} }{=}  o(K_t \Delta_t ) \mbox{ for } (r,k) \neq (s,m)  \,, \label{equ-degree-correlation-4}
\end{align}
where in the second equality above we also used that the entries of $\mathrm{P}^{(r,s)}_{\Gamma,\Pi}, \mathrm{M}^{(r,s)}_{\Gamma}, \mathrm{M}^{(r,s)}_{\Pi}$ are bounded by $\Delta_{t}$ when $r \neq s$, and $\mathrm{P}^{(s,s)}_{\Gamma,\Pi}, \mathrm{M}^{(s,s)}_{\Gamma}, \mathrm{M}^{(s,s)}_{\Pi}$ concentrate around $\Psi^{(s)},\Phi^{(s)},\Phi^{(s)}$ respectively with error $\Delta_t$. In addition, we have that
\begin{align}
    & \Big| \Hat{\mathbb{E}}[ \langle \eta^{(r)}_k, g_t\Tilde{D}^{(r)}_v \rangle^2 ] - 1 \Big|  = \Big| \eta^{(r)}_k \mathrm{M}_{\Gamma}^{(r,r)} \big( \eta^{(r)}_k \big)^{*} -1 \Big| + o( K_t \Delta_0^2) \nonumber \\
    = \ & \Big| \eta^{(r)}_k \Phi^{(r)} \big( \eta^{(r)}_k \big)^{*} + \eta^{(r)}_k \big( \mathrm{M}_{\Gamma}^{(r,r)}-\Phi^{(r)} \big) \big( \eta^{(r)}_k \big)^{*} -1 \Big| + o(K_t \Delta_0^2) \overset{ \eqref{equ-vector-unit} }{\leq} K_t \Delta_t  \,,  \label{equ-degree-variance-1} \\
    &\Big| \Hat{\mathbb{E}}[\langle \eta^{(r)}_k, g_t\Tilde{\mathsf{D}}^{(r)}_{\mathsf{v}} \rangle^2 ] - 1 \Big| = \Big| \eta^{(r)}_k \mathrm{M}_{\Pi}^{(r,r)} \big( \eta^{(r)}_k \big)^{*} -1 \Big| + o(K_t \Delta_0^2) \nonumber\\
    = \ & \Big| \eta^{(r)}_k \Phi^{(r)} \big( \eta^{(r)}_k \big)^{*} +  \eta^{(r)}_k  \big( \mathrm{M}_{\Pi}^{(r,r)} - \Phi^{(r)} \big) \big( \eta^{(r)}_k \big)^{*} - 1 \Big| + o(K_t \Delta_0^2) \overset{ \eqref{equ-vector-unit} }{\leq} K_t \Delta_t  \,,  \label{equ-degree-variance-2} \\
    &\Big| \Hat{\mathbb{E}}[ \langle \eta^{(r)}_k, g_t\Tilde{{D}}^{(r)}_{\mathsf{v}} \rangle \langle \eta^{(r)}_k, g_t \Tilde{\mathsf{D}}^{(r)}_{\pi(v)} \rangle ] - \Hat{\rho} \cdot \eta^{(r)}_k \Psi^{(r)} \big( \eta^{(r)}_k \big)^{*} \Big|  \nonumber\\
    = \ & \Hat{\rho} \cdot \Big|  \eta^{(r)}_k  \big( \mathrm{P}_{\Gamma, \Pi}^{(r,r)} - \Psi^{(r)} \big) \big( \eta^{(r)}_k \big)^{*}  \Big| + o(K_t \Delta_0^2) \leq K_t \Delta_t  \,.  \label{equ-degree-covariance}
\end{align}
Recall that $\mathfrak{F}_t$ consists of variables in \eqref{eq-def-mathcal-F-t} where $\{ W^{(t)}_v(k), \mathsf{W}^{(t)}_{\mathsf v}(k) \}$ is a collection of standard Gaussian variables independent with $\{ \langle \eta^{(t)}_k, g_t \Tilde{D}^{(t)}_v \rangle, \langle \eta^{(t)}_k, g_t \Tilde{\mathsf D}^{(t)}_{\mathsf v} \rangle \}$.
Therefore, we may write the $\Hat{\mathbb{E}}$-correlation matrix of $\mathfrak F_t$ as $\begin{pmatrix} \mathrm{I} + \mathbf{A}_t & \mathbf{B}_t \\ \mathbf{B}_t^{*} & \mathrm{I} + \mathbf{C}_t \end{pmatrix}$, such that the following hold:
\begin{itemize}
\item $\mathbf{A}_t, \mathbf{C}_t$ have diagonal entries in $(1-K_t \Delta_t,1+K_t \Delta_t)$;
\item for each fixed $(s,k,u)$, there are at most $2K_t$ non-zero non-diagonal $\mathbf{A}_t ((s,k,u);(r,l,u))$ (and also the same for $\mathbf{C}_t$) and these entries are all bounded by $K_t \Delta_t$ (this fact implies that $\| \mathbf{A}_t - \mathrm{I} \|_{\mathrm{op}}, \| \mathbf{C}_t - \mathrm{I} \|_{\mathrm{op}} = O(K_t^2 \Delta_t)$);
\item $\mathbf{B}_t$ is the matrix with row indexed by $(s,k,u)$ for $0 \leq s \leq t, 1 \leq k \leq \frac{K_s}{12}, u \in V \setminus \mathrm{BAD}_t$ and column indexed by $(r,l,\mathsf{w})$ for $0 \leq r \leq t, 1 \leq l \leq \frac{K_s}{12}, \mathsf{w} \in \mathsf{V} \setminus \pi(\mathrm{BAD}_t)$, and with entries $\mathbf{B}_t((s,k,u);(r,l,\mathsf{w}))$ given by $\Hat{\mathbb{E}}[\langle \eta^{(s)}_k, g_s\Tilde{D}^{(s)}_u \rangle  \langle \eta^{(r)}_l, g_r\Tilde{\mathsf{D}}^{(r)}_{\mathsf{w}} \rangle]$.
\end{itemize} 
Thus, by \cite[Lemma 3.10]{DL22+} we have
\begin{align}
    \mathbb{E}[ \langle \eta^{(t+1)}_k, g_{t} \Tilde{D}^{(t+1)}_v \rangle | \mathcal{F}_{t}] = 
    \begin{pmatrix}
        g_t[\Tilde{Y}]_{t} & g_t[\Tilde{\mathsf{Y}}]_{t}
    \end{pmatrix}
    \begin{pmatrix}
        \mathrm{I} + \mathbf{A}_t  & \mathbf{B}_t \\
        \mathbf{B}_t^{*} & \mathrm{I} + \mathbf{C}_t
    \end{pmatrix}^{-1}
    \begin{pmatrix}
        H_{t+1,k,v}^{*} \\ 
        \mathsf{H}_{t+1,k,v}^{*} 
    \end{pmatrix} \,.
    \label{equ-formulate-projection}
\end{align}
Here $H_{t+1,k,v},\mathsf{H}_{t+1,k,v}$ and $g_t[\Tilde{Y}]_t, g_t[\Tilde{\mathsf{Y}}]_t$ are all $\sum_{0 \leq s \leq t} \frac{K_s}{12}(n - |\mathrm{BAD}_t|)$ dimensional vectors; $H_{t+1,k,v}$ and $g_t[\Tilde{Y}]_t$ are indexed by triple $(s,l,u)$ with $0 \leq s \leq t, 1 \leq k \leq \frac{K_s}{12}, u \in V \setminus \mathrm{BAD}_t$ in the dictionary order; $\mathsf{H}_{t+1,k,v}$ and $g_t[\Tilde{\mathsf{Y}}]_t$ are indexed by triple $(s,l,\mathsf{u})$ with $0 \leq s \leq t, 1 \leq k \leq \frac{K_s}{12}, \mathsf{u} \in \mathsf{V} \setminus \pi(\mathrm{BAD}_t)$ in the dictionary order. Also $g_t[\Tilde{Y}]_t$ and $g_t[\Tilde{\mathsf{Y}}]_t$ can be divided into sub-vectors as follows:
\begin{align*}
    g_t[\Tilde{Y}]_t = [ g_t \Tilde{Y}_t \, | \,g_t \Tilde{Y}_{t-1} \,|\, \ldots \,|\, g_t \Tilde{Y}_0 ] \mbox{ and } g_t [\Tilde{\mathsf{Y}}]_t = [ g_t\Tilde{\mathsf{Y}}_t \,|\, g_t\Tilde{\mathsf{Y}}_{t-1} \,|\, \ldots \,|\, g_t\Tilde{\mathsf{Y}}_0 ] \,,
\end{align*}
where $g_t \Tilde{Y}_s$ and $g_t \Tilde{\mathsf{Y}}_s$ are $\frac{K_s}{12}(n-|\mathrm{BAD}_t|)$ dimensional vectors indexed by $(k,u)$ and $(k,\mathsf{u})$, respectively. In addition, their entries are given by
\begin{equation*}
\begin{aligned}
    & g_t \Tilde{Y}_s (l,u) = \Tilde{W}^{(s)}_u(l) +  \langle \eta^{(s)}_l, g_t \Tilde{D}^{(s)}_u \rangle, \quad  H_{t+1,k,v} ( s,l,u ) = \Hat{\mathbb{E}}[\langle \eta^{(t+1)}_k, g_t\Tilde{D}^{(t+1)}_v \rangle \langle \eta^{(s)}_l, g_t \Tilde{D}^{(s)}_{{u}} \rangle]\,; \\
    & g_t\Tilde{\mathsf{Y}}_s (l,\mathsf{u}) =  \Tilde{\mathsf{W}}^{(s)}_{\mathsf{u}}(l) + \langle \eta^{(s)}_l, g_t \Tilde{\mathsf{D}}^{(s)}_{\mathsf{u}} \rangle, \quad \mathsf{H}_{t+1,k,v} (s,l,\mathsf{u}) = \Hat{\mathbb{E}}[\langle \eta^{(t+1)}_k, g_t\Tilde{D}^{(t+1)}_v \rangle \langle \eta^{(s)}_l, g_t\Tilde{\mathsf{D}}^{(s)}_{\mathsf{u}} \rangle] \,.
\end{aligned}
\end{equation*}

\begin{Remark} \label{remark-conditional-Gaussian}
In conclusion, we have shown that  
    \begin{align}
    & \Big(\begin{pmatrix}
    g_t \Tilde{Y}_{t+1} & g_t \Tilde{\mathsf{Y}}_{t+1}  
    \end{pmatrix} \big| \mathcal{F}_{t} \Big)  \overset{d}{=} 
    \Big( 
    \begin{pmatrix}
        g_t[\Tilde{Y}]_{t} &
        g_t[\Tilde{\mathsf{Y}}]_{t}  
    \end{pmatrix}
    \begin{pmatrix}
        \mathrm{I} + \mathbf{A}_t  & \mathbf{B}_t \\
        \mathbf{B}_t^{*} & \mathrm{I} + \mathbf{C}_t
    \end{pmatrix}^{-1}
    \mathbf{H}_{t+1}^{*} \big| \mathcal{F}_{t}\Big)
    \label{equ-projection-part}  \\
    & +  \begin{pmatrix}
    g_t \Tilde{Y}_{t+1}^{\diamond} & g_t \Tilde{\mathsf{Y}}_{t+1}^{\diamond}    
    \end{pmatrix} -
    \begin{pmatrix}
        g_t [\Tilde{Y}]_{t}^{\diamond} &
        g_t [\Tilde{\mathsf{Y}}]_{t}^{\diamond}  
    \end{pmatrix}
    \begin{pmatrix}
        \mathrm{I} + \mathbf{A}_t  & \mathbf{B}_t \\
        \mathbf{B}_t^{*} & \mathrm{I} + \mathbf{C}_t
    \end{pmatrix}^{-1} 
    \mathbf{H}_{t+1}^{*}  \,.
    \label{equ-Gaussian-part}
\end{align}
In the above $\mathbf{H}_{t+1}$ is given by 
\begin{equation}\label{eq-def-H-entry}
    \mathbf{H}_{t+1}( (k,\tau_1);(s,l,\tau_2) ) = H_{t+1,k,\tau_1}(s,l,\tau_2) \mbox{ for } \tau_1, \tau_2 \in ( V \setminus \mathrm{BAD}_t ) \cup (\mathsf{V} \setminus \pi(\mathrm{BAD}_t) ) \,.
\end{equation}
In addition, $g_t \Tilde{Y}^{\diamond}_s$ is given by $g_t \Tilde{Y}^{\diamond}_s (l,v) = \Tilde{W}^{(t)}_v(l) + \langle \eta^{(s)}_l, (g_t\Tilde{D}^{(s)}_v)^{\diamond} \rangle$, where
\begin{align*}
    g_t\Tilde{D}^{(t)}_v(k)^{\diamond} =  \frac{1}{\sqrt{n(\mathfrak{a}_t - \mathfrak{a}_t^2)}} \sum_{u \in V \setminus \mathrm{BAD}_t }( \mathbf{1}_{u \in \Gamma^{(t)}_k} - \mathfrak{a}_t) \overrightarrow{Z}_{v,u}^{\diamond}
\end{align*}
is a linear combination of Gaussian variables $\{\overrightarrow{Z}_{v,u}^{\diamond}\}$ with coefficients fixed  (recall \eqref{eq-convention-announcement}), and $\{\overrightarrow{Z}_{v,u}^{\diamond}\}$ is an independent copy of $\{\overrightarrow{Z}_{v,u}\}$ (and similarly for $g_t \mathsf{D}^{(t)}_{\mathsf v}(k)^{ \diamond }$). For notational convenience, we denote \eqref{equ-projection-part} as $\textup{PROJ} \big( ( g_t \Tilde{Y}_{t+1} , g_t \Tilde{\mathsf{Y}}_{t+1} ) \big)$, which is a vector with entries given by  (the analogue of below for the mathsf version also holds)
\begin{align*}
    \textup{PROJ} \big( ( g_t \Tilde{Y}_{t+1} , g_t \Tilde{\mathsf{Y}}_{t+1} ) \big) (k,v) &= \mathrm{PROJ}(g_t \Tilde{Y}_{t+1}(k,v))\\
    & = \textup{PROJ}( \Tilde{W}^{(t+1)}_v(k) + \langle \eta^{(t+1)}_k, g_t\Tilde{D}^{(t+1)}_v \rangle) \mbox{ for } v\in V \setminus \mathrm{BAD}_t\,.
\end{align*}
We also denote \eqref{equ-Gaussian-part} as $\begin{pmatrix} g_t \Tilde{Y}_{t+1}^{\diamond} - \mathrm{GAUS}(g_t \Tilde{Y}_{t+1}) & g_t \Tilde{\mathsf{Y}}_{t+1}^{\diamond} - \mathrm{GAUS}(g_t \Tilde{\mathsf{Y}}_{t+1}) \end{pmatrix}$. We will further use the notation $\mathrm{PROJ}(g_t Y_{t+1}(k,v))$ (note that there is no tilde here) to denote the projection obtained from $\mathrm{PROJ}(g_t \Tilde{Y}_{t+1}(k,v))$ by replacing each $\overrightarrow{Z}_{v,u}, \Tilde{W}^{(s)}_v(k)$ with $\overrightarrow{G}_{v,u}, W^{(s)}_v(k)$ therein (and similarly for $\mathrm{PROJ}(g_t \mathsf Y_{t+1}(k, \mathsf v))$).
\end{Remark}

Denote $ \mathbf{Q}_t = \begin{pmatrix} \mathrm{I} + \mathbf{A}_t & \mathbf{B}_t \\ \mathbf{B}_t^{*} & \mathrm{I} + \mathbf{C}_t \end{pmatrix}^{-1}$. We next control norms of these matrices.

\begin{Lemma}{\label{lemma-op-norm-Q}}
    On the event $\mathcal{E}_{t} \cap \mathcal{T}_t$, we have $\| \mathbf{Q}_t \|_{\mathrm{op}} \leq 100$ if $\Hat{\rho}<0.1$.
\end{Lemma}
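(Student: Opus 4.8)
Write $\mathbf{M}_t := \mathbf{Q}_t^{-1} = \begin{pmatrix}\mathrm{I}+\mathbf{A}_t & \mathbf{B}_t\\ \mathbf{B}_t^{*} & \mathrm{I}+\mathbf{C}_t\end{pmatrix}$; it suffices to show $\mathbf{M}_t\succeq \tfrac1{100}\mathrm{I}$ on $\mathcal{E}_t\cap\mathcal{T}_t$, and the plan is in fact to get $\mathbf{M}_t\succeq(1-o(1))\mathrm{I}$. First I would record the diagonal blocks: by the bulleted properties stated before \eqref{equ-formulate-projection} (valid on $\mathcal{E}_t\cap\mathcal{T}_t$), $\|\mathbf{A}_t-\mathrm{I}\|_{\mathrm{op}},\|\mathbf{C}_t-\mathrm{I}\|_{\mathrm{op}}=O(K_t^2\Delta_t)$, which is $o(1)$ since $K_t\le K_{t^*}\le(\log n)^4$ and $\Delta_t\le e^{-(\log\log n)^8}$ by \eqref{eq-bound-Delta-t}; hence $\mathrm{I}+\mathbf{A}_t\succeq 1.9\,\mathrm{I}$, $\mathrm{I}+\mathbf{C}_t\succeq 1.9\,\mathrm{I}$, and also $\|\mathbf{A}_t\|_{\mathrm{op}},\|\mathbf{C}_t\|_{\mathrm{op}}\le 1+o(1)$, for $n$ large.

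The crux is the bound on $\mathbf{B}_t$, and the point is to control it through its interpretation as a cross-covariance rather than entrywise: the ``crossed'' entries of $\mathbf{B}_t$ (those coming from the correlations $\overrightarrow{Z}_{u,w}\leftrightarrow\overrightarrow{\mathsf{Z}}_{\pi(w),\pi(u)}$ rather than the ``aligned'' $\overrightarrow{Z}_{u,w}\leftrightarrow\overrightarrow{\mathsf{Z}}_{\pi(u),\pi(w)}$), while individually tiny, form a matrix whose Hilbert--Schmidt norm is of order $\Hat{\rho}K_t$, so an entrywise or Hilbert--Schmidt estimate is useless in operator norm. Instead, recalling from its defining bullet that $\mathbf{B}_t=\Hat{\mathbb{E}}[\mathbf{g}_1\mathbf{g}_2^{*}]$ with $\mathbf{g}_1=(\langle\eta^{(s)}_k,g_s\Tilde{D}^{(s)}_u\rangle)$ and $\mathbf{g}_2=(\langle\eta^{(r)}_l,g_r\Tilde{\mathsf{D}}^{(r)}_{\mathsf{w}}\rangle)$, Cauchy--Schwarz under $\Hat{\mathbb{E}}$ gives, for unit vectors $x,y$, $|x^{*}\mathbf{B}_t y|=|\Hat{\mathbb{E}}[(x^{*}\mathbf{g}_1)(y^{*}\mathbf{g}_2)]|\le\big(\Hat{\mathbb{E}}[(x^{*}\mathbf{g}_1)^2]\,\Hat{\mathbb{E}}[(y^{*}\mathbf{g}_2)^2]\big)^{1/2}$, whence $\|\mathbf{B}_t\|_{\mathrm{op}}\le\big(\|\Hat{\mathbb{E}}[\mathbf{g}_1\mathbf{g}_1^{*}]\|_{\mathrm{op}}\,\|\Hat{\mathbb{E}}[\mathbf{g}_2\mathbf{g}_2^{*}]\|_{\mathrm{op}}\big)^{1/2}$. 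Now $\Hat{\mathbb{E}}[\mathbf{g}_1\mathbf{g}_1^{*}]$ differs from $\mathbf{A}_t$ only by the replacement of $g_t$ by $g_s$ in the coordinates, i.e.\ by adding back summands over $u\in\mathrm{BAD}_t\setminus\mathrm{BAD}_s$; since these involve entries $\overrightarrow{Z}_{v,\cdot}$ disjoint from those of the $g_t$-part, they contribute a positive semidefinite term of operator norm $o(1)$ on $\mathcal{T}_t$ (using $|\mathrm{BAD}_t|\ll n\vartheta^{10}\Delta_t^{10}$ together with Lemma~\ref{lemma-bound-on-op-norm} applied to the vertex-block structure), and likewise for $\mathbf{g}_2$. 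Therefore $\|\mathbf{B}_t\|_{\mathrm{op}}\le 1+o(1)\le 1.1$ for $n$ large.

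Combining: for $x=(x_1,x_2)$ with $\|x_1\|^2+\|x_2\|^2=1$, using $|2\,\mathrm{Re}(x_1^{*}\mathbf{B}_t x_2)|\le 2\|\mathbf{B}_t\|_{\mathrm{op}}\|x_1\|\,\|x_2\|$ and $\|x_1\|\,\|x_2\|\le\tfrac12$,
\[
x^{*}\mathbf{M}_t x=x_1^{*}(\mathrm{I}+\mathbf{A}_t)x_1+2\,\mathrm{Re}(x_1^{*}\mathbf{B}_t x_2)+x_2^{*}(\mathrm{I}+\mathbf{C}_t)x_2\ \ge\ 1.9-2(1.1)\|x_1\|\,\|x_2\|\ \ge\ 1.9-1.1=0.8\,,
\]
so $\mathbf{M}_t\succeq 0.8\,\mathrm{I}$ and $\|\mathbf{Q}_t\|_{\mathrm{op}}=\|\mathbf{M}_t^{-1}\|_{\mathrm{op}}\le 1.25<100$ (the hypothesis $\Hat{\rho}<0.1$, which holds automatically by \eqref{eq-assumetion-rho}, is not needed here, as $\Hat{\rho}$ cancels out of the cross-covariance estimate). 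One may even bypass the block estimates: $\mathbf{M}_t$ is, up to the same $o(1)$ perturbation coming from $g_s$ versus $g_t$ in the off-diagonal block, the $\Hat{\mathbb{E}}$-covariance matrix of the Gaussian vector $\mathfrak{F}_t$ of \eqref{eq-def-mathcal-F-t}, and since each coordinate of $\mathfrak{F}_t$ is the independent sum of a unit Gaussian $\Tilde{W}^{(s)}_v(k)$ -- which by \eqref{eq-independence-observation} is independent of \eqref{eq-to-be-conditioned-on} and hence still standard under $\Hat{\mathbb{E}}$ -- and a term in the $\overrightarrow{Z}$'s, that covariance equals $\mathrm{I}$ plus a positive semidefinite matrix, giving $\mathbf{M}_t\succeq(1-o(1))\mathrm{I}$ and $\|\mathbf{Q}_t\|_{\mathrm{op}}\le(1-o(1))^{-1}\le100$ directly. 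I do not foresee a substantive obstacle; the only point requiring care is the bookkeeping that every discrepancy invoked -- the $g_s$-versus-$g_t$ substitutions and the $o(K_t\Delta_0^2)$-type slack in \eqref{equ-degree-correlation-1}--\eqref{equ-degree-covariance} -- has operator norm $o(1)$ on $\mathcal{T}_t$, which $|\mathrm{BAD}_t|\ll n\vartheta^{10}\Delta_t^{10}$ comfortably supplies, and the generous constant $100$ absorbs any inefficiency.
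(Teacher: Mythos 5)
Your proof is correct and takes essentially the route the paper's hint (and the cited \cite[Lemma~3.13]{DL22+}) points to: $\mathbf{Q}_t^{-1}$ is, up to the $o(1)$ perturbation you identify from the $g_s$-versus-$g_t$ mismatch in the stated form of $\mathbf{B}_t$, the $\Hat{\mathbb{E}}$-covariance matrix of $\mathfrak{F}_t$, each coordinate of which is an independent standard Gaussian $\Tilde{W}^{(s)}_v(k)$ or $\Tilde{\mathsf{W}}^{(s)}_{\mathsf{v}}(k)$ added to a linear functional of the $\overrightarrow{Z}$'s and $\overrightarrow{\mathsf{Z}}$'s, so that covariance is $\mathrm{I}$ plus a positive semidefinite matrix and $\|\mathbf{Q}_t\|_{\mathrm{op}}\le 1+o(1)$, far below $100$; the blockwise Cauchy--Schwarz route in your first paragraph reaches the same conclusion with a bit more bookkeeping. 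Your side observation that the hypothesis $\Hat{\rho}<0.1$ plays no role in this operator-norm estimate is also correct, and harmless since it is already ensured by \eqref{eq-assumetion-rho}.
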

\begin{Lemma}{\label{lemma-1-norm-Q}}
    On the event $\mathcal{E}_{t} \cap \mathcal{T}_t$, we have $\| \mathbf{Q}_t \|_{\infty} = \| \mathbf{Q}_t \|_{1} \leq 100 K_t^{10} \vartheta^{-2}$.
\end{Lemma}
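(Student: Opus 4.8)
\textbf{Proof proposal for Lemma~\ref{lemma-1-norm-Q}.}

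The plan is to bound $\|\mathbf{Q}_t\|_\infty = \|\mathbf{Q}_t\|_1$ (the equality holds because $\mathbf{Q}_t$ is symmetric) by applying Lemma~\ref{lemma-bound-on-infty-norm} with a carefully chosen decomposition of the correlation matrix $\mathbf{Q}_t^{-1} = \begin{pmatrix} \mathrm{I} + \mathbf{A}_t & \mathbf{B}_t \\ \mathbf{B}_t^{*} & \mathrm{I} + \mathbf{C}_t \end{pmatrix}$. First I would set $\mathrm{A}$ (in the notation of Lemma~\ref{lemma-bound-on-infty-norm}) to be the ``block-diagonal'' part of $\mathbf{Q}_t^{-1}$: for each vertex $u \in V \setminus \mathrm{BAD}_t$ we collect the coordinates $\{(s,k,u): 0\le s\le t, 1\le k\le K_s/12\}$ together with $\{(s,k,\pi(u))\}$, forming a block of dimension at most $K_t$, and $\mathrm{A}$ keeps exactly the entries of $\mathbf{Q}_t^{-1}$ within these blocks while zeroing out the rest. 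Then $\mathrm{B} = \mathbf{Q}_t^{-1} - \mathrm{A}$ collects the remaining entries. By the three bullet points established just before Lemma~\ref{lemma-op-norm-Q}, within each block the matrix $\mathrm{A}$ agrees with the $\Hat{\mathbb{E}}$-correlation matrix of $\{\Tilde W^{(s)}_u(k) + \langle\eta^{(s)}_k, g_t\Tilde D^{(s)}_u\rangle, \Tilde{\mathsf W}^{(s)}_{\pi(u)}(k) + \langle\eta^{(s)}_k, g_t\Tilde{\mathsf D}^{(s)}_{\pi(u)}\rangle\}$ restricted to that block, so each block is a positive semidefinite matrix with $\mathrm{I}$ plus something of small operator norm; in particular each block is invertible with operator-norm inverse at most, say, $2$, hence its $\infty\to\infty$ norm is at most $2K_t$ (using $\|M\|_\infty \le \sqrt{\dim}\,\|M\|_{\mathrm{op}}$ applied to the $K_t$-dimensional block, giving $\le K_t^{1/2}\cdot 2 \le 2K_t$; more crudely $K_t^2$). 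Thus $\|\mathrm{A}^{-1}\|_\infty = \max_u \|(\text{block}_u)^{-1}\|_\infty \le K_t^2$, which supplies the parameter $L$.

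Next I would bound the entries of $\mathrm{B}$. The off-block entries of $\mathbf{A}_t$ and $\mathbf{C}_t$ are bounded by $K_t\Delta_t$ and there are at most $2K_t$ of them per row; the entries of $\mathbf{B}_t$ are the cross-correlations $\Hat{\mathbb{E}}[\langle\eta^{(s)}_k, g_t\Tilde D^{(s)}_u\rangle \langle\eta^{(r)}_l, g_t\Tilde{\mathsf D}^{(r)}_{\mathsf w}\rangle]$, which vanish unless $\mathsf w = \pi(u)$ (by \eqref{equ-degree-correlation-1}); the surviving entries with $\mathsf w = \pi(u)$ but $(r,k)\neq(s,l)$ that lie off-block are $o(K_t\Delta_t)$ by \eqref{equ-degree-correlation-4}, and the on-block ones are absorbed into $\mathrm{A}$. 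Hence all entries of $\mathrm{B}$ are bounded by $K_t\Delta_t$. Since $\Delta_t \le e^{-(\log\log n)^8} \ll \vartheta^{2} K_t^{-3} / K_t^{2}\cdot(\text{something})$ — more precisely we need $K_t\Delta_t \le K/m$ with $m = K_t n$ (the dimension) — we may take $K = K_t^2 \Delta_t n$, which is $\le n$ times something tiny, and in particular the hypothesis of Lemma~\ref{lemma-bound-on-infty-norm} that entries of $\mathrm{B}$ are bounded by $K/m$ holds with $K = K_t^2 \Delta_t n$. For the remaining hypothesis we use $\|\mathbf{Q}_t\|_{\mathrm{op}} = \|(\mathrm{A}+\mathrm{B})^{-1}\|_{\mathrm{op}} \le 100$ from Lemma~\ref{lemma-op-norm-Q}, so $C = 100$. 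Lemma~\ref{lemma-bound-on-infty-norm} then yields $\|\mathbf{Q}_t\|_\infty \le \max\{2KCL, 2L\} = \max\{2\cdot K_t^2\Delta_t n\cdot 100\cdot K_t^2, 2K_t^2\}$. Here I realize the naive bound from $K$ is too large because of the factor $n$; the fix is to instead set $m$ to be the true dimension $\dim \asymp K_t n$ and note that the per-row entry bound $K_t\Delta_t$ of $\mathrm{B}$ must be written as $K/m$ with $K = K_t\Delta_t \cdot m = K_t^2 \Delta_t n$, so $2KCL = 200 K_t^4 \Delta_t n$, still too big. The resolution, and the step I expect to be the main obstacle, is that $\|\mathrm{B}x^*\|_\infty$ should be bounded not by $(K/m)\|x\|_1$ globally but row-by-row using the \emph{sparsity} of $\mathrm{B}$: each row of $\mathrm{B}$ has only $O(K_t n^{0})$... no — each row has $O(K_t)$ nonzero entries from the $\mathbf{A}_t,\mathbf{C}_t$ part but $O(K_t)$ from the $\mathbf{B}_t$ part as well (only $\mathsf w=\pi(u)$ survives, and there are $\le K_t$ choices of $(r,l)$), so each row of $\mathrm{B}$ has $O(K_t)$ nonzeros each $\le K_t\Delta_t$, giving $\|\mathrm{B}x^*\|_\infty \le K_t^2\Delta_t\|x\|_\infty$ directly.

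Therefore I would bypass the generic Lemma~\ref{lemma-bound-on-infty-norm} in favor of its proof strategy adapted to this sparsity: for $\|x\|_\infty = 1$, either $\|x\|_2 \ge \sqrt{m}/(2K_t^3)$, in which case $\|\mathbf{Q}_t^{-1}x^*\|_2 \ge \|\mathbf{Q}_t\|_{\mathrm{op}}^{-1}\|x\|_2 \ge \|x\|_2/100$ gives $\|\mathbf{Q}_t^{-1}x^*\|_\infty \ge \|x\|_2/(100\sqrt m) \ge 1/(200 K_t^3)$; or $\|x\|_2 \le \sqrt m/(2K_t^3)$, in which case the block part gives $\|\mathrm{A}x^*\|_\infty \ge \|x\|_\infty/\|\mathrm{A}^{-1}\|_\infty \ge K_t^{-2}$ while $\|\mathrm{B}x^*\|_\infty \le K_t^2\Delta_t\|x\|_\infty = K_t^2\Delta_t \ll K_t^{-2}/2$ by \eqref{eq-bound-Delta-t}, so $\|\mathbf{Q}_t^{-1}x^*\|_\infty \ge K_t^{-2}/2$. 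In either case $\|\mathbf{Q}_t^{-1}x^*\|_\infty \gtrsim K_t^{-3}$, hence $\|\mathbf{Q}_t\|_\infty = \|\mathbf{Q}_t\|_1 \lesssim K_t^3$, which is well within the claimed bound $100 K_t^{10}\vartheta^{-2}$. (The extra slack $K_t^{10}\vartheta^{-2}$ in the statement comes from being generous with the block-inverse estimate — using $\|(\text{block})^{-1}\|_\infty \le K_t^2$ rather than the sharper $\le 2K_t^{1/2}$, and carrying a harmless $\vartheta^{-2}$ factor as a safety margin for the $t=0$ case where the relevant dimension bookkeeping involves $\mathfrak a_0 = \vartheta$ — so the proof should simply verify the weaker inequality without optimizing.) I would present the argument via the two-case split above, citing Lemma~\ref{lemma-op-norm-Q} for the operator-norm input, the three bullet points before it for the structure of $\mathbf{A}_t,\mathbf{B}_t,\mathbf{C}_t$, and \eqref{eq-bound-Delta-t} to absorb the error terms; the only delicate point is checking the off-block entries of $\mathbf{B}_t$ are genuinely $o(K_t\Delta_t)$, which is exactly \eqref{equ-degree-correlation-4} on the event $\mathcal{E}_t \cap \mathcal{T}_t$.
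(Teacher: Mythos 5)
Your overall strategy---write $\mathbf Q_t^{-1}$ as a vertex-indexed block-diagonal matrix plus a perturbation, feed the operator-norm bound of Lemma~\ref{lemma-op-norm-Q} and a block-inverse $\infty$-norm bound into (the proof of) Lemma~\ref{lemma-bound-on-infty-norm}---is the same as the paper's, which adapts the corresponding lemma of \cite{DL22+} exactly along these lines (compare the treatment of $\Sigma_{[j-1]}^{-1}$ inside the proof of Lemma~\ref{lemma-bound-truncated-unconditional-density-ratio}). The gap is in your description of the perturbation $\mathrm B$. You assert that the cross-graph entries $\Hat{\mathbb{E}}[\langle \eta^{(s)}_k, g_t\Tilde D^{(s)}_u\rangle\langle \eta^{(r)}_l, g_t\Tilde{\mathsf D}^{(r)}_{\mathsf w}\rangle]$ ``vanish unless $\mathsf w=\pi(u)$, by \eqref{equ-degree-correlation-1}''. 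But \eqref{equ-degree-correlation-1} only concerns same-graph pairs ($D$ with $D$, $\mathsf D$ with $\mathsf D$). For $\mathsf w\neq\pi(u)$ the cross-covariance does \emph{not} vanish: the correlated pair $(\overrightarrow Z_{u,\pi^{-1}(\mathsf w)},\overrightarrow{\mathsf Z}_{\mathsf w,\pi(u)})$ survives in the double sum, giving entries of size $O\big(K_t^2/(n\sqrt{\mathfrak a_s\mathfrak a_r})\big)=O(K_t^2\vartheta^{-1}/n)$ when $s$ or $r$ equals $0$---this is precisely the computation done in the proof of Lemma~\ref{lemma-op-norm-H} for $\mathbf H_t((k,v);(s,l,\mathsf u))$ with $\mathsf u\neq\pi(v)$. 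Hence each row of $\mathrm B$ has $\Theta(K_t n)$ nonzero entries, not $O(K_t)$, your bound $\|\mathrm B x^*\|_\infty\le K_t^2\Delta_t\|x\|_\infty$ fails, and with the corrected entry size your Case-2 estimate only gives $\|\mathrm Bx^*\|_\infty\lesssim (K_t^2\vartheta^{-1}/n)\|x\|_1$, which at your threshold $\|x\|_1\le m/(2K_t^3)$ is of order $\vartheta^{-1}$, nowhere near $\ll K_t^{-2}$. Your parenthetical is therefore backwards: the $\vartheta^{-2}$ in the statement is not a gratuitous safety margin; the $\vartheta^{-1}$ scale enters unavoidably through the $s=0$ normalization $\mathfrak a_0=\vartheta$ in the cross-graph covariances, and your claimed bound $\lesssim K_t^3$ is not justified by this argument. (Two smaller slips: the nonzero off-diagonal entries of $\mathbf A_t,\mathbf C_t$ listed before Lemma~\ref{lemma-op-norm-Q} all carry the same vertex index, so they lie inside your blocks and contribute nothing to $\mathrm B$; and entries with $\mathsf w=\pi(u)$ are on-block in your decomposition, so \eqref{equ-degree-correlation-4} is not what is needed for $\mathrm B$.)

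The repair is short and is what the paper intends: keep your block-diagonal $\mathrm A$ (each block is $\mathrm I$ plus a PSD matrix of dimension at most $K_t$, so $\|\mathrm A^{-1}\|_\infty\le K_t^2$ as you argue), observe that the same-graph off-block entries are exactly $0$ by \eqref{equ-degree-correlation-1} while the cross-graph off-block entries are $O(K_t^2\vartheta^{-1}/n)$ as above, and apply Lemma~\ref{lemma-bound-on-infty-norm} with $C=100$ (Lemma~\ref{lemma-op-norm-Q}), $L=K_t^2$, dimension $m\asymp K_t n$ and $K\asymp K_t^3\vartheta^{-1}$. This yields $\|\mathbf Q_t\|_\infty\le\max\{2KCL,2L\}=O(K_t^5\vartheta^{-1})\le 100K_t^{10}\vartheta^{-2}$, which is the stated bound (with slack). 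Your two-case split can also be salvaged, but only with a $\vartheta$-dependent threshold, which again produces the $\vartheta^{-1}$ factor rather than $K_t^3$.
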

Similar versions of Lemmas~\ref{lemma-op-norm-Q} and \ref{lemma-1-norm-Q} were proved in \cite[Lemmas 3.13 and 3.15]{DL22+}, and the proofs can be adapted easily. Indeed, by proofs in \cite{DL22+}, in order to bound the operator norm it suffices to use the fact that $\langle \eta^{(s)}_k, g_t \Tilde{D}^{(s)}_v \rangle \in \mathrm{span}\{ \overrightarrow{Z}_{u,w} \}, \langle \eta^{(s)}_k, g_t \Tilde{\mathsf{D}}^{(s)}_{\mathsf{v}} \rangle \in \mathrm{span}\{ \overrightarrow{\mathsf{Z}}_{\mathsf{u,w}} \}$. Also, in order to bound the $\infty$-norm, it suffices to show that the operator norm is bounded by a constant and that $\mathbf{Q}^{-1}_t ( (s,k,v);(s^{\prime},k^{\prime},\mathsf{u}) ) = O(K_t \vartheta^{-1})$ when $\mathsf{u} \neq \pi(v)$. All of these can be easily checked and thus we omit further details.

\begin{Lemma} {\label{lemma-op-norm-H}}
On the event $\mathcal{E}_t \cap \mathcal{T}_{t-1}$, we have 
$$\| \mathbf{H}_{t} \|_{\mathrm{HS}}^2 \leq n K_t^4 \Delta_t^2\,, \quad \| \mathbf{H}_{t} \|_{\infty}, \| \mathbf{H}_{t} \|_{1} \leq \frac{K^3_t}{\sqrt{\vartheta}} \, \mbox{ and } \,\| \mathbf{H}_{t} \|_{\mathrm{op}} \leq 2 K_{t}^3\,.$$
\end{Lemma}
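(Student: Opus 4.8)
The plan is to read off all three bounds from the block-diagonal, sparse structure of $\mathbf{H}_t$ together with one entrywise estimate of order $K_t\Delta_t$. Recall from Remark~\ref{remark-conditional-Gaussian} that, for $0\leq s\leq t-1$, the $((k,\tau_1);(s,l,\tau_2))$-entry of $\mathbf{H}_t$ is $\Hat{\mathbb{E}}\big[\langle\eta^{(t)}_k, g_{t-1}\Tilde{D}^{(t)}_{\tau_1}\rangle\,\langle\eta^{(s)}_l, g_{t-1}\Tilde{D}^{(s)}_{\tau_2}\rangle\big]$, with the mathsf analogues whenever $\tau_1$ or $\tau_2$ lies in $\mathsf V$. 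By \eqref{equ-def-g-Tilde-D} the vector $g_{t-1}\Tilde{D}^{(s)}_v$ is a fixed linear combination of $\{\overrightarrow{Z}_{v,u}\}_u$ and $g_{t-1}\Tilde{\mathsf{D}}^{(s)}_{\mathsf v}$ of $\{\overrightarrow{\mathsf{Z}}_{\mathsf v,\mathsf u}\}_{\mathsf u}$, and among $\{\overrightarrow{Z},\overrightarrow{\mathsf{Z}}\}$ the only nonzero covariances couple $\overrightarrow{Z}_{v,\cdot}$ with $\overrightarrow{\mathsf{Z}}_{\pi(v),\cdot}$; hence such an entry vanishes unless $\tau_1,\tau_2$ lie in a common pair $\{v,\pi(v)\}$. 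Grouping indices by these pairs exhibits $\mathbf{H}_t$ as a direct sum of blocks, each with at most $2(K_t/12)$ rows and at most $2\sum_{s<t}(K_s/12)$ columns; since \eqref{eq-increasing-dimension} gives $\sum_{s<t}K_s\leq 2K_{t-1}\leq K_t$, every row and every column of $\mathbf{H}_t$ has at most $K_t$ nonzero entries.

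Next I would bound a single nonzero entry, say $\Hat{\mathbb{E}}\big[\langle\eta^{(t)}_k, g_{t-1}\Tilde{D}^{(t)}_v\rangle\,\langle\eta^{(s)}_l, g_{t-1}\Tilde{D}^{(s)}_v\rangle\big]$ with $0\leq s\leq t-1$. Exactly as in the computation preceding \eqref{equ-degree-correlation-1}, on $\mathcal{E}_t\cap\mathcal{T}_{t-1}$ this quantity equals $\eta^{(t)}_k\mathrm{M}^{(t,s)}_{\Gamma}(\eta^{(s)}_l)^{*}$ minus the (already contracted) correction coming from discarding the vertices $w\in\mathrm{BAD}_{t-1}$ from the defining sum (with $\mathrm{M}^{(t,s)}_{\Pi}$, resp.\ $\Hat{\rho}\,\mathrm{P}^{(t,s)}_{\Gamma,\Pi}$, in the mathsf/cross cases). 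The $\mathrm{M}$-term vanishes, $\eta^{(t)}_k\mathrm{M}^{(t,s)}_{\Gamma}(\eta^{(s)}_l)^{*}=0=\eta^{(t)}_k\mathrm{M}^{(t,s)}_{\Pi}(\eta^{(s)}_l)^{*}$, because $\eta^{(t)}_k\in\mathrm{W}^{(t)}$ by \eqref{equ-linear-space}; for the cross term I would only invoke the weaker fact (from adapting \cite[(3.3)--(3.8)]{DL22+}) that for $s<t$ the entries of $\mathrm{P}^{(t,s)}_{\Gamma,\Pi}$ are bounded by $\Delta_t$, so that $\Hat{\rho}\,|\eta^{(t)}_k\mathrm{P}^{(t,s)}_{\Gamma,\Pi}(\eta^{(s)}_l)^{*}|\leq\Hat{\rho}\,\Delta_t\,\|\eta^{(t)}_k\|_1\|\eta^{(s)}_l\|_1\leq K_t\Delta_t$, using $\|\eta^{(t)}_k\|^2,\|\eta^{(s)}_l\|^2\leq2$. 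The correction is handled by Cauchy--Schwarz inside the sum over $w\in\mathrm{BAD}_{t-1}$: each of the two factors $\sum_i\eta^{(t)}_k(i)(\mathbf{1}_{w\in\Gamma^{(t)}_i}-\mathfrak{a}_t)$ (and the analogous factor for $\eta^{(s)}_l$) is at most $\|\eta^{(t)}_k\|_1\leq\sqrt{2K_t}$ in modulus, so the correction is $O\big(K_t|\mathrm{BAD}_{t-1}|\,/\,(n\sqrt{\vartheta})\big)$, which on $\mathcal{T}_{t-1}$ (where $|\mathrm{BAD}_{t-1}|\ll n\vartheta^{10}\Delta_{t-1}^{10}$) is $o(K_t\Delta_t)$. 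Thus every entry of $\mathbf{H}_t$ is $O(K_t\Delta_t)$.

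The three norm bounds then follow by counting. Since $\mathbf{H}_t$ has at most $\tfrac{K_t}{12}\cdot2n$ rows, each with at most $K_t$ nonzero entries of size $O(K_t\Delta_t)$, we get $\|\mathbf{H}_t\|_{\mathrm{HS}}^2=O\big(nK_t\cdot K_t\cdot(K_t\Delta_t)^2\big)\leq nK_t^4\Delta_t^2$. The maximum absolute row sum and column sum are each $K_t\cdot O(K_t\Delta_t)=O(K_t^2\Delta_t)\leq K_t^3/\sqrt{\vartheta}$ (as $\Delta_t\ll1\leq\vartheta^{-1/2}$), giving the $\infty$- and $1$-norm bounds. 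Finally $\|\mathbf{H}_t\|_{\mathrm{op}}\leq\sqrt{\|\mathbf{H}_t\|_1\|\mathbf{H}_t\|_{\infty}}=O(K_t^2\Delta_t)\leq2K_t^3$; alternatively, by the block structure each diagonal block has at most $K_t^2$ entries of size $O(K_t\Delta_t)$, hence Frobenius norm $O(K_t^2\Delta_t)$, which dominates its operator norm.

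The main obstacle is not a delicate estimate — all three targets are very loose — but the bookkeeping in the two middle paragraphs: checking that the truncation to $V\setminus\mathrm{BAD}_{t-1}$ implicit in $g_{t-1}\Tilde{D}$ costs only $o(K_t\Delta_t)$ per entry (this is exactly where $\mathcal{T}_{t-1}$, hence the bound $|\mathrm{BAD}_{t-1}|\ll n\vartheta^{10}\Delta_{t-1}^{10}$, is used), and that the orthogonality one needs is genuinely available — exact against $\mathrm{M}^{(t,s)}_{\Gamma},\mathrm{M}^{(t,s)}_{\Pi}$ by the construction of $\mathrm{W}^{(t)}$, but only approximate against $\mathrm{P}^{(t,s)}_{\Gamma,\Pi}$, with the slack absorbed by the factor $\Hat{\rho}$ and the entrywise $\Delta_t$-bound on $\mathrm{P}^{(t,s)}_{\Gamma,\Pi}$. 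The argument parallels the corresponding bound in \cite{DL22+}.
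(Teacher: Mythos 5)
Your argument has a genuine gap at its structural core: the claim that $\mathbf{H}_t$ is block-diagonal with respect to the pairs $\{v,\pi(v)\}$ is false. You assert that "the only nonzero covariances couple $\overrightarrow{Z}_{v,\cdot}$ with $\overrightarrow{\mathsf{Z}}_{\pi(v),\cdot}$", but the covariance structure also couples the index-swapped pairs: $\mathbb{E}[\overrightarrow{Z}_{u,v}\overrightarrow{\mathsf{Z}}_{\pi(v),\pi(u)}]=\Hat{\rho}\Hat q(1-\Hat q)$, i.e.\ $\overrightarrow{Z}_{v,w}$ is also correlated with $\overrightarrow{\mathsf{Z}}_{\pi(w),\pi(v)}$, whose \emph{first} index is $\pi(w)$, not $\pi(v)$. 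Consequently, for $\mathsf u\neq\pi(v)$ (with $v,\pi^{-1}(\mathsf u)\notin\mathrm{BAD}_{t-1}$) the entry $\mathbf{H}_t((k,v);(s,l,\mathsf u))$ picks up exactly one contributing term, from $w_1=\pi^{-1}(\mathsf u)$, $w_2=v$, and equals
$\frac{\Hat\rho}{n\sqrt{(\mathfrak a_t-\mathfrak a_t^2)(\mathfrak a_s-\mathfrak a_s^2)}}\bigl(\sum_i\eta^{(t)}_k(i)(\mathbf 1_{\pi^{-1}(\mathsf u)\in\Gamma^{(t)}_i}-\mathfrak a_t)\bigr)\bigl(\sum_j\eta^{(s)}_l(j)(\mathbf 1_{\pi(v)\in\Pi^{(s)}_j}-\mathfrak a_s)\bigr)$,
which is generically nonzero and of order $K_t^2/(n\sqrt{\mathfrak a_t\mathfrak a_s})$. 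This is precisely the term the paper computes explicitly in its proof.

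These small-but-ubiquitous off-block entries are not a technicality: they dominate the row and column sums (there are $\Theta(n)$ of them per row, and $\mathfrak a_s$ can equal $\vartheta$, which may be as small as $e^{-(\log\log n)^{100}}$), and they are the whole reason the lemma's $\infty$- and $1$-norm bounds carry the factor $\vartheta^{-1/2}$ and the operator-norm bound is $2K_t^3$ rather than something of order $K_t^2\Delta_t$. Your counting, which assigns each row at most $K_t$ nonzero entries of size $O(K_t\Delta_t)$, yields the much stronger conclusions $\|\mathbf H_t\|_\infty,\|\mathbf H_t\|_1,\|\mathbf H_t\|_{\mathrm{op}}=O(K_t^2\Delta_t)$ — the mismatch with the stated bounds should itself have been a warning sign — and the Schur/Frobenius deductions built on the block-diagonality collapse once these entries are restored. (Your within-block analysis, including the exact vanishing against $\mathrm M^{(t,s)}_\Gamma,\mathrm M^{(t,s)}_\Pi$ via $\mathrm W^{(t)}$, the $\Delta_t$-bound via $\mathrm P^{(t,s)}_{\Gamma,\Pi}$, and the $\mathrm{BAD}_{t-1}$ correction on $\mathcal T_{t-1}$, is fine and matches the paper, and your HS bound survives because the off-block mass is only $O(K_t^6)\ll nK_t^4\Delta_t^2$.) To repair the proof you must bound the off-block entries as above, sum them over rows/columns using Items (i) and (iv) of Definition~\ref{def-admissible} to control $|\cup_i\Gamma^{(t)}_i|$ and $|\cup_j\Pi^{(s)}_j|$ (this is where $\vartheta^{-1/2}$ enters), and for the operator norm apply Lemma~\ref{lemma-bound-on-op-norm} with the pair-blocks $\mathcal I_v=\mathcal J_v$, $\delta=K_t\Delta_t$ and $C^2=K_t^6$ accounting for the total off-block Hilbert--Schmidt mass, as the paper does.
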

\begin{proof}
Recall \eqref{eq-def-H-entry}. By \eqref{equ-degree-correlation-1}, \eqref{equ-degree-correlation-2}, \eqref{equ-degree-correlation-3} and \eqref{equ-degree-correlation-4} we get $\mathbf{H}_{t} ((k,v);(s,l,u)) = 0$ for $u \neq v$ and that $\mathbf{H}_{t} ((k,v);(s,l,v)), \mathbf{H}_{t} ((k,v);(s,l,\pi(v))) = O(K_{t} \Delta_{t})$; The similar results also hold for $\mathbf{H}_{t} ((k,\mathsf v);(s,l,\mathsf u))$. In addition, for $\pi(v) \neq \mathsf{u} $ we have that $ \mathbf{H}_{t} ( (k,v);(s,l,\mathsf{u}) )$ is equal to
\begin{align*}
    &\Hat{\mathbb{E}} \Bigg[  \frac{ \sum_{i=1}^{K_{t}} \sum_{j=1}^{K_s} \sum_{w_1,w_2} \eta^{(t)}_k(i) \eta^{(s)}_l (j) (\mathbf{1}_{w_1 \in \Gamma^{(t)}_i} - \mathfrak{a}_t) (\mathbf{1}_{\pi(w_2) \in \Pi^{(s)}_j} - \mathfrak{a}_s) \overrightarrow{Z}_{v,w_1} \overrightarrow{\mathsf{Z}}_{\mathsf{u},\pi(w_2)} }{ n \Hat{q}(1-\Hat{q}) \sqrt{(\mathfrak{a}_t-\mathfrak{a}_t^2) (\mathfrak{a}_s-\mathfrak{a}_s^2) } } \Bigg]  \\
    &= O \Big( \frac{ K^2_{t} }{ n \sqrt{\mathfrak{a}_t \mathfrak{a}_s} }   (\mathbf{1}_{ \pi(v) \in \cup_{j} \Pi^{(s)}_j} - \mathfrak{a}_s) (\mathbf{1}_{u \in \cup_{i} \Gamma^{(t)}_i} - \mathfrak{a}_t) \Big)
\end{align*}
and a similar bound applies to $(\mathsf u, v)$ with $v\in V, \mathsf u \in \mathsf V$ and $ \mathsf u \neq \pi(v)$. Combined with Items (i) and (iv) in Definition~\ref{def-admissible} for $\mathcal E_t$, it yields that $\| \mathbf{H}_{t} \|_{\mathrm{HS}}^2 \leq n K_t^4 \Delta_t^2$ and $\| \mathbf{H}_{t} \|_{\infty}, \| \mathbf{H}_{t} \|_{1} \leq \frac{K^3_t}{\sqrt{\vartheta}}$. 

We next bound $\| \mathbf{H}_{t} \|_{\mathrm{op}}$. Applying Lemma~\ref{lemma-bound-on-op-norm} by setting  $\delta = K_t \Delta_t$, $C^2 = K_t^6$ and $\mathcal{I}_{v} = \mathcal{J}_{v} = \{ (s,k,v),(s,k,\pi(v)): 0\leq s\leq t, 1\leq k\leq K_s/12 \}$, we can then derive that $\| \mathbf{H}_t \|_{\mathrm{op}} \leq K_t^3 + 4K_t^2 \Delta_t \leq 2 K_t^3$.
\end{proof}

Remark~\ref{remark-conditional-Gaussian} provides an explicit expression for the conditional law of of $\Tilde{W}^{(t+1)}_v(k) + \langle \eta^{(t+1)}_k, g_t \Tilde{D}^{(t+1)}_v \rangle$. However, the projection $\mathrm{PROJ}(g_t \Tilde{Y}_{t+1}(k,v) )$ is not easy to deal with since the expression of every variable in $\mathfrak F_t$ relies on  $\mathrm{BAD}_t$ (even for those indexed with $s<t$). A more tractable ``projection'' is the projection of $\Tilde{W}^{(t+1)}_v(k) + \langle \eta^{(t+1)}_k, g_t \Tilde{D}^{(t+1)}_v \rangle$ onto $\mathcal F'_t$ for $\mathcal F'_t = \sigma(\mathfrak F'_t)$ where
\begin{align}
   \mathfrak F'_t = \Bigg\{ \begin{split} & \Tilde{W}^{(s)}_u(l) + \langle \eta^{(s)}_l, g_{s-1} \Tilde{D}^{(s)}_u \rangle \\ & \Tilde{\mathsf{W}}^{(s)}_{\pi(u)} (l) + \langle \eta^{(s)}_l , g_{s-1} \Tilde{\mathsf{D}}^{(s)}_{\pi(u)} \rangle   \end{split} : 0 \leq s \leq t, 1 \leq l \leq \frac{K_s}{12}, u \not \in \mathrm{BAD}_t \Bigg\}. \label{eq-def-mathfrak-F-t'}
\end{align}
We can similarly show that $\mathbb{E}[ ( g_t \Tilde{Y}_{t+1} , g_t \Tilde{\mathsf{Y}}_{t+1} ) | \mathcal{F}^{\prime}_t]$ (recall that the conditional expectation is the same as the projection) has the form 
\begin{align}
    \begin{pmatrix} {\mathrm{PROJ}}^{\prime} (g_t \Tilde{Y}_{t+1}) & {\mathrm{PROJ}}^{\prime}  (g_t \Tilde{\mathsf{Y}}_{t+1}) \end{pmatrix}  = \begin{pmatrix} [g \Tilde{Y}]_t & [g\Tilde{\mathsf{Y}}]_t \end{pmatrix} \mathbf{P}_t \mathbf{J}_{t+1}^{*}  \,,
    \label{equ-modified-projection-form}
\end{align}
where $[g \Tilde{Y}]_t (s,l,u) = \Tilde{W}^{(s)}_l(u) + \langle \eta^{(s)}_l, g_{s-1} \Tilde{D}^{(s)}_u \rangle$, ${\mathrm{PROJ}}^{\prime} (g_t \Tilde{Y}_{t+1})(k,v) = {\mathrm{PROJ}}^{\prime} (g_t \Tilde{Y}_{t+1}(k,v))$ is the projection of $\Tilde{W}^{(t+1)}_k(v) + \langle \eta^{(t+1)}_k, g_t \Tilde{D}^{(t+1)}_v \rangle$ onto $\mathcal{F}^{\prime}_t$, $\mathbf{J}_{t+1}$ is defined by 
\begin{equation}\label{eq-def-J}
    \mathbf{J}_{t+1} ((k,v),(s,l,u)) = \Hat{\mathbb{E}}[ \langle \eta^{(t+1)}_k, g_{t} \Tilde{D}^{(t+1)}_v \rangle \langle \eta^{(s)}_l, g_{s-1} \Tilde{D}^{(s)}_u \rangle ] \mbox{ for } u, v \not\in \mathrm{BAD}_t \,,
\end{equation}
and $\mathbf{P}_t^{-1}$ is defined to be the covariance matrix of $\mathfrak F'_t$.
Adapting proofs of Lemmas~\ref{lemma-op-norm-Q}, \ref{lemma-1-norm-Q} and \ref{lemma-op-norm-H}, we can show that under $\mathcal{E}_{t+1} \cap \mathcal{T}_t$, we have $\| \mathbf{J}_{t+1} \|_{\mathrm{op}} \leq 2K_t^3$ and $\| \mathbf{P}_t \|_{\mathrm{op}} \leq 100$.

Similarly, we denote by ${\mathrm{PROJ}}^{\prime} (g_t Y_{t+1})$ a vector obtained from replacing the substituted  Gaussian entries with the original Bernoulli variables in the projection ${\mathrm{PROJ}}^{\prime} (g_t \Tilde{Y}_{t+1})$ (i.e, on the right hand side of \eqref{equ-modified-projection-form}). The projection $\mathrm{PROJ}^{\prime}(g_t Y_{t+1})$ is more tractable than $\mathrm{PROJ}(g_t Y_{t+1})$ (recall its definition in Remark~\ref{remark-conditional-Gaussian}) since $W_v^{(s)}(k) + \langle \eta^{(s)}_k, g_{s-1} D^{(s)}_v \rangle$ is measurable with respect to $\mathfrak{S}_{s}$ (recall \eqref{eq-def-mathfrak-S-t}) and thus we can use induction to control $\mathrm{PROJ}^{\prime}$. Another advantage is that the matrix $\mathbf{P}_t$ is measurable with $\mathfrak{S}_{t-1}$.
\begin{Lemma} {\label{lemma-projection-replace}}
On the event $\mathcal{E}_{t+1} \cap \mathcal{T}_t$, we have
\begin{align*}
    \sum_{v \not \in \mathrm{BAD}_{t+1}}
    \big| \mathrm{PROJ}(g_{t}Y_{t+1}(k,v)) -  {\mathrm{PROJ}}^{\prime} (g_{t}Y_{t+1}(k,v)) \big|^2 \leq n \Delta_{t+1}^2 + \Delta_{t+1}^2 \big\| \begin{pmatrix} [g Y]_t & [g \mathsf{Y}]_t \end{pmatrix} \big\|^2 \,.
\end{align*}
And similar result holds for $\mathrm{PROJ} (g_t\mathsf{Y}_{t+1}) - {\mathrm{PROJ}}^{\prime} (g_t\mathsf{Y}_{t+1})$.
\end{Lemma}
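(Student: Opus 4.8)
The plan is to rewrite both projections in the common template furnished by \eqref{equ-formulate-projection} and \eqref{equ-modified-projection-form}, to observe that their difference splits into a term that is \emph{exactly} (up to sign) the expression defining $\mathrm{PRB}_{t+1}$ and a lower-order term governed by the discrepancy between the covariance matrices $\mathbf{Q}_t,\mathbf{P}_t$, and then to bound each contribution. Concretely, replacing the auxiliary Gaussians by the original Bernoulli variables in \eqref{equ-formulate-projection} and \eqref{equ-modified-projection-form} gives, for $v\notin\mathrm{BAD}_{t+1}$,
$\mathrm{PROJ}(g_tY_{t+1}(k,v)) = \begin{pmatrix} g_t[Y]_t & g_t[\mathsf Y]_t \end{pmatrix}\mathbf{Q}_t\begin{pmatrix} H_{t+1,k,v} & \mathsf H_{t+1,k,v}\end{pmatrix}^{*}$ and $\mathrm{PROJ}'(g_tY_{t+1}(k,v)) = \begin{pmatrix} [gY]_t & [g\mathsf Y]_t\end{pmatrix}\mathbf{P}_t\,(\mathbf{J}_{t+1})^{*}_{(k,v)}$, where $(\mathbf{J}_{t+1})_{(k,v)}$ is the $(k,v)$-th row of $\mathbf{J}_{t+1}$.

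First I would show that $\mathbf{J}_{t+1}$ and $\mathbf{H}_{t+1}$ agree on the rows indexed by $v\notin\mathrm{BAD}_t$. Their entries differ only by $\Hat{\mathbb E}\big[\langle\eta^{(t+1)}_k,g_t\Tilde D^{(t+1)}_v\rangle\,\langle\eta^{(s)}_l,(g_t-g_{s-1})\Tilde D^{(s)}_u\rangle\big]$ and its mathsf analogue; since $g_t\Tilde D^{(t+1)}_v$ is a linear combination of $\{\overrightarrow Z_{v,w}:w\notin\mathrm{BAD}_t\}$ while $(g_t-g_{s-1})\Tilde D^{(s)}_u$ is a linear combination of $\{\overrightarrow Z_{u,w}:w\in\mathrm{BAD}_t\setminus\mathrm{BAD}_{s-1}\}$, and the only nonzero covariances among $(\overrightarrow Z,\overrightarrow{\mathsf Z})$ occur on matching index pairs, a short case check (using $u,v\notin\mathrm{BAD}_t$) shows all these covariances vanish; hence $\mathbf{J}_{t+1}=\mathbf{H}_{t+1}$ on the relevant rows. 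Running the same comparison on $\Hat{\mathbb E}$-covariances built from $g_t$ versus from $g_{s-1},g_{r-1}$ shows that $\mathbf{Q}_t^{-1}-\mathbf{P}_t^{-1}$ has the same $\{u,\pi(u)\}$-block sparsity as $\mathbf{Q}_t^{-1}$, with blocks of size $O(K_t)$ and entries of size $O(|\mathrm{BAD}_t|/n\vartheta)$; on $\mathcal T_t$ this is $\ll\Delta_t^{10}$, so $\|\mathbf{Q}_t^{-1}-\mathbf{P}_t^{-1}\|_{\mathrm{op}}\ll K_t\Delta_t^{10}$, and combining with $\|\mathbf{Q}_t\|_{\mathrm{op}},\|\mathbf{P}_t\|_{\mathrm{op}}\le100$ (Lemma~\ref{lemma-op-norm-Q} and its analogue) yields $\|\mathbf{Q}_t-\mathbf{P}_t\|_{\mathrm{op}}\ll K_t\Delta_t^{10}$.

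Next I would use $\mathbf{J}_{t+1}=\mathbf{H}_{t+1}$ to write
\begin{align*}
& \mathrm{PROJ}(g_tY_{t+1}(k,v)) - \mathrm{PROJ}'(g_tY_{t+1}(k,v)) \\
= \ & \begin{pmatrix} g_t[Y]_t-[gY]_t & g_t[\mathsf Y]_t-[g\mathsf Y]_t \end{pmatrix} \mathbf{Q}_t \begin{pmatrix} H_{t+1,k,v} & \mathsf H_{t+1,k,v} \end{pmatrix}^{*} \\
& + \begin{pmatrix} [gY]_t & [g\mathsf Y]_t \end{pmatrix} (\mathbf{Q}_t-\mathbf{P}_t) \begin{pmatrix} H_{t+1,k,v} & \mathsf H_{t+1,k,v} \end{pmatrix}^{*}\,.
\end{align*}
The first summand is precisely $-1$ times the quantity appearing in \eqref{equ-def-set-PRB} with $t$ replaced by $t+1$, so for $v\notin\mathrm{BAD}_{t+1}\supseteq\mathrm{PRB}_{t+1}$ its absolute value is at most $\Delta_{t+1}$, contributing at most $n\Delta_{t+1}^2$ to the sum. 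For the second summand, summing the squares over $v$ (and enlarging the range to all $v\notin\mathrm{BAD}_t\supseteq\{v\notin\mathrm{BAD}_{t+1}\}$) gives at most $\big\|\begin{pmatrix} [gY]_t & [g\mathsf Y]_t\end{pmatrix}\big\|^2\,\|\mathbf{Q}_t-\mathbf{P}_t\|_{\mathrm{op}}^2\,\|\mathbf{H}_{t+1}\|_{\mathrm{op}}^2$; by Lemma~\ref{lemma-op-norm-H} (for index $t+1$) we have $\|\mathbf{H}_{t+1}\|_{\mathrm{op}}\le 2K_{t+1}^3$, and since $\Delta_{t+1}/\Delta_t^{10}$ is super-polynomially large in $\log\log n$ we get $K_t\Delta_t^{10}\cdot K_{t+1}^3\ll\Delta_{t+1}$, so the second summand contributes at most $\Delta_{t+1}^2\big\|\begin{pmatrix} [gY]_t & [g\mathsf Y]_t\end{pmatrix}\big\|^2$. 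Adding the two bounds yields the claimed inequality, and the mathsf statement follows by the identical argument with roles of $\Gamma,\Pi$ interchanged.

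Finally, the main obstacle is the passage of the previous paragraph's first step: one must verify carefully that moving from the ``$g_{s-1}$'' frame $\mathfrak F'_t$ to the ``$g_t$'' frame $\mathfrak F_t$ perturbs the matrices $\mathbf{Q}_t^{-1}$ and $\mathbf{H}_{t+1}$ only through vertices of $\mathrm{BAD}_t$, and to quantify that this perturbation is negligible — exactly zero in the case of $\mathbf{H}_{t+1}$ versus $\mathbf{J}_{t+1}$, thanks to the disjointness of the relevant index sets and the sparse correlation structure of $(\overrightarrow Z,\overrightarrow{\mathsf Z})$, and of size $O(|\mathrm{BAD}_t|/n\vartheta)$ with the correct block sparsity for $\mathbf{Q}_t$ versus $\mathbf{P}_t$. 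It is precisely this control that allows the $\mathrm{PRB}_{t+1}$ term to dominate; everything else amounts to bookkeeping the Bernoulli-for-Gaussian substitution and invoking the operator-norm bounds already established in Lemmas~\ref{lemma-op-norm-Q}, \ref{lemma-1-norm-Q} and \ref{lemma-op-norm-H}.
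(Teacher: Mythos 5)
Your proof is correct and follows essentially the same route as the paper: the same decomposition of $\mathrm{PROJ}-\mathrm{PROJ}'$ into the term controlled by the definition of $\mathrm{PRB}_{t+1}$ (bounded pointwise by $\Delta_{t+1}$ off $\mathrm{BAD}_{t+1}$) plus a term controlled by the operator-norm discrepancy between the two projection kernels. The one place you sharpen the argument is worth noting: you prove that $\mathbf{J}_{t+1}=\mathbf{H}_{t+1}$ \emph{exactly} on the relevant indices by observing that the summands in $g_t\Tilde{D}^{(t+1)}_v$ range over $w\notin\mathrm{BAD}_t$ while those in $(g_{s-1}-g_t)\Tilde D^{(s)}_u$ range over the disjoint set $\mathrm{BAD}_t\setminus\mathrm{BAD}_{s-1}$, so all $\overrightarrow{Z}$-covariances vanish; the paper instead contents itself with the conservative entrywise estimate $O(|\mathrm{BAD}_t|/n)$ on the diagonal and invokes Lemma~\ref{lemma-bound-on-op-norm}. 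Either version yields $\|\mathbf{P}_t\mathbf{J}_{t+1}^{*}-\mathbf{Q}_t\mathbf{H}_{t+1}^{*}\|_{\mathrm{op}}\ll\Delta_{t+1}$ and hence the stated bound, so the argument is sound.
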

\begin{proof}
By the triangle inequality, the left hand side in the lemma-statement is given by
\begin{align}
    & \sum_{v \not \in \mathrm{BAD}_{t+1}} \Big( \begin{pmatrix} g_t [Y]_t & g_t[\mathsf{Y}]_t \end{pmatrix} \mathbf{Q}_t \mathbf{H}_{t+1}^{*} (k,v) - \begin{pmatrix} [g Y]_t & [g \mathsf{Y}]_t \end{pmatrix} \mathbf{P}_t \mathbf{J}_{t+1}^{*} (k,v) \Big)^2 \nonumber\\
    \leq \ & 2 \sum_{v \not \in \mathrm{BAD}_{t+1}} \Big( \begin{pmatrix}
        [g Y]_{t} - g_t [Y]_t & [ g\mathsf{Y} ]_{t} - g_t[\mathsf{Y}]_t
    \end{pmatrix} \mathbf{Q}_t \mathbf{H}_{t+1}^{*} (k,v) \Big)^2 \label{equ-dif-proj-part-I} \\
    + \ & 2 \big\| \begin{pmatrix}
        [g Y]_t & [g \mathsf{Y}]_t
    \end{pmatrix} (\mathbf{P}_t \mathbf{J}_{t+1}^{*} - \mathbf{Q}_t \mathbf{H}_{t+1}^{*} ) \big\|^2 \,. \label{equ-dif-proj-part-II}
\end{align}
By \eqref{equ-def-set-PRB}, we have 
\begin{align*}
\eqref{equ-dif-proj-part-I} =     \sum_{v \not \in \mathrm{BAD}_{t+1}}  \Big( \begin{pmatrix}
        [g Y]_{t} - g_t [Y]_t & [ g\mathsf{Y} ]_{t} - g_t[\mathsf{Y}]_t
    \end{pmatrix} \mathbf{Q}_t 
    \begin{pmatrix}
        H_{t+1,k,v} & \mathsf{H}_{t+1,k,v}
    \end{pmatrix}^{*} \Big)^2 \leq n \Delta_{t+1}^2\,.
\end{align*}
It remains to bound \eqref{equ-dif-proj-part-II}. Note that for $u, v\in V$ and $\tau_u \in \{u, \pi(u)\}$ and $\tau_v \in \{v, \pi(v)\}$
\begin{align*}
    \big( \mathbf{J}_{t+1} - \mathbf{H}_{t+1} \big) ((k,\tau_v);(s,l, \tau_u)) =
    \begin{cases}
    0 , & v \neq u, u,v \not \in \mathrm{BAD}_t ; \\
    O(\frac{1}{n}), & v \neq u, v \in \mathrm{BAD}_t \mbox{ or } u \in \mathrm{BAD}_t ; \\
    O(|\mathrm{BAD}_t|/n), &v=u \,.
    \end{cases}
\end{align*}
As in the proof of Lemma~\ref{lemma-op-norm-H}, we can choose $\mathcal{I}_v = \mathcal{J}_v = \{ (s,k,v),(s,k,\pi(v)) : 0 \leq s \leq t, 1 \leq k \leq K_s/12 \}$, $\delta= \frac{ |\mathrm{BAD}_t| }{ n } $ and $C^2 = \frac{ |\mathrm{BAD}_t| }{ n } \ll \Delta_t^{10}$ (since we are on the event $\mathcal{T}_t$). We can then apply Lemma~\ref{lemma-bound-on-op-norm} and get that $\|\mathbf{J}_{t+1} - \mathbf{H}_{t+1}\|_{\mathrm{op}} \ll 2 \Delta_{t+1}^{5}$. Again by applying Lemma~\ref{lemma-bound-on-op-norm} for such $\mathcal{I}_v , \mathcal{J}_v, \delta$ and $C^2$, we get  $\|\mathbf{Q}_t^{-1} - \mathbf{P}_t^{-1}\|_{\mathrm{op}} \ll 2\Delta_{t+1}^{5}$. Thus,
\begin{align*}
    \| \mathbf{Q}_t - \mathbf{P}_t \|_{\mathrm{op}} \leq \| \mathbf{Q}_t \|_{\mathrm{op}} \| \mathbf{Q}_t^{-1} - \mathbf{P}_t^{-1} \|_{\mathrm{op}} \| \mathbf{P}_t \|_{\mathrm{op}} \leq 100^2  \Delta_{t+1}^{5}\,.
\end{align*}
Since in addition $\mathbf{J}_{t+1}, \mathbf{H}_{t+1}, \mathbf{P}_t, \mathbf{Q}_t$ have operator norms bounded by $O(K^3_{t+1})$, we get that
\begin{align*}
    \| \mathbf{P}_t \mathbf{J}_{t+1}^{*} - \mathbf{Q}_t \mathbf{H}_{t+1}^{*} \|_{\mathrm{op}} \leq \| \mathbf{P}_t - \mathbf{Q}_t \|_{\mathrm{op}} \| \mathbf{J}_{t+1} \|_{\mathrm{op}} + \| \mathbf{J}_{t+1} - \mathbf{H}_{t+1} \|_{\mathrm{op}} \| \mathbf{Q}_t \|_{\mathrm{op}} \leq \Delta_{t+1}^2 \,.
\end{align*}
This implies that $\eqref{equ-dif-proj-part-II} \leq \Delta_{t+1}^2 (\| [g Y]_t \|^2 + \| [g\mathsf{Y}]_t \|^2)$, as required.
\end{proof}

\subsection{Proof of Proposition~\ref{prop-cardinality-BAD}} {\label{sec:actual-proof}}

In this subsection, we prove Proposition~\ref{prop-cardinality-BAD} by induction on $t$. Recall the definition of $\mathcal{E}_t$ (see \eqref{equ-def-admissible}) and $\mathcal{T}_t$ (see \eqref{eq-def-mathcal-T-t}). For a given realization $(\mathrm{B}_t, \mathrm{B}_{t-1})$ for $(\mathrm{BAD}_t, \mathrm{BAD}_{t-1})$, define vectors $\Tilde{W}_t$ and $\Tilde{\mathsf{W}}_t$ where $\Tilde{W}_t(k,v) = \Tilde{W}^{(t)}_v (k)$ and $\Tilde{\mathsf{W}}_t(k,\pi(v)) = \Tilde{\mathsf{W}}^{(t)}_{\pi(v)} (k)$ for $v\not \in \mathrm{B}_{t}$. We recall $g_{t-1} \Tilde{Y}_s$ and define $g_{t-1} {Y}_s$ as follows:
\begin{equation*}
\begin{aligned}
    & g_{t-1} \Tilde{Y}_s (k,v) = \Tilde{W}^{(s)}_k(v)+ \langle \eta^{(s)}_k, g_{t-1} \Tilde{D}^{(s)}_v \rangle, \quad g_{t-1} {Y}_s (k,v) = W^{(s)}_k(v)+ \langle \eta^{(s)}_k, g_{t-1} {D}^{(s)}_v \rangle \,, \\
    & g_{t-1} \Tilde{\mathsf{Y}}_s (k,\mathsf{v}) = \Tilde{\mathsf{W}}^{(s)}_k(\mathsf{v})+ \langle \eta^{(s)}_k, g_{t-1} \Tilde{\mathsf{D}}^{(s)}_{\mathsf{v}} \rangle, \quad g_{t-1} {\mathsf{Y}}_s (k,\mathsf{v}) = \mathsf{W}^{(s)}_k(\mathsf{v})+ \langle \eta^{(s)}_k, g_{t-1} \mathsf{D}^{(s)}_{\mathsf{v}} \rangle \,,
\end{aligned}
\end{equation*}
where $0 \leq s \leq t, 1 \leq k \leq \frac{K_s}{12}$, and $v, \pi^{-1}(\mathsf v) \not \in \mathrm{B}_{t-1}$ when $s<t$, as well as $v, \pi^{-1}(\mathsf v) \not \in \mathrm{B}_{t}$ when $s=t$. In what follows, we will use $x_t = \{ x^{(t)}_{k,v} \}$ and $\mathsf x_t = \{\mathsf{x}^{(t)}_{k,\mathsf{v}}\}$ to denote realization for $g_{t-1}Y_t$ and $g_{t-1} \mathsf{Y}_t$ respectively. In addition, we define a mean-zero Gaussian process 
\begin{equation} \label{eq-def-Check-Gaussian-process}
    \big\{ \langle \eta^{(s)}_k, \Check{D}^{(s)}_v \rangle, \langle \eta^{(s)}_k, \Check{\mathsf{D}}^{(s)}_{\pi(v)} \rangle : 0 \leq s \leq t, 1 \leq k \leq \tfrac{K_s}{12}, v \in V \big\}
\end{equation}
where each variable has variance 1 and the only non-zero covariance is given by
\begin{align*}
    \mathbb{E}[ \langle \eta^{(s)}_k, \Check{D}^{(s)}_v \rangle \langle \eta^{(s)}_k, \Check{\mathsf{D}}^{(s)}_{\pi(v)} \rangle ] = \Hat{\rho} \eta^{(s)}_k \Psi^{(s)} (\eta^{(s)}_k)^{*} \mbox{ for } 0 \leq s \leq t, 1 \leq k \leq \tfrac{K_s}{12} \mbox{ and } v \in V \,.
\end{align*}
We defined \eqref{eq-def-Check-Gaussian-process} since eventually we will show that this is a good approximation for our actual process (see our definition of $\mathcal B_t$ below). For $v, \pi^{-1}(\mathsf{v}) \not \in \mathrm{B}_{t}$, we also define
\begin{align*}
    \langle \sigma^{(t)}_k, \Check{D}^{(t)}_v \rangle = \sqrt{\frac{12}{K_t}} \sum_{j=1}^{ \frac{K_t}{12} } \beta^{(t)}_k (j) \langle \eta^{(t)}_j, \Check{D}^{(t)}_v \rangle  \mbox{ and }
    \Check{Y}_t(k,v) = \Tilde{W}^{(t)}_v(k) + \langle \eta^{(t)}_k, \Check{D}^{(t)}_v \rangle\,,
\end{align*}
and we make analogous definitions for the mathsf version for $\mathsf v$. For $t\geq 0$, we define
\begin{align*}
    \mathcal{A}_t =& \Big\{  \sum_{v \in V \setminus \mathrm{BAD}_t}  |g_{t-1} Y_t (k,v)|^2 + |g_{t-1} \mathsf{Y}_t (k,\pi(v))|^2  \leq 100 n \mbox{ for all } k \Big\} \,,  \\
    \mathcal{B}_t =& \Big\{ ( g_{t-1} Y_t , g_{t-1} \mathsf{Y}_t ) \in \Big\{ (x_{t}, \mathsf{x}_t): \frac{ p_{ \{ g_{t-1} Y_t , g_{t-1} \mathsf{Y}_t | \mathfrak{S}_{t-1}; \mathrm{BAD}_t \}} (x_{t}, \mathsf{x}_t) }{ p_{\{  \Check{Y}_t , \Check{\mathsf{Y}}_t  \}} (x_{t}, \mathsf{x}_t) } \leq \exp \{ n K_t^{30} \Delta_t^2 \} \Big\} \Big\}\,, \\
    \mathcal{H}_t =& \Big\{ \sum_{v \in V \setminus \mathrm{BAD}_t } \big| \textup{PROJ}(g_{t-1} Y_t(k,v) ) \big|^2+ \big| \textup{PROJ}( g_{t-1} \mathsf{Y}_t (k,\pi(v)) ) \big|^2  \leq n K_t^{6} \Delta^2_t \mbox{ for all } k \Big\} \,.
\end{align*}
Note that $\mathcal{H}_0$ holds obviously. For notational convenience in the induction, we will also denote by $\mathcal A_{-1}$ and $\mathcal B_{-1}$ as the whole space. Also, by Lemma~\ref{lemma-concentration-iterative-aleph-Upsilon} $\mathcal{T}_{-1}$ holds with probability $1-o(1)$ since $\mathrm{BAD}_{-1} = \mathrm{REV}$. In addition, we have $\P(\mathcal{E}_0) = 1-o(1)$ by Lemma~\ref{lemma-E-0-holds}. With these clarified, our inductive proof consists of the following steps:

\noindent{\bf Step 1.} If $\mathcal{T}_{t-1}$ holds for $0 \leq t \leq t^{*}-1$, then $\mathcal{T}_{t}$ holds with probability $1-o(1)$;

\noindent{\bf Step 2.} If $\mathcal{A}_{t-1}, \mathcal{B}_{t-1}, \mathcal{T}_{t}, \mathcal{E}_{t}, \mathcal{H}_{t}$ hold for $0 \leq t \leq t^{*}-1$, then $\mathcal{B}_{t}$ holds with probability $1-o(1)$;

\noindent{\bf Step 3.} If $\mathcal{B}_{t}$ holds for $0 \leq t \leq t^{*}-1$, then $\mathcal{A}_{t}$ holds with probability $1-o(1)$;
   
\noindent {\bf Step 4.} If $\mathcal{A}_t, \mathcal{B}_t, \mathcal{E}_{t}, \mathcal{H}_t, \mathcal{T}_t $ hold for $0 \leq t \leq t^{*}-1$, then $\mathcal{E}_{t+1}$ holds with probability $1-o(1)$;
 
\noindent {\bf Step 5.} If $\mathcal{A}_{t}, \mathcal{B}_{t}, \mathcal{H}_{t}, \mathcal{T}_t, \mathcal{E}_{t+1}$ hold for $0 \leq t \leq t^{*}-1$, then $\mathcal{H}_{t+1}$ holds with probability $1-o(1)$.

\subsubsection{Step 1: \(\mathcal{T}_t\)}

In what follows, we assume $\mathcal{T}_{t-1}$ holds without further notice. As we will see, the philosophy of our proof throughout this subsection is to first consider an arbitrarily fixed realization for e.g. $\{ \Gamma^{(r)}_k, \Pi^{(r)}_k : 0 \leq r \leq t \}, \mathrm{BAD}_{t-1}$ and $\mathrm{BIAS}_{D, t,s,l}$, and then we prove a bound on the tail probability for some ``bad'' event---we shall emphasize that, we will not compute the conditional probability as it will be difficult to implement; instead we will compute the probability (which we denote as $\Hat{\mathbb P}$) by simply treating $\{ \Gamma^{(r)}_k, \Pi^{(r)}_k : 0 \leq r \leq t, 1 \leq k \leq K_r \}, \mathrm{BAD}_{t-1}$ and $\mathrm{BIAS}_{D,t,s,l}$ as deterministic objects. Formally, we define the operation $\Hat{\mathbb{E}}$ as follows: for any function $h$ (of the form $h ( \Gamma, \Pi, \mathrm{BAD}_{t-1}, \overrightarrow{G}, \overrightarrow{\mathsf{G}}, W, \mathsf{W})$) and any realization $\Xi,\mathrm{B}$ for $\{ \Gamma, \Pi \}$ and $\mathrm{BAD}_{t-1}$, define 
    $$ f( \Xi,\mathrm{B} ) = \mathbb{E}_{ \{ \overrightarrow{G}, \overrightarrow{\mathsf{G}}, W, \mathsf{W} \} } \big[ h ( \Xi,\mathrm{B}, \overrightarrow{G}, \overrightarrow{\mathsf{G}}, W, \mathsf{W} ) \big] \,. $$ 
Then the operator $\Hat {\mathbb E}$ is defined such that 
    $$ \Hat{\mathbb{E}} \big[ h ( \Gamma, \Pi, \mathrm{BAD}_{t-1}, \overrightarrow{G}, \overrightarrow{\mathsf{G}}, W, \mathsf{W} ) \big] = f( \Gamma, \Pi, \mathrm{BAD}_{t-1} ) \,. $$ 
Note that this definition of $\Hat{\mathbb E}$ is consistent with that in \cite{DL22+}. It is also consistent with \eqref{eq-def-Hat-E} except that thanks to \eqref{eq-independence-observation} for the special case considered in \eqref{eq-def-Hat-E} simplifications were applied. 

Provided with $\Hat{\mathbb E}$, we can now precisely define $\Hat{\mathbb P}(A) = \Hat{\mathbb E}[\mathbf 1_{A}]$ for any event $A$. 
After bounding the $\Hat{\mathbb P}$-probability, we apply a union bound over all possible realizations which then justifies that the bad event indeed typically will not occur; this union bound is necessary exactly since what we have computed earlier is not the conditional probability. The key to our success is that the tail probability is so small that it can afford a union bound.

\begin{Lemma} \label{lemma-bound-cardinality-BIAS}
We have 
\begin{align*}
    \mathbb P \Big( |\mathrm{BIAS}_{t}| \leq 8 \vartheta^{-1} K^4_t \big( |\mathrm{BAD}_{t-1}| + \sqrt{n / \Hat{q}(1-\Hat{q})} \big) e^{ 20 (\log \log n)^{10} } \Big) \geq  1-o(e^{-n K_t})\,.
\end{align*}
\end{Lemma}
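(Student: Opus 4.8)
The plan is to bound $|\mathrm{BIAS}_t|$ by controlling, for each fixed $s\le t$ and $1\le k\le K_s$, the cardinality of $\mathrm{BIAS}_{D,t,s,k}$ and $\mathrm{BIAS}_{\mathsf D,t,s,k}$ as defined in \eqref{equ-def-set-BIAS}, and then take a union bound over the $2\sum_{s\le t}K_s \le 2tK_t \le K_t^2$ choices of $(s,k)$ (and the $D$/$\mathsf D$ versions). Since $|\mathrm{BIAS}_t|\le \sum_{s,k}(|\mathrm{BIAS}_{D,t,s,k}|+|\mathrm{BIAS}_{\mathsf D,t,s,k}|)$, it suffices to show that each such set has size at most roughly $8\vartheta^{-1}K_t^2\big(|\mathrm{BAD}_{t-1}|+\sqrt{n/\Hat q(1-\Hat q)}\big)e^{20(\log\log n)^{10}}$ with the stated (super-exponentially small) failure probability. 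Fix a realization $(\Xi,\mathrm B)$ for $\{\Gamma^{(s)}_k,\Pi^{(s)}_k\}$ and $\mathrm{BAD}_{t-1}$; I will work under $\Hat{\mathbb P}$, i.e.\ treat these as deterministic as explained in the paragraph preceding the lemma, and reinstate randomness only in $\overrightarrow G,\overrightarrow{\mathsf G}$. Recall $\mathrm{BIAS}_{D,t,s,k}=\{v\notin\mathrm{BAD}_{t-1}: |b_{t-1}D^{(s)}_v(k)|>e^{-10(\log\log n)^{10}}\}$, where $b_{t-1}D^{(s)}_v(k)$ from \eqref{eq-def-b-t-D-t} is a normalized sum over $u\in\mathrm{BAD}_{t-1}$ of $(\mathbf 1_{u\in\Gamma^{(s)}_k}-\mathfrak a_s)(\overrightarrow G_{v,u}-\Hat q)$. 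The key point is that for distinct $v$, the variables $\{b_{t-1}D^{(s)}_v(k)\}_{v}$ are independent (they depend on disjoint edge sets $\{\overrightarrow G_{v,u}:u\in\mathrm{BAD}_{t-1}\}$, since $v\notin\mathrm{BAD}_{t-1}$), each is a sum of at most $|\mathrm{BAD}_{t-1}|$ independent bounded terms with second moment of order $|\mathrm{BAD}_{t-1}|/\big((\mathfrak a_s-\mathfrak a_s^2)n\Hat q(1-\Hat q)\big)\cdot \mathfrak a_s(1-\mathfrak a_s)\Hat q(1-\Hat q)= |\mathrm{BAD}_{t-1}|/n$.

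The core estimate is a two-step bound on $\Hat{\mathbb P}(v\in\mathrm{BIAS}_{D,t,s,k})$ followed by a Chernoff bound on the (independent) indicator sum. For a single $v$, apply Bernstein's inequality (Lemma~\ref{lemma-Bernstein-inequality}) to $b_{t-1}D^{(s)}_v(k)$: the variance is $O(|\mathrm{BAD}_{t-1}|/n)$ and each summand is bounded by $O\big(1/\sqrt{(\mathfrak a_s-\mathfrak a_s^2)n\Hat q(1-\Hat q)}\big)=O\big(\vartheta^{-1/2}/\sqrt{n\Hat q}\big)$; this gives, with $s_0 := e^{-10(\log\log n)^{10}}$,
\[
\Hat{\mathbb P}\big(|b_{t-1}D^{(s)}_v(k)|>s_0\big)\le 2\exp\Big\{-\frac{s_0^2}{2\big(O(|\mathrm{BAD}_{t-1}|/n)+O(\vartheta^{-1/2}s_0/\sqrt{n\Hat q})\big)}\Big\}=:\rho_0.
\]
Using $|\mathrm{BAD}_{t-1}|\ge 1$ one checks $\rho_0 \le \vartheta K_t^{-2}\big(|\mathrm{BAD}_{t-1}|+\sqrt{n/\Hat q(1-\Hat q)}\big)e^{20(\log\log n)^{10}}/(16n)=:p_0/n$ for $n$ large (here the two summands in the denominator of the exponent are matched against the two contributions $|\mathrm{BAD}_{t-1}|/n$ and $\sqrt{1/n\Hat q}$; the $e^{20(\log\log n)^{10}}$ factor dominates $s_0^{-2}=e^{20(\log\log n)^{10}}$ comfortably). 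Then, since the events $\{v\in\mathrm{BIAS}_{D,t,s,k}\}$ over $v\in V\setminus\mathrm{BAD}_{t-1}$ are independent with probability at most $p_0/n$ each, $|\mathrm{BIAS}_{D,t,s,k}|$ is stochastically dominated by $\mathrm{Bin}(n,p_0/n)$, so a standard Chernoff bound yields $\Hat{\mathbb P}(|\mathrm{BIAS}_{D,t,s,k}|>2p_0)\le e^{-p_0/3}$. One must check $p_0/3 \gg n K_t$: indeed $p_0 \gtrsim \sqrt{n/\Hat q}\, e^{20(\log\log n)^{10}} \gg n K_t$ provided $\Hat q = o\big(e^{40(\log\log n)^{10}}/K_t^2\big)$, which holds since $K_t\le(\log n)^4$ and... wait — this requires care when $\Hat q=\Theta(1)$; in that regime $\sqrt{n/\Hat q}=\Theta(\sqrt n)$ and $e^{20(\log\log n)^{10}}\gg n$, so $p_0\gg n^2 \gg nK_t$ still holds. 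So in all regimes $p_0/3\gg nK_t$ and the failure probability is $o(e^{-nK_t})$.

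Having established the per-$(s,k)$ bound, I sum: $|\mathrm{BIAS}_t|\le \sum_{s\le t}\sum_{k\le K_s}\big(|\mathrm{BIAS}_{D,t,s,k}|+|\mathrm{BIAS}_{\mathsf D,t,s,k}|\big)\le 2(\sum_{s\le t}K_s)\cdot 2p_0 \le 2K_t^2\cdot 2p_0$, and since $p_0 = \vartheta^{-1}K_t^{-2}\big(|\mathrm{BAD}_{t-1}|+\sqrt{n/\Hat q(1-\Hat q)}\big)e^{20(\log\log n)^{10}}/16$, this gives $|\mathrm{BIAS}_t|\le \tfrac{1}{2}\vartheta^{-1}\big(|\mathrm{BAD}_{t-1}|+\sqrt{n/\Hat q(1-\Hat q)}\big)e^{20(\log\log n)^{10}} \le 8\vartheta^{-1}K_t^4\big(|\mathrm{BAD}_{t-1}|+\sqrt{n/\Hat q(1-\Hat q)}\big)e^{20(\log\log n)^{10}}$, matching the claim (with room to spare). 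Finally, since all of these are $\Hat{\mathbb P}$-bounds for each fixed realization $(\Xi,\mathrm B)$, I take a union bound over all $\prod_{s\le t}(2^n)^{K_s}\cdot 2^n \le 2^{nK_t^2}$ possible realizations; as the per-realization failure probability is $o(e^{-nK_t^2\cdot\mathrm{polylog}})$... hmm, actually $e^{-p_0/3}$ with $p_0\gg n K_t$ but I need $p_0 \gg n K_t^2$ to absorb the union bound — this is fine by the same argument since $p_0$ beats any polylog-in-$n$ times $n$. So the union bound survives, and $\mathbb P(|\mathrm{BIAS}_t|\le 8\vartheta^{-1}K_t^4(\cdots))\ge 1-o(e^{-nK_t})$.

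The main obstacle is the bookkeeping around the regimes of $\Hat q$ (sparse vs.\ dense) when verifying that $p_0$ — hence the tail exponent — genuinely dominates both $nK_t$ (for the single-realization bound) and $nK_t^2$ or more (to survive the union over $2^{nK_t^2}$ realizations); the super-exponential smallness $e^{-e^{c(\log\log n)^{10}}}$-type factor coming from $s_0^{-2}=e^{20(\log\log n)^{10}}$ is what makes this work, and one needs Lemma~\ref{lemma-property-vartheta-varsigma} to control $\vartheta$ and the standing assumption $\Hat q\ge n^{-\alpha+o(1)}$ to control $\sqrt{n/\Hat q}$. A secondary (routine) technical point is justifying the independence of $\{b_{t-1}D^{(s)}_v(k)\}_v$ and the stochastic domination by a binomial under $\Hat{\mathbb P}$, which is immediate once one notes that fixing the realization of $\Gamma,\Pi,\mathrm{BAD}_{t-1}$ makes $b_{t-1}D^{(s)}_v(k)$ a deterministic linear functional of the independent edge indicators $\{\overrightarrow G_{v,u}:u\in\mathrm{BAD}_{t-1}\}$.
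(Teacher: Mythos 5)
There is a genuine gap, rooted in a false asymptotic estimate. You assert $e^{20(\log\log n)^{10}}\gg n$, but this is wrong: since $\log n = e^{\log\log n}$ and any polynomial grows slower than the exponential, $(\log\log n)^{10}=o(\log n)$, hence $e^{20(\log\log n)^{10}}=n^{o(1)}$. Consequently your $p_0 = \vartheta K_t^{-2}\big(|\mathrm{BAD}_{t-1}|+\sqrt{n/\Hat q(1-\Hat q)}\big)e^{20(\log\log n)^{10}}/16$ is $o(n)$, not $\gg nK_t$: on the (implicitly assumed) event $\mathcal T_{t-1}$ one has $|\mathrm{BAD}_{t-1}|\leq n^{1-\Omega(1/\log\log\log n)}$ and $\sqrt{n/\Hat q}=n^{(1+\alpha)/2+o(1)}$, so $p_0\leq n^{1-\Omega(1/\log\log\log n)}$. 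Your Chernoff exponent $p_0/3 = o(n)$ therefore cannot survive the union bound over the $2^{\Theta(K_t n)}$ realizations of $\{\Gamma^{(r)}_k,\Pi^{(r)}_k\}$ and $\mathrm{BAD}_{t-1}$ — a union bound that is unavoidable here because $\Hat{\mathbb P}$ is not the true conditional law. (Your intermediate claim $\rho_0\leq p_0/n$ does in fact hold on $\mathcal T_{t-1}$, where the single-vertex Bernstein exponent is polynomially large and $\rho_0$ is super-exponentially small; but this over-estimate of the binomial parameter, combined with a threshold set at only twice that over-estimated mean, discards exactly the tail you need.)

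The paper instead takes the per-$(s,k)$ target to be $m = 4\vartheta^{-1}K_t^2\big(|\mathrm{BAD}_{t-1}|+\sqrt{n/\Hat q(1-\Hat q)}\big)e^{20(\log\log n)^{10}}$, which is about $32\vartheta^{-2}K_t^4$ times larger than your $2p_0$ — the "room to spare" you noticed in the final summing step is precisely the slack the per-$(s,k)$ estimate needs. Writing $c$ for the single-vertex Bernstein exponent (so $\rho_0\leq 2e^{-c}$ with $c=\Theta\big(n\vartheta e^{-20(\log\log n)^{10}}/B\big)$), the choice of $m$ makes $mc=\Theta(nK_t^2)$ \emph{uniformly} in $B=|\mathrm{BAD}_{t-1}|+\sqrt{n/\Hat q(1-\Hat q)}$, because the $e^{\pm20(\log\log n)^{10}}$ and $B$ factors cancel. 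Using a Bennett-type or direct binomial tail $\mathbb P(\mathrm{Bin}(n,\rho_0)>m)\leq (en\rho_0/m)^m$, whose exponent is $\sim m\ln(m/en\rho_0)\sim mc$ rather than the $\sim m$ of the plain multiplicative Chernoff bound, then gives $e^{-\Theta(nK_t^2)}$ per realization, which beats the $2^{\Theta(K_t n)}$ union bound (and when $c$ is too small for the tail bound, one checks $m>n$ so the event is vacuous). Fixing the asymptotics, enlarging the threshold to the paper's $m$, and replacing the multiplicative Chernoff bound by the sharper binomial tail would close the gap.
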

\begin{proof}
Recall the definition of $\mathrm{BIAS}_{D,t,s,k}$ as in \eqref{equ-def-set-BIAS}. 
We first consider an arbitrarily fixed realization of $\big\{ \Gamma^{(r)}_k, \Pi^{(r)}_k : 0 \leq r \leq t, 1\leq k\leq K_r \big\}$ and $\mathrm{BAD}_{t-1}$. Since the events $v \in \mathrm{BIAS}_{D,t,s,k}$ over $v\not\in \mathrm{BAD}_{t-1}$ are independent of each other and by Lemma~\ref{lemma-Bernstein-inequality} each such event occurs with probability at most $2 \exp \Big\{ - \frac{ n \vartheta \Hat{q} (1-\Hat{q}) e^{ - 20 (\log \log n)^{10} } }{ 2( |\mathrm{BAD}_{t-1}| \Hat{q}(1-\Hat{q}) + \sqrt{n \Hat{q}(1-\Hat{q})} ) } \Big\}$, we can then  apply Lemma~\ref{lemma-Bernstein-inequality} (again) and derive that 
\begin{align*}
    &\Hat{\mathbb P}\Big( |\mathrm{BIAS}_{D,t,s,k}| > 4\vartheta^{-1} K_t^2 ( |\mathrm{BAD}_{t-1}| + \sqrt{n/\Hat{q}(1-\Hat{q})}) e^{ 20 (\log \log n)^{10} } \Big) \leq e^{-n K^2_t} \,.
\end{align*}
Clearly, a similar estimate holds for $\mathrm{BIAS}_{\mathsf D,t,s,k}$.
We next apply a union bound over all admissible realizations for $\{ \Gamma^{(r)}_k, \Pi^{(r)}_k : r \leq t, 1\leq k\leq K_r \}, \mathrm{BAD}_{t-1}$. Since we need to choose at most $4K_t$ subsets of $V$ (or $\mathsf{V}$), the enumeration is bounded by $2^{4K_t n}$. Therefore, applying a union bound over all these realizations and over $s, t$, we obtain the desired estimate by recalling that $\mathrm{BIAS}_t = \cup_{0 \leq s \leq t} \cup_{1 \leq k \leq K_s} \mathrm{BIAS}_{D,t,s,k} \cup \mathrm{BIAS}_{\mathsf{D},t,s,k}$.
\end{proof} 

\begin{Lemma} {\label{lemma-bound-cardinality-set-PRB}}
We have $\mathbb P(|\mathrm{PRB}_t| \leq \mathtt A ) \geq 1-o(e^{-n K_t})$ where
\begin{align*}
    \mathtt A =  K_t^{20} \vartheta^{-2} \Delta_{t}^{-2} \big( |\mathrm{BAD}_{t-1}| + \sqrt{n /\Hat{q}(1-\Hat{q})} \big) \,.
\end{align*}
\end{Lemma}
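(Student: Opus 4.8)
The plan is to run the $\Hat{\mathbb P}$-argument introduced at the beginning of Step~1: freeze an admissible realization of $\{\Gamma^{(r)}_k,\Pi^{(r)}_k:r\le t\}$ together with $\mathrm{BAD}_{t-1}$, prove a tail bound for $\{|\mathrm{PRB}_t|>\mathtt A\}$ under $\Hat{\mathbb P}$ (treating those objects as deterministic and the remaining graph/Gaussian variables as random), and then union-bound over all such realizations---of which there are at most $2^{O(nK_t)}$ once the cardinality constraints of $\mathcal E_t$ and $\mathcal T_{t-1}$ are imposed. So it suffices to show that, for a fixed admissible realization, $\Hat{\mathbb P}(|\mathrm{PRB}_t|>\mathtt A)=o(e^{-CnK_t})$ with $C$ larger than the implied constant in the exponent of the union bound. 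First I would pass from cardinality to an $\ell^2$-quantity: since $\mathrm{PRB}_t=\bigcup_{k}\mathrm{PRB}_{t,k}$ and each $v\in\mathrm{PRB}_{t,k}$ contributes more than $\Delta_t^2$ to $\sum_{k,v}|\mathsf q_{k,v}|^2$, where $\mathsf q_{k,v}$ is the bracketed expression in \eqref{equ-def-set-PRB}, it is enough to prove
\[
\sum_{k}\sum_{v\notin\mathrm{BAD}_{t-1}}|\mathsf q_{k,v}|^2\ \le\ \Delta_t^2\,\mathtt A\ =\ K_t^{20}\vartheta^{-2}\big(|\mathrm{BAD}_{t-1}|+\sqrt{n/\Hat{q}(1-\Hat{q})}\big).
\]
Writing $\mathbf d=\begin{pmatrix}[gY]_{t-1}-g_{t-1}[Y]_{t-1} & [g\mathsf Y]_{t-1}-g_{t-1}[\mathsf Y]_{t-1}\end{pmatrix}$ for the recentring difference, we have $\mathsf q_{k,v}=\mathbf d\,\mathbf Q_{t-1}\begin{pmatrix}H_{t,k,v} & \mathsf H_{t,k,v}\end{pmatrix}^{*}$, hence $\sum_{k,v}|\mathsf q_{k,v}|^2\le\|\mathbf d\,\mathbf Q_{t-1}\mathbf H_t^{*}\|^2\le\|\mathbf d\|^2\|\mathbf Q_{t-1}\|_{\mathrm{op}}^2\|\mathbf H_t\|_{\mathrm{op}}^2$. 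By (the proofs of) Lemma~\ref{lemma-op-norm-Q} and Lemma~\ref{lemma-op-norm-H}, which use only admissibility of the sets and the bound on $|\mathrm{BAD}_{t-1}|$ and therefore remain valid in the $\Hat{\mathbb P}$-world, this is at most $10^5K_t^6\|\mathbf d\|^2$.

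\textbf{Bounding $\|\mathbf d\|^2$.} Each coordinate of $\mathbf d$ equals $\langle\eta^{(s)}_l,(g_{s-1}-g_{t-1})D^{(s)}_u\rangle$ (or its mathsf analogue), a linear form in the independent, mean-zero, bounded variables $\{\overrightarrow{G}_{u,u'}-\Hat{q}:u'\in\mathrm{BAD}_{t-1}\}$ (and $\overrightarrow{\mathsf G}$-analogues); thus $\|\mathbf d\|^2$ is a quadratic form $X^{\top}MX$ with $M\succeq0$ that is block-diagonal over $u$, each block equal to $\sum_{s<t}\ell_s^{\top}\ell_s$ for an explicit $\tfrac{K_s}{12}\times|\mathrm{BAD}_{t-1}|$ matrix $\ell_s$. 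Using the near-orthonormality $\sum_l|\langle\eta^{(s)}_l,w\rangle|^2\le2\|w\|^2$ (a consequence of \eqref{equ-vector-unit} and the eigenvalue bound on $\Phi^{(s)}$), together with $\|\eta^{(s)}_l\|^2\in(\tfrac12,2)$, one obtains $\Hat{\mathbb E}\|\mathbf d\|^2\le\sum_{s<t}\tfrac{CK_s|\mathrm{BAD}_{t-1}|}{\mathfrak a_s-\mathfrak a_s^2}\lesssim K_t^2\vartheta^{-1}|\mathrm{BAD}_{t-1}|$ (using $\mathfrak a_s-\mathfrak a_s^2\gtrsim\vartheta$, Lemma~\ref{lemma-property-vartheta-varsigma}), which already lies inside the budget. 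For the concentration I would estimate $\|M\|_{\mathrm{op}}$ and $\|M\|_{\mathrm{HS}}$ through the operator norms and traces of the blocks (exploiting that $\|\ell_s\|_{\mathrm{op}}^2=\|\ell_s\ell_s^{\top}\|_{\mathrm{op}}$ and the block-diagonal structure in $u$) and apply the Hanson--Wright inequality (Lemma~\ref{lemma-Hanson-Wright}) to get $\|\mathbf d\|^2\le\Hat{\mathbb E}\|\mathbf d\|^2+\lambda$ with $\Hat{\mathbb P}$-probability $1-O(e^{-CnK_t})$ for a $\lambda$ absorbed into the $\sqrt{n/\Hat{q}(1-\Hat{q})}$ term of $\mathtt A$ after multiplication by the $K_t^{O(1)}\vartheta^{-O(1)}$ factors above; the appearance of precisely this term is parallel to Lemma~\ref{lemma-bound-cardinality-BIAS} and to the definition of $\mathcal T_{-1}$. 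Finally, summing the per-realization failure probability over the at most $2^{O(nK_t)}$ admissible realizations of $\{\Gamma^{(r)}_k,\Pi^{(r)}_k:r\le t\}$ and $\mathrm{BAD}_{t-1}$ yields $\mathbb P(|\mathrm{PRB}_t|\le\mathtt A)\ge1-o(e^{-nK_t})$.

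\textbf{Main obstacle.} The delicate point is the concentration of $\|\mathbf d\|^2$: because the degrees are normalised by $1/\sqrt{n\Hat{q}(1-\Hat{q})}$, the coefficients entering $\mathbf d$ are of size $\sim\sqrt{K_s/\vartheta n\Hat{q}(1-\Hat{q})}$, so a naive Hanson--Wright estimate produces a deviation carrying an extra factor $\Hat{q}^{-1}$ that does \emph{not} fit the $\vartheta^{-2}\Delta_t^{-2}(|\mathrm{BAD}_{t-1}|+\sqrt{n/\Hat{q}(1-\Hat{q})})$ budget once $\Hat{q}=n^{-\alpha+o(1)}$ is small. Resolving this requires using the contracted form $X^{\top}L^{\top}WLX$ (with $W=\mathbf Q_{t-1}\mathbf H_t^{*}\mathbf H_t\mathbf Q_{t-1}$ and $\mathbf d=LX$) and exploiting finer structure of $\mathbf H_t$---namely the exact orthogonality $\eta^{(t)}_k\mathrm M^{(t,s)}_{\Gamma}=\eta^{(t)}_k\mathrm M^{(t,s)}_{\Pi}=0$, which forces the $G$--$G$ diagonal blocks of $\mathbf H_t$ to be $O(|\mathrm{BAD}_{t-1}|/n)$ and its $G$--$G$ off-diagonal and $G$--$\mathsf G$ off-"diagonal" entries to be $O(K_t^2/n\sqrt{\vartheta})$---so that the $\Hat{q}$-factors cancel in the trace and one is left with a genuine $\sqrt{n/\Hat{q}(1-\Hat{q})}$-size deviation that simultaneously beats the exponential union bound over set-realizations. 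Getting all these norm bounds quantitatively compatible with the stated exponent in $\mathtt A$ is the part that demands the most care.
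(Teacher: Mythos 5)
Your overall framework (freeze an admissible realization of $\{\Gamma^{(r)}_k,\Pi^{(r)}_k\}$ and $\mathrm{BAD}_{t-1}$, prove a $\Hat{\mathbb P}$-tail bound, union bound over the $2^{O(nK_t)}$ realizations) is the same as the paper's, and your deterministic reduction $|\mathrm{PRB}_t|\le\Delta_t^{-2}\sum_{k,v}|\mathsf q_{k,v}|^2$ together with the mean estimate $\Hat{\mathbb E}\|\mathbf d\|^2\lesssim K_t^2\vartheta^{-1}|\mathrm{BAD}_{t-1}|$ is correct. The gap is the concentration step, and it is not a deferrable technicality: at the failure-probability scale $e^{-\Omega(nK_t)}$ that the union bound over realizations forces, Hanson--Wright applied to $\|\mathbf d\|^2$ (or to the contracted form $X^{\top}L^{\top}WLX$) cannot produce the stated $\mathtt A$. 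The sub-Gaussian norm of $\overrightarrow{G}_{u,w}-\Hat{q}$ is of order $(\log(1/\Hat{q}))^{-1/2}$, not $\sqrt{\Hat{q}}$, so the operator-norm term in \eqref{equ_Hanson_Wright_tail} forces a deviation $\lambda\gtrsim nK_t\cdot\|L^{\top}WL\|_{\mathrm{op}}\gtrsim nK_t\cdot K_t^{8}|\mathrm{BAD}_{t-1}|/(\vartheta n\Hat{q})=K_t^{9}|\mathrm{BAD}_{t-1}|/(\vartheta\Hat{q})$. Since $\mathcal T_{t-1}$ permits $|\mathrm{BAD}_{t-1}|$ as large as $n^{o(1)}\sqrt{n/\Hat{q}(1-\Hat{q})}$, this exceeds the budget $\Delta_t^2\mathtt A=K_t^{20}\vartheta^{-2}(|\mathrm{BAD}_{t-1}|+\sqrt{n/\Hat{q}(1-\Hat{q})})$ by a factor of order $\Hat{q}^{-1}=n^{\alpha-o(1)}$ whenever $\alpha>0$. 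You flag exactly this obstacle, but the remedy you gesture at (finer structure of $\mathbf H_t$, cancellations ``in the trace'') only improves the Hilbert--Schmidt part of the Hanson--Wright bound; at probability scale $e^{-nK_t}$ the binding constraint is the operator-norm part, which those cancellations do not touch. So the proof as proposed does not close, unless you replace Lemma~\ref{lemma-Hanson-Wright} by a variance-sensitive (Bernstein-type) deviation inequality for quadratic forms in bounded variables, which you have not supplied.

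The paper sidesteps quadratic forms entirely. It pairs the defining inequality \eqref{equ-def-set-PRB} against the indicator vector $\chi_{t,k}$ of $\mathrm{PRB}_{t,k}$, so that for each \emph{fixed} $\chi_{t,k}$ the quantity $\begin{pmatrix}[gY]_{t-1}-g_{t-1}[Y]_{t-1} & 0\end{pmatrix}\mathbf Q_{t-1}\mathbf H_t^{*}\chi_{t,k}^{*}$ is a \emph{linear} form in the variables $\overrightarrow{G}_{u,w}-\Hat{q}$ with $u$ or $w\in\mathrm{BAD}_{t-1}$. Its variance is controlled through the covariance matrix $\mathbf R_{t-1}$ of the recentring difference (so the true variance $\Hat{q}(1-\Hat{q})$ enters, giving $K_t^{10}|\mathrm{BAD}_{t-1}|\|\chi_{t,k}\|^2/n$), while each coefficient is bounded by $K_t^{13}\vartheta^{-2}/\sqrt{n\Hat{q}(1-\Hat{q})}$ via the $\ell^\infty$-norm bounds of Lemmas~\ref{lemma-1-norm-Q} and~\ref{lemma-op-norm-H}. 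Bernstein's inequality (Lemma~\ref{lemma-Bernstein-inequality}) at deviation $\tfrac12\Delta_t\mathtt A'$ then yields a tail of order $e^{-\Omega(nK_t^{7})}$, small enough to absorb an extra $2^{K_tn}$ union bound over the unknown $\chi_{t,k}$ (the price of linearization) on top of the $2^{4K_tn}$ union bound over set realizations; the $\sqrt{n/\Hat{q}(1-\Hat{q})}$ term in $\mathtt A$ is tailored precisely to beat the max-coefficient term. This linearization-plus-union-bound device is the idea missing from your argument.
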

\begin{proof}
Recall \eqref{equ-def-set-PRB}.
For each fixed admissible realization of $\{ \Gamma^{(r)}_k, \Pi^{(r)}_k : 0 \leq r \leq t, 1 \leq k \leq K_r \}$ and a realization of $\mathrm{BAD}_{t-1}$, we have that the matrices $\mathbf{Q}_{t-1}$ and $H_{t,k,v}, \mathsf{H}_{t,k,v}$ (and thus the matrix $\mathbf{H}_t$) are fixed. Define a vector $\chi_{t, k}$ such that $\chi_{t,k}(k,v)= \mathbf{1}_{v \in \mathrm{PRB}_{t,k}}$ and $\chi_{t,k}(l,v)=0$ for $l \neq k$ for each $v \not \in \mathrm{BAD}_{t-1}$. Then, we have
\begin{align}\label{eq-PRB-eta-relation}
    \big| \begin{pmatrix}
        [g Y]_{t-1} - g_{t-1} [Y]_{t-1} & 0
    \end{pmatrix} \mathbf{Q}_{t-1} \mathbf{H}_{t}^{*} \chi^{*}_{t,k} \big| > \tfrac{1}{2} \Delta_{t} \| \chi_{t,k} \|^2  = \tfrac{1}{2} \Delta_{t}|\mathrm{PRB}_t| \,,
\end{align}
or we have a version of \eqref{eq-PRB-eta-relation} with $Y$ replaced by $\mathsf Y$. Without loss of generality, we in what follows assume that \eqref{eq-PRB-eta-relation} holds. 
For $v \neq u$, we have that
\begin{align*}
    & \mathbb{E}[ ([g Y]_{t-1}(k,v) - g_{t-1} [Y]_{t-1}(k,v)) ([g Y]_{t-1}(k,u) - g_{t-1} [Y]_{t-1}(k,u)) ] = 0 \,, \\
    & \mathbb{E}[ ([g Y]_{t-1}(k,v) - g_{t-1} [Y]_{t-1}(k,v)) ([g Y]_{t-1}(l,v) - g_{t-1} [Y]_{t-1}(l,v)) ] \leq K_{t-1}^2 |\mathrm{BAD}_{t-1}|/n  \,.
\end{align*}
So the covariance matrix of $\begin{pmatrix} [g Y]_{t-1} - g_{t-1} [Y]_{t-1} & 0 \end{pmatrix}$ (we denote as $\mathbf{R}_{t-1}$) is a block-diagonal matrix with each block of dimension at most $K_{t-1}$ and of entries bounded by $\frac{K_{t-1}^2|\mathrm{BAD}_{t-1}|}{n}$. As a result, $\|\mathbf{R}_{t-1}\|_{\mathrm{op}} \leq K_{t-1}^4 |\mathrm{BAD}_{t-1}|/n$. Thus, regarding $\chi_{t,k}$ as a deterministic vector, we have $\begin{pmatrix} [g Y]_{t-1} - g_{t-1} [Y]_{t-1} & 0 \end{pmatrix} \mathbf{Q}_{t-1} \mathbf{H}_{t}^{*} \chi^{*}_{t,k}$ is a linear combination of $\overrightarrow{G}_{u,w}-\Hat{q}$, with variance given by 
\begin{align*}
    \chi_{t,k} \mathbf{H}_{t} \mathbf{Q}_{t-1} \mathbf{R}_{t-1} \mathbf{Q}_{t-1} \mathbf{H}_{t}^{*} \chi_{t,k}^{*} \leq \| \chi_{t,k} \|^2 \| \mathbf{R}_{t-1} \|_{\mathrm{op}} \| \mathbf{Q}_{t-1} \|^2_{\mathrm{op}} \| \mathbf{H}_{t} \|^2_{\mathrm{op}} \leq \frac{1}{n} K_t^{10} |\mathrm{BAD}_{t-1}| \| \chi_{t,k} \|^2 \,.
\end{align*}
In addition, the coefficient of each $\overrightarrow{G}_{u,w}- \Hat{q}$ can be bounded as follows: for $0 \leq s \leq t-1, 1 \leq k \leq \frac{K_s}{12}, v \not \in \mathrm{BAD}_{t-1}$ denoting $\tau_{u,w}(s,k,v)$ the coefficient of $\overrightarrow{G}_{u,w}-\Hat{q}$ in $[gY]_{t-1} - g_{t-1}[Y]_{t-1}$, we have $\tau_{u,w}(s,k,v) = 0$ for $v \not\in \{u,w\}$ and we have $|\tau_{u,w}(s,k,u)|, |\tau_{u,w}(s,k,w)| = O( \frac{K_s}{\sqrt{ \mathfrak{a}_s n \Hat{q} }} )$. Combined with Lemmas~\ref{lemma-1-norm-Q} and \ref{lemma-op-norm-H}, it yields that  the coefficient of $\overrightarrow{G}_{u,w}$ in the linear combination satisfies that
\begin{align*}
    | \tau_{u,w} \mathbf{Q}_{t-1} \mathbf{H}_{t}^{*} \chi_{t,k}^{*} | \leq \| \tau_{u,w} \|_{1} \| \mathbf{Q}_{t-1} \mathbf{H}_{t}^{*} \chi_{t,k}^{*} \|_{\infty} \leq \| \tau_{u,w} \|_{1} \| \mathbf{Q}_{t-1} \mathbf{H}_{t}^{*} \|_{\infty}  \| \chi_{t,k}^{*} \|_{\infty} \leq \frac{ K_t^{13} }{ \vartheta^{2} \sqrt{ n \Hat{q}(1-\Hat{q}) } }\,.
\end{align*}
Thus, recalling \eqref{eq-PRB-eta-relation}, we can apply Lemma~\ref{lemma-Bernstein-inequality} to each realization of $\chi_{t, k}$ and derive that (noting that on $|\mathrm{PRB}_t| \geq \mathtt A $ we have $\|\chi_{t, k}\|^2 \geq \mathtt A$)
\begin{align*}
    \Hat{\mathbb P}(|\mathrm{PRB}_t| \geq \mathtt A ) 
    \leq 2^{K_t n}\sum_{\mathtt A' \geq \mathtt A} \exp \Big\{ - \frac{ ( 0.5 \Delta_t \mathtt A')^2 }{  K_t^{10} |\mathrm{BAD}_{t-1}| \mathtt A'/n + \Delta_t \mathtt A' K_t^{13} / \vartheta^{2} \sqrt{n\Hat{q}(1-\Hat{q})}  } \Big\} \,,
\end{align*}
which is bounded by $\exp \{ - n K_t^2 \}$.
Here in the above display the factor $2^{K_t n}$ counts the enumeration for possible realizations of $\chi_{t, k}$.
At this point, we apply a union bound over all possible realizations of $\{ \Gamma^{(r)}_k, \Pi^{(r)}_k : 0 \leq r \leq t, 1 \leq k \leq K_r \}$ and  $\mathrm{BAD}_{t-1}$ (whose enumeration is again bounded by $2^{4 K_t n}$), completing the proof of the lemma. 
\end{proof}

In the next few lemmas, we control $|\mathrm{LARGE}_t|$. 

\begin{Lemma} {\label{lemma-bound-cardinality-LARGE-(0)}}
With probability $1-o(e^{-n K_t})$ we have $|\mathrm{LARGE}^{(0)}_t| \leq 8 \vartheta^{-1} K^4_t n^{ 1 - \frac{2} {\log \log \log n}} $.
\end{Lemma}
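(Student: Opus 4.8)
\textbf{Proof plan for Lemma~\ref{lemma-bound-cardinality-LARGE-(0)}.}

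The plan is to follow the same ``fix a realization, compute $\Hat{\mathbb P}$, then union bound'' philosophy established in Step~1. First I would fix an arbitrary admissible realization of $\{\Gamma^{(r)}_k,\Pi^{(r)}_k : 0\leq r\leq t,\ 1\leq k\leq K_r\}$ and of $\mathrm{BAD}_{t-1}$, so that the sets $\xi^{(s)}_k,\zeta^{(s)}_k$ and hence all the coefficients appearing in $\langle\eta^{(s)}_k, g_{t-1}D^{(s)}_v\rangle_{\langle j\rangle}$ become deterministic. Recall from \eqref{equ-def-set-LARGE-(0)} that $v\in\mathrm{LARGE}^{(0)}_{t,s,k}$ requires that one of $|W^{(s)}_v(k)|$, $|\langle\eta^{(s)}_k, g_{t-1}D^{(s)}_v\rangle_{\langle j\rangle}|$, $|\mathsf W^{(s)}_{\pi(v)}(k)|$ or $|\langle\eta^{(s)}_k, g_{t-1}\mathsf D^{(s)}_{\pi(v)}\rangle_{\langle j\rangle}|$ exceeds $n^{1/\log\log\log n}$ for some $0\leq j\leq N$. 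For fixed $s,k$, each of these is a sum of independent bounded variables: $W^{(s)}_v(k)$ is standard Gaussian, while $\langle\eta^{(s)}_k, g_{t-1}D^{(s)}_v\rangle_{\langle j\rangle} = \sum_{u\notin\mathrm{BAD}_{t-2}} c_u (\overrightarrow G_{v,u}-\Hat q)$ after deleting the first $j$ pairs, with coefficients $|c_u| = O(\sqrt{K_s/\mathfrak a_s n\Hat q})$ by Cauchy--Schwarz and $\|\eta^{(s)}_k\|^2\leq 2$; in particular each summand is $O(\sqrt{K_s/\mathfrak a_s n\Hat q})$ and the variance is $O(K_s)$. Applying Lemma~\ref{lemma-Bernstein-inequality} (Bernstein) with $s = n^{1/\log\log\log n}$, which vastly exceeds $\sqrt{\mathrm{Var}}$, I get that for each $s,k$ and each fixed $j$ the probability that $v\in\mathrm{LARGE}^{(0)}_{t,s,k}$ via the $j$-th truncation is at most $\exp\{-c\, n^{1/\log\log\log n}\}$; a union over $0\leq j\leq N\leq n^2$ still leaves this at most $\exp\{-\tfrac12 n^{1/\log\log\log n}\}$, call it $p_0$.

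Next, since the events $\{v\in\mathrm{LARGE}^{(0)}_{t,s,k}\}$ over distinct $v\notin\mathrm{BAD}_{t-1}$ depend on disjoint blocks of edge variables $\{\overrightarrow G_{v,\cdot},\overrightarrow{\mathsf G}_{\pi(v),\cdot}\}$ and the $W$-variables (the edges between two non-$\mathrm{BAD}$ vertices $v_1\neq v_2$ are shared, but one can condition on all $\overrightarrow G_{v_1,v_2}$-type variables and note the residual randomness still gives independence of the relevant indicator events, or more simply observe that $\mathrm{LARGE}^{(0)}_{t,s,k}$ is a sum over $v$ of Bernoulli$(\leq p_0)$ variables that become independent after conditioning on the symmetric part — exactly as in the proof of Lemma~\ref{lemma-bound-cardinality-BIAS}), I apply Bernstein's inequality once more to the count $|\mathrm{LARGE}^{(0)}_{t,s,k}|$ to conclude $\Hat{\mathbb P}(|\mathrm{LARGE}^{(0)}_{t,s,k}| > 4\vartheta^{-1}K_t^2\, n^{1-\frac{2}{\log\log\log n}}) \leq e^{-nK_t^2}$. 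Then I take a union bound over all $0\leq s\leq t$, all $1\leq k\leq K_s/12$ — contributing a factor $O(K_t^2)$, absorbed into the constant — and finally over all admissible realizations of $\{\Gamma^{(r)}_k,\Pi^{(r)}_k\}$ and $\mathrm{BAD}_{t-1}$, whose enumeration is at most $2^{4K_t n}$ (choosing at most $4K_t$ subsets of $V$ or $\mathsf V$). Since $e^{-nK_t^2}\cdot 2^{4K_t n} = o(e^{-nK_t})$, recalling $\mathrm{LARGE}^{(0)}_t = \cup_{s=0}^t\cup_{k=1}^{K_s/12}\mathrm{LARGE}^{(0)}_{t,s,k}$ and $\sum_s K_s/12 \leq K_t$, this yields $|\mathrm{LARGE}^{(0)}_t|\leq 8\vartheta^{-1}K_t^4\, n^{1-\frac{2}{\log\log\log n}}$ with probability $1-o(e^{-nK_t})$, as desired.

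The main obstacle I anticipate is not the Bernstein estimate itself but the bookkeeping around the truncated quantities $\langle\eta^{(s)}_k, g_{t-1}D^{(s)}_v\rangle_{\langle j\rangle}$: one must verify that applying the operation $\mathbf U_{\{u_i,w_i\}}$ (zeroing out the first $j$ edge-pairs) does not increase the per-summand bound or the variance beyond the $O(\sqrt{K_s/\mathfrak a_s n\Hat q})$ and $O(K_s)$ scales — it only removes terms, so both are monotone, but this needs to be stated cleanly — and that the coefficient bound $|c_u| = O(\sqrt{K_s/\mathfrak a_s n\Hat q})$ holds uniformly using $\|\eta^{(s)}_k\|^2\leq 2$ together with $\mathfrak a_s \geq \vartheta$. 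A secondary subtlety is justifying the independence structure needed for the second application of Bernstein; here I would mirror exactly the argument used in Lemma~\ref{lemma-bound-cardinality-BIAS}, treating $\{\Gamma^{(r)}_k,\Pi^{(r)}_k\},\mathrm{BAD}_{t-1}$ as deterministic under $\Hat{\mathbb P}$ so that the indicator events over $v$ are genuinely independent.
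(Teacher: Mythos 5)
Your overall architecture (fix an admissible realization of $\{\Gamma^{(r)}_k,\Pi^{(r)}_k\}$ and $\mathrm{BAD}_{t-1}$, bound the per-vertex $\Hat{\mathbb P}$-probability by Bernstein, use independence over $v$ under $\Hat{\mathbb P}$, then pay a $2^{4K_t n}$ union bound over realizations) is exactly the paper's, but there is a genuine quantitative gap in the middle that breaks the chain. Your per-vertex tail is stated as $p_0\leq \exp\{-\tfrac12 n^{1/\log\log\log n}\}$, i.e.\ linear in the truncation level. Since the summands of $\langle\eta^{(s)}_k,g_{t-1}D^{(s)}_v\rangle_{\langle j\rangle}$ are bounded by $O(\sqrt{K_s/(\vartheta n\Hat q)})$, Bernstein is in its Gaussian regime and actually gives $\exp\{-\Omega(\vartheta n^{2/\log\log\log n}/\mathrm{poly}(K_t))\}$ — quadratic in the level, with the $\vartheta$ coming from the variance bound; this is the estimate the paper records, and it is not an optional strengthening: it is needed for the later arithmetic. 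With your weaker $p_0$, the claimed second-stage bound $\Hat{\mathbb P}(|\mathrm{LARGE}^{(0)}_{t,s,k}|>4\vartheta^{-1}K_t^2 n^{1-2/\log\log\log n})\leq e^{-nK_t^2}$ is simply not attainable: a plain application of Lemma~\ref{lemma-Bernstein-inequality} to the count gives only $\exp\{-\Theta(\vartheta^{-1}K_t^2 n^{1-2/\log\log\log n})\}$ (the deviation regime where the threshold dwarfs the mean), and even the sharpest binomial/Chernoff bound of the form $p_0^{T}$ gives exponent $\asymp n^{1/\log\log\log n}\cdot\vartheta^{-1}K_t^2 n^{1-2/\log\log\log n}=\vartheta^{-1}K_t^2 n^{1-1/\log\log\log n}$. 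Because $\vartheta\geq\Omega(e^{-(\log\log n)^{100}})$ (Lemma~\ref{lemma-property-vartheta-varsigma}) while $n^{1/\log\log\log n}=e^{\log n/\log\log\log n}$, we have $\vartheta^{-1}K_t n^{-1/\log\log\log n}\ll 1$, so this exponent is $\ll K_t n$ and is wiped out by the $2^{4K_t n}$ union bound over realizations that you yourself (correctly) insist on paying.

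The fix is to carry the full strength of the first Bernstein step: with per-vertex exponent $\Omega(\vartheta n^{2/\log\log\log n})$ (up to $\mathrm{poly}(K_t)$), the product of this exponent with the cardinality threshold $\vartheta^{-1}K_t^2 n^{1-2/\log\log\log n}$ is $\Omega(K_t^2 n)$ — the $\vartheta$ and the $n^{2/\log\log\log n}$ cancel by design of the threshold — and only then does the tail $\exp\{-\Omega(K_t^2 n)\}$ survive the $2^{4K_t n}$ enumeration and the union over $s,k,j$. A secondary point: for the count you should invoke the binomial tail in its $p_0^{T}$ form (or state Bernstein's bound in that very-light-mean regime explicitly) rather than quoting the generic $\exp\{-T^2/2(\sigma^2+T/3)\}$ expression, since it is the factor $T\log(1/p_0)$ that produces $nK_t^2$. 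Your remarks about monotonicity of the truncation $\langle\cdot\rangle_{\langle j\rangle}$, the coefficient bound via $\|\eta^{(s)}_k\|^2\leq 2$, and the independence structure under $\Hat{\mathbb P}$ are all fine and match the paper.
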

\begin{proof}
Recall \eqref{equ-def-set-LARGE-(0)}. For each fixed realization of $\{ \Gamma^{(r)}_k, \Pi^{(r)}_k : r \leq t, 1\leq k\leq K_r \}$ and $\mathrm{BAD}_{t-1}$ and for each $j$ we can apply  Lemma~\ref{lemma-Bernstein-inequality} and obtain that 
\begin{align*}
    \Hat{\mathbb P} \Big(| \langle \eta^{(s)}_k, g_{t-1} D^{(s)}_v \rangle_{\langle j \rangle} | > n^{ \frac{1} {\log \log \log n}} \Big) \leq 2 \exp \{ - \tfrac{1}{2} \vartheta n^{ \frac{2} {\log \log \log n}} \}\,.
\end{align*}
Also $\mathbb{P}( |W^{(s)}_k(v)| > n^{\frac{1}{\log \log \log n}} ) \leq \exp \{ - \frac{1}{2} n^{\frac{2}{\log \log \log n }} \}$. Then applying a union bound over $j$, we get that
\begin{align*}
    \Hat{\mathbb P} \big( v \in \mathrm{LARGE}^{(0)}_{t,s,k} \big) \leq 2 n^2 \exp \big\{ - \tfrac{1}{2} \vartheta n^{ \frac{2} {\log \log \log n}} \big\} \leq \exp \big\{ - \tfrac{1}{4} \vartheta n^{ \frac{2} {\log \log \log n}} \big\}
\end{align*}
where in the last inequality we use the bound of $\vartheta=\vartheta_{\chi+1}$ in Lemma~\ref{lemma-property-vartheta-varsigma}.
Under the $\Hat{\mathbb P}$-measure, the events $v\in \mathrm{LARGE}^{(0)}_{t,s,k}$ over $v$ are independent of each other. Thus, another application of Lemma~\ref{lemma-Bernstein-inequality} yields that 
\begin{align*}
    \Hat{\mathbb P}\Big( |\mathrm{LARGE}_{t,s,k}| > \vartheta^{-1} K^2_t n^{ 1 - \frac{1} {\log \log \log n}} \Big) \leq \exp \big\{ - \tfrac{1}{4} \vartheta n^{ \frac{2}{\log \log \log n}} \cdot \vartheta^{-1} K^2_t n^{ 1- \frac{2} {\log \log \log n}} \big\}\,,
\end{align*}
which is bounded by $e^{- K_t^2 n /4 }$. Now, a union bound over all possible realizations (which is at most $2^{4 K_t n}$) and over $s, k$ (as well as for the mathsf version) completes the proof. 
\end{proof}

\begin{Lemma} {\label{lemma-bound-cardinality-LARGE-(a)}}
With probability $1-o(e^{-n K_t})$ we have 
\begin{align}
    |\mathrm{LARGE}^{(1)}_t|& \leq \vartheta^{-1} K_t^4 n^{ -\frac{2}{\log \log \log n}} (| \mathrm{BIAS}_t | + |\mathrm{PRB}_t| + | \mathrm{LARGE}^{(0)}_t |)\,,  \label{eq-LARGE-1}\\
    \mbox{ and } |\mathrm{LARGE}^{(a+1)}_t| &\leq \vartheta^{-1} K_t^4 n^{- \frac{2}{\log \log \log n}} | \mathrm{LARGE}^{(a)}_t |\mbox{ for } a \geq 1\,. \label{eq-LARGE-a}
\end{align}
\end{Lemma}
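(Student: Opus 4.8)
The plan is to mimic the structure of the proof of Lemma~\ref{lemma-bound-cardinality-LARGE-(0)}, but now using the truncated degree vectors $b_{t,a} D^{(s)}_v$ rather than $g_{t-1} D^{(s)}_v$. The key input is that for each fixed admissible realization of $\{ \Gamma^{(r)}_k, \Pi^{(r)}_k : r \leq t, 1 \leq k \leq K_r \}$, of $\mathrm{BAD}_{t-1}$, and (crucially) of the sets $\mathrm{LARGE}^{(a)}_t$ appearing in the definition of $b_{t,a} D^{(s)}_v$ (together with $\mathrm{LARGE}^{(0)}_t \cup \mathrm{BIAS}_t \cup \mathrm{PRB}_t$ in the base case $a=0$), the random variable $\langle \eta^{(s)}_k, b_{t,a} D^{(s)}_v \rangle_{\langle j \rangle}$ is a linear combination of the independent centered Bernoulli variables $\{ \overrightarrow{G}_{v,u} - \Hat q : u \in \mathrm{LARGE}^{(a)}_t \}$ (resp.\ the appropriate set for $a=0$), with each coefficient bounded by $O(\sqrt{K_s/\mathfrak a_s n \Hat q})$ by Cauchy--Schwarz and $\|\eta^{(s)}_k\|^2 \leq 2$, and with variance bounded by $O(\vartheta^{-1} K_s^2 |\mathrm{LARGE}^{(a)}_t| / n)$ (resp.\ with $|\mathrm{LARGE}^{(a)}_t|$ replaced by $|\mathrm{BIAS}_t| + |\mathrm{PRB}_t| + |\mathrm{LARGE}^{(0)}_t|$ for $a=0$).

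First I would fix such a realization and apply Lemma~\ref{lemma-Bernstein-inequality} (Bernstein's inequality) to $\langle \eta^{(s)}_k, b_{t,a} D^{(s)}_v \rangle_{\langle j \rangle}$: with variance $\sigma^2 = O(\vartheta^{-1} K_s^2 M/n)$, where $M$ denotes the cardinality of the relevant source set, and coefficient bound $O(\sqrt{K_s/\vartheta n \Hat q})$, the event that $|\langle \eta^{(s)}_k, b_{t,a} D^{(s)}_v \rangle_{\langle j \rangle}| > n^{1/\log\log\log n}$ has $\Hat{\mathbb P}$-probability at most
\[
2\exp\Big\{ - c\, \frac{ n^{2/\log\log\log n} }{ \vartheta^{-1} K_s^2 M/n + \sqrt{K_s/\vartheta n \Hat q}\cdot n^{1/\log\log\log n} } \Big\}\,.
\]
Since $M \leq n$ and $K_s \leq (\log n)^4$, this is at most $\exp\{ - c' \vartheta n^{2/\log\log\log n} M^{-1} n \}$ up to lower-order factors; taking a union bound over the $j \leq N \leq n^2$ choices gives that $v \in \mathrm{LARGE}^{(a+1)}_{t,s,k}$ with $\Hat{\mathbb P}$-probability at most $\exp\{ -\tfrac{c'}{2} \vartheta n^{1+2/\log\log\log n} /M \}$ (and similarly for the mathsf version). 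Because the events $\{ v \in \mathrm{LARGE}^{(a+1)}_{t,s,k} \}$ over $v \notin \mathrm{BAD}_{t-1}$ are independent under $\Hat{\mathbb P}$ (they depend on disjoint rows $\{\overrightarrow G_{v,u}\}$), a second application of Lemma~\ref{lemma-Bernstein-inequality} shows that $|\mathrm{LARGE}^{(a+1)}_{t,s,k}|$ exceeds $\vartheta^{-1} K_t^4 n^{-2/\log\log\log n} M$ divided by the number of $(s,k)$ pairs, with $\Hat{\mathbb P}$-probability at most $\exp\{ - n K_t^2 \}$ or so (here one checks that the target cardinality times the per-vertex exponent dominates). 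Summing over $0 \leq s \leq t$ and $1 \leq k \leq K_s/12$ gives the stated bound on $|\mathrm{LARGE}^{(a+1)}_t|$ under $\Hat{\mathbb P}$.

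Finally I would remove the conditioning by a union bound: the number of admissible realizations of $\{ \Gamma^{(r)}_k, \Pi^{(r)}_k : r \leq t \}$ and $\mathrm{BAD}_{t-1}$ is at most $2^{4 K_t n}$, and likewise the number of realizations of the auxiliary sets $\mathrm{LARGE}^{(a)}_t, \mathrm{BIAS}_t, \mathrm{PRB}_t$ entering the definition of $b_{t,a} D^{(s)}_v$ is at most $2^{O(K_t n)}$; since the per-realization failure probability $\exp\{-n K_t^2\}$ comfortably beats $2^{O(K_t n)}$, the union bound yields failure probability $o(e^{-nK_t})$, as claimed. The two cases \eqref{eq-LARGE-1} and \eqref{eq-LARGE-a} are handled identically, the only difference being the identity of the source set ($\mathrm{LARGE}^{(0)}_t \cup \mathrm{BIAS}_t \cup \mathrm{PRB}_t$ versus $\mathrm{LARGE}^{(a)}_t$) and hence the value of $M$.

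The main obstacle I anticipate is bookkeeping rather than a genuine mathematical difficulty: one must be careful that, conditionally on the realizations of all the relevant sets, the variables $\langle \eta^{(s)}_k, b_{t,a} D^{(s)}_v \rangle_{\langle j \rangle}$ genuinely are sums over the Bernoulli variables indexed only by the \emph{declared} source set — i.e.\ the $\langle j \rangle$-operation zeros out exactly the first $j$ processed pairs and does not reintroduce dependence — so that the independence-across-$v$ structure needed for the outer Bernstein step is valid, and that the chosen target cardinality is large enough (compared to the per-vertex exponent) for the outer Bernstein bound $\exp\{-nK_t^2\}$ to hold while still being small enough to be useful downstream on the event $\mathcal T_t$. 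One should also double-check that $\vartheta = \vartheta_{\chi+1} = \Omega(e^{-(\log\log n)^{100}})$ (Lemma~\ref{lemma-property-vartheta-varsigma}) is large enough that $\vartheta \cdot n^{2/\log\log\log n}$ still dominates $(\log n)^{O(1)}$ and the union-bound entropy $K_t n$, which it is since $n^{2/\log\log\log n}$ grows faster than any power of $\log n$.
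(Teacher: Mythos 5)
Your proposal is correct and follows essentially the same route as the paper: fix a realization of $\{\Gamma^{(r)}_k,\Pi^{(r)}_k\}$, $\mathrm{BAD}_{t-1}$ and the relevant source sets, apply Bernstein under the $\Hat{\mathbb P}$-measure (with a union over $j\leq N\leq n^2$) to bound the per-vertex probability in terms of the source-set cardinality, use independence across $v$ for a second Bernstein bound on the cardinality, and finish with a union bound over the at most $2^{O(K_t n)}$ realizations. The coefficient and variance estimates you give match the paper's, so no gap.
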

\begin{proof}
We will prove  \eqref{eq-LARGE-1} and the proof for \eqref{eq-LARGE-a} is similar. Recall \eqref{equ-def-set-LARGE}. For each fixed realization $\{ \Gamma^{(r)}_k, \Pi^{(r)}_k : r \leq t, 1\leq k\leq K_r \}$, $\mathrm{BAD}_{t-1}, \mathrm{LARGE}^{(0)}_t, \mathrm{BIAS}_t$, we apply Lemma~\ref{lemma-Bernstein-inequality} and obtain that
\begin{align*}
    \Hat{\mathbb P}(v \in \mathrm{LARGE}^{(1)}_{t,s,k}) \leq n^2 \exp \Big\{ - \frac{ \vartheta n^{ \frac{2}{\log \log \log n} } }{ (| \mathrm{BIAS}_t | + |\mathrm{PRB}_t| + | \mathrm{LARGE}^{(0)}_t |)/n } \Big\} \,.
\end{align*}
Since under the $\Hat{\mathbb P}$-measure we have  independence among $\{ v \in \mathrm{LARGE}^{(1)}_t \}$ for different $v$, we can then apply Lemma~\ref{lemma-Bernstein-inequality} again and get that 
\begin{align*}
   & \Hat{\mathbb P}(\mbox{the complement of } \eqref{eq-LARGE-1}) \\
   \leq \ & \exp \Big\{ - \frac{ \vartheta n^{ \frac{2}{\log \log \log n} } \cdot \vartheta^{-1} K_t^2 n^{ -\frac{2}{\log \log \log n}} (| \mathrm{BIAS}_t | + |\mathrm{PRB}_t| + | \mathrm{LARGE}^{(0)}_t |)  }{ (| \mathrm{BIAS}_t | + |\mathrm{PRB}_t| + | \mathrm{LARGE}^{(0)}_t |)/n }  \Big\} \leq  e^{-K_t^2 n }\,.
\end{align*}
Then a union bound over all possible realizations (whose enumeration is bounded by $2^{4 K_t n}$) completes the proof.
\end{proof}

We may assume that all the typical events as described in Lemmas~\ref{lemma-bound-cardinality-BIAS}, \ref{lemma-bound-cardinality-set-PRB}, \ref{lemma-bound-cardinality-LARGE-(0)} and \ref{lemma-bound-cardinality-LARGE-(a)} hold (note that this occurs with probability $1- K_t^2e^{-n K_t}$). Then, we see that $\mathrm{LARGE}_t^{(\log n)} = \emptyset$ (by \eqref{eq-LARGE-a}). In addition, we have that
\begin{align}\label{eq-bound-cardinality-LARGE}
    & |\mathrm{LARGE}_t| \leq |\mathrm{LARGE}^{(0)}_t| + \sum_{a=1}^{\log n} |\mathrm{LARGE}_{t}^{(a)}| \nonumber \\
    \leq \ & 8 \vartheta^{-1} K_t^4 \Big( n^{ 1- \frac{2}{\log \log \log n} } + ( |\mathrm{BIAS}_t| + |\mathrm{PRB}_t| + n^{1-\frac{2}{\log \log \log n}} ) \sum_{a=1}^{\log n} (\vartheta^{-1}K_t^4)^a n^{- \frac{a}{\log \log \log n} } \Big) \nonumber \\
    \leq \ & 20 \vartheta^{-2} K_t^8 n^{1-\frac{2}{\log \log \log n}} \,.
\end{align}
This (together with events in Lemmas~\ref{lemma-bound-cardinality-BIAS} and \ref{lemma-bound-cardinality-set-PRB}) implies that
\begin{align*}
    |\mathrm{BAD}_t| \leq 20 \vartheta^{-3} K_t^{30} e^{20 (\log \log n)^{10}} \big( |\mathrm{BAD}_{t-1}| + n^{1-\frac{2}{\log \log \log n}} \big) \,.
\end{align*}
Combined with the induction hypothesis $\mathcal{T}_{t-1}$, this yields that
\begin{align}
    \mathbb{P}( \mathcal{T}_t^{c}; \mathcal{T}_{t-1} ) \leq K_t^2 \exp\{ - n K_t \} \,.
    \label{equ-prob-mathcal-T-happens}
\end{align}

\subsubsection{Step 2: $\mathcal{B}_{t}$}

Before controlling $\mathcal B_t$, we prove a couple of lemmas as preparations. 
\begin{Lemma} {\label{lemma-HS-norm-product}}
For any two matrices $\mathrm{A,B}$, we have $\| \mathrm{AB} \|_{\mathrm{HS}}, \| \mathrm{BA} \|_{\mathrm{HS}} \leq \| \mathrm{A} \|_{\mathrm{op}} \| \mathrm{B} \|_{\mathrm{HS}}$.
\end{Lemma}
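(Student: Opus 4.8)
\textbf{Proof proposal for Lemma~\ref{lemma-HS-norm-product}.}

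The plan is to reduce the statement to the elementary fact that the Hilbert--Schmidt norm is column-wise (and row-wise) sub-multiplicative under multiplication by a matrix with a controlled operator norm. First I would recall the standard characterization $\| \mathrm{M} \|_{\mathrm{HS}}^2 = \sum_i \| \mathrm{M} e_i^* \|_2^2$, where $e_i$ ranges over the standard basis vectors and $\mathrm{M} e_i^*$ denotes the $i$-th column of $\mathrm{M}$; similarly $\| \mathrm{M} \|_{\mathrm{HS}}^2 = \sum_j \| e_j \mathrm{M} \|_2^2$ in terms of rows. This is just the definition of $\| \cdot \|_{\mathrm{HS}}$ reorganized, and it is the only structural input needed.

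For the bound $\| \mathrm{AB} \|_{\mathrm{HS}} \leq \| \mathrm{A} \|_{\mathrm{op}} \| \mathrm{B} \|_{\mathrm{HS}}$, I would write the $i$-th column of $\mathrm{AB}$ as $\mathrm{A}(\mathrm{B} e_i^*)$, so that $\| \mathrm{A} \mathrm{B} e_i^* \|_2 \leq \| \mathrm{A} \|_{\mathrm{op}} \| \mathrm{B} e_i^* \|_2$ by the definition of the operator norm $\| \mathrm{A} \|_{\mathrm{op}} = \| \mathrm{A} \|_2$ as introduced in the Notations subsection. Squaring and summing over $i$ gives
\[
\| \mathrm{AB} \|_{\mathrm{HS}}^2 = \sum_i \| \mathrm{A} \mathrm{B} e_i^* \|_2^2 \leq \| \mathrm{A} \|_{\mathrm{op}}^2 \sum_i \| \mathrm{B} e_i^* \|_2^2 = \| \mathrm{A} \|_{\mathrm{op}}^2 \| \mathrm{B} \|_{\mathrm{HS}}^2 \,,
\]
and taking square roots yields the claim. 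For $\| \mathrm{BA} \|_{\mathrm{HS}} \leq \| \mathrm{A} \|_{\mathrm{op}} \| \mathrm{B} \|_{\mathrm{HS}}$ I would argue symmetrically on rows: the $j$-th row of $\mathrm{BA}$ is $(e_j \mathrm{B}) \mathrm{A}$, so $\| e_j \mathrm{B} \mathrm{A} \|_2 \leq \| \mathrm{A}^* \|_{\mathrm{op}} \| e_j \mathrm{B} \|_2 = \| \mathrm{A} \|_{\mathrm{op}} \| e_j \mathrm{B} \|_2$ (using $\| \mathrm{A}^* \|_{\mathrm{op}} = \| \mathrm{A} \|_{\mathrm{op}}$), and summing the squares over $j$ finishes the proof. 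Alternatively one could deduce the second inequality from the first via $\| \mathrm{BA} \|_{\mathrm{HS}} = \| (\mathrm{BA})^* \|_{\mathrm{HS}} = \| \mathrm{A}^* \mathrm{B}^* \|_{\mathrm{HS}} \leq \| \mathrm{A}^* \|_{\mathrm{op}} \| \mathrm{B}^* \|_{\mathrm{HS}} = \| \mathrm{A} \|_{\mathrm{op}} \| \mathrm{B} \|_{\mathrm{HS}}$, which may be the cleanest route since it avoids repeating the argument.

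There is no real obstacle here; this is a routine functional-analytic inequality and the only thing to be slightly careful about is keeping the row/column conventions consistent with the paper's convention that vectors act on the left (so that $\| \mathrm{A} \|_s$ is defined via $\| \mathrm{A} x^* \|_s$ for column vectors $x^*$, i.e. $\mathrm{A}$ acts on columns). Once that bookkeeping is fixed, both inequalities drop out immediately, and the transpose trick handles any asymmetry in the argument for free.
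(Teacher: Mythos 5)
Your proof is correct, but it takes a different route from the paper's. The paper argues via traces and positive semi-definiteness: since $\| \mathrm{A}^* \mathrm{A} \|_{\mathrm{op}} = \| \mathrm{A} \|_{\mathrm{op}}^2$, the matrix $\| \mathrm{A} \|_{\mathrm{op}}^2 \mathrm{I} - \mathrm{A}^* \mathrm{A}$ is PSD, so $\| \mathrm{AB} \|_{\mathrm{HS}}^2 = \mathrm{tr}(\mathrm{B}^* \mathrm{A}^* \mathrm{A} \mathrm{B}) = \| \mathrm{A} \|_{\mathrm{op}}^2 \mathrm{tr}(\mathrm{B}^* \mathrm{B}) - \mathrm{tr}\big(\mathrm{B}^* (\| \mathrm{A} \|_{\mathrm{op}}^2 \mathrm{I} - \mathrm{A}^* \mathrm{A}) \mathrm{B}\big) \leq \| \mathrm{A} \|_{\mathrm{op}}^2 \| \mathrm{B} \|_{\mathrm{HS}}^2$, and the $\mathrm{BA}$ case is dispatched with a one-line ``similarly.'' You instead decompose $\| \cdot \|_{\mathrm{HS}}^2$ as a sum of squared column norms, bound each column of $\mathrm{AB}$ via the operator norm, and handle $\mathrm{BA}$ either row-wise or by the transpose identity $\| \mathrm{BA} \|_{\mathrm{HS}} = \| \mathrm{A}^* \mathrm{B}^* \|_{\mathrm{HS}}$. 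Both arguments are complete and rest on the same underlying fact (the operator norm controls the image of each vector); yours is slightly more elementary in that it avoids trace manipulations and the PSD observation, and your transpose trick gives the second inequality genuinely for free rather than by repeating the argument, while the paper's trace formulation is the one that meshes most directly with how the lemma is later applied (e.g.\ in the determinant/trace computations of Lemma~\ref{lemma-Gaussian-comparison}). Either proof would serve.
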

\begin{proof}
Since $\| \mathrm{A^{*}A} \|_{\mathrm{op}} = \| \mathrm{A} \|_{\mathrm{op}}^2$, we have $\| \mathrm{A} \|_{\mathrm{op}}^2 \mathrm{I} - \mathrm{A^{*}A}$ is semi-positive definite. Thus,
\begin{align*}
    & \| \mathrm{AB} \|_{\mathrm{HS}}^2 = \mathrm{tr} (\mathrm{ B^{*}A^{*} AB }) = \| \mathrm{A} \|_{\mathrm{op}}^2 \mathrm{tr} (\mathrm{ B^{*} B }) - \mathrm{tr} (\mathrm{B^{*} (\| \mathrm{A} \|_{\mathrm{op}}^2 \mathrm{I} - \mathrm{A^{*}A}) B }) \\
    \leq \ & \| \mathrm{A} \|_{\mathrm{op}}^2 \mathrm{tr} (\mathrm{ B^{*}B }) = \| \mathrm{A} \|_{\mathrm{op}}^2 \| \mathrm{B} \|_{\mathrm{HS}}^2 \,,
\end{align*}
which yields $\| \mathrm{AB} \|_{\mathrm{HS}} \leq \| \mathrm{A} \|_{\mathrm{op}} \| \mathrm{B} \|_{\mathrm{HS}}$. Similarly we can show $\| \mathrm{BA} \|_{\mathrm{HS}} \leq \| \mathrm{A} \|_{\mathrm{op}} \| \mathrm{B} \|_{\mathrm{HS}}$.
\end{proof}

\begin{Lemma}
\label{lemma-Gaussian-comparison}
    For any $m\geq 1$, let $\mu\in \mathbb R^m$ and let $\Sigma_X, \Sigma_Y$ be $m\!*\!m$ positive definite matrices. Suppose that $X \sim \mathcal{N}(0,\Sigma_{X})$ and $Y \sim \mathcal{N}(\mu,\Sigma_{Y})$. Then for all $u\in \mathbb R^m$
    \begin{align*}
        \frac{  p_Y (u)  }{  p_X (u)  } \leq \exp \Big\{ & \| \Sigma_{X} \|_{\mathrm{op}}^2 \| \Sigma_{Y}^{-1} \|_{\mathrm{op}}^2 \| \Sigma_{Y} \|_{\mathrm{op}}^2 \| \Sigma_{Y}^{-1} - \Sigma_{X}^{-1} \|_{\mathrm{HS}}^2 + ( \| \Sigma_{X}^{-1} \|_{\mathrm{op}} + \| \Sigma_{Y}^{-1} \|_{\mathrm{op}} ) \| \mu \|^2 \\
        &+ \langle \mu, u \rangle_{\Sigma_{Y}^{-1}}  + \frac{1}{2} \| u \|^2_{ (\Sigma_{X}^{-1} - \Sigma_{Y}^{-1}) } - \frac{1}{2} \mathbb{E} \big[ \| Y \|^2_{ (\Sigma_{X}^{-1} - \Sigma_{Y}^{-1}) } \big] \Big\} \,.
    \end{align*}
\end{Lemma}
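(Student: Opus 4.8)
The plan is to compute the logarithm of the density ratio exactly and then absorb the pieces that are not already displayed verbatim on the right-hand side. Writing
\[
p_X(u) = (2\pi)^{-m/2}(\det\Sigma_X)^{-1/2} e^{-\frac12\|u\|^2_{\Sigma_X^{-1}}},\qquad p_Y(u) = (2\pi)^{-m/2}(\det\Sigma_Y)^{-1/2} e^{-\frac12\|u-\mu\|^2_{\Sigma_Y^{-1}}},
\]
and expanding $\|u-\mu\|^2_{\Sigma_Y^{-1}} = \|u\|^2_{\Sigma_Y^{-1}} - 2\langle\mu,u\rangle_{\Sigma_Y^{-1}} + \|\mu\|^2_{\Sigma_Y^{-1}}$, one obtains
\[
\log\frac{p_Y(u)}{p_X(u)} = \tfrac12\log\det(\Sigma_X\Sigma_Y^{-1}) + \tfrac12\|u\|^2_{(\Sigma_X^{-1}-\Sigma_Y^{-1})} + \langle\mu,u\rangle_{\Sigma_Y^{-1}} - \tfrac12\|\mu\|^2_{\Sigma_Y^{-1}}.
\]
Since the linear term $\langle\mu,u\rangle_{\Sigma_Y^{-1}}$ and the quadratic term $\tfrac12\|u\|^2_{(\Sigma_X^{-1}-\Sigma_Y^{-1})}$ appear unchanged in the target bound, it suffices to show
\[
\tfrac12\log\det(\Sigma_X\Sigma_Y^{-1}) - \tfrac12\|\mu\|^2_{\Sigma_Y^{-1}} \;\le\; \mathtt{HS} + (\|\Sigma_X^{-1}\|_{\mathrm{op}}+\|\Sigma_Y^{-1}\|_{\mathrm{op}})\|\mu\|^2 - \tfrac12\,\mathbb{E}\big[\|Y\|^2_{(\Sigma_X^{-1}-\Sigma_Y^{-1})}\big],
\]
where $\mathtt{HS}$ denotes the Hilbert--Schmidt term in the statement.

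Next I would compute, for $A := \Sigma_X^{-1}-\Sigma_Y^{-1}$ and $Y\sim\mathcal N(\mu,\Sigma_Y)$, the identity $\mathbb{E}[\|Y\|^2_A] = \mathrm{tr}(A\Sigma_Y) + \|\mu\|^2_A = \mathrm{tr}(\Sigma_X^{-1}\Sigma_Y) - m + \|\mu\|^2_{\Sigma_X^{-1}} - \|\mu\|^2_{\Sigma_Y^{-1}}$. Substituting this, the $\mu$-dependent contribution on the left collapses to $\tfrac12\|\mu\|^2_{\Sigma_Y^{-1}} - \tfrac12\|\mu\|^2_{\Sigma_X^{-1}}$, which is bounded in absolute value by $(\|\Sigma_X^{-1}\|_{\mathrm{op}}+\|\Sigma_Y^{-1}\|_{\mathrm{op}})\|\mu\|^2$ and is therefore absorbed by the corresponding term on the right. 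Hence the whole claim reduces to the $\mu$-free spectral inequality
\[
\tfrac12\big(\mathrm{tr}(\Sigma_X^{-1}\Sigma_Y) - m - \log\det(\Sigma_X^{-1}\Sigma_Y)\big) \;\le\; \|\Sigma_X\|_{\mathrm{op}}^2\|\Sigma_Y^{-1}\|_{\mathrm{op}}^2\|\Sigma_Y\|_{\mathrm{op}}^2\,\|\Sigma_Y^{-1}-\Sigma_X^{-1}\|_{\mathrm{HS}}^2 .
\]
Letting $\lambda_1,\dots,\lambda_m>0$ be the eigenvalues of $\Sigma_X^{-1}\Sigma_Y$ (which is similar to the symmetric positive definite matrix $\mathrm{I}+C$ with $C := \Sigma_Y^{1/2}(\Sigma_X^{-1}-\Sigma_Y^{-1})\Sigma_Y^{1/2}$), the left side equals $\tfrac12\sum_i \psi(\lambda_i)$ with $\psi(t):=t-1-\log t\ge 0$.

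For the last step I would invoke the elementary one-variable bound $\psi(t)\le (t-1)^2/\min(t,1)^2$ (checked by a short calculus argument separately on $t\ge 1$ and on $0<t<1$), combined with (i) $\min_i\lambda_i = \lambda_{\min}\big(\Sigma_Y^{1/2}\Sigma_X^{-1}\Sigma_Y^{1/2}\big) \ge \lambda_{\min}(\Sigma_X^{-1})\,\lambda_{\min}(\Sigma_Y) = \big(\|\Sigma_X\|_{\mathrm{op}}\|\Sigma_Y^{-1}\|_{\mathrm{op}}\big)^{-1}$ (from $\Sigma_Y^{1/2}\Sigma_X^{-1}\Sigma_Y^{1/2}\succeq \lambda_{\min}(\Sigma_X^{-1})\Sigma_Y$), and (ii) $\sum_i(\lambda_i-1)^2 = \|C\|_{\mathrm{HS}}^2 \le \|\Sigma_Y^{1/2}\|_{\mathrm{op}}^4\,\|\Sigma_X^{-1}-\Sigma_Y^{-1}\|_{\mathrm{HS}}^2 = \|\Sigma_Y\|_{\mathrm{op}}^2\,\|\Sigma_X^{-1}-\Sigma_Y^{-1}\|_{\mathrm{HS}}^2$, where the penultimate inequality is Lemma~\ref{lemma-HS-norm-product} applied twice (once peeling off $\Sigma_Y^{1/2}$ on each side). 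Putting (i) and (ii) together yields $\tfrac12\sum_i\psi(\lambda_i)\le \tfrac12\|\Sigma_X\|_{\mathrm{op}}^2\|\Sigma_Y^{-1}\|_{\mathrm{op}}^2\|\Sigma_Y\|_{\mathrm{op}}^2\|\Sigma_X^{-1}-\Sigma_Y^{-1}\|_{\mathrm{HS}}^2$, comfortably within the claimed bound. The one point that needs genuine care is exactly this reduction together with the choice of scalar inequality: since $\psi$ only behaves quadratically near $t=1$, one must use the $\min(t,1)^{-2}$ factor (equivalently, control $\lambda_{\min}$ from below) to get a clean quadratic bound, and then keep track of which operator norms the Hilbert--Schmidt estimate in Lemma~\ref{lemma-HS-norm-product} actually produces; this bookkeeping, rather than any conceptual difficulty, is the main obstacle.
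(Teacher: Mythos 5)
Your proposal is correct and takes essentially the same route as the paper's proof: the exact log-density ratio, the Gaussian second-moment identity for $\mathbb{E}\big[\|Y\|^2_{(\Sigma_X^{-1}-\Sigma_Y^{-1})}\big]$, absorption of the $\mu$-dependent terms via operator norms, and the reduction to $\sum_i(\lambda_i-1-\log\lambda_i)$ controlled through a lower bound on the smallest eigenvalue together with Lemma~\ref{lemma-HS-norm-product}. One shared caveat: your final combination, exactly like the paper's scalar fact $x-1-\log x\le c^{-2}(x-1)^2$ for $x\ge c$, implicitly uses $\|\Sigma_X\|_{\mathrm{op}}\|\Sigma_Y^{-1}\|_{\mathrm{op}}\ge 1$ so that the contribution from eigenvalues $\lambda_i\ge 1$ (where your factor $1/\min(\lambda_i,1)^2$ equals $1$) is also dominated by $\|\Sigma_X\|_{\mathrm{op}}^2\|\Sigma_Y^{-1}\|_{\mathrm{op}}^2$ — a point on which your argument is no weaker than the paper's own.
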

\begin{proof}
Recalling the formula for Gaussian density, we have that
\begin{align}\label{eq-density-ratio-expression}
     \frac{  p_{Y} (u)  }{  p_{X} (u)  } 
    &= \sqrt{\frac{\textup{det}(\Sigma_{X})} {\textup{det}(\Sigma_{Y})}} \cdot \exp \Big\{  - \frac{1}{2} \| \mu \|_{\Sigma_{Y}^{-1}}^{2} + \langle u,\mu \rangle_{\Sigma^{-1}_{Y}}  + \frac{1}{2} \| u \|^2_{ (\Sigma_{X}^{-1} - \Sigma_{Y}^{-1}) } \Big\}\,.
\end{align}
Let $\Lambda_{Y}$ be a positive definite matrix such that  $\Sigma_{Y} = \Lambda_{Y} \Lambda_{Y}^{*}$. We have 
\begin{align}\label{eq-Y-Sigma-norm}
\mathbb{E} \big[\| Y \|^2_{(\Sigma_{X}^{-1} - \Sigma_{Y}^{-1})} \big] &= \| \mu \|^2_{(\Sigma_X^{-1} - \Sigma_Y^{-1})} + \mathrm{tr} \big( \Lambda_{Y}^{*} ( \Sigma_{X}^{-1} - \Sigma_{Y}^{-1} ) \Lambda_{Y} \big) \nonumber \\
& = \| \mu \|^2_{(\Sigma_X^{-1} - \Sigma_Y^{-1})} + \mathrm{tr} \big( \Lambda_{Y}^{*}  \Sigma_{X}^{-1} \Lambda_{Y} - \mathrm{I} \big)\,,
\end{align}
where $\mathrm{I}$ is an identity matrix.
We next control the determinants of $\Sigma_X, \Sigma_Y$. Let $\varrho_{1}, \ldots, \varrho_{m} \geq 0$ be the eigenvalues of $ \Lambda_{Y}^{*} \Sigma_{X}^{-1} \Lambda_{Y} $. Then $\sum_{k=1}^{m} (\varrho_{k}-1) = \mathrm{tr}( \Lambda_{Y}^{*} \Sigma_{X}^{-1} \Lambda_{Y} - \mathrm{I} ) $ and $\prod_{k=1}^{m} \varrho_k = \mathrm{det} (\Lambda_{Y}^{*} \Sigma_{X}^{-1} \Lambda_{Y}) = \frac{\mathrm{det}(\Sigma_Y)}{\mathrm{det}(\Sigma_{X})}$. Also, using $ \varrho_k^{-1} \leq \| (\Lambda_{Y}^{*} \Sigma_{X}^{-1} \Lambda_{Y})^{-1} \|_{\mathrm{op}} \leq \| \Sigma_{X} \|_{\mathrm{op}} \| \Sigma_{Y}^{-1} \|_{\mathrm{op}}$ and the fact that  $x-1-\log x \leq c^{-2} (x-1)^2$ for $x \geq c > 0$, we have that
\begin{align*}
    & \log \Big\{ \frac{\textup{det}(\Sigma_{X})} {\textup{det}(\Sigma_{Y})} \Big\} + \mathrm{tr}( \Lambda_{Y}^{*} \Sigma_{X}^{-1} \Lambda_{Y} - \mathrm{I} ) = \sum_{k=1}^{m} ( - \log \varrho_{k} + \varrho_{k}-1 ) \\
    \leq \ & \| \Sigma_{X} \|_{\mathrm{op}}^{2} \| \Sigma_{Y}^{-1} \|_{\mathrm{op}}^{2} \sum_{k=1}^{m} (\varrho_k-1)^2 = \| \Sigma_{X} \|_{\mathrm{op}}^{2} \| \Sigma_{Y}^{-1} \|_{\mathrm{op}}^{2} \| \mathrm{I} - \Lambda_{Y}^{*} \Sigma_{X}^{-1} \Lambda_{Y}  \|_{\mathrm{HS}}^2  \\
    = \ & \| \Sigma_{X} \|_{\mathrm{op}}^{2} \| \Sigma_{Y}^{-1} \|_{\mathrm{op}}^{2} \| \Lambda_{Y}^{*} (\Sigma_{Y}^{-1} - \Sigma_{X}^{-1}) \Lambda_{Y} \|_{\mathrm{HS}}^2 
    \leq \| \Sigma_{X} \|_{\mathrm{op}}^{2} \| \Sigma_{Y}^{-1} \|_{\mathrm{op}}^{2} \| \Lambda_Y \|_{\mathrm{op}}^4 \| \Sigma_{Y}^{-1} - \Sigma_{X}^{-1} \|_{\mathrm{HS}}^2  \\
    = \ & \| \Sigma_{X} \|_{\mathrm{op}}^{2} \| \Sigma_{Y}^{-1} \|_{\mathrm{op}}^{2} \| \Sigma_{Y} \|_{\mathrm{op}}^2 \| \Sigma_{Y}^{-1} - \Sigma_{X}^{-1} \|_{\mathrm{HS}}^2 \,,
\end{align*}
where the last inequality follows from Lemma \ref{lemma-HS-norm-product}. Combined with \eqref{eq-density-ratio-expression} and \eqref{eq-Y-Sigma-norm}, this completes the proof of the lemma.
\end{proof}
We now return to $\mathcal{B}_{t}$. Recall \eqref{eq-Lemma-3.9} as in Lemma \ref{lemma-bound-conditional-unbiased-density}.
It remains to bound the density ratio between $\big\{ g_{t-1} \Tilde{Y}_t , g_{t-1} \Tilde{\mathsf{Y}}_t | \mathcal{F}_{t-1} \big\}$ and $\big\{ \Check{Y}_t, \Check{\mathsf{Y}}_t \big\}$. Recall that in Remark \ref{remark-conditional-Gaussian} we have shown
\begin{align*}
    (g_{t-1} \Tilde{Y}_t(k,v) | \mathcal{F}_{t-1}) \overset{d}{=} & g_{t-1}\Tilde{Y}_{t}^{\diamond}(k,v) - \mathrm{GAUS}(g_{t-1}\Tilde{Y}_{t}(k,v)) + \mathrm{PROJ}(g_{t-1}\Tilde{Y}_{t}(k,v))  \,.
\end{align*}
Let $\Tilde{\Sigma}_t $ be the covariance matrix of the process
\begin{align*}
    \Bigg\{ \begin{split}
    g_{t-1}\Tilde{Y}_{t}^{\diamond}(k,v) - \mathrm{GAUS}(g_{t-1} \Tilde{Y}_{t}(k,v)) \\ g_{t-1} \Tilde{\mathsf{Y}}_{t}^{\diamond}(k,\mathsf{v}) - \mathrm{GAUS}(g_{t-1} \Tilde{\mathsf{Y}}_{t}(k,\mathsf{v})) 
    \end{split} : v, \pi^{-1}(\mathsf{v}) \not \in \mathrm{B}_{t}, 1 \leq k \leq \frac{K_t}{12} \Bigg\} \,,
\end{align*}
let $\Check{\Sigma}_t$ be the covariance matrix of
\begin{equation}\label{eq-def-Check-mathfrak-F-t}
    \Check{\mathfrak F}_t  = \big\{ \Check{Y}_t (k,v) , \Check{\mathsf{Y}}_t (k,\mathsf{v}) : v, \pi^{-1}(\mathsf{v}) \not \in \mathrm{B}_{t}, 1\leq k\leq \tfrac{K_t}{12} \big\} \,,
\end{equation}
and let $\Sigma^{\diamond}_{t}$ be the covariance matrix of 
\begin{align*}
    \big\{ g_{t-1} \Tilde{Y}^{\diamond}_{t}(k,v), g_{t-1} \Tilde{\mathsf{Y}}^{\diamond}_{t}(k,\mathsf{v}) : v, \pi^{-1}(\mathsf{v}) \not \in \mathrm{B}_{t}, 1 \leq k \leq \tfrac{K_t}{12} \big\} \,.
\end{align*}
Also define vectors $L^{(t)},\mathsf{L}^{(t)}$ such that for $1\leq k\leq \frac{K_t}{12}, v\not\in \mathrm{B}_t, \mathsf v\not\in \pi(\mathrm{B}_t)$ 
\begin{align*}
    L^{(t)} (k,v) = \mathrm{PROJ} ( g_{t-1} \Tilde{Y}_{t}(k,v) ) \mbox{ and }\mathsf{L}^{(t)} (k,\mathsf{v}) = \mathrm{PROJ}( g_{t-1} \Tilde{\mathsf{Y}}_{t}(k,\mathsf{v}) ) \,.
\end{align*}
Applying Lemma \ref{lemma-Gaussian-comparison} we have
\begin{align}
    & \frac{ p_{ \{ g_{t-1} \Tilde{Y}_t , g_{t-1} \Tilde{\mathsf{Y}}_t | \mathcal{F}_{t-1} \} } (x_{t}, \mathsf{x}_{t}) }{ p_{ \{ \Check{Y}_t ,   \Check{\mathsf{Y}}_t \} } (x_{t}, \mathsf{x}_{t}) } \nonumber \\
    \leq \ & \exp \Big\{ \| \Check{\Sigma}_t \|_{\mathrm{op}}^2 \| \Tilde{\Sigma}_{t}^{-1} \|_{\mathrm{op}}^2 \| \Tilde{\Sigma}_{t} \|_{\mathrm{op}}^2 \| \Tilde{\Sigma}_{t}^{-1} - \Check{\Sigma}_{t}^{-1} \|_{\mathrm{HS}}^2 + ( \| \Check{\Sigma}_{t}^{-1} \|_{\mathrm{op}} + \| \Tilde{\Sigma}_{t}^{-1} \|_{\mathrm{op}} ) \| ( L^{(t)} , \mathsf{L}^{(t)} ) \|^2 \nonumber \\
    & + ( L^{(t)} , \mathsf{L}^{(t)} ) \Tilde{\Sigma}_{t}^{-1} ( x_{t},\mathsf{x}_{t} )^{*} 
    + \frac{1}{2} \| ( x_{t} , \mathsf{x}_{t} ) \|^2_{ (\Tilde{\Sigma}_{t}^{-1} - \Check{\Sigma}_{t}^{-1}) } - \frac{1}{2} \mathbb{E} \big[ \| ( X_{t} , \mathsf{X}_{t} ) \|^2_{ (\Tilde{\Sigma}_{t}^{-1} - \Check{\Sigma}_{t}^{-1}) } \big]  \Big\} \,,
    \label{equ-density-ratio-II}
\end{align}
where the expectation is taken over $(X_{t}, \mathsf{X}_{t}) \sim ( \Tilde{W}_t + g_{t-1} \Tilde{Y}_t , \Tilde{\mathsf{W}}_t + g_{t-1} \Tilde{\mathsf{Y}}_t | \mathcal{F}_{t-1} )$. We need a few estimates on $\Tilde{\Sigma}_{t},\Check{\Sigma}_{t}$ and $L^{(t)}, \mathsf{L}^{(t)}$.

\begin{Claim} \label{claim-2-norm-L}
On the event $\mathcal H_t$, we have $\| L^{(t)} \|^2, \| \mathsf{L}^{(t)} \|^2 \leq n K_{t}^6 \Delta_{t}^2$.
\end{Claim}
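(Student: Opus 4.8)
\textbf{Proof proposal for Claim~\ref{claim-2-norm-L}.}
The plan is to recognize that $L^{(t)}(k,v) = \mathrm{PROJ}(g_{t-1}\Tilde Y_t(k,v))$ is the Gaussian projection whose squared norm is precisely what the event $\mathcal H_t$ controls, but with two caveats: first, $\mathcal H_t$ is phrased in terms of $\mathrm{PROJ}(g_{t-1}Y_t(k,v))$ (the Bernoulli-variable version, obtained by substituting $\overrightarrow{G}, W$ back in), whereas $L^{(t)}$ is the tilde (Gaussian) version; second, $\mathcal H_t$ sums over $v \in V \setminus \mathrm{BAD}_t$, whereas $L^{(t)}$ is indexed by $v \not\in \mathrm{B}_t$, which is a fixed realization of $\mathrm{BAD}_t$ and hence poses no issue. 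The distributional identity is the key: recalling Remark~\ref{remark-conditional-Gaussian} and the fact (via \eqref{eq-independence-observation} and \eqref{eq-convention-announcement}) that the tilde process $(\overrightarrow Z, \Tilde W)$ is an independent copy with the same covariance structure as $(\overrightarrow G, W)$ conditioned on \eqref{eq-to-be-conditioned-on}, one has that $\mathrm{PROJ}(g_{t-1}\Tilde Y_t(k,v))$ and $\mathrm{PROJ}(g_{t-1}Y_t(k,v))$ are built from $\begin{pmatrix} g_{t-1}[\Tilde Y]_{t-1} & g_{t-1}[\Tilde{\mathsf Y}]_{t-1}\end{pmatrix}\mathbf{Q}_{t-1}\mathbf{H}_t^*$ and $\begin{pmatrix} g_{t-1}[Y]_{t-1} & g_{t-1}[\mathsf Y]_{t-1}\end{pmatrix}\mathbf{Q}_{t-1}\mathbf{H}_t^*$ respectively, where $\mathbf{Q}_{t-1}, \mathbf{H}_t$ are the \emph{same} fixed matrices (measurable with respect to \eqref{eq-to-be-conditioned-on}). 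Hence $\sum_{v\not\in\mathrm B_t}|L^{(t)}(k,v)|^2$ and $\sum_{v\not\in\mathrm B_t}|\mathrm{PROJ}(g_{t-1}Y_t(k,v))|^2$ have the \emph{same distribution} (they are the same quadratic functional applied to identically distributed Gaussian inputs).

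First I would make this distributional identity precise, then invoke $\mathcal H_t$ directly: on $\mathcal H_t$ we have $\sum_{v\in V\setminus\mathrm{BAD}_t}|\mathrm{PROJ}(g_{t-1}Y_t(k,v))|^2 \leq n K_t^6\Delta_t^2$ for all $k$, and since we work on a fixed amenable set-realization $\mathrm B_t$ (so that the index sets coincide), this transfers to $\|L^{(t)}(k,\cdot)\|^2 \leq nK_t^6\Delta_t^2$, and summing/maximizing over the $\leq K_t/12$ values of $k$ keeps us within the stated bound (the claim asks for $\|L^{(t)}\|^2 \leq nK_t^6\Delta_t^2$, so one should be slightly careful about whether $L^{(t)}$ denotes the vector for a single $k$ or the concatenation over all $k$; in either reading, the per-$k$ bound from $\mathcal H_t$ combined with absorbing the $K_t$ factor into $K_t^6$ — recall $K_t \leq (\log n)^4$ and $\Delta_t$ carries enormous slack from \eqref{eq-bound-Delta-t} — gives the result). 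The same argument verbatim handles $\|\mathsf L^{(t)}\|^2$ using the $\mathsf Y$-part of $\mathcal H_t$.

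The only genuinely delicate point — and the step I expect to require the most care — is justifying that $\mathrm{PROJ}(g_{t-1}\Tilde Y_t(k,v))$ and $\mathrm{PROJ}(g_{t-1}Y_t(k,v))$ are indeed equidistributed, because $\mathrm{PROJ}$ is defined as a conditional expectation given $\mathcal F_{t-1}$, and one must check that the conditioning does not break the symmetry between the tilde and non-tilde worlds. The resolution is that, once we condition on \eqref{eq-to-be-conditioned-on} (which is legitimate by \eqref{eq-convention-announcement}), the vector $\begin{pmatrix} g_{t-1}[\Tilde Y]_{t-1} & g_{t-1}[\Tilde{\mathsf Y}]_{t-1}\end{pmatrix}$ is an exactly Gaussian vector whose covariance is the matrix $\mathbf{Q}_{t-1}^{-1}$ (up to the distinction between $\mathbf{P}_{t-1}$ and $\mathbf{Q}_{t-1}$, but for the \emph{tilde} process the projection onto $\mathcal F_{t-1}$ uses exactly $\mathbf{Q}_{t-1}$), and the non-tilde vector $\begin{pmatrix} g_{t-1}[Y]_{t-1} & g_{t-1}[\mathsf Y]_{t-1}\end{pmatrix}$ — while \emph{not} Gaussian — has, on $\mathcal E_t\cap\mathcal T_t$, the \emph{same} first and second moments (this is exactly the content of the computations \eqref{equ-degree-correlation-1}--\eqref{equ-degree-covariance}), and $\mathrm{PROJ}$ is a \emph{linear} functional of this vector. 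Since $\|L^{(t)}(k,\cdot)\|^2$ is a quadratic form in these vectors, its expectation agrees in the two worlds, and the tail bound asserted in $\mathcal H_t$ was itself established (in Step 5 of the induction) for the non-tilde version; alternatively, one can bound $\|L^{(t)}\|^2$ directly without reference to $\mathcal H_t$ by writing $\|L^{(t)}(k,\cdot)\|^2 = (g_{t-1}[\Tilde Y]_{t-1}, g_{t-1}[\Tilde{\mathsf Y}]_{t-1})\mathbf{Q}_{t-1}\mathbf{H}_{t}^* \Pi_k \mathbf{H}_{t}\mathbf{Q}_{t-1}(\cdots)^*$ where $\Pi_k$ projects onto coordinates with the given $k$, then using $\|\mathbf{Q}_{t-1}\|_{\mathrm{op}}\leq 100$ (Lemma~\ref{lemma-op-norm-Q}), $\|\mathbf{H}_t\|_{\mathrm{HS}}^2 \leq nK_t^4\Delta_t^2$ and $\|\mathbf{H}_t\|_{\mathrm{op}}\leq 2K_t^3$ (Lemma~\ref{lemma-op-norm-H}), together with $\mathcal A_{t-1}$ to control $\|(g_{t-1}[Y]_{t-1}, g_{t-1}[\mathsf Y]_{t-1})\|^2 \lesssim nK_t$. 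The cleaner route for the write-up is to simply cite $\mathcal H_t$ and the transfer-of-distribution, but having the direct estimate in reserve guards against any subtlety in matching index sets.
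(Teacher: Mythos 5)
Your core route is the paper's route: the paper disposes of this claim in one line by reading the event $\mathcal H_t$ as asserting that $\sum_{v}\lvert \mathrm{PROJ}(g_{t-1}\Tilde{Y}_t(k,v))\rvert^2 \le nK_t^6\Delta_t^2$ for every $k$, i.e.\ it identifies the quantity controlled by $\mathcal H_t$ with $L^{(t)}$ (and likewise $\mathsf L^{(t)}$), so the bound is immediate from the definition; your remark that a possible extra factor of $K_t$ from concatenating over $k$ is harmless is also in the spirit of how the bound is consumed in Step 2.

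The two devices you add to bridge the tilde/non-tilde reading, however, do not hold up, and you should drop them. (i) The ``transfer of distribution'' step is invalid: even if $\mathrm{PROJ}(g_{t-1}\Tilde{Y}_t)$ and $\mathrm{PROJ}(g_{t-1}Y_t)$ were equidistributed (they are not --- one is a linear functional of the Gaussian data $(\overrightarrow{Z},\Tilde{W})$, the other of the Bernoulli data $(\overrightarrow{G},W)$, and only their first two moments agree on $\mathcal E_t\cap\mathcal T_t$), equality in law cannot convert the pathwise statement ``on $\mathcal H_t$ the non-tilde quantity is at most $nK_t^6\Delta_t^2$'' into the pathwise statement ``on $\mathcal H_t$ the tilde quantity is at most $nK_t^6\Delta_t^2$'': conditionally on the data that determine $\mathcal H_t$, the tilde quantity is a different (indeed independent) random variable, so no event-restricted bound transfers. (ii) The backup direct estimate cannot reach the claimed bound: writing $L^{(t)}$ as $x\,\mathbf{Q}_{t-1}\mathbf{H}_t^{*}$ and using $\|\mathbf{Q}_{t-1}\|_{\mathrm{op}}\le 100$ (Lemma~\ref{lemma-op-norm-Q}), $\|\mathbf{H}_t\|_{\mathrm{op}}\le 2K_t^3$ and $\|x\|^2\lesssim nK_t$ gives only $\|L^{(t)}\|^2\lesssim nK_t^{7}$, with no factor $\Delta_t^2$; the $\Delta_t^2$ lives in $\|\mathbf{H}_t\|_{\mathrm{HS}}^2\le nK_t^4\Delta_t^2$ (Lemma~\ref{lemma-op-norm-H}), and an $\ell_2$ bound on the input vector alone cannot activate the Hilbert--Schmidt norm in the product you sketch --- this is precisely why the paper introduces $\mathcal H_t$ and proves it inductively in Step 5 (via measure tilting and Hanson--Wright) instead of by a norm product. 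Moreover, the $\mathcal A$-events live in the Bernoulli world, while the vector feeding $L^{(t)}$ is the Gaussian $g_{t-1}[\Tilde{Y}]_{t-1}$, so $\mathcal A_{t-1}$ would not apply to it directly in any case. For the write-up, simply cite $\mathcal H_t$ as the paper does, with $\mathrm{PROJ}$ there understood as the projection defining $L^{(t)}$.
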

\begin{proof}
On $\mathcal{H}_{t}$ we know $\sum_{k,v} \big( \mathrm{PROJ} ( g_{t-1}\Tilde{Y}_t(k,v) ) \big)^2 \leq n K_{t}^6 \Delta_{t}^2$. Thus, $\| L^{(t)} \|^2 \leq n K_{t}^6 \Delta_{t}^2$. We can bound $\| \mathsf{L}^{(t)} \|^2$ similarly.
\end{proof}

\begin{Claim} \label{claim-op-norm-W}
We have $\| \Tilde{\Sigma}_{t}^{-1} \|_{\mathrm{op}}, \| (\Sigma^{\diamond}_{t})^{-1} \|_{\mathrm{op}}, \| \Check{\Sigma}_{t}^{-1} \|_{\mathrm{op}} \leq 1$.
\end{Claim}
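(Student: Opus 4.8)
The plan is to read off, in each of the three cases, that the underlying Gaussian process is coordinatewise a sum of an i.i.d.\ standard normal ``smoothing'' variable (from the family $\{\Tilde W^{(t)}_v(k),\Tilde{\mathsf W}^{(t)}_{\mathsf v}(k)\}$ or an independent copy thereof) and a further Gaussian term that does \emph{not} involve that smoothing variable. Consequently each covariance matrix has the form $\mathrm I+\Sigma'$ with $\Sigma'\succeq 0$, so its least eigenvalue is at least $1$ and the inverse has operator norm at most $1$. This is the analogue, in our setting, of the corresponding claim in \cite{DL22+}.

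Concretely, for $\Sigma^\diamond_t$ recall from Remark~\ref{remark-conditional-Gaussian} that $g_{t-1}\Tilde Y^\diamond_t(k,v)=\Tilde W^{(t)}_v(k)+\langle\eta^{(t)}_k,(g_{t-1}\Tilde D^{(t)}_v)^\diamond\rangle$, and similarly for the mathsf coordinates, where $(g_{t-1}\Tilde D^{(t)}_v)^\diamond$ is a linear combination (with coefficients fixed under the conditioning \eqref{eq-to-be-conditioned-on}) of the independent copies $\{\overrightarrow Z^\diamond_{v,u}\}$, hence independent of the i.i.d.\ family $\{\Tilde W^{(t)}_v(k),\Tilde{\mathsf W}^{(t)}_{\mathsf v}(k)\}$. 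Letting $\Sigma_1$ be the covariance of $\{\langle\eta^{(t)}_k,(g_{t-1}\Tilde D^{(t)}_v)^\diamond\rangle,\langle\eta^{(t)}_k,(g_{t-1}\Tilde{\mathsf D}^{(t)}_{\mathsf v})^\diamond\rangle\}$, we get $\Sigma^\diamond_t=\mathrm I+\Sigma_1\succeq \mathrm I$. The same argument applies verbatim to $\Check\Sigma_t$, using $\Check Y_t(k,v)=\Tilde W^{(t)}_v(k)+\langle\eta^{(t)}_k,\Check D^{(t)}_v\rangle$ (and the mathsf version) together with the fact that the process \eqref{eq-def-Check-Gaussian-process} is, by construction, an auxiliary Gaussian process independent of the smoothing variables.

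For $\Tilde\Sigma_t$ the only additional bookkeeping is to verify that the subtracted term $\mathrm{GAUS}(g_{t-1}\Tilde Y_t(k,v))$ does not interact with the level-$t$ smoothing variables. By \eqref{equ-Gaussian-part} it is a deterministic (given \eqref{eq-to-be-conditioned-on}) linear image of $\bigl(g_{t-1}[\Tilde Y]^\diamond_{t-1},\,g_{t-1}[\Tilde{\mathsf Y}]^\diamond_{t-1}\bigr)$, which is built only from $\{\Tilde W^{(s)}_v(k),\Tilde{\mathsf W}^{(s)}_{\mathsf v}(k),\overrightarrow Z^\diamond,\overrightarrow{\mathsf Z}^\diamond:s\le t-1\}$ and is therefore independent of $\{\Tilde W^{(t)}_v(k),\Tilde{\mathsf W}^{(t)}_{\mathsf v}(k)\}$; the term $(g_{t-1}\Tilde D^{(t)}_v)^\diamond$ is likewise independent of those. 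Hence $g_{t-1}\Tilde Y^\diamond_t(k,v)-\mathrm{GAUS}(g_{t-1}\Tilde Y_t(k,v))=\Tilde W^{(t)}_v(k)+(\text{a Gaussian term independent of the }\Tilde W^{(t)}\text{'s})$, so $\Tilde\Sigma_t=\mathrm I+\Sigma_2\succeq\mathrm I$ for a positive semidefinite $\Sigma_2$. In all three cases $\lambda_{\min}\ge 1$ yields the stated bound. I do not expect a genuine obstacle here; the only care needed is tracking which independent copies and $\sigma$-fields enter the definition of $\mathrm{GAUS}$ so as to confirm the independence used in the $\Tilde\Sigma_t$ case.
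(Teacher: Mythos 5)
Your proposal is correct and takes essentially the same approach as the paper's one-line proof, which simply observes that each of the three covariance matrices is the identity plus a positive semidefinite matrix; you have unpacked the "by definition" step by tracing through the independence of the level-$t$ smoothing variables $\Tilde W^{(t)}_v(k),\Tilde{\mathsf W}^{(t)}_{\mathsf v}(k)$ from the remaining Gaussian terms (the $\overrightarrow Z^\diamond$ linear combinations, the $\Check D$ process, and the $\mathrm{GAUS}$ correction built from levels $s\le t-1$), which is exactly the content of the paper's tacit "by definition." No gap.
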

\begin{proof}
By definition, we see that $\Tilde{\Sigma}_t$ is the sum of the identity matrix and a positive definite matrix, and thus we have $\| \Tilde{\Sigma}_{t}^{-1} \|_{\mathrm{op}} \leq 1$. Similar results hold for $\Sigma^{\diamond}_{t}$ and $\Check{\Sigma}_{t}$.
\end{proof}

\begin{Claim} \label{claim-op-HS-norm-W-minus-V}
On  $\mathcal T_{t-1}\cap \mathcal E_t$, we have $\| \Tilde{\Sigma}_{t} - \Check{\Sigma}_{t} \|_{\mathrm{op}} \leq 200K_{t}^{6}$ and $\| \Tilde{\Sigma}_{t} - \Check{\Sigma}_{t} \|_{\mathrm{HS}}^2 \leq n K_{t}^{11} \Delta_{t}^2$.
\end{Claim}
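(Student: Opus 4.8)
The plan is to realize $\Tilde{\Sigma}_t$ as a residual covariance, peel off the Gaussian‑projection part, and then compare the leftover with $\Check{\Sigma}_t$ entrywise. First I would note that, in the notation of Remark~\ref{remark-conditional-Gaussian} shifted by $t+1\mapsto t$, $t\mapsto t-1$, the quantity $\mathrm{GAUS}(g_{t-1}\Tilde{Y}_t)=g_{t-1}[\Tilde{Y}]_{t-1}^{\diamond}\mathbf{Q}_{t-1}\mathbf{H}_t^{*}$ is exactly the $L^2$‑projection of $g_{t-1}\Tilde{Y}_t^{\diamond}$ onto the span of the $\diamond$‑copies of the variables in $\mathfrak{F}_{t-1}$; hence $g_{t-1}\Tilde{Y}_t^{\diamond}=\mathrm{GAUS}(g_{t-1}\Tilde{Y}_t)+\bigl(g_{t-1}\Tilde{Y}_t^{\diamond}-\mathrm{GAUS}(g_{t-1}\Tilde{Y}_t)\bigr)$ is an orthogonal decomposition, so
\[\Tilde{\Sigma}_t=\Sigma_t^{\diamond}-\mathbf{H}_t\mathbf{Q}_{t-1}\mathbf{H}_t^{*},\]
where $\Sigma_t^{\diamond}$ is the unconditional covariance matrix of $(g_{t-1}\Tilde{Y}_t^{\diamond},g_{t-1}\Tilde{\mathsf{Y}}_t^{\diamond})$ on the coordinates with $v,\pi^{-1}(\mathsf{v})\notin\mathrm{B}_t$; here I have used that $\mathrm{Cov}(g_{t-1}[\Tilde{Y}]_{t-1}^{\diamond})=\mathbf{Q}_{t-1}^{-1}$, this being the $\Hat{\mathbb{E}}$‑correlation matrix of $\mathfrak{F}_{t-1}$. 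It therefore suffices to bound $\mathbf{H}_t\mathbf{Q}_{t-1}\mathbf{H}_t^{*}$ and $\Sigma_t^{\diamond}-\Check{\Sigma}_t$ separately, in operator and Hilbert--Schmidt norm.

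For the first term I would invoke the norm bounds already established: $\|\mathbf{Q}_{t-1}\|_{\mathrm{op}}\le 100$ by Lemma~\ref{lemma-op-norm-Q} (applied on $\mathcal{E}_{t-1}\cap\mathcal{T}_{t-1}$, which holds since $\mathcal{E}_t\subseteq\mathcal{E}_{t-1}$ and $\Hat{\rho}<0.1$), together with $\|\mathbf{H}_t\|_{\mathrm{op}}\le K_t^3+4K_t^2\Delta_t$ and $\|\mathbf{H}_t\|_{\mathrm{HS}}^2\le nK_t^4\Delta_t^2$ from Lemma~\ref{lemma-op-norm-H} (with the explicit constants appearing in its proof), on $\mathcal{E}_t\cap\mathcal{T}_{t-1}$. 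Then $\|\mathbf{H}_t\mathbf{Q}_{t-1}\mathbf{H}_t^{*}\|_{\mathrm{op}}\le\|\mathbf{H}_t\|_{\mathrm{op}}^2\|\mathbf{Q}_{t-1}\|_{\mathrm{op}}\le 101K_t^6$, while Lemma~\ref{lemma-HS-norm-product} applied twice gives $\|\mathbf{H}_t\mathbf{Q}_{t-1}\mathbf{H}_t^{*}\|_{\mathrm{HS}}\le\|\mathbf{Q}_{t-1}\|_{\mathrm{op}}\|\mathbf{H}_t\|_{\mathrm{op}}\|\mathbf{H}_t\|_{\mathrm{HS}}$, so its squared Hilbert--Schmidt norm is $O(K_t^{10}n\Delta_t^2)$.

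For the second term I would use that $\{\overrightarrow{Z}_{v,u}^{\diamond}\}$ has independent coordinates whose only non‑trivial correlations occur between the matched pairs $(\overrightarrow{Z}_{v,u}^{\diamond},\overrightarrow{\mathsf{Z}}_{\pi(v),\pi(u)}^{\diamond})$ and $(\overrightarrow{Z}_{v,u}^{\diamond},\overrightarrow{\mathsf{Z}}_{\pi(u),\pi(v)}^{\diamond})$, so that the entries of $\Sigma_t^{\diamond}$ are precisely the $\Hat{\mathbb{E}}$‑correlations computed in \eqref{equ-degree-correlation-1}--\eqref{equ-degree-covariance} — now with the summations over $V\setminus\mathrm{BAD}_{t-1}$, which on $\mathcal{T}_{t-1}$ perturbs the listed values only by $O(|\mathrm{BAD}_{t-1}|/n\vartheta)=o(\Delta_t^2)$. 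On $\mathcal{E}_t$ this gives: the diagonal of $\Sigma_t^{\diamond}$ lies within $O(K_t\Delta_t)$ of $2$; the entries inside a common vertex block $\{(s,k,\tau),(r,l,\tau'):\tau,\tau'\in\{v,\pi(v)\}\}$ lie within $O(K_t\Delta_t)$ of those of $\Check{\Sigma}_t$ (which are $0$ except for the $\Hat{\rho}\,\eta^{(t)}_k\Psi^{(t)}(\eta^{(t)}_k)^{*}$ cross‑entries with $s=r=t$ and $k=l$); the between‑vertex $D$--$D$ and $\mathsf{D}$--$\mathsf{D}$ entries vanish identically; and the between‑vertex $D$--$\mathsf{D}$ entries factor as $c\,f(u)g(v)$ with $c=O(1/n)$ and $\sum_u f(u)^2,\sum_v g(v)^2=O(n)$ (by Items~(i) and~(iv) of Definition~\ref{def-admissible}), hence contribute only $O(K_t^2)$ to $\|\Sigma_t^{\diamond}-\Check{\Sigma}_t\|_{\mathrm{HS}}^2$. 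Summing the vertex‑block and factorized contributions yields $\|\Sigma_t^{\diamond}-\Check{\Sigma}_t\|_{\mathrm{HS}}^2=O(nK_t^4\Delta_t^2)$, and Lemma~\ref{lemma-bound-on-op-norm} with $D=K_t$, $\delta=O(K_t\Delta_t)$, $C=O(K_t)$ gives $\|\Sigma_t^{\diamond}-\Check{\Sigma}_t\|_{\mathrm{op}}=O(K_t)$. Combining the two terms and absorbing constants into the (large) $K_t$ then gives $\|\Tilde{\Sigma}_t-\Check{\Sigma}_t\|_{\mathrm{op}}\le 200K_t^6$ and $\|\Tilde{\Sigma}_t-\Check{\Sigma}_t\|_{\mathrm{HS}}^2\le nK_t^{11}\Delta_t^2$.

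The main obstacle I expect is the bookkeeping in the second term: showing that the between‑vertex $D$--$\mathsf{D}$ cross‑correlations of the Gaussian substitute are negligible \emph{in Hilbert--Schmidt norm} and not merely entrywise. This is what forces the factorization $c\,f(u)g(v)$ together with the admissibility control on $\sum_u f(u)^2$ (via the concentration of $\mathrm{M}_{\Gamma}^{(s,s)}$ around $\Phi^{(s)}$ and the normalization \eqref{equ-vector-unit} of the $\eta^{(s)}_k$); everything else reduces to the operator‑ and Hilbert--Schmidt‑norm bounds already proved for $\mathbf{Q}_{t-1}$ and $\mathbf{H}_t$ and to the linear‑algebra Lemmas~\ref{lemma-HS-norm-product} and~\ref{lemma-bound-on-op-norm}.
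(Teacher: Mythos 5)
Your proposal is correct and takes essentially the same route as the paper: you derive the identity $\Tilde{\Sigma}_t = \Sigma_t^\diamond - \mathbf{H}_t\mathbf{Q}_{t-1}\mathbf{H}_t^*$ from the orthogonality of the Gaussian projection, bound that projection term via the operator and Hilbert--Schmidt norm estimates for $\mathbf{H}_t$ and $\mathbf{Q}_{t-1}$ (Lemmas~\ref{lemma-op-norm-Q}, \ref{lemma-op-norm-H}, \ref{lemma-HS-norm-product}), then compare $\Sigma_t^\diamond$ with $\Check{\Sigma}_t$ block by block using \eqref{equ-degree-correlation-1}--\eqref{equ-degree-covariance}, the rank-one factorization of the between-vertex $D$--$\mathsf{D}$ entries together with admissibility, and Lemma~\ref{lemma-bound-on-op-norm} for the operator norm. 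The only deviations from the paper are cosmetic slack in the $K_t$-exponents (e.g.\ you write the cross-vertex coefficient as $O(1/n)$ where it is really $O(K_t^2/\mathfrak{a}_t n)$ for $s=r=t$, and correspondingly the off-diagonal HS contribution is $O(K_t^4)$ rather than $O(K_t^2)$), but these are absorbed by the generous $K_t^6$ and $K_t^{11}$ in the claim, so the argument is sound.
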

\begin{proof}
By \cite[(3.12),(3.61)]{DL22+} which follow from standard properties for general Gaussian processes, we have
\begin{align*}
    & \mathbb{E} \Big[ ( g_{t-1} \Tilde{Y}_{t}^{\diamond}(k,v) - \mathrm{GAUS}( g_{t-1} \Tilde{Y}_{t}(k,v))) ( g_{t-1} \Tilde{Y}_{t}^{\diamond}(l,u) - \mathrm{GAUS}(g_{t-1} \Tilde{Y}_{t}(l,u))) \Big] \\
    = & \mathbb{E} \Big[ g_{t-1} \Tilde{Y}_{t}^{\diamond}(k,v) g_{t-1} \Tilde{Y}_{t}^{\diamond}(l,u) \Big] - \mathbb{E} \Big[ \mathrm{GAUS}( g_{t-1} \Tilde{Y}_{t}(k,v)) \mathrm{GAUS}(g_{t-1} \Tilde{Y}_{t}(l,u)) \Big] \,.
\end{align*}
Here the coefficient $g_{t-1}$ does not matter since this result follows from the fact that
\begin{align*}
    &\mathrm{Cov}\Big( Y_1 \mid \{ X_1,\ldots,X_n \} ,Y_2 \mid \{ X_1,\ldots,X_n \} \Big)  \\
    = \ & \mathbb{E}[ Y_1 Y_2 ] - \mathbb{E} \big[ \mathbb{E}[Y_1 \mid X_1,\ldots,X_n ] \mathbb{E}[ Y_2 \mid X_1,\ldots,X_n ] \big]
\end{align*}
for general Gaussian process $\{ X_1,\ldots,X_n,Y_1,Y_2 \}$.
Recalling \eqref{equ-Gaussian-part}, we see that the covariance matrix of $\{ \mathrm{GAUS}(g_{t-1} \Tilde{Y}_{t}), \mathrm{GAUS}( g_{t-1} \Tilde{\mathsf{Y}}_{t}) \}$ equals to
\begin{align*}
    \mathbb{E} \Big[ \mathbf{H}_{t} \mathbf{Q}_{t-1}
    \begin{pmatrix}
        g_{t-1} [\Tilde{Y}]_{t-1}^{\diamond} \\
        g_{t-1} [\Tilde{\mathsf{Y}}]_{t-1}^{\diamond}  
    \end{pmatrix}
    \begin{pmatrix}
        g_{t-1} [\Tilde{Y}]_{t-1}^{\diamond} &
        g_{t-1} [\Tilde{\mathsf{Y}}]_{t-1}^{\diamond}  
    \end{pmatrix}
    \mathbf{Q}_{t-1} \mathbf{H}_{t}^{*} \Big] 
    =  \mathbf{H}_{t} \mathbf{Q}_{t-1} \mathbf{H}_{t}^{*} \,.
\end{align*}
Thus, we have $\Tilde{\Sigma}_{t} = \Sigma^{\diamond}_{t} -  \mathbf{H}_{t} \mathbf{Q}_{t-1} (\mathbf{H}_{t})^{*}$. Combined with Lemmas~\ref{lemma-op-norm-Q} and \ref{lemma-op-norm-H}, it yields that $\| \Tilde{\Sigma}_{t} - \Sigma^{\diamond}_{t} \|_{\mathrm{op}} \leq 100K_{t}^6$.  In addition, applying Lemmas~\ref{lemma-op-norm-Q}, \ref{lemma-op-norm-H} and \ref{lemma-HS-norm-product}, we get
\begin{align}\label{eq-Tilde-diamond-HS-norm}
    & \| \Tilde{\Sigma}_{t} - \Sigma^{\diamond}_{t} \|_{\mathrm{HS}}^2 = \| \mathbf{H}_{t} \mathbf{Q}_{t-1} \mathbf{H}_{t}^{*} \|_{\mathrm{HS}}^2 \leq \| \mathbf{Q}_{t-1} \|_{\mathrm{op}}^2  \| \mathbf{H}_{t} \|_{\mathrm{op}}^2 \| \mathbf{H}_{t} \|_{\mathrm{HS}}^2 \leq 10^5 n K_{t}^{10} \Delta_{t}^2 \,.
\end{align}
Furthermore, by \eqref{equ-degree-correlation-2} and \eqref{equ-degree-variance-1} we get that $\Sigma^{\diamond}_{t}((k,v),(l,v)) - \Check{\Sigma}_{t}((k,v),(l,v)) = O(K_{t} \Delta_{t})$; by \eqref{equ-degree-correlation-3} and \eqref{equ-degree-variance-2} we get that  $\Sigma^{\diamond}_{t}((k,\mathsf{v}),(l,\mathsf{v})) - \Check{\Sigma}_{t}((k,\mathsf{v}),(l,\mathsf{v})) = O(K_{t} \Delta_{t})$; by \eqref{equ-degree-correlation-4} and \eqref{equ-degree-covariance} we get that $\Sigma^{\diamond}_{t}((k,v),(l,\pi(v))) - \Check{\Sigma}_{t}((k,v),(l,\pi(v))) = O(K_{t} \Delta_{t})$. Also, for $u \neq v$, by \eqref{equ-degree-correlation-1} we have for $\tau_u \in \{u, \pi(u)\}$ and $\tau_v \in \{v, \pi(v)\}$
\begin{align*}
     \Sigma^{\diamond}_{t}((k,\tau_v),(l, \tau_u)) - \Check{\Sigma}_{t}((k, \tau_v),(l, \tau_u))& = \Sigma^{\diamond}_{t}((k, \tau_v),(l, \tau_u))
    \\
    &= O \Big( \frac{K_t}{\mathfrak{a}_t n} (\mathbf{1}_{ v \in \cup_{i} \Gamma^{(t)}_i } - \mathfrak{a}_t ) (\mathbf{1}_{ u \in \cup_{i} \Gamma^{(t)}_i } - \mathfrak{a}_t ) \Big) \,.
\end{align*}
Combined with Items (i) and (iv) of $\mathcal{E}_t$ in Definition~\ref{def-admissible},
this implies that $\| \Check{\Sigma}_{t} - \Sigma^{\diamond}_{t} \|_{\mathrm{HS}}^2 \leq n K_{t}^{6} \Delta_{t}^2$. Applying Lemma~\ref{lemma-bound-on-op-norm} by setting $\mathcal{I}_v = \mathcal{J}_v = \{ (s,k,v),(s,k,\pi(v)) \}$, $\delta=K_t \Delta_t$ and $C = 10 K_t^3$, we can deduce that $\| \Check{\Sigma}_{t} - \Sigma^{\diamond}_{t} \|_{\mathrm{op}} \leq 100 K_{t}^3 $. Combined with \eqref{eq-Tilde-diamond-HS-norm} and the fact that $\| \Tilde{\Sigma}_{t} - \Sigma^{\diamond}_{t} \|_{\mathrm{op}} \leq 100K_{t}^6$, this completes the proof by the triangle inequality.
\end{proof}
\begin{Corollary} \label{corollary-2-HS-norm-W-inverse-minus-V-inverse}
On the event $\mathcal T_{t-1}\cap \mathcal E_t$, we have $\| \Tilde{\Sigma}_{t}^{-1} - \Check{\Sigma}_{t}^{-1} \|_{\mathrm{op}} \leq 200 K_{t}^{6}$ as well as $ \| \Tilde{\Sigma}_{t}^{-1} - \Check{\Sigma}_{t}^{-1} \|^2_{\mathrm{HS}} \leq n K_{t}^{11} \Delta_{t}^2$.
\end{Corollary}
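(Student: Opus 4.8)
The plan is to deduce both bounds from the resolvent identity
\[
\tilde{\Sigma}_{t}^{-1} - \check{\Sigma}_{t}^{-1} = \tilde{\Sigma}_{t}^{-1} \big( \check{\Sigma}_{t} - \tilde{\Sigma}_{t} \big) \check{\Sigma}_{t}^{-1} \,,
\]
which holds for any two invertible matrices of the same size. The only inputs needed are the operator-norm bounds $\| \tilde{\Sigma}_{t}^{-1} \|_{\mathrm{op}}, \| \check{\Sigma}_{t}^{-1} \|_{\mathrm{op}} \leq 1$ from Claim~\ref{claim-op-norm-W} (which itself follows from the fact that $\tilde{\Sigma}_{t}$ and $\check{\Sigma}_{t}$ are each the identity plus a positive semidefinite matrix, since every variable in the underlying processes has an independent standard Gaussian summand), together with the estimates $\| \tilde{\Sigma}_{t} - \check{\Sigma}_{t} \|_{\mathrm{op}} \leq 200 K_{t}^{6}$ and $\| \tilde{\Sigma}_{t} - \check{\Sigma}_{t} \|_{\mathrm{HS}}^2 \leq n K_{t}^{11} \Delta_{t}^2$ from Claim~\ref{claim-op-HS-norm-W-minus-V}; all of these are available on the event $\mathcal{T}_{t-1} \cap \mathcal{E}_t$.

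For the operator-norm statement, I would simply apply submultiplicativity of $\|\cdot\|_{\mathrm{op}}$ to the resolvent identity:
\[
\| \tilde{\Sigma}_{t}^{-1} - \check{\Sigma}_{t}^{-1} \|_{\mathrm{op}} \leq \| \tilde{\Sigma}_{t}^{-1} \|_{\mathrm{op}} \, \| \check{\Sigma}_{t} - \tilde{\Sigma}_{t} \|_{\mathrm{op}} \, \| \check{\Sigma}_{t}^{-1} \|_{\mathrm{op}} \leq 1 \cdot 200 K_{t}^{6} \cdot 1 = 200 K_{t}^{6} \,.
\]
For the Hilbert--Schmidt statement, the key tool is Lemma~\ref{lemma-HS-norm-product}, which lets one pull out operator-norm factors from a product inside a Hilbert--Schmidt norm. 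Writing $\mathrm{A} = \tilde{\Sigma}_{t}^{-1}$, $\mathrm{B} = (\check{\Sigma}_{t} - \tilde{\Sigma}_{t}) \check{\Sigma}_{t}^{-1}$, then again with $\mathrm{A}' = (\check{\Sigma}_{t} - \tilde{\Sigma}_{t})$, $\mathrm{B}' = \check{\Sigma}_{t}^{-1}$, one gets
\[
\| \tilde{\Sigma}_{t}^{-1} - \check{\Sigma}_{t}^{-1} \|_{\mathrm{HS}} \leq \| \tilde{\Sigma}_{t}^{-1} \|_{\mathrm{op}} \, \| \check{\Sigma}_{t} - \tilde{\Sigma}_{t} \|_{\mathrm{HS}} \, \| \check{\Sigma}_{t}^{-1} \|_{\mathrm{op}} \leq \| \check{\Sigma}_{t} - \tilde{\Sigma}_{t} \|_{\mathrm{HS}} \,,
\]
and squaring yields $\| \tilde{\Sigma}_{t}^{-1} - \check{\Sigma}_{t}^{-1} \|_{\mathrm{HS}}^2 \leq n K_{t}^{11} \Delta_{t}^2$, as desired.

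There is essentially no serious obstacle here: the corollary is a mechanical consequence of the resolvent identity combined with the norm submultiplicativity provided by Lemma~\ref{lemma-HS-norm-product} and the already-established bounds in Claims~\ref{claim-op-norm-W} and \ref{claim-op-HS-norm-W-minus-V}. If anything merits a moment's care, it is only the bookkeeping of which event ($\mathcal{T}_{t-1} \cap \mathcal{E}_t$) is needed to invoke each cited estimate, and making sure the ordering $\mathrm{A}\mathrm{B}$ versus $\mathrm{B}\mathrm{A}$ in Lemma~\ref{lemma-HS-norm-product} is applied consistently so that only operator norms of the two inverse matrices (which are bounded by $1$) are extracted.
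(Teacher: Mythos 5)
Your proof is correct and follows essentially the same route as the paper's: the resolvent identity combined with the operator-norm bounds from Claim~\ref{claim-op-norm-W}, the bounds from Claim~\ref{claim-op-HS-norm-W-minus-V}, and Lemma~\ref{lemma-HS-norm-product} for the Hilbert--Schmidt estimate. (The paper writes the middle factor as $\Tilde{\Sigma}_{t}-\Check{\Sigma}_{t}$ rather than $\Check{\Sigma}_{t}-\Tilde{\Sigma}_{t}$; since norms are sign-invariant this difference is immaterial.)
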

\begin{proof}
Combining Claims \ref{claim-op-norm-W} and \ref{claim-op-HS-norm-W-minus-V}, we get
\begin{align*}
    \| \Tilde{\Sigma}_{t}^{-1} - \Check{\Sigma}_{t}^{-1} \|_{\mathrm{op}} & = \| \Tilde{\Sigma}_{t}^{-1} (  \Tilde{\Sigma}_{t} - \Check{\Sigma}_{t}  )  \Check{\Sigma}_{t}^{-1} \|_{\mathrm{op}} \\
    & \leq \| \Tilde{\Sigma}_{t}^{-1}  \|_{\mathrm{op}} \| \Tilde{\Sigma}_{t} - \Check{\Sigma}_{t} \|_{\mathrm{op}} \| \Check{\Sigma}_{t}^{-1} \|_{\mathrm{op}} \leq  200 K_{t}^{6} \,.
\end{align*}
It remains to control the HS-norm. By Lemma \ref{lemma-HS-norm-product} and  Claim \ref{claim-op-norm-W}, we have
\begin{align*}
    \| \Tilde{\Sigma}_{t}^{-1} - \Check{\Sigma}_{t}^{-1} \|^2_{\mathrm{HS}} & = \| \Tilde{\Sigma}_{t}^{-1} (  \Tilde{\Sigma}_{t} - \Check{\Sigma}_{t}  )  \Check{\Sigma}_{t}^{-1} \|^2_{\mathrm{HS}} \\
    & \leq \| \Tilde{\Sigma}_{t}^{-1} \|^2_{\mathrm{op}} \| \Check{\Sigma}_{t}^{-1} \|^2_{\mathrm{op}} \|  \Tilde{\Sigma}_{t} - \Check{\Sigma}_{t} \|^2_{\mathrm{HS}} \leq n K_{t}^{11} \Delta_{t}^2 \,. \qedhere
\end{align*}
\end{proof}
\begin{Corollary} {\label{corollary-op-norm-Sigma}}
On the event $\mathcal{T}_{t-1} \cap \mathcal{E}_t$ we have $\| \Check{\Sigma}_t \|_{\mathrm{op}} \leq 2$ and $\| \Tilde{\Sigma}_t \|_{\mathrm{op}} \leq 300 K_t^{6}$.
\end{Corollary}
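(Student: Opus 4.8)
The plan is to handle the two bounds separately: the bound on $\Check{\Sigma}_t$ by writing this covariance matrix down explicitly, and the bound on $\Tilde{\Sigma}_t$ by combining the first bound with Claim~\ref{claim-op-HS-norm-W-minus-V} and the triangle inequality.

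For $\Check{\Sigma}_t$, the first step is to record its entries. Since the external Gaussians $\{\Tilde{W}^{(t)}_v(k), \Tilde{\mathsf{W}}^{(t)}_{\mathsf{v}}(k)\}$ are i.i.d.\ standard normal and independent of the process \eqref{eq-def-Check-Gaussian-process}, and since in that process each coordinate has variance $1$ while the only nonzero covariance is $\mathbb{E}[\langle\eta^{(t)}_k,\Check{D}^{(t)}_v\rangle\langle\eta^{(t)}_k,\Check{\mathsf{D}}^{(t)}_{\pi(v)}\rangle]=\Hat{\rho}\,\eta^{(t)}_k\Psi^{(t)}(\eta^{(t)}_k)^{*}$, the matrix $\Check{\Sigma}_t$ splits as $2\mathrm{I}+\Check{\mathrm{E}}_t$, where $\Check{\mathrm{E}}_t$ has nonzero entries only at positions $((k,v),(k,\pi(v)))$ together with their transposes, each equal to $\Hat{\rho}\,\eta^{(t)}_k\Psi^{(t)}(\eta^{(t)}_k)^{*}$; here one uses that for $v\notin\mathrm{B}_t$ both $\Check{Y}_t(k,v)$ and $\Check{\mathsf{Y}}_t(k,\pi(v))$ are coordinates of $\Check{\mathfrak{F}}_t$, so $\Check{\mathrm{E}}_t$ is block diagonal with $2\!*\!2$ blocks. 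Hence $\|\Check{\mathrm{E}}_t\|_{\mathrm{op}}=\max_k\Hat{\rho}\,\eta^{(t)}_k\Psi^{(t)}(\eta^{(t)}_k)^{*}$, which is at most $2\Hat{\rho}\varepsilon_t$ by \eqref{equ-vector-unit}; since $\Hat{\rho}<0.1$ and $\varepsilon_t\le\varepsilon_0\le\tfrac12$ by \eqref{eq-decrease-varepsilon}, this contribution is of lower order and one reads off $\|\Check{\Sigma}_t\|_{\mathrm{op}}\le 2$ (the small coupling term being absorbed by the slack in the constant; note that this part of the argument does not use $\mathcal{T}_{t-1}\cap\mathcal{E}_t$ at all).

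For $\Tilde{\Sigma}_t$, on $\mathcal{T}_{t-1}\cap\mathcal{E}_t$ Claim~\ref{claim-op-HS-norm-W-minus-V} gives $\|\Tilde{\Sigma}_t-\Check{\Sigma}_t\|_{\mathrm{op}}\le 200 K_t^6$, so the triangle inequality yields $\|\Tilde{\Sigma}_t\|_{\mathrm{op}}\le\|\Check{\Sigma}_t\|_{\mathrm{op}}+200 K_t^6\le 2+200 K_t^6\le 300 K_t^6$, where the last step uses $K_t\ge 1$. There is no genuine obstacle here; the only point requiring (routine) care is pinning down the $2\!*\!2$ block structure of $\Check{\mathrm{E}}_t$ under the restriction to $v\notin\mathrm{B}_t$, after which both bounds are immediate consequences of the earlier claims.
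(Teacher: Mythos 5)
Your proposal is correct and follows essentially the same route as the paper: identify $\Check{\Sigma}_t$ as block-diagonal with $2\!*\!2$ blocks and bound the off-diagonal via \eqref{equ-vector-unit}, \eqref{eq-decrease-varepsilon}, \eqref{eq-assumetion-rho}, then use Claim~\ref{claim-op-HS-norm-W-minus-V} and the triangle inequality for $\Tilde{\Sigma}_t$. You are in fact slightly more careful than the paper on one point: since $\Check{Y}_t(k,v) = \Tilde{W}^{(t)}_v(k) + \langle\eta^{(t)}_k, \Check{D}^{(t)}_v\rangle$ is a sum of two independent variance-$1$ Gaussians, the diagonal entries of the blocks are indeed $2$ (as you write), whereas the paper's proof mis-states them as $1$. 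The one nit is that with diagonal $2$ and off-diagonal up to $2\Hat\rho\varepsilon_t \le 0.1$, the honest bound is $\|\Check{\Sigma}_t\|_{\mathrm{op}} \le 2.1$, so the phrase ``absorbed by the slack in the constant'' is not quite accurate (there is no slack in the stated bound $2$); this tiny constant discrepancy is inherited from the paper and is harmless downstream, but it would be cleaner to either report $\le 2.1$ or note explicitly that the constants $2$ and $300$ in the Corollary are not tight. Your observation that the first bound does not actually require $\mathcal{T}_{t-1}\cap\mathcal{E}_t$ is also correct.
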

\begin{proof}
By the definition of $\Check{\Sigma}_t$, we see that $\Check{\Sigma}_t = \mathrm{diag}(\Check{\Sigma}_{t,k,v})$ is a block-diagonal matrix where for $1 \leq k \leq K_t, v \not \in \mathrm{BAD}_t$ we have $\Check{\Sigma}_{t,k,v}$ is a $2\!*\!2$ matrix with diagonal entries 1 and non-diagonal entries 
$$ \Hat{\rho} \eta^{(t)}_k \Psi^{(t)} ( \eta^{(t)}_k )^{*} \overset{\eqref{equ-vector-unit}}{\leq} 2 \Hat{\rho} \varepsilon_t \overset{\eqref{eq-decrease-varepsilon}}{\leq} \Hat{\rho} \overset{\eqref{eq-assumetion-rho}}{\leq} 0.1 \,. $$
Thus, $\| \Check{\Sigma}_t \|_{\mathrm{op}} \leq 2$. By Claim \ref{claim-op-HS-norm-W-minus-V} and the triangle inequality, we get that $\| \Tilde{\Sigma}_t \|_{\mathrm{op}} \leq \| \Tilde{\Sigma}_t - \Check{\Sigma}_t \|_{\mathrm{op}} + \| \Check{\Sigma}_t \|_{\mathrm{op}} \leq 300 K_t^6$.
\end{proof}

We are now ready to show that $\mathcal{B}_{t}$ typically occurs. In what follows, we assume on the event $\mathcal{A}_{t-1}, \mathcal{B}_{t-1}, \mathcal{T}_{t}, \mathcal{E}_{t}, \mathcal{H}_{t}$ without further notice. Formally, we abuse the notation $\mathbb P(\cdot)$ by meaning $\mathbb P(\cdot \cap \mathcal{A}_{t-1} \cap \mathcal{B}_{t-1} \cap \mathcal{T}_{t} \cap \mathcal{E}_{t} \cap \mathcal{H}_{t})$; we abuse notation this way (and similarly in later subsections) since it shortens the notation and the meaning should be clear from the context. Define $\mathcal{C}$ be the event that the realizations of $\{ \mathrm{BAD}_t, \mathrm{BAD}_{t-1} \}, (g_{t-1} Y_t, g_{t-1} \mathsf{Y}_t )$ and $\{ \overrightarrow{G}_{u,w}, \overrightarrow{\mathsf{G}}_{\pi(u),\pi(w)} : u \mbox{ or } w \in \mathrm{BAD}_{t-1} \}$ are amenable. By Lemma~\ref{lem-good-set-good-variable-realization} we have 
\begin{equation}{\label{eq-bound-prob-mathcal-C}}
    \mathbb{P}(\mathcal{C}) \geq 1 - o( \exp \{ - n^{ \frac{1}{\log \log \log n}} \}) \,.
\end{equation}
By Lemma~\ref{lemma-bound-conditional-unbiased-density}, under $\mathcal{C}$ we have (recall that $\mathrm{BAD}_{t-1} = \mathrm B_{t-1}$ is implied in $\mathfrak S_{t-1}$)
\begin{align*}
    & \frac{ p_{ \{ g_{t-1} Y_t , g_{t-1} \mathsf{Y}_t | \mathfrak{S}_{t-1} , \mathrm{BAD}_t = \mathrm{B}_t\} } (x_{t}, \mathsf{x}_t) }{ p_{ \{ g_{t-1} \Tilde{Y}_t , g_{t-1} \Tilde{\mathsf{Y}}_t | \mathcal{F}_{t-1} \} } (x_{t}, \mathsf{x}_t)} \leq \exp \{ n \Delta_t^5  \} \,.
\end{align*}
Plugging Claims \ref{claim-2-norm-L} and \ref{claim-op-norm-W} and Corollaries \ref{corollary-2-HS-norm-W-inverse-minus-V-inverse} and \ref{corollary-op-norm-Sigma} into \eqref{equ-density-ratio-II}, we have under $\mathcal{C}$
\begin{align*}
    & \frac{ p_{\{ g_{t-1} \Tilde{Y}_t , g_{t-1} \Tilde{\mathsf{Y}}_t | \mathcal{F}_{t-1} \}}  (x_{t}, \mathsf{x}_t) }{ p_{\{ \Check{Y}_t ,  \Check{\mathsf{Y}}_t  \}} (x_{t}, \mathsf{x}_t) } \leq \exp \Big\{ n K_{t}^{29} \Delta_{t}^2 \\
    &+  ( L^{(t)} , \mathsf{L}^{(t)} ) \Tilde{\Sigma}_{t}^{-1} ( x_{t},\mathsf{x}_{t} )^{*} 
    + \frac{1}{2} \| ( x_{t} , \mathsf{x}_{t} ) \|^2_{ (\Tilde{\Sigma}_{t}^{-1} - \Check{\Sigma}_{t}^{-1}) } - \frac{1}{2} \mathbb{E} \big[ \| (X , \mathsf{X} ) \|^2_{ (\Tilde{\Sigma}_{t}^{-1} - \Check{\Sigma}_{t}^{-1}) } \big]  \Big\} \,,
\end{align*}
where $(X,\mathsf{X}) \sim (( g_{t-1} \Tilde{Y}_t, g_{t-1} \Tilde{\mathsf{Y}}_t ) | \mathcal{F}_{t-1})$. Altogether, we get that
\begin{align*}
    & \frac{ p_{\{ g_{t-1} Y_t , g_{t-1} \mathsf{Y}_t | \mathfrak{S}_{t-1} , \mathrm{BAD}_t = \mathrm{B}_t \}} (x_{t}, \mathsf{x}_t) }{ p_{\{ \Check{Y}_t , \Check{\mathsf{Y}}_t  \}} (x_{t}, \mathsf{x}_t) } \leq \exp \Big\{ 2 n K_{t}^{29} \Delta_{t}^2 \\
    &+  ( L^{(t)} , \mathsf{L}^{(t)} ) \Tilde{\Sigma}_{t}^{-1} ( x_{t},\mathsf{x}_{t} )^{*}
    + \frac{1}{2} \| ( x_{t} , \mathsf{x}_{t} ) \|^2_{ (\Tilde{\Sigma}_{t}^{-1} - \Check{\Sigma}_{t}^{-1}) } - \frac{1}{2} \mathbb{E} \big[ \| (X , \mathsf{X} ) \|^2_{ (\Tilde{\Sigma}_{t}^{-1} - \Check{\Sigma}_{t}^{-1}) } \big]  \Big\} \,.
\end{align*}
Thus, to estimate probability for $\mathcal B$ it suffices to show that the preceding upper bound is out of control only with probability $o(1)$. By  Lemma~\ref{lemma-bound-conditional-unbiased-density}, (as we will show) it suffices to control this probability under the measure $p_{\{ g_{t-1} \Tilde{Y}_t , g_{t-1} \Tilde{\mathsf{Y}}_t | \mathcal{F}_{t-1}  \}}$. To this end,  define 
\begin{align*}
    & \mathcal{U}^{(I)} = \{  (x,\mathsf{x}) : (L^{(t)},\mathsf{L}^{(t)}) \Tilde{\Sigma}_{t}^{-1} (x,\mathsf{x})^{*} \geq n K_{t}^{29} \Delta_{t}^{2} \} \,, \\
    & \mathcal{U}^{(II)} = \{ (x,\mathsf{x}) : \| (x,\mathsf{x}) \|_{(\Tilde{\Sigma}_{t}^{-1} - \Check{\Sigma}_{t}^{-1})}^2  - \mathbb{E} \big[ \| (X,\mathsf{X} ) \|^2_{ (\Tilde{\Sigma}_{t}^{-1} - \Check{\Sigma}_{t}^{-1}) } \big] \geq n K_{t}^{29} \Delta_{t}^{2} \} \,.
\end{align*}
By Claims~\ref{claim-2-norm-L} and \ref{claim-op-norm-W}, we have
\begin{align*}
    \mathrm{Var} \big( \big\langle (L^{(t)},\mathsf{L}^{(t)}) , (X,\mathsf{X}) \big\rangle_{\Tilde{\Sigma}_{t}^{-1}} \big) = (L^{(t)},\mathsf{L}^{(t)}) \Tilde{\Sigma}_{t}^{-1} (L^{(t)},\mathsf{L}^{(t)})^{*} \leq n K_{t}^{10} \Delta_{t}^2\,.   
\end{align*}
Since the mean is equal to the variance in this case and  $(X,\mathsf{X}) \sim (( g_{t-1} \Tilde{Y}_t, g_{t-1} \Tilde{\mathsf{Y}}_t ) | \mathcal{F}_{t-1})$, we then obtain from the tail probability of normal distribution that
\begin{align}
        \mathbb{P} ( ( g_{t-1} \Tilde{Y}_t, g_{t-1} \Tilde{\mathsf{Y}}_t ) \in \mathcal{U}^{(I)} | \mathcal{F}_{t-1} )\leq \exp \{  - n K_{t}^{29} \Delta_{t}^2 \} \,. \label{eq-bound-mathcal-U-(I)}
\end{align}
Next, we consider $ \mathcal{U}^{(II)} $. On the event $\mathcal{U}^{(II)}$, we have
\begin{align*}
    & \| (X,\mathsf{X}) \|^2_{ ( \Tilde{\Sigma}_{t}^{-1} - \Check{\Sigma}_{t}^{-1} ) } - \mathbb{E} \big[ \|(X,\mathsf{X}) \|^2_{ ( \Tilde{\Sigma}_{t}^{-1} - \Check{\Sigma}_{t}^{-1} ) } \big] > n K_{t}^{29} \Delta_{t}^2\,.
\end{align*}
Recalling the definitions of $L^{(t)},\mathsf{L}^{(t)}$, we have that $(X-L^{(t)},\mathsf{X}-\mathsf{L}^{(t)})$ is a mean zero Gaussian vector. This motivates us to write
\begin{align*}
    \| (X,\mathsf{X}) \|^2_{ ( \Tilde{\Sigma}_{t}^{-1} - \Check{\Sigma}_{t}^{-1} ) } - \mathbb{E} \big[ \|(X,\mathsf{X}) \|^2_{ ( \Tilde{\Sigma}_{t}^{-1} - \Check{\Sigma}_{t}^{-1} ) } \big] = \langle (L^{(t)},\mathsf{L}^{(t)}), (X-L^{(t)},\mathsf{X}-\mathsf{L}^{(t)}) \rangle_{ ( \Tilde{\Sigma}_{t}^{-1} - \Check{\Sigma}_{t}^{-1} ) } \\ 
    + \| (X-L^{(t)},\mathsf{X}-\mathsf{L}^{(t)}) \|^2_{ ( \Tilde{\Sigma}_{t}^{-1} - \Check{\Sigma}_{t}^{-1} ) } - \mathbb{E} \big[ \| (X-L^{(t)},\mathsf{X}-\mathsf{L}^{(t)}) \|^2_{ ( \Tilde{\Sigma}_{t}^{-1} - \Check{\Sigma}_{t}^{-1} ) } \big] \,.
\end{align*}
By Claim \ref{claim-2-norm-L} and Corollary \ref{corollary-2-HS-norm-W-inverse-minus-V-inverse}, $\langle (L^{(t)},\mathsf{L}^{(t)}), (X-L^{(t)},\mathsf{X}-\mathsf{L}^{(t)}) \rangle_{ ( \Tilde{\Sigma}_{t}^{-1} - \Check{\Sigma}_{t}^{-1} ) }$ is a Gaussian variable with mean 0 and variance $ O(n K_{t}^{12} \Delta_{t}^2))$. Then,
\begin{align}\label{eq-prob-U-II-reduction}
    \mathbb{P} ( (g_{t-1} \Tilde{Y}_t,  g_{t-1} \Tilde{\mathsf{Y}}_t) \in \mathcal{U}^{(II)} | \mathcal{F}_{t-1} ) \leq  ( \mathsf P_1 + \mathsf P_2 ) \,,
\end{align}
where $\mathsf P_1 =\mathbb{P} \big( \langle (L^{(t)},\mathsf{L}^{(t)}), (X-L^{(t)},\mathsf{X}-\mathsf{L}^{(t)}) \rangle_{ ( \Tilde{\Sigma}_{t}^{-1} - \Check{\Sigma}_{t}^{-1} ) } > n K_t^{28} \Delta_t^2 \big)  \leq \exp \{ - n K_t^{28} \Delta_t^2 \}$ and 
\begin{align*}
    \mathsf P_2 =  \mathbb P \big(
    \| (X-L^{(t)}, \mathsf{X} - \mathsf{L}^{(t)}) \|^2_{ ( \Tilde{\Sigma}_{t}^{-1} - \Check{\Sigma}_{t}^{-1} ) } - \mathbb{E} \big[ \| (X-L^{(t)}, \mathsf{X} - \mathsf{L}^{(t)}) \|^2_{ ( \Tilde{\Sigma}_{t}^{-1} - \Check{\Sigma}_{t}^{-1} ) } \big] \geq n K_t^{28} \Delta_t^2 \big) \,.  
\end{align*}
It remains to bound $\mathsf P_2$. To this end,  we see that there exist a linear transform $\mathbf{T}_{t}$ and a standard normal random vector $(U_{t},\mathsf{U}_{t})$ such that 
\begin{align*}
    (X-L^{(t)}, \mathsf{X} - \mathsf{L}^{(t)}) = \mathbf{T}_{t} (U_{t},\mathsf{U}_{t})
\end{align*}
and $\mathbf{T}_{t}^{*}\mathbf{T}_{t} = \Tilde{\Sigma}_{t}$ (so in particular $\| \Tilde{\Sigma}_{t} \|_{\mathrm{op}} = \| \mathbf{T}_{t} \|^2_{\mathrm{op}}$). Thus,
\begin{align*}
    \| (X-L^{(t)}, \mathsf{X} - \mathsf{L}^{(t)}) \|^2_{( \Tilde{\Sigma}_{t}^{-1} - \Check{\Sigma}_{t}^{-1} )} = \| (U_{t},\mathsf{U}_{t}) \|^2_{ \mathbf{T}_{t}(  \Tilde{\Sigma}_{t}^{-1} - \Check{\Sigma}_{t}^{-1} ) \mathbf{T}_{t}^{*} }
\end{align*}
is a quadratic form of a standard Gaussian vector. We also have the following estimate: 
\begin{align*}
    \| \mathbf{T}_{t}( \Tilde{\Sigma}_{t}^{-1} - \Check{\Sigma}_{t}^{-1} ) \mathbf{T}_{t}^{*} \|_{\mathrm{op}} \leq \| \mathbf{T}_{t} \|^2_{\mathrm{op}} \|  \Tilde{\Sigma}_{t}^{-1} - \Check{\Sigma}_{t}^{-1} \|_{\mathrm{op}} = \| \Tilde{\Sigma}_{t} \|_{\mathrm{op}} \|  \Tilde{\Sigma}_{t}^{-1} - \Check{\Sigma}_{t}^{-1} \|_{\mathrm{op}} \leq K_{t}^{20} \,,
\end{align*}
where the second inequality follows from Corollaries~\ref{corollary-2-HS-norm-W-inverse-minus-V-inverse} and \ref{corollary-op-norm-Sigma}.
In addition, we have 
\begin{align*}
    \| \mathbf{T}_{t}(  \Tilde{\Sigma}_{t}^{-1} - \Check{\Sigma}_{t}^{-1} ) \mathbf{T}_{t}^{*} \|_{\mathrm{HS}}^2 \leq \| \mathbf{T}_{t} \|^4_{\mathrm{op}} \|  \Tilde{\Sigma}_{t}^{-1} - \Check{\Sigma}_{t}^{-1} \|_{\mathrm{HS}}^2 = \| \Tilde{\Sigma}_t \|_{\mathrm{op}}^2 \|  \Tilde{\Sigma}_{t}^{-1} - \Check{\Sigma}_{t}^{-1} \|_{\mathrm{HS}}^2 \leq n K_{t}^{24} \Delta_{t}^2\,.
\end{align*}
where the first inequality follows from Lemma~\ref{lemma-HS-norm-product} and the second inequality follows from Corollaries~\ref{corollary-2-HS-norm-W-inverse-minus-V-inverse}, \ref{corollary-op-norm-Sigma}.
We can then apply Lemma~\ref{lemma-Hanson-Wright} and obtain that 
\begin{align*}
    \mathsf P_2 \leq 2 \exp \Big\{ - \Omega(1) \min \Big( \frac{ n K_{t}^{28} \Delta_{t}^2 }{ K_{t}^{20} }, \frac{ ( n K_{t}^{28} \Delta_{t}^2 )^2 }{n K_{t}^{24} \Delta_{t}^2} \Big) \Big\} \leq 2 \exp \{ - \Omega( n K_t^{8} \Delta_{t}^2 ) \} \,.
\end{align*}
Plugging the estimates of $\mathsf P_1,\mathsf P_2$ into \eqref{eq-prob-U-II-reduction} we get that the left hand side of \eqref{eq-prob-U-II-reduction} is bounded by $3 \exp \{ - \Omega( n K_t^{8} \Delta_{t}^2 ) \}$.
Combined with \eqref{eq-bound-mathcal-U-(I)} and Lemma~\ref{lemma-bound-conditional-unbiased-density} it yields that
\begin{align}
    & \mathbb{P} ( (g_{t-1} Y_t, g_{t-1} \mathsf{Y}_t) \in \mathcal{U}^{(I)} \cup \mathcal{U}^{(II)}; \mathcal{C} | \mathfrak{S}_{t-1}; \mathrm{BAD}_t = \mathrm{B}_t ) \leq \exp \{ n \Delta_t^5 - n K_t^{8} \Delta_t^2  \} \,. \label{equ-prob-subbad-event-2}
\end{align}
Combined with \eqref{eq-bound-prob-mathcal-C}, this yields that (by recalling \eqref{equ-def-delta} and noting that $\mathcal{B}_t^{c} \subset \mathcal{C}^{c} \cup \{ (g_{t-1} Y_t, g_{t-1} \mathsf{Y}_t) \in \mathcal U^I \cup \mathcal U^{II} \}$)
\begin{align}
    \mathbb{P} (\mathcal{B}_{t}^{c}; \mathcal{A}_{t-1}, \mathcal{B}_{t-1}, \mathcal{T}_{t}, \mathcal{E}_{t}, \mathcal{H}_{t} ) \leq  \exp \{ - \tfrac{1}{2} n^{ \frac{1}{\log \log \log n}} \} \,.
    \label{equ-prob-B-happens}
\end{align}

\subsubsection{Step 3: $\mathcal{A}_{t}$}
It is straightforward to bound the probability for $\mathcal{A}_t$ on the event $\mathcal{B}_t$. Indeed,
\begin{align*}
    & \mathbb{P} \Big(  \sum_{v \not \in \mathrm{BAD}_{t}} ( g_{t-1} Y_t(k,v) )^2 > 100 n ; \mathcal{B}_t \Big) \leq \exp \{ n K_{t}^{30} \Delta_t^2 \} \cdot \mathbb{P} \Big(  \sum_{v \not \in \mathrm{BAD}_{t}} \Check{Y}_t(k,v)^2 > 100n \Big)  \,,
\end{align*}
where the latter probability is bounded by $e^{- 2n}$ using Chernoff bound. Thus, applying union bound on $k$ we have 
\begin{align}
    \mathbb{P}( \mathcal{A}_{t}^{c} ; \mathcal{B}_t ) \leq K_t \exp \{ - n \} \,.
    \label{equ-prob-mathcal-A-t-happens}
\end{align}

\subsubsection{Step 4: $\mathcal{E}_{t+1}$}
Recall Definition~\ref{def-admissible} and \eqref{equ-def-admissible}.
The goal of this subsection is to prove
\begin{equation}\label{equ-mathcal-E-t+1-holds}
    \mathbb{P}( \mathcal{E}_{t+1}^{c}; \mathcal{A}_{t}, \mathcal{B}_{t}, \mathcal{E}_{t}, \mathcal{H}_{t}, \mathcal{T}_{t} ) \leq 2 K_t^2 \exp \{ - n \Delta_t^2 \} \,.
\end{equation}
To this end, we will verify Condition (i.)--(x.) in Definition~\ref{def-admissible}. Since (i.), (ii.) and (iii.) are controlled by \eqref{eq-mathcal-E-0-bound}, we then focus on the other conditions. In what follows, we always assume that $\mathcal A_{t}, \mathcal B_{t}, \mathcal{E}_{t}, \mathcal{H}_t$ and $\mathcal{T}_t$ hold. 
Crucially, we will reduce our analysis for events on 
$\{  W^{(t)}_v (k)+ \langle \eta^{(t)}_k, g_{t-1}{D}^{(t)}_v \rangle,  \mathsf W^{(t)}_{\pi(v)}(k)+ \langle \eta^{(t)}_k, g_{t-1}\mathsf{D}^{(t)}_{\pi(v)} \rangle : 1 \leq k \leq \frac{K_t}{12}, v \not \in \mathrm{BAD}_{t} \}$ under the conditioning of $\mathfrak{S}_{t-1}$ and $\mathrm{BAD}_t = \mathrm{B}_t$ to the same events on $\Check{\mathfrak F}_t$ (recalling \eqref{eq-def-Check-mathfrak-F-t}) thanks to $\mathcal B_t$; the latter would be much easier to estimate. To be more precise, note that
\begin{align}
    & \Big\{ \frac{|\Gamma^{(t+1)}_k \setminus \mathrm{BAD}_{t} |}{n} - \mathfrak{a} > \mathfrak{a} \Delta_{t+1} \Big\} \label{eq-recall-Gamma-Bad-sum} \\
    = & \Big\{ \frac{1}{n} \sum_{v \in V \setminus \mathrm{BAD}_{t}} \Big( \mathbf{1}_{ \{ | \frac{1}{\sqrt{2}} ( \sqrt{12/K_t} \langle \beta^{(t)}_k, W^{(t)}_v \rangle + \langle \sigma^{(t)}_k, D^{(t)}_v \rangle ) | \geq 10 \} } - \mathfrak{a} \Big) \geq \mathfrak{a} \Delta_{t+1} \Big\} \,. \nonumber
\end{align}
For $v \not \in \mathrm{BAD}_t$, by \eqref{equ-def-set-BIAS} we have $\| b_{t-1}{D}^{(t)}_v \| \leq K_t e^{-10 (\log \log n)^{10} } \leq \Delta_t^{10}$, and as a result $|\langle \sigma^{(t)}_k, D^{(t)}_v \rangle - \langle \sigma^{(t)}_k, g_{t-1}{D}^{(t)}_v \rangle| \leq K_t \Delta_t^{10} \ll \Delta_t^2$. Thus,  under the conditioning of $\mathfrak{S}_{t-1}$ and $\mathrm{BAD}_t = \mathrm{B}_t$ we have 
\begin{align*}
    \eqref{eq-recall-Gamma-Bad-sum}  \subset \Big\{ \frac{1}{n} \sum_{v \in V \setminus \mathrm{BAD}_{t}} \Big( \mathbf{1}_{ \{ | \frac{1}{\sqrt{2}} ( \sqrt{12/K_t} \langle \beta^{(t)}_k, W^{(t)}_v \rangle + \langle \sigma^{(t)}_k, g_{t-1}{D}^{(t)}_v \rangle ) | \geq 10 - \Delta_{t}^2  \} } - \mathfrak{a} \Big) \geq \frac{\mathfrak{a} \Delta_{t+1}}{2} \Big\} \,.
\end{align*}
Therefore, the conditional probability of \eqref{eq-recall-Gamma-Bad-sum}  can be bounded by the conditional probability of the right hand side in the preceding inequality under the conditioning of $\mathfrak{S}_{t-1}$ and $\mathrm{BAD}_t = \mathrm{B}_t$. We will tilt the measure to the same event on $\Check{\mathfrak F}_t$ (as explained earlier), and on $\mathcal{B}_{t}$ we know this tilting loses at most a factor of $\exp \{ n K_t^{30} \Delta_t^2 \}$. Also on $\mathcal T_t$ we have $\frac{|\mathrm{BAD}_{t}|}{n} \ll \mathfrak{a} \Delta_{t+1}$. Therefore, the conditional probability of \eqref{eq-recall-Gamma-Bad-sum} is bounded by the following probability up to a factor of $\exp \{ n K_t^{30} \Delta_t^2 \}$:
\begin{align}
    & \mathbb{P} \Big( \frac{1}{n} \sum_{v \in V \setminus \mathrm{B}_{t}} \Big( \mathbf{1}_{ \{ | \frac{1}{\sqrt{2}} ( \sqrt{12/K_t} \langle \beta^{(t)}_k, \Tilde{W}^{(t)}_v \rangle + \langle \sigma^{(t)}_k, \Check{D}^{(t)}_v \rangle) | \geq 10 - \Delta_t^2 \} } - \mathfrak{a} \Big) > \frac{ \mathfrak{a} \Delta_{t+1}}{2} \Big) \,. \label{eq-checker-probability-Gamma}
\end{align}
Since $\{ \frac{1}{\sqrt{2}} ( \sqrt{12/K_t} \langle \beta^{(t)}_k, \Tilde{W}^{(t)}_v \rangle + \langle \sigma^{(t)}_k, \Check{D}^{(t)}_v \rangle ) : v \in V \setminus \mathrm{B}_{t} \}$ is a collection of i.i.d.\ standard normal variables, we have $\eqref{eq-checker-probability-Gamma} \leq  \exp \{ - \frac{ n \mathfrak{a}^2 \Delta_{t+1}^2 }{10}\}$, and thus 
\begin{align*}
    \mathbb P(\eqref{eq-recall-Gamma-Bad-sum} \mid \mathfrak{S}_{t-1};\mathrm{BAD}_t = \mathrm{B}_t) \leq  \exp \{ - \frac{ n \mathfrak{a}^2 \Delta_{t+1}^2 }{20} \} \,.
\end{align*}
Similarly a lower deviation for $\frac{|\Gamma^{(t+1)}_k|}{n}-\mathfrak{a}$ can be derived, completing the verification for (iv.). 
The bounds on $\frac{|\Gamma^{(t+1)}_k \cap \Gamma^{(t+1)}_l|}{n}$, $\frac{|\Pi^{(t+1)}_k \cap \Pi^{(t+1)}_l|}{n} $ and $ \frac{| \pi(\Gamma^{(t+1)}_k) \cap \Pi^{(t+1)}_l|}{n}$ (which correspond to (iv.), (v.) and (vi.) respectively) can be proved similarly.

Furthermore, we bound $\frac{ | \pi(\Gamma^{(t+1)}_k) \cap \Pi^{(s)}_l |}{n}, \frac{ | \Gamma^{(t+1)}_k \cap \Gamma^{(s)}_l |}{n}, \frac{ | \Pi^{(t+1)}_k \cap \Pi^{(s)}_l |}{n} $ (which correspond to (vii.), (viii.), (ix.) and (x.) respectively). Note that under $\mathfrak{S}_{t-1}$, we have that $\Pi^{(s)}_l$'s are fixed subsets for $s\leq t$. In addition, on the event $\mathcal{E}_{t}$ we have $\Big| \frac{| \Pi^{(s)}_l |}{n} - \mathfrak{a}_s \Big| < \mathfrak{a}_s \Delta_s$. Thus,
\begin{align*}
    \frac{ | \Gamma^{(t+1)}_k \cap \Pi^{(s)}_l |}{n} - \mathfrak{a}_{t+1} \mathfrak{a}_s = \mathfrak{a}_{s} \Big( \frac{1}{\mathfrak{a}_s n}  \sum_{ u \in \Pi^{(s)}_l } \Big( \mathbf{1}_{ u \in \Gamma^{(t+1)}_k } - \mathfrak{a}_{t+1} \Big) \Big) + \mathfrak{a}_{t+1} \Big( \frac{|\Pi^{(s)}_l|}{n} - \mathfrak{a}_{s} \Big)\,.
\end{align*}
Since $\Big| \mathfrak{a}_{t+1} \Big( \frac{|\Pi^{(s)}_l|}{n} - \mathfrak{a}_{s} \Big) \Big| \leq \mathfrak{a}_{t+1} \mathfrak{a}_s \Delta_s$ on the event $\mathcal{E}_{t}$, the above can be bounded similarly to that for $\frac{|\Gamma^{(t+1)}_k |}{n} - \mathfrak{a}$. The same applies to the other two items here. We omit further details since the modifications are minor.
 
Putting all above together, we finally complete the proof of \eqref{equ-mathcal-E-t+1-holds}.

\subsubsection{Step 5: $\mathcal{H}_{t+1}$}
We assume that $\mathcal{A}_{t}, \mathcal{B}_{t}, \mathcal{H}_{t}, \mathcal{T}_t, \mathcal{E}_{t+1}$ hold throughout this subsection without further notice. Thus, we have
\begin{align*}
    & \sum_{v \not \in \mathrm{BAD}_{t+1}} \big| {\mathrm{PROJ}}^{\prime} (\langle \eta^{(t+1)}_k, g_t D^{(t+1)}_v \rangle) - \mathrm{PROJ}(\langle \eta^{(t+1)}_k, g_t D^{(t+1)}_v \rangle) \big|^2 \\
    \leq \ & \Delta_{t+1}^2 \sum_{s,k} \sum_{v \not \in \mathrm{BAD}_{t}} \langle \eta^{(s)}_k, g_{s-1} D^{(s)}_v \rangle^2 + n \Delta_{t+1}^2 \leq 2 n K_{t+1}^3 \Delta_{t+1}^2  \,,
\end{align*}
where the first inequality follows from Lemma \ref{lemma-projection-replace} and the second inequality relies on our assumption that $\mathcal{A}_t$ holds.
In light of this, to show $\mathcal{H}_{t+1}$ it suffices to bound for each $k$ the \emph{conditional} probability given $\mathfrak{S}_{t-1}$ and $\mathrm{BAD}_t$ of the event
\begin{align}
    \sum_{v \not \in \mathrm{BAD}_t} \big| {\mathrm{PROJ}}^{\prime} (\langle \eta^{(t+1)}_k, g_t D^{(t+1)}_v \rangle) \big|^2 > \frac{1}{4} n K_{t+1}^{6} \Delta_{t+1}^2\,.
    \label{equ-tail-sum-proj}
\end{align}
For notation  convenience, we will write $\mathbb P$ and $\mathbb E$ as $\mathbb P(\cdot \mid \mathfrak{S}_{t-1}; \mathrm{BAD}_t)$ and $\mathbb E(\cdot \mid \mathfrak{S}_{t-1}; \mathrm{BAD}_t)$ in the rest of the subsection. In order to bound \eqref{equ-tail-sum-proj}, we expand the matrix product in \eqref{equ-modified-projection-form} into a summation as follows (we write $[g Y,g\mathsf{Y}]_{t} (r,l,w) = [g Y]_{t} (r,l,w)$ and $[g Y,g\mathsf{Y}]_{t} (r,l,\pi(w)) = [g \mathsf{Y}]_{t} (r,l,\pi(w))$ for $w \not \in \mathrm{BAD}_t$ below):
\begin{equation}
\begin{aligned}
    \sum_{s,r=1}^{t} \sum_{l=1}^{K_r} \sum_{m=1}^{K_s} \sum_{\tau_1,\tau_2} \mathbf{J}_{t+1}( (k,v);(s,m,\tau_1) ) \mathbf{P}_t((s,m,\tau_1);(r,l,\tau_2)) [gY,g\mathsf{Y}]_{t} (r,l,\tau_2) 
    \label{equ-explicit-projection}
\end{aligned}
\end{equation}
where the summation is taken over $\tau_1, \tau_2 \in \big( V \setminus \mathrm{BAD}_t \big) \cup \big( \mathsf{V} \setminus \pi(\mathrm{BAD}_t) \big)$. So we know that $\sum_{v} \big| \mathrm{PROJ}^{\prime} (\langle \eta^{(t+1)}_k, g_t D^{(t+1)}_v \rangle) \big|^2$ is bounded up to a factor of $4K_t^2$ by the maximum of
\begin{align*}
    & \sum_{v} \Big|  \sum_{ \tau_1 \in \mathcal{V}_1, \tau_2 \in \mathcal{V}_2} \mathbf{J}_{t+1} ((k,v);(s,m,\tau_1)) \mathbf{P}_t((s,m,\tau_1);(r,l,\tau_2)) [gY,g\mathsf{Y}]_{t} (r,l,\tau_2)  \Big|^2\,,
\end{align*}
where the maximum is taken over $s,r \leq t, 1 \leq m \leq K_s, 1 \leq l \leq K_r , \mathcal{V}_i \in \big\{ V \setminus \mathrm{BAD}_t , \mathsf{V} \setminus \pi(\mathrm{BAD}_t) \big\}$. Thus, it suffices to bound each term in the maximum. For simplicity, we only demonstrate how to bound terms of the form:
\begin{align}
    \sum_{v} \Big| \sum_{u,w \in V \setminus \mathrm{BAD}_t} \mathbf{J}_{t+1} ((k,v);(s,l,\pi(u))) \mathbf{P}_t((s,m,\pi(u));(r,l,w)) [gY,g\mathsf{Y}]_{t} (r,l,w)  \Big|^2 \,.
    \label{equ_one_part_projection}
\end{align}
\begin{Lemma}{\label{lemma-one-part-projection}}
    We have for all $s,r,m,l$
    \begin{align*}
        \mathbb{P} \Big( \eqref{equ_one_part_projection} \geq \frac{1}{2} n K_{t+1}^{5} \Delta_{t+1}^2; \mathcal{A}_t, \mathcal{B}_t, \mathcal{H}_t, \mathcal{T}_t, \mathcal{E}_{t+1} \Big) \leq \exp \big\{ - \tfrac{1}{2} n \Delta_{t+1}^2 \big\} \,.
    \end{align*}
\end{Lemma}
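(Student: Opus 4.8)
The plan is to recognize \eqref{equ_one_part_projection} as a squared quadratic form and bound it by Hanson--Wright after a Gaussian comparison. Conditioning on $\mathfrak S_{t-1}$ and $\mathrm{BAD}_t$ and writing $\eqref{equ_one_part_projection}=\|T\,[gY]^{(r,l)}_t\|^2$ with $T=A\mathbf P_t^{\square}$, where $A$ is the sub-block of $\mathbf J_{t+1}$ with entries $\mathbf J_{t+1}((k,v);(s,m,\pi(u)))$ (rows and columns indexed by vertices outside $\mathrm{BAD}_t$), $\mathbf P_t^{\square}$ is the corresponding sub-block of $\mathbf P_t$, and $[gY]^{(r,l)}_t$ is the $(r,l,\cdot)$-slice of $[gY]_t$, I would first record the needed norm estimates. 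Adapting the proof of Lemma~\ref{lemma-op-norm-H} (via the $\Hat{\mathbb E}$-covariance identities \eqref{equ-degree-correlation-1}--\eqref{equ-degree-covariance} and Items (i) and (iv) of $\mathcal E_{t+1}$), the matrix $A$ splits as a diagonal part with entries $O(K_{t+1}\Delta_{t+1})$ plus an off-diagonal part whose nonzero entries factor through $\mathbf 1_{v\in\cup_i\Pi^{(s)}_i}$ and $\mathbf 1_{u\in\cup_i\Gamma^{(t+1)}_i}$; hence $\|A\|_{\mathrm{HS}}^2\le nK_{t+1}^{4}\Delta_{t+1}^2$ and, by Lemma~\ref{lemma-bound-on-op-norm} with singleton partitions, $\delta=K_{t+1}\Delta_{t+1}$ and $C=O(K_{t+1}^{2})$, also $\|A\|_{\mathrm{op}}=O(K_{t+1}^{2})$. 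With $\|\mathbf P_t\|_{\mathrm{op}}\le100$ (already recorded) and Lemma~\ref{lemma-HS-norm-product}, this gives $\|T\|_{\mathrm{op}}=O(K_{t+1}^{2})$ and $\|T\|_{\mathrm{HS}}^2=O(nK_{t+1}^{4}\Delta_{t+1}^2)$; and $\mathcal A_t$ (with the inductively available $\mathcal A_s$, $s<t$) gives $\|[gY]^{(r,l)}_t\|^2\le100n$.

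Next I would isolate the ``fresh'' randomness, namely the time-$t$ layer, which drives both $[gY]^{(t,l)}_t=g_{t-1}Y_t(l,\cdot)$ and, through $\Gamma^{(t+1)},\eta^{(t+1)}$, the matrix $A$ itself; conditionally on $\mathfrak S_{t-1}$ and $\mathrm{BAD}_t$ the matrices $\mathbf P_t$ and the slices $[gY]^{(r,l)}_t$ with $r<t$ are fixed. By $\mathcal B_t$ I may replace the law of $g_{t-1}Y_t$ by that of the Gaussian process $\Check{\mathfrak F}_t$ of \eqref{eq-def-Check-mathfrak-F-t}, at the cost of a factor $\exp\{nK_t^{30}\Delta_t^2\}$. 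For the surrogate, $\|T\xi\|^2$ is a quadratic form in a standard Gaussian vector with mean $\le\|T\|_{\mathrm{HS}}^2=O(nK_{t+1}^{4}\Delta_{t+1}^2)$, so Lemma~\ref{lemma-Hanson-Wright} gives, at deviation $s\asymp nK_{t+1}^{5}\Delta_{t+1}^2$, a tail $\exp\{-\Omega(\min(s^2/(\|T\|_{\mathrm{op}}^2\|T\|_{\mathrm{HS}}^2),\;s/\|T\|_{\mathrm{op}}^2))\}=\exp\{-\Omega(nK_{t+1}\Delta_{t+1}^2)\}$; since $\Delta_{t+1}/\Delta_t=(\log n)^{10}K_{t+1}^{100}$, this absorbs the tilting factor and, taking $\kappa$ large, reduces to $\exp\{-\tfrac12 n\Delta_{t+1}^2\}$. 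The component of $[gY]^{(t,l)}_t$ that is measurable with respect to $\mathfrak S_{t-1}$ has squared norm $O(nK_t^6\Delta_t^2)$ by $\mathcal H_t$ (and Lemma~\ref{lemma-projection-replace}), contributing only $O(nK_{t+1}^4K_t^6\Delta_t^2)\ll nK_{t+1}^{5}\Delta_{t+1}^2$ after applying $T$; for the slices $r<t$ one exploits instead that $A$ remains random through its dependence on $\Gamma^{(t+1)}$ and runs the same concentration over the time-$t$ layer.

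The step I expect to be the crux is the operator-norm bound $\|A\|_{\mathrm{op}}=O(K_{t+1}^{2})$, substantially sharper than the generic $\|\mathbf J_{t+1}\|_{\mathrm{op}}\le2K_t^{3}$: only with it does the Hanson--Wright exponent $s/\|T\|_{\mathrm{op}}^2$ reach $\Omega(nK_{t+1}\Delta_{t+1}^2)\gg\tfrac12 n\Delta_{t+1}^2$, whereas the cubic bound would give only $\Omega(nK_{t+1}^{-1}\Delta_{t+1}^2)$, which is too weak. A secondary difficulty, and the reason one cannot simply argue as in Step~1, is that the tail here is $\exp\{-\Omega(n\Delta_{t+1}^2)\}$, too weak to survive a union bound over all vertex-set realizations; the estimate must therefore be obtained as a genuine conditional probability, which is exactly why the already-revealed coordinates of $[gY]_t$ are disposed of via the inductive $\mathcal H_t$ rather than by re-randomization, and why $\mathcal B_t$ is invoked only on the genuinely fresh coordinates. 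Finally one sums over the $O(K_{t}^{2})$ terms of \eqref{equ-explicit-projection} and over $k\le K_{t+1}$ to conclude $\mathcal H_{t+1}$.
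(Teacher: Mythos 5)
There is a genuine gap at the heart of your argument: you write \eqref{equ_one_part_projection} as $\|T\,[gY]^{(r,l)}_t\|^2$ with $T=A\mathbf P_t^{\square}$ and then, after tilting via $\mathcal B_t$, apply Lemma~\ref{lemma-Hanson-Wright} to ``$\|T\xi\|^2$, a quadratic form in a standard Gaussian vector'' — but this treats $T$ as a deterministic matrix under the conditioning on $\mathfrak S_{t-1}$ and $\mathrm{BAD}_t$, which it is not. By \eqref{eq-def-J}, the entries of $A$ involve $\Gamma^{(t+1)}_i$ and $\eta^{(t+1)}_k$, and $\Gamma^{(t+1)}_i$ is a deterministic function of exactly the fresh time-$t$ layer $\{W^{(t)}_v,\langle\eta^{(t)}_j,D^{(t)}_v\rangle\}$ that your $\xi$ is a surrogate for. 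You acknowledge this dependence in passing (``the time-$t$ layer \dots drives \dots the matrix $A$ itself'') but never resolve it: Hanson--Wright with a fixed matrix does not apply when the matrix and the vector are functions of the same randomness, and you cannot repair this by conditioning on or union-bounding over realizations of $\Gamma^{(t+1)}$, since — as you yourself observe — the target tail $\exp\{-\Omega(n\Delta_{t+1}^2)\}$ cannot survive a union bound over vertex-set realizations. For the $r<t$ slices the problem is the same and your remedy (``runs the same concentration over the time-$t$ layer'') is not an argument: there the quantity is not a quadratic form in $\xi$ at all but a nonlinear function of $\xi$ through the indicators $\mathbf 1_{u\in\Gamma^{(t+1)}_i}$.

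The paper's proof resolves precisely this point by splitting the sum over $(u,v)$ into the diagonal $u=v$ (handled deterministically on $\mathcal E_{t+1}$ via the entrywise bound $K_{t+1}\Delta_{t+1}$, $\|\mathbf P_t\|_{\mathrm{op}}\le 100$ and $\mathcal A_t$ — your $\mathcal S_1$-type input, no probability needed) and the off-diagonal $u\neq v$, whose entries factor as a product of a $v$-dependent term $(\mathbf 1_{\pi(v)\in\Pi^{(s)}_j}-\mathfrak a_s)$, which is $\mathfrak S_{t-1}$-measurable, and a $u$-dependent term $(\mathbf 1_{u\in\Gamma^{(t+1)}_i}-\mathfrak a_{t+1})$, which carries all the fresh dependence. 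After summing out $v$, the fresh randomness enters only through the vector $\mathtt b_u$ of centered indicators, and the resulting expression is a bilinear form $\sum_{u,w}\mathtt b_u\mathbf P_t(\cdot)X_w$ (for $r=t$) or a linear form $\sum_u\lambda_u\mathtt b_u$ with fixed $\lambda$ (for $r<t$). These are then controlled by the \emph{bilinear} Hanson--Wright inequality (Lemma~\ref{lemma-modify-Hanson-Wright}, exploiting that $(\mathtt b_u,X_u)$ is independent across $u$ under the tilted measure), Azuma--Hoeffding, and separate treatment of the diagonal $u=w$ terms and the mean terms, where $|\mathbb E[\mathtt b_uX_u]|,|\mathbb E[\mathtt b_u]|\lesssim K_t\Delta_t^{10}$ is only available because of the absolute-value symmetrization in \eqref{equ-def-iter-sets}. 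None of these ingredients — the factorization that decouples the fresh randomness, the bilinear (rather than scalar) Hanson--Wright, and the symmetrization controlling $\mathbb E[\mathtt b_uX_u]$ — appears in your proposal, and without them the step from the norm bounds on $A$ to the stated tail does not go through. (Your norm estimates for $A$ and the arithmetic showing the tail would absorb the tilting factor are fine as far as they go; they are just applied inside a probabilistic model that does not match the actual dependence structure. Relatedly, the sharper bound $\|A\|_{\mathrm{op}}=O(K_{t+1}^2)$ that you identify as the crux is an artifact of your formulation — the paper's argument never needs an operator-norm bound on $\mathbf J_{t+1}$ for this lemma.)
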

Since the bounds for other terms are similar, by Lemma~\ref{lemma-one-part-projection} and a union bound, we get that $\mathbb{P}( \eqref{equ-tail-sum-proj} ; \mathcal{A}_t,\mathcal{B}_t,\mathcal{H}_t, \mathcal{T}_t, \mathcal{E}_{t+1} ) \leq  10 K_t^2 \exp \{ -n \Delta_{t+1}^2 /2\} $. Applying a union bound then yields that
\begin{equation}
    \label{equ-prob-mathcal-H-t+1}
    \mathbb{P}( \mathcal{H}_{t+1}^{c} ; \mathcal{A}_t, \mathcal{B}_t, \mathcal{H}_t, \mathcal{T}_{t},  \mathcal{E}_{t+1} ) \leq 20 K_{t+1}^4 \exp \big\{ - \tfrac{1}{2} n \Delta_{t+1}^2 \big\}  \,.
\end{equation}

\begin{proof}[Proof of Lemma~\ref{lemma-one-part-projection}]
Recall \eqref{eq-def-J}. We first divide the summation in $\eqref{equ_one_part_projection}$ into three parts $\mathcal S_1, \mathcal S_2, \mathcal S_3$, where $\mathcal S_1$ accounts for the summation over $u=v$ and can be written as
\begin{align*}
    \sum_{v} \Big| \sum_{w} \Hat{\mathbb{E}} \Big[ \langle \eta^{(t+1)}_k, g_t\Tilde{D}^{(t+1)}_v \rangle \langle \eta^{(s)}_m , g_{s-1}\Tilde{\mathsf{D}}^{(s)}_{\pi(v)} \rangle \Big] \mathbf{P}_t ((s,m,\pi(v));(r,l,w)) g_{r-1} Y_r(l,w)   \Big|^2\,,
\end{align*}
and $\mathcal S_2$ accounts for the summation over $u \neq v, r=t$ and can be written as
\begin{align*}
    \sum_{v} \Big| \sum_{u,w} \frac{(\mathbf{1}_{ \pi(v) \in \Pi^{(s)}_j }-\mathfrak{a}_s) ( \mathbf{1}_{u \in \Gamma^{(t+1)}_i } -\mathfrak{a}_{t+1}) }{ n \sqrt{(\mathfrak{a}_{t+1} - \mathfrak{a}_{t+1}^2) (\mathfrak{a}_s-\mathfrak{a}_s^2)} } \mathbf{P}_t((s,m,\pi(u));(t,l,w)) g_{t-1} Y_t(l,w) \Big|^2 \,,
\end{align*}
and $\mathcal S_3$ accounts for the summation over $u \neq v, r<t$ and can be written as
\begin{align*}
    \sum_{v} \Big| \sum_{u,w}  \frac{( \mathbf{1}_{ \pi(v) \in \Pi^{(s)}_j}-\mathfrak{a}_s) ( \mathbf{1}_{u \in \Gamma^{(t+1)}_i} -\mathfrak{a}_{t+1}) }{ n \sqrt{(\mathfrak{a}_{t+1}- \mathfrak{a}_{t+1}^2) (\mathfrak{a}_s - \mathfrak{a}_s^2)} } \mathbf{P}_t((s,m,\pi(u));(r,l,w)) g_{r-1} Y_r(l,w) \Big|^2 \,.
\end{align*}
By Cauchy-Schwartz inequality, we have that $\eqref{equ_one_part_projection} \leq 3 (\mathcal S_1 + \mathcal S_2 + \mathcal S_3)$. We first bound $\mathcal S_1$. Using \eqref{equ-degree-correlation-2}, on the event $\mathcal{E}_{t+1}$ we have 
\begin{align*}
    \Hat{\mathbb{E}} \left[ \langle \eta^{(t+1)}_k, g_t\Tilde{D}^{(t+1)}_v \rangle \langle \eta^{(s)}_m, g_{s-1}\Tilde{\mathsf{D}}^{(s)}_{\pi(v)} \rangle \right] = \eta^{(t+1)}_k \mathrm{P}_{\Gamma,\Pi}^{(t+1,s)} \left(\eta^{(s)}_m \right)^{*} + o(\Delta_{t+1}) \leq K_{t+1} \Delta_{t+1}   \,.
\end{align*}
Thus, recalling $\| \mathbf{P}_t \|_{\mathrm{op}} \leq 100$ we have
\begin{align}
    \mathcal S_1 \leq \ & \sum_{v} K_{t+1}^2 \Delta_{t+1}^2 \Big| \sum_{w} \mathbf{P}_t ((s,m,\pi(v));(r,l,w)) g_{r-1} Y_r(l,w) \Big|^2 \nonumber \\
    \leq \ & K_{t+1}^2 \Delta_{t+1}^2 \| \mathbf{P}_t \|^2_{\mathrm{op}} \sum_{w} \big| g_{r-1} Y_r(l,w) \big|^2 \overset{\mathcal{A}_t}{\leq} 10^4 K_{t+1}^4 \Delta_{t+1}^2 n \,. \label{equ-bound-proj-v=u}
\end{align}
Next we bound $\mathcal{S}_2$. A straightforward calculation yields that
\begin{align*}
    \mathcal S_2
    = \ & \frac{  K_{t+1}^2 ((1-2\mathfrak{a}_s) |\Pi^{(s)}_j \setminus \mathrm{BAD}_{t}| +\mathfrak{a}_s^2 n)} {(\mathfrak{a}-\mathfrak{a}^2) (\mathfrak{a}_s-\mathfrak{a}_s^2)n^2}  \\
    & * \Big( \sum_{u,w} ( \mathbf{1}_{u \in \Gamma^{(t+1)}_i} -\mathfrak{a})  \mathbf{P}_t((s,m,\pi(u));(t,l,w)) g_{t-1} Y_t(l,w) \Big)^2  \\
    \leq \ & \frac{ K_{t+1}^2 }{\mathfrak{a} n} \Big( \sum_{u,w}  ( \mathbf{1}_{u \in \Gamma^{(t+1)}_i} -\mathfrak{a} )  \mathbf{P}_t((s,m,\pi(u));(t,l,w)) g_{t-1} Y_t(l,w) \Big)^2 \,. 
\end{align*}
We tilt the measure on $\{ g_{t-1} Y_t | \mathfrak{S}_{t-1}; \mathrm{BAD}_t \}$ to $\{ \Check{Y}_t  \}$ again. Write $X_u = \Check{Y}_t(l,u) $ and $\mathtt{b}_u =  ( \mathbf{1}_{ \{ | \frac{1}{\sqrt{2}} ( \sqrt{12/K_t} \langle \beta^{(t)}_i, \Tilde{W}^{(t)}_u \rangle +  \langle \sigma^{(t)}_i, \Check{D}^{(t)}_u \rangle + \langle \sigma^{(t)}_i, b_{t-1}{D}^{(t)}_u \rangle ) | \geq 10 \}} -\mathfrak{a} )$. We will first bound
\begin{align}
    \mathbb{P} \Big( \big| \sum_{u,w} \mathtt{b}_u \mathbf{P}_t((s,m,\pi(u));(t,l,w)) X_w \big| > \frac{1}{10} \mathfrak{a} K_{t+1} n \Delta_{t+1} \Big) \,.  \label{equ-tail-quadratic}
\end{align}
Note that
\begin{align}
    \eqref{equ-tail-quadratic} \leq \ & \mathbb{P} \Big( \big| \sum_{u} \mathbf{P}_t((s,m,\pi(u));(t,l,u)) \mathtt{b}_u X_u \big| > \frac{1}{30} \mathfrak{a} K_{t+1} n \Delta_{t+1} \Big) \label{equ-tail-quadratic-part-I} \\
    + \ & \mathbb{P} \Big( \big| \sum_{u \neq w} \mathbf{P}_t((s,m,\pi(u));(t,l,w)) \mathbb{E}[\mathtt{b}_u] X_w \big| > \frac{1}{30} \mathfrak{a} K_{t+1} n \Delta_{t+1} \Big) \label{equ-tail-quadratic-part-II} \\
    + \ & \mathbb{P} \Big( \big| \sum_{u \neq w} \mathbf{P}_t((s,m,\pi(u));(t,l,w)) (\mathtt{b}_u - \mathbb{E}[\mathtt{b}_u] ) X_w \big| > \frac{1}{30} \mathfrak{a} K_{t+1} n \Delta_{t+1} \Big) \label{equ-tail-quadratic-part-III} \,.
\end{align}
To bound \eqref{equ-tail-quadratic-part-I}, note that $ \mathtt{b}_u X_u$'s for different $u$ are independent, and the entries of $\mathbf{P}_t$ are bounded by $100$ (since $\| \mathbf{P}_t \|_{\mathrm{op}} \leq 100$). Also, we have $|\langle \sigma^{(t)}_i, b_{t-1}{D}^{(t)}_u \rangle| \leq K_t \Delta_t^{10}$ for $u \not \in \mathrm{BIAS}_t$, and thus $| \mathbb{E}[ \mathtt{b}_u X_u ] | \lesssim K_t \Delta_t^{10}$ (this is the place where we use the symmetry from taking the absolute values as discussed below \eqref{equ-def-iter-sets}; in fact $| \mathbb{E}[ \mathtt{b}_u X_u ] |$ would have been 0 if $\langle \sigma^{(t)}_i, b_{t-1}{D}^{(t)}_u \rangle$ were 0). We then get that $\eqref{equ-tail-quadratic-part-I}  \leq \exp\{ - \Omega( n \mathfrak{a}^2 K_{t+1}^2 \Delta_{t+1}^2) \}$ by Chernoff bound. To bound \eqref{equ-tail-quadratic-part-II}, note that $\sum_{u \neq w} \mathbf{P}_t((s,m,\pi(u));(t,l,w)) \mathbb{E}[\mathtt{b}_u] X_w$ is a mean-zero Gaussian variable, with variance bounded by
\begin{align*}
    \sum_{w} \Big( \sum_{u \neq w} \mathbf{P}_t((s,m,\pi(u));(t,l,w)) \mathbb{E}[\mathtt{b}_u] \Big)^2 \leq \| \mathbf{P}_t \|_{\mathrm{op}}^2 \sum_{u} (\mathbb{E}[\mathtt{b}_u])^2 \leq n K_t^2 \Delta_t^{20}
\end{align*}
using $|\mathbb{E}[\mathtt{b}_u]| \leq K_t \Delta_t^{10}$ again. Thus we get $\eqref{equ-tail-quadratic-part-II} \leq \exp \{ - \frac{ (n \Delta_{t+1})^2 }{ n K_{t+1}^2 \Delta_t^{20} } \} \leq \exp \{ - n \}$.
We now bound \eqref{equ-tail-quadratic-part-III}. Note that $\{ \mathtt{b}_u - \mathbb{E}[ \mathtt{b}_u ], X_u \}$ are mean-zero sub-Gaussian variables, and are independent with $\{ \mathtt{b}_w - \mathbb{E}[ \mathtt{b}_{w} ], X_w : w \neq u \}$. In addition, we have $\| \mathbf{P}_t \|_{\mathrm{op}} \leq 100, \| \mathbf{P}_t \|^2_{\mathrm{HS}} \leq n K_t \| \mathbf{P}_t \|_{\mathrm{op}}^2 \leq K_t^2 n$. Thus, we can apply Lemma~\ref{lemma-modify-Hanson-Wright} and get that
\begin{align*}
    & \eqref{equ-tail-quadratic-part-III} \leq 2 \exp \Big\{ -\Omega(1)  \min \Big(  \frac{(\mathfrak{a} K_{t+1}n\Delta_{t+1})^2}{\| \mathbf{P}_t \|^2_{\mathrm{HS}}}, \frac{\mathfrak{a} K_{t+1} n\Delta_{t+1}}{ \| \mathbf{P}_{t} \|_{\mathrm{op}} } \Big) \Big\} \leq 2 \exp \{ - \Omega( n \Delta_{t+1}^2 K_{t+1} ) \} \,.
\end{align*}
Combining bounds on \eqref{equ-tail-quadratic-part-I} and \eqref{equ-tail-quadratic-part-II}, we have $\eqref{equ-tail-quadratic} \leq O(e^{-\Omega(n \Delta_{t+1}^2 K_{t+1} )})$. Thus, recalling definition of $\mathcal B_t$ and averaging over $\mathfrak S_{t-1}$ and $\mathrm{BAD}_t$ we have that
\begin{align}
    \mathbb{P}( \mathcal S_2 > \tfrac{1}{10} K_{t+1}^{4} n \Delta_{t+1}^2 ) \leq \exp \{ - n \Delta_{t+1}^2  \} \,. \label{equ-bound-proj-r=t}
\end{align}
It remains to bound $\mathcal S_3$. Again, using Cauchy-Schwartz inequality we have
\begin{align*}
    \mathcal S_3 \leq \frac{ K_{t+1}^2 }{ \mathfrak{a}n } \Big( \sum_{u,w} ( \mathbf{1}_{u \in \Gamma^{(t+1)}_i} -\mathfrak{a}) \mathbf{P}_t((s,m,\pi(u));(r,l,w)) g_{r-1} Y_r(l,w) \Big)^2 \,.  
\end{align*}
Again, by tilting the measure to $\{ \Check{Y}_t \}$, we get that $\mathbb{P}( \mathcal{S}_3 > \frac{1}{10} K_{t+1}^{4} n \Delta_{t+1}^2 )$ is bounded by (recall from above that $\mathtt{b}_u =  ( \mathbf{1}_{ \{ | \frac{1}{\sqrt{2}} ( \sqrt{12/K_t} \langle \beta^{(t)}_i, \Tilde{W}^{(t)}_u \rangle +  \langle \sigma^{(t)}_i, \Check{D}^{(t)}_u \rangle + \langle \sigma^{(t)}_i, b_{t-1}{D}^{(t)}_u \rangle ) | \geq 10 \}} -\mathfrak{a})$)
\begin{align*}
    \exp \{ n K_t^{30} \Delta_t^2 \} \cdot
    \mathbb{P} \Big( & \sum_{u,w} \mathtt{b}_u \mathbf{P}_t ((s,m,\pi(u));(t,l,w)) g_{r-1} Y_r(l,w) > \frac{1}{10} \mathfrak{a} n K_{t+1} \Delta_{t+1} \Big) \,.
\end{align*}
We can write $ \sum_{u,w} \mathtt{b}_u \mathbf{P}_t ((s,m,\pi(u));(t,l,w)) g_{r-1} Y_r(l,w) =  \sum_{u} \lambda_{u} \mathtt{b}_u $, where $\lambda_u$ is given by $\lambda_u = \sum_{w} \mathbf{P}_t((s,m,\pi(u));(t,l,w)) g_{r-1} Y_r(l,w)$. We see that $\lambda_u$'s satisfy that
\begin{align*}
    \sum_{u} \lambda_u^2 &= \sum_{u} \Big( \sum_{w} \mathbf{P}_t((s,m,\pi(u));(t,l,w)) g_{r-1} Y_r(l,w) \Big)^2 \\
    & \leq \| \mathbf{P}_t \|_{\mathrm{op}}^2 \sum_{w} \big( g_{r-1} Y_r(l,w) \big)^2 \overset{\mathcal{A}_t}{\leq} 10^4 n  \,.
\end{align*}
Using $| \langle \sigma^{(t)}_i, b_{t-1}{D}^{(t)}_u \rangle | \leq K_{t} \Delta_{t}^{10}$ again, we have $\mathbb{E}[\mathtt{b}_u] = O(K_t \Delta_t^{10})$. Thus, we get that $|\mathbb E[\sum_{u} \lambda_u \mathtt{b}_u] | \leq \sum_{u \not \in \mathrm{BAD}_{t-1}} |\lambda_{u}| K_t \Delta_{t}^{10} \ll n K_{t} \Delta_t^2$. Combined with Azuma-Hoeffding inequality, it yields that
\begin{align*}
    & \mathbb{P} \big( \sum_{u} \lambda_{u} \mathtt{b}_u > \tfrac{1}{10} \mathfrak{a} K_{t+1} n \Delta_{t+1} \big) \leq 2 \exp \Big\{ - \frac{ ( \frac{1}{10} \mathfrak{a} K_{t+1} n \Delta_{t+1} )^2 }{ 4\sum_{u} \lambda_u^2} \Big\} \leq 2 \exp \{ - \Omega(n)  K_{t+1} \Delta_{t+1}^2 \} \,.
\end{align*}
This then implies that (by using $\mathcal B_t$ and averaging again)
\begin{align}
    \mathbb{P}( \mathcal S_3 > \frac{1}{10} K_{t+1}^{4} n \Delta_{t+1}^2 ) \leq \exp \{ - n \Delta_{t+1}^2  \} \,.
    \label{equ-bound-proj-r<t}
\end{align}
Combined with \eqref{equ-bound-proj-v=u} and  \eqref{equ-bound-proj-r=t}, this completes the proof of Lemma \ref{lemma-one-part-projection}.
\end{proof}

\subsubsection{Conclusion}
By putting together \eqref{equ-prob-mathcal-T-happens}, \eqref{equ-prob-B-happens}, \eqref{equ-prob-mathcal-A-t-happens}, \eqref{equ-mathcal-E-t+1-holds} and \eqref{equ-prob-mathcal-H-t+1}, we have proved {\bf Step 1}--{\bf Step 5} listed at the beginning of this subsection. In addition, since $t^*\leq \log \log n$, our quantitative bounds imply that all these hold simultaneously for $t = 0, \ldots, t^*$ with probability $1-o(1)$. By the inductive logic explained at the beginning of this subsection, we complete the proof of Proposition~\ref{prop-cardinality-BAD}. We also point out that in addition we have shown that 
\begin{equation}\label{eq-prob-mathcal-E-diamond}
\mathbb P(\mathcal{A}_{t^*}, \mathcal B_{t^*}, \mathcal{H}_{t^*}, \mathcal{T}_{t^*} \cap \mathcal E_{t^*}) = 1-o(1)\,,
\end{equation}
 which will be used in Section \ref{sec:almost-exact-matching}.

\section{Almost exact matching} \label{sec:almost-exact-matching}

In this section we show that on the event $\mathcal{E}_{\diamond}= \mathcal{A}_{t^*} \cap \mathcal{B}_{t^*} \cap \mathcal{E}_{t^*} \cap \mathcal{H}_{t^*} \cap \mathcal{T}_{t^*}$, our algorithm matches all but a vanishing fraction of vertices with probability $1-o(1)$, thereby proving Proposition~\ref{prop-almost-exact-matching} (recall \eqref{eq-prob-mathcal-E-diamond}). For notational convenience, we will drop $t^*$ from subscripts (unless we wish to emphasize it). That is,  we will write $\varepsilon, K, \Delta, \eta_l, D_v, W_v,  \Check{Y}(k,v), \mathrm{BAD}$ instead of $\varepsilon_{t^*}, K_{t^*}, \Delta_{t^*}, \eta^{(t^*)}_l, D^{(t^*)}_v, W^{(t^*)}_v, \Check{Y}_{t^*} (k,v), \mathrm{BAD}_{t^*}$. 

In light of \eqref{equ-def-matching-ver}, we define $\mathtt U$ to  be the collection of $v\in V$ such that
\begin{align*}
    \sum_{k=1}^{\frac{1}{12}K} \big( W_v(k) +  \langle \eta_k, D_v \rangle \big) \big( \mathsf{W}_{\pi(v)}(k) + \langle \eta_k, \mathsf{D}_{\pi(v)} \rangle \big) < \frac{1}{100} K \varepsilon \,,
\end{align*}
and we define $\mathtt{E}$ to be the collection of directed edges $(u, w)\in V\cap \mathrm{BAD}^c \times V\cap \mathrm{BAD}^c$ (with $u\neq w$) such that 
\begin{align*}
    \sum_{k=1}^{\frac{1}{12}K} \big( W_u(k)+ \langle \eta_k, D_u \rangle \big) \big( \mathsf{W}_{\pi(w)}(k) + \langle \eta_k, \mathsf{D}_{\pi(w)} \rangle \big) \geq \frac{1}{100} K \varepsilon \,.
\end{align*}
It is clear that $\mathtt U$ and $\mathtt E$ will potentially lead to mis-matching for our algorithm in the finishing stage. As a result, our proof requires bounds on them.

\begin{Lemma}\label{lem-self-loops}
We have $\mathbb P(|\mathtt U| \geq n/\log n; \mathcal E_\diamond) = o(1)$.
\end{Lemma}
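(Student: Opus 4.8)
\textbf{Proof plan for Lemma~\ref{lem-self-loops}.} The plan is to show that for a typical vertex $v$, the ``self-correlation'' statistic $\sum_{k\le K/12}(W_v(k)+\langle\eta_k,D_v\rangle)(\mathsf W_{\pi(v)}(k)+\langle\eta_k,\mathsf D_{\pi(v)}\rangle)$ is well concentrated above the threshold $\tfrac{1}{100}K\varepsilon$, and then to deduce the bound on $|\mathtt U|$ by a first-moment (Markov) argument. First I would reduce to the vertices $v\notin\mathrm{BAD}$; since $|\mathrm{BAD}|\ll n\vartheta^{10}\Delta^{10}\ll n/\log n$ on $\mathcal T_{t^*}$, these contribute negligibly to $|\mathtt U|$ and can simply be thrown into the exceptional set. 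For $v\notin\mathrm{BAD}$ I would work under the conditioning on $\mathfrak S_{t^*-1}$ and $\mathrm{BAD}_{t^*}$ (writing $D_v = g_{t^*-1}D_v + b_{t^*-1}D_v$ as in the body of the paper), observe that the ``biased'' part $b_{t^*-1}D_v$ has norm at most $K\Delta^{10}$ since $v\notin\mathrm{BIAS}$, so it perturbs the statistic by at most $\mathrm{poly}(K)\Delta^{10}$ times the sizes of the other factors — negligible compared to $K\varepsilon$ in view of \eqref{equ-K-Varepsilon-bound} and $\Delta\ll 1$. This leaves the ``free'' part, whose law I would tilt to the Gaussian substitute: using $\mathcal B_{t^*}$ (the density-ratio bound, which costs only a factor $\exp\{nK^{30}\Delta^2\}$) I can compute the relevant probabilities under the process $\{\Check Y(k,v),\Check{\mathsf Y}(k,v)\}$ of \eqref{eq-def-Check-Gaussian-process}, where each pair $(\Check Y(k,v),\Check{\mathsf Y}(k,\pi(v)))$ is a bivariate Gaussian with unit variances and covariance $\Hat\rho\,\eta_k\Psi^{(t^*)}\eta_k^*$.

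Under the Gaussian model the computation becomes explicit: $W_v(k)+\langle\eta_k,\Check D_v\rangle$ and $\mathsf W_{\pi(v)}(k)+\langle\eta_k,\Check{\mathsf D}_{\pi(v)}\rangle$ are jointly Gaussian with variance $\approx 2$ each (the $W$'s contribute variance $1$, the $\langle\eta_k,\Check D_v\rangle$ contribute variance $\approx 1$ by \eqref{equ-vector-unit}, \eqref{equ-degree-variance-1}) and covariance $\approx \Hat\rho\,\eta_k\Psi^{(t^*)}\eta_k^*\in[\tfrac12\Hat\rho\varepsilon,2\Hat\rho\varepsilon]$, after accounting for the $\tfrac1{\sqrt2}$ normalization the product statistic has conditional mean $\asymp K\cdot\Hat\rho\varepsilon\cdot\tfrac12 \gg \tfrac1{100}K\varepsilon$ — wait, one must be careful here: the definition \eqref{equ-def-matching-ver} uses $W_v(k)+\langle\eta_k,D_v\rangle$ without the $\tfrac1{\sqrt2}$ smoothing factor present in \eqref{equ-def-iter-sets}, so the relevant covariance is $\approx\varepsilon_{t^*+1}/\varepsilon_{t^*}$-free; more precisely by the definition of $\varepsilon_{t^*}$ and Remark~\ref{remark-conditional-Gaussian}, the expected value of the sum over $k$ of these products is comparable to $K\varepsilon$ up to an absolute constant strictly larger than $1/100$. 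I would then bound the lower tail: writing each summand's product as a quadratic form in a $2\times K/12$-dimensional standard Gaussian vector with a block-diagonal coefficient matrix whose blocks have operator norm $O(\varepsilon+1)=O(1)$ and whose Hilbert--Schmidt norm squared is $O(K)$, Hanson--Wright (Lemma~\ref{lemma-Hanson-Wright}) gives $\mathbb P(\text{sum}< \tfrac1{100}K\varepsilon)\le 2\exp\{-c\,\min(K^2\varepsilon^2/K, K\varepsilon/1)\}=2\exp\{-\Omega(K\varepsilon^2)\}$ using $\varepsilon\le 1$ and that the gap between the mean and $\tfrac1{100}K\varepsilon$ is $\Omega(K\varepsilon)$. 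Since $K\varepsilon^2=K_{t^*}\varepsilon_{t^*}^2\ge(\log n)^{1.9}$ by \eqref{equ-K-Varepsilon-bound}, this tail is $\le\exp\{-(\log n)^{1.8}\}$, say.

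Combining the Gaussian tail bound with the density-ratio cost $\exp\{nK^{30}\Delta^2\}$ is \emph{not} directly enough here — that cost is far too large to absorb a per-vertex probability of $\exp\{-(\log n)^{1.8}\}$. So instead I would use $\mathcal B_{t^*}$ at the level of the \emph{whole} collection $\{g_{t^*-1}Y_{t^*},g_{t^*-1}\mathsf Y_{t^*}\}$ to transfer a \emph{union} event: more precisely, I would bound $\mathbb E[|\mathtt U|]$ directly under the Gaussian model and then transfer. Under the Gaussian model, $\mathbb E_{\Check{}}[|\{v\notin\mathrm{BAD}: v\in\mathtt U\}|]\le n\cdot\exp\{-\Omega(K\varepsilon^2)\}=o(n/\log n)$ by the per-vertex bound above, so by Markov $\mathbb P_{\Check{}}(|\mathtt U\setminus\mathrm{BAD}|\ge n/(2\log n))=o(1)$ — but to beat the density-ratio factor I must instead take the threshold for the ``bad event'' to be of order $n\exp\{-\Omega(K\varepsilon^2)\}\cdot e^{(\log n)^{?}}$, i.e.\ I should directly estimate $\mathbb P_{\Check{}}(|\mathtt U\setminus\mathrm{BAD}|\ge n/\log n)$. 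Since $\mathtt U$-membership of distinct $v$ is, under $\Check{}$, determined by disjoint blocks of independent Gaussians $\{W_v,\mathsf W_{\pi(v)},\Check D_v,\Check{\mathsf D}_{\pi(v)}\}$ (the off-diagonal covariances of the $\Check D$ process vanish for $u\ne v$, see \eqref{eq-def-Check-Gaussian-process}), the indicators $\mathbf 1_{v\in\mathtt U}$ are \emph{independent} across $v$, each with probability $\le\exp\{-\Omega(K\varepsilon^2)\}=:p_0$. Hence by a Chernoff bound $\mathbb P_{\Check{}}(|\mathtt U\setminus\mathrm{BAD}|\ge n/\log n)\le \exp\{-\Omega(n/\log n)\}$ (since $p_0\ll 1/\log n$), and this is small enough that multiplying by $\exp\{nK^{30}\Delta^2\}$ (recall $K^{30}\Delta^2\le e^{-(\log\log n)^7}$ by \eqref{eq-bound-Delta-t}) still leaves $\exp\{-\Omega(n/\log n)\}=o(1)$. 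Averaging over the realization of $\mathfrak S_{t^*-1}$ and $\mathrm{BAD}_{t^*}$ and adding the contribution $|\mathrm{BAD}|\ll n/\log n$ from bad vertices completes the proof. The main obstacle is precisely this last transfer: the density-ratio bound $\mathcal B_{t^*}$ is enormous (exponential in $n$), so one cannot afford to transfer a per-vertex tail bound and then union; one must first use independence in the Gaussian model to get an exponentially-small-in-$n$ probability for the aggregate event $\{|\mathtt U|\ge n/\log n\}$, and only then apply the density comparison — the bookkeeping of which ``bad'' vertex sets ($\mathrm{BAD}$, $\mathrm{BIAS}$) to pre-condition on, and checking that the leftover biased terms are genuinely negligible against $K\varepsilon$, requires care but no new ideas beyond those already developed in Sections~\ref{sec:density-compare} and~\ref{sec:Gaussian-analysis}.
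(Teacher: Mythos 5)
Your proposal is correct and follows essentially the same route as the paper: condition on $\mathfrak S_{t^*-1}$ and $\mathrm{BAD}_{t^*}$, absorb the small ``biased'' perturbation $b_{t^*-1}D_v$ since $v\notin\mathrm{BIAS}$, transfer to the $\Check{}$-process via $\mathcal B_{t^*}$, use per-vertex independence under $\Check{}$ together with a Hanson--Wright bound of order $\exp\{-\Omega(K\varepsilon^2)\}\le\exp\{-(\log n)^{1.8}\}$, and then aggregate to an $\exp\{-\Omega(n(\log n)^{0.8})\}$ bound before paying the density-ratio cost. The only cosmetic difference is that the paper fixes a realization $U\subset\mathtt U\cap\mathrm{BAD}^c$ of size $\ge n/(2\log n)$ and unions over the at most $2^n$ such subsets, whereas you use a Chernoff bound on $|\mathtt U\cap\mathrm{BAD}^c|$ directly; both give the same conclusion, and you correctly identified the crucial point that one must obtain an exponentially-small-in-$n$ bound in the Gaussian model \emph{before} applying the $\mathcal B_{t^*}$ tilting.
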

\begin{proof}
We assume $\mathcal E_\diamond$. If $|\mathtt U| > \frac{n}{\log n}$,  we have $|\mathtt U \cap \mathrm{BAD}^{c} | > \frac{n}{2 \log n}$ using $|\mathrm{BAD}| \ll \frac{n}{(\log n)^2}$. Let $U$ be a realization for $\mathtt U \cap \mathrm{BAD}^{c}$. Then $\langle \eta_k, b_{t^*-1} D_v \rangle \ll \Delta^{10}$  for $v \in U$. Thus,
\begin{align}
    \sum_{k=1}^{\frac{1}{12}K} \big( g_{t^*-1} Y(k,v) + o(\Delta^{10}) \big) \big( g_{t^*-1} \mathsf{Y}(k,\pi(v)) + o(\Delta^{10}) \big) < \frac{1}{100} K \varepsilon \mbox{ for } v \in U \,.
    \label{equ-event-self-edge}
\end{align}
We again use the tilted measure. By the definition of $\{ \langle \eta_k, \Check{D}_v \rangle, \langle \eta_k, \Check{\mathsf{D}}_{\mathsf{v}} \rangle \}$, the events 
\begin{align*}
    \Big\{  \sum_{k=1}^{\frac{1}{12}K} \big( \Check{Y}(k,v) + o(\Delta^{10}) \big) \big( \Check{\mathsf{Y}}(k, \pi(v))  + o(\Delta^{10}) \big) < \frac{1}{100} K \varepsilon \Big\}
\end{align*}
are independent for different $v$, and each has probability at most (by $\Delta \leq e^{-(\log \log n)^8} \ll \varepsilon$ and Lemma~\ref{lemma-Hanson-Wright} again)
\begin{align*}
    2 \exp \big\{ -\frac{ (K \varepsilon)^2 }{ K } \big\} \leq \exp \{ - (\log n)^{1.8} \} \,.
\end{align*}
Recalling the definition of $\mathcal B_{t^*}$ and recalling that $\mathcal E_\diamond \subset \mathcal B_{t^*}$, we derive that
\begin{align*}
   \mathbb P(\eqref{equ-event-self-edge}; \mathcal E_\diamond\mid \mathfrak S_{t^* - 1}, \mathrm{BAD}_{t^*}) \leq \exp \{ n \Delta^2 \}  \cdot \exp \{ - n (\log n)^{0.8}  \} \ll \exp \{ - \tfrac{1}{2} n(\log n)^{0.8}  \} \,,
\end{align*}
Since the enumeration for possible realizations of $\mathtt U$ is at most $2^n$, this completes the proof by a simple union bound.
\end{proof}

\begin{Lemma}\label{lem-bad-matching}
On $\mathcal E_\diamond$ with probability $1-o(1)$ we have the following: any subset of $\mathtt E$ has cardinality at most $\frac{n}{\log n}$ if each vertex is incident to at most one edge in this subset.
\end{Lemma}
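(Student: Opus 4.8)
The plan is to argue exactly as in the proof of Lemma~\ref{lem-self-loops}, with the extra combinatorial twist that now we must simultaneously control a whole matching inside $\mathtt E$ rather than a single set of self-loops. Fix a set $M$ of pairwise vertex-disjoint directed edges of size $m = \lceil n/\log n \rceil$; such a set involves $2m$ distinct vertices, and we want to bound the probability (on $\mathcal E_\diamond$) that $M \subset \mathtt E$. First I would discard the contribution of $\mathrm{BAD}$: since $|\mathrm{BAD}| \ll n/(\log n)^2$ on $\mathcal E_\diamond$, at least $m/2$ of the edges in $M$ (say the sub-collection $M'$) have both endpoints outside $\mathrm{BAD}$, and $M'$ is still a vertex-disjoint collection; it suffices to bound $\mathbb P(M' \subset \mathtt E)$. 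For each $(u,w)\in M'$, since $u,w\notin\mathrm{BAD}$ we have $\|b_{t^*-1}D_u\|,\|b_{t^*-1}\mathsf D_{\pi(w)}\|\ll\Delta^{10}$ by \eqref{equ-def-set-BIAS}, so the defining inequality for $\mathtt E$ implies
\begin{align*}
    \sum_{k=1}^{\frac{1}{12}K}\big(g_{t^*-1}Y(k,u)+o(\Delta^{10})\big)\big(g_{t^*-1}\mathsf Y(k,\pi(w))+o(\Delta^{10})\big)\ \geq\ \frac{1}{100}K\varepsilon\,.
\end{align*}

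Next I would pass to the tilted Gaussian measure $\{\Check Y,\Check{\mathsf Y}\}$ exactly as in Lemma~\ref{lem-self-loops}, paying the factor $\exp\{n\Delta^2\}$ afforded by $\mathcal B_{t^*}$ (here I condition on $\mathfrak S_{t^*-1}$ and $\mathrm{BAD}_{t^*}$ and work under $\Hat{\mathbb P}$, treating $\{\Gamma^{(s)}_k,\Pi^{(s)}_k,\mathrm{BAD}\}$ as deterministic, which is legitimate since $M'$ is a vertex-disjoint collection of fixed vertices). The key gain from vertex-disjointness is \emph{independence}: because the Gaussian process $\{\langle\eta_k,\Check D_v\rangle,\langle\eta_k,\Check{\mathsf D}_{\mathsf v}\rangle\}$ has the only nonzero cross-covariance between $\langle\eta_k,\Check D_v\rangle$ and $\langle\eta_k,\Check{\mathsf D}_{\pi(v)}\rangle$ (same $v$), the events indexed by the distinct pairs $(u,w)\in M'$ involve disjoint blocks of Gaussian coordinates and are therefore mutually independent. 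For a single pair, $\sum_k(\Check Y(k,u)+o(\Delta^{10}))(\Check{\mathsf Y}(k,\pi(w))+o(\Delta^{10}))$ is a quadratic form in independent standard Gaussians whose mean is $O(K\Delta^{10})\ll K\varepsilon$ (since $u\neq w$ there is essentially no correlation; $\varepsilon\gg\Delta$ by \eqref{eq-bound-Delta-t}), so by Lemma~\ref{lemma-Hanson-Wright} the probability that it is $\geq\frac1{100}K\varepsilon$ is at most $2\exp\{-\Omega(K\varepsilon^2)\}\leq\exp\{-(\log n)^{1.8}\}$, using $K\varepsilon^2=K_{t^*}\varepsilon_{t^*}^2\geq(\log n)^{1.9}$ from \eqref{equ-K-Varepsilon-bound}. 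Multiplying over the $\geq m/2\geq n/(3\log n)$ independent pairs gives $\Hat{\mathbb P}(M'\subset\mathtt E)\leq\exp\{-\tfrac{n}{3\log n}(\log n)^{1.8}\}=\exp\{-\tfrac13 n(\log n)^{0.8}\}$; after the tilt this is still $\leq\exp\{-\tfrac14 n(\log n)^{0.8}\}$.

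Finally I would take a union bound over all vertex-disjoint edge-sets $M$ of size $m$. The number of such sets is crudely at most $\binom{n^2}{m}\leq n^{2m}=\exp\{2m\log n\}\leq\exp\{3n\}$, which is dwarfed by $\exp\{\tfrac14 n(\log n)^{0.8}\}$, so the union bound over $M$ (and then a further harmless union bound over the realizations of $\mathrm{BAD}_{t^*}$ and over the choice of $M'\subset M$, each contributing at most $\exp\{O(n)\}$) yields $\mathbb P(\exists\text{ vertex-disjoint }M\subset\mathtt E,\ |M|\geq n/\log n;\ \mathcal E_\diamond)=o(1)$, as desired. I expect the only genuinely delicate point to be the clean verification that the tilted-measure events for distinct pairs are independent and that the per-pair mean is negligible compared to $\tfrac1{100}K\varepsilon$ — i.e.\ that the $o(\Delta^{10})$ bias terms and the (tiny, since $u\neq w$) cross-covariances really do not spoil the Hanson--Wright tail; everything else is a routine repetition of the argument already carried out for $\mathtt U$ in Lemma~\ref{lem-self-loops} combined with the enumeration bound.
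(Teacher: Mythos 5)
Your proposal follows essentially the same route as the paper's proof: pass to the tilted Gaussian measure $(\Check Y,\Check{\mathsf Y})$ using $\mathcal B_{t^*}$, exploit vertex-disjointness to obtain independence of the per-pair events, apply Hanson--Wright to each pair to get a per-pair tail of $\exp\{-(\log n)^{1.8}\}$, and close with a union bound over candidate matchings. The paper phrases the union bound by fixing an ordered vertex sequence $v_1,\dots,v_M$ with edges $(v_{2i-1},v_{2i})$ rather than a matching $M$ with a discarded $\mathrm{BAD}$ portion, but this is a cosmetic difference and the enumeration counts agree up to constants.
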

\begin{proof}
Suppose otherwise there exists $U = \{ v_1, \ldots, v_M  \} \subset V \cap \mathrm{BAD}^{c} $  with $M=\frac{2n}{\log n}$ such that for all $1\leq i \leq M/2$
\begin{align}
    \mbox{$\sum_{k=1}^{K/12}$} \big( g_{t^*-1} Y(k,v_{2i-1}) + o(\Delta^{10}) \big) \big(  g_{t^*-1} \mathsf{Y}(k,\pi(v_{2i})) + o(\Delta^{10}) \big) \geq \tfrac{1}{100} K \varepsilon \,.
    \label{equ-event-edge-matching}
\end{align}
We again tilt the measure. Note that the events 
\begin{align*}
    \Big\{  \mbox{$\sum_{k=1}^{K/12}$} \Big( \Check{Y}(k,v_{2i-1}) + o(\Delta^{10}) \big) \big( \Check{\mathsf{Y}} (k,\pi(v_{2i}))  + o(\Delta^{10}) \Big) \geq \tfrac{1}{100} K \varepsilon \Big\} \mbox{ for } 1 \leq i \leq M/2
\end{align*}
are independent and each occurs with probability at most $\exp \{ - (\log n)^{1.8} \}$. Therefore, by similarly applying $\mathcal B_{t^*}$ and applying a union bound, we complete the proof of the lemma.
\end{proof}
We are now ready to provide the proof of  Proposition~\ref{prop-almost-exact-matching}.
\begin{proof}[Proof of  Proposition~\ref{prop-almost-exact-matching}]
By Lemmas~\ref{lem-self-loops} and \ref{lem-bad-matching}, we assume without loss of generality that the events described in these two lemmas both occur. Let $V_{\mathrm{fail}} = \{ v \in V : \Hat{\pi}(v) \neq \pi(v) \}$. Suppose in the finishing step (i.e, for the step of computing \eqref{equ-def-matching-ver}) our algorithm processes $V_{\mathrm{fail}} \setminus \mathtt U$ in the order of $w_1, w_2, \ldots, w_m$. For $w_k \not \in \mathtt U$, in order to have $\Hat{\pi}(w_k) \neq \pi(w_k)$, either our algorithm assigns a wrong matching to $w_k$ or at the time when processing $w_k$ the vertex $\pi(w_k)$ has already been matched to some other vertex. We then construct a directed graph $\overrightarrow{H}$ on vertices $\{ w_1, w_2, \ldots, w_m  \} \cup \mathtt U$ as follows: for each $v \in \{ w_1, w_2, \ldots, w_m  \} \cup \mathtt U$, if the finishing step puts $v$ into $\mathrm{SUC}$ and matches $v$ to some $\pi(u)$ with $\pi(u) \neq \pi(v)$, then we connect a directed edge from $v$ to $u$. Note our algorithm will not match a vertex twice, so all vertices have in-degree and out-degree both at most 1. Also, for $1 \leq k \leq m$, if $w_k$ has out-degree 0, then $\pi(w_k)$ must have been matched to some $u$ and thus there is a directed edge $(u,w_k) \in \overrightarrow{H}$. Thus, the directed graph $\overrightarrow{H}$ is a collection of non-overlapping directed chains. Since there are at least $\frac{m}{2}$ edges in $\overrightarrow{H}$ (recall that each $w_k$ is incident to at least one edge in $\overrightarrow{H}$), we can get a matching with cardinality at least $\frac{m}{4}$. Since $|\mathrm{BAD}| \ll n/(\log n)^2$, we can then get a matching restricted on $V \cap \mathrm{BAD}^c$ with cardinality at least $m/4 - n/(\log n)^2$. By the event in Lemma~\ref{lem-bad-matching}, we see that $m/4 - n/(\log n)^2 \leq n/\log n$, completing the proof.
\end{proof}

\appendix
\section{Index of notation}
\label{appendix:index}
 
Here we record some commonly used symbols in the paper, along with their meaning and the location where they are first defined. Local notations will not be included.

\begin{multicols}{2}
\begin{itemize}
\item $\overrightarrow{G},\overrightarrow{\mathsf G}$: pre-processed graphs; Subsection~\ref{sec:preprocessing}. 
\item $\Hat{q}$: pre-processed edge density; Subsection~\ref{sec:preprocessing}. 
\item $\Hat{\rho}$: pre-processed edge correlation; Subsection~\ref{sec:preprocessing}. 
\item $\iota_{\mathrm{lb}}$, $\iota_{\mathrm{ub}}$: lower and upper bounds for the increment of $\phi$; \eqref{eq-def-iota-ub-lb}.
\item $\kappa$: number of sets generated in initialization; \eqref{eq-kappa-choice}.
\item $\chi$: depth of neighborhood in initialization; \eqref{eq-def-chi}.
\item $\aleph^{(a)}_k,\Upsilon^{(a)}_k$: $a$-neighborhood of the seeds in initialization; \eqref{equ-def-initial-aleph-Upsilon}, \eqref{equ-def-iter-aleph-Upsilon}.
\item $\vartheta_a,\varsigma_a$: fraction of $a$-neighborhood and their interactions; \eqref{equ-def-iter-vartheta-varsigma}.
\item $\Gamma^{(t)}_k,\Pi^{(t)}_k$: sets generated at time $t$ in iteration; \eqref{equ-def-initial-set}, \eqref{equ-def-iter-sets}.
\item $\Phi^{(t)},\Psi^{(t)}$: matrices denote the overlapping structures of sets; \eqref{equ-initial-matrix}, \eqref{equ-def-iter-matrix}.
\item $\mathfrak{a}_t$: fraction of sets generated at time $t$; \eqref{eq-def-mathfrak-a-t}.
\item $K_t$: number of sets generated at time $t$ in iteration; \eqref{equ-def-iter-K}.
\item $\varepsilon_t$: signal contained in each pair at time $t$; \eqref{equ-def-iter-varepsilon}.
\item $\eta^{(t)}_k$: basis of projection spaces at time $t$; \eqref{equ-vector-unit}, \eqref{equ-vector-orthogonal}.
\item $\sigma^{(t)}_k$: direction of projections at time $t$; \eqref{equ-def-sigma}.
\item $t^*$: time when the iteration stops; \eqref{eq-def-t-*}.
\item $D^{(t)}_v,\mathsf{D}^{(t)}_{\mathsf v}$: degree of vertices to the sets at time $t$; \eqref{equ-def-degree}.
\item $\Delta_t$: targeted approximation error at time $t$; \eqref{equ-def-delta}. 
\item $\mathrm{M}_{\Gamma}, \mathrm{M}_{\Pi}, \mathrm{P}_{\Gamma,\Pi}$: matrices recording the correlation among iteration; \eqref{equ_martix_M_P}.
\item $\mathrm{REV}$: the vertices revealed in initialization; \eqref{eq-def-REV}.
\item $W_v^{(t)},\mathsf W_{\mathsf v}^{(t)}$: Gaussian vectors introduced for smoothing; Section~\ref{sec:iteration}.
\item $\mathfrak{S}_t$: the information used up to time $t$; \eqref{eq-def-mathfrak-S-t}.
\item $b_t D_v^{(t)}, g_t {\mathsf D}_{\mathsf v}^{(t)}$: the bias part and the good part of degree. \eqref{eq-def-b-t-D-t}.
\item $\overrightarrow{Z}, \overrightarrow{\mathsf Z}$: independently sampled Gaussian matrices; Subsection~\ref{sec:proof-outline}.
\item $\mathfrak{F}_t$: the Gaussian process up to time $t$; \eqref{eq-def-mathcal-F-t}.
\item $\mathcal{F}_t$: the information generated by the Gaussian process up to time $t$; \eqref{eq-def-mathcal-F-t}.
\item $\mathrm{BAD}_t$: the collection of bad vertices up to time $t$; \eqref{equ-def-set-BAD}.
\item $\mathrm{BIAS}_t$: the collection of vertices with large bias; \eqref{equ-def-set-BIAS}.
\item $\mathrm{LARGE}_t$: the collection of vertices with large degree; \eqref{equ-def-set-LARGE}.
\item $\mathrm{PRB}_t$: the collection of vertices with bad projection; \eqref{equ-def-set-PRB}.
\item $g_t \Tilde{D}_{v}^{(t)}, g_t \Tilde{\mathsf D}_{\mathsf v}^{(t)}$: the degree replaced by Gaussian; \eqref{equ-def-g-Tilde-D}.
\item $\mathbf Q_t$: covariance matrix of Gaussian variables at time $t$; Remark~\ref{remark-conditional-Gaussian}.
\item $\mathbf P_t$: covariance matrix of revised Gaussian variables at time $t$; Subsection~\ref{sec:Gaussian-analysis}.
\item $\mathbf H_t$: coefficient matrix of Gaussian variables at time $t$; Remark~\ref{remark-conditional-Gaussian}.
\item $\mathbf J_t$: coefficient matrix of revised Gaussian variables at time $t$;  Subsection~\ref{sec:Gaussian-analysis}.
\item $\Check{D}^{(t)}_v, \Check{\mathsf D}^{(t)}_{\mathsf v}$: Gaussian process with simple covariance structure; \eqref{eq-def-Check-Gaussian-process}.
\item $\mathfrak{F}_t'$: the revised Gaussian process up to time $t$; \eqref{eq-def-mathfrak-F-t'}.
\item $\mathcal{F}_t'$: the information generated by revised Gaussian process; \eqref{eq-def-mathfrak-F-t'}.

\end{itemize}
\end{multicols}

	\small
\small

\end{document}